\newcommand{\corr}[1]{\langle {#1} \rangle}
\newcommand{\corrr}[1]{\langle\langle {#1} \rangle\rangle}
 \newcommand{\bR}{\mathbb{R}} \newcommand{\cF}{\mathcal{F}} 
\newcommand{\bt}{{\bf t}}
\newcommand{\bE}{\mathbb{E}}
\newcommand{\bP}{\mathbb{P}}
 \newcommand{\cO}{\mathcal{O}} 
  \newcommand{\cM}{\mathcal{M}} \newcommand{\cR}{\mathcal{R}}
 \newcommand{\cU}{\mathcal{U}}
 \newcommand{\cV}{\mathcal{V}}
   \newcommand{\cX}{\mathcal{X}}
 \newcommand{\bZ}{\mathbb{Z}}
 \newcommand{\bC}{\mathbb{C}}
 \newcommand{\pd}{\partial}
\newcommand{\Mbar}{\overline{\mathcal M}}
\DeclareMathOperator{\End}{End}
  \DeclareMathOperator{\res}{res}
 \DeclareMathOperator{\diag}{diag}
\newcommand{\be}{\begin{equation}}
\newcommand{\ee}{\end{equation}}
\newcommand{\bea}{\begin{eqnarray}}
\newcommand{\eea}{\end{eqnarray}}
\newcommand{\ben}{\begin{eqnarray*}}
\newcommand{\een}{\end{eqnarray*}}
\newcommand{\half}{\frac{1}{2}}
\newtheorem{cor}{Corollary}[section]
 \newtheorem{prop}[cor]{Proposition}
 \newtheorem{thm}[cor]{Theorem}
\theoremstyle{remark}
 \newtheorem{rmk}[cor]{Remark}
\definecolor{A}{rgb}{.75,1,.75}
\definecolor{green}{rgb}{0,1,0}
\definecolor{yellow}{rgb}{1,1,0}
\definecolor{orange}{rgb}{1,.7,0}
\definecolor{red}{rgb}{1,0,0}
\definecolor{white}{rgb}{1,1,1}
\begin{document}
\title
[Emergent Geometry of the Quintic]
{On Emergent Geometry of the Gromov-Witten Theory of Quintic Calabi-Yau Threefold}

\author{Jian Zhou}
\address{Department of Mathematical Sciences\\Tsinghua University\\Beijng, 100084, China}
\email{jianzhou@mail.tsinghua.edu.cn}

\begin{abstract}
We carry out the explicit computations that are used
to write down the integrable hierarchy associated with
the quintic Calabi-Yau threefold.
We also do the calculations for the geometric structures
emerging in the Gromov-Witten theory of the quintic,
such as the Frobenius manifold structure
and the special K\"ahler structure.
\end{abstract}

\dedicatory{Dedicated to the memory of Professor Boris Dubrovin}

\maketitle

\section{Introduction}

Gromov-Witten theory is the mathematical theory
for the physical theory of
topological sigma models coupled to two-dimensional topological
gravity.
The observables in a  topological sigma models,
called the {\em primary obsrvable},
correspond to the cohomological classes of the underlying
symplectic manifold, and they form a finite dimensional
space called the {\em small phase spaces}.
The coupling with the two-dimensional topological gravity
associates to each primary observable a sequence
of observables, called their {\em gravitational descendants},
which form an infinite-dimensional space called the
{\em big phase space}.
One assigns a coupling constant to each observable,
so one gets infinitely many formal variables
as linear coordinates on the big phase space.
The goal of the Gromov-Witten theory is to study
the partition functions and the free energy
of the topological sigma model coupled to the two-dimensional
topological gravity
as formal power series in the infinitely many coupling constants.

There are two major paradigms in the study of Gromov-Witten theory.
The first is the Witten Conjecture/Kontsevich Theorem
\cite{Wit, Kon} which relates
the two-dimensional topological gravity
to the KdV hierarachy.
Since it involves integrable hierarchy,
I will refer to it as the {\em integrable hierarchy paradigm}.
The second is the mirror symmetry   \cite{CDGP} which identifies the Gromov-Witten
theory of the quintic Calabi-Yau threefold with
another more computable theory.
I will refer to it as the {\em mirror symmetry paradigm}.
For the past thirty years,
Gromov-Witten theory has been developed to provide
the justifications and generalizations of these discoveries
by physicists by  mathematically rigorous methods.

More recently,
the author has worked on a program to unify these two paradigms.
This of course has to go in two directions:
(a) Understand integrable hierarchies from
the point of view of mirror symmetry;
(b) Understand mirror symmetry from the point of view
of integrable hierarchy.
At first sight,
there are some manifest differences between the two paradigms.
The integrable hierarchy paradigm predicts that the partition
function of the Gromov-Witten invariants
of each symplectic manifold
is the tau-function of a suitable integrable
hierarchy,
specified by suitable constraints.
It naturally involves infinitely many variables
and infinite-dimensional Lie algebras that describes
the symmetries of the relevant integrable hierarchy.
On the contrary,
the hallmark example of the mirror symmetry
of the quintic Calabi-Yau three-folds
focuses on the computation of the free energy function
of the Gromov-Witten invariants
restricted to a one-dimensional subspace
of the small phase space.
But the key feature of mirror symmetry is that it predicts
the equivalence with a different theory,
called its {\em mirror theory}.
Furthermore,
in the physics literature the differential geometry of moduli spaces play an important role.

Therefore,
a first step towards unification consists of three problems:
Problem 1. Provide
a construction of a mirror theory for each integrable hierarchy;
Problem 2. Find the integrable hierarchy associated with
the quintic Calabi-Yau threefold.
Problem 3. Discover the geometric objects hidden behind the integrable hierarchies.

One of the purpose of this paper is to report our results on
Problem 2 and Problem 3.
However,
our approach  used in this work
are based on our understandings
obtained in our work on Problem 1.
We will postpone a detailed review of some progress there in Section \ref{sec:Conclusions},
here we will just mention that in our earlier work on Problem 1 we have borrowed
some ideas from statistical physics and introduced the notion of emergent geometry.

It turns out that the idea of using statistical physics
becomes the crucial point of departure for our work on Problem 2 and Problem 3 in this paper.
More generally,
one can consider the problem of finding the integrable hierarchy
associated to the Gromov-Witten theory of any compact symplectic manifold.

Like ordinary mean field theory,
the mean field theory of Gromov-Witten theory
discovered by Dijkgraaf and Witten \cite{Dij-Wit, Wit}
reduces a problem with an infinite degree of freedom
to a problem with only finite degree of freedom.
More precisely,
they introduced finitely many order parameters that encode all the information
in genus zero.
The dependence of these order parameters on the infinitely many coupling constants
can be determined in two different ways.
First they are the critical points of a Landau-Ginzburg potential functions.
Secondly, they satisfy a sequence of evolution equations in the coupling constants.
Furthermore,
the integrable hierarchy is expected to be generalized to arbitrary genus.

We reinterpret Dubrovin's theory of Frobenius manifolds as part of emergent geometry
of GW theory. The original goal of this theory is to reconstruct the whole theory
from the genus zero part of the theory restricted to the small phase space.
We reverse the direction.
We start with the GW theory of a symplectic manifold and search
for geometric structures that naturally emerge.
Frobenius manifold structure is then an unavoidable candidate.

The rest of this paper consist of sections of two different natures.
In some sections we review general theory reinterpreted from an emergent point of view,
and in some sections we focus on the detailed computations for the quintic.
More precisely,
in Section \ref{sec:MFT} we review the mean field theory as developed
by Dijkgraaf-Witten \cite{Dij-Wit} and Witten \cite{Wit}.
The results for higher genera will be reviewed in Section \ref{sec:Higher}
where we also present an operator formalism to compute the $n$-point functions
in arbitrary genera.
The corresponding  concrete computations for the quintic
are presented in Section \ref{sec:Quintic-0} for genus zero and in Section \ref{sec:Quintic-Higher}
respectively.
As our key results for these two Sections,
we write down the Landau-Gingzburg potential function and the integrable hierarchy
for the quintic.
We briefly review some aspects of the Frobenius manifold theory useful for our purpose
in Section \ref{sec:Frobenius},
then we do the concrete computations for the quintic in Section \ref{sec:Quintic-Frob}.
In Section \ref{sec:Quintic-Spec} we discuss the emergent special geometry for the quintic.
We make some conclusions and present some speculations for further investigations in
the final Section \ref{sec:Conclusions}.

\section{Mean Field Theory of Gromov-Witten Theory}

\label{sec:MFT}

In this Section we summarize
the mean field theory for  Gromov-Witten theory
developed by Dijkgraaf-Witten \cite{Dij-Wit}
and  Witten \cite{Wit}.
We emphasize that we want to understand their results
as examples of emergent phenomena
from the point of view of statistical physics.
In the beginning we are working on an infinite-dimensional
phase space,
dealing with a formal power series with infinitely many
formal variables called the coupling constants.
In the end we know that we need to only work with
finitely many functions called the order parameters.

The dependence of the order parameters
on the coupling constants can be determined in two different ways.
First,
the order parameters are the critical points
of  a Landau-Ginzburg potential function.
Secondly, the order parameters are governed by
a system of infinitely many evolutive equations,
one for each coupling constant.

\subsection{The Gromov-Witten invaraints as correlators}

Let $M$ be a compact symplectic manifold of dimension $2m$.
For simplicity,
we will only consider its Gromov-Witten invariants that involve
cohomology classes of even degrees.
So the small phase space is
\be
H^{ev}(M):=\bigoplus_{k=0}^m H^{2k}(M).
\ee
For $\alpha \in H^{2k}(M)$, set $\deg \alpha: = k$.
Fix a basis $\{\cO_j\}_{j=0}^r$ of $H^{ev}(M)$ such that $\cO_0=1$.

The big phase space
is the space $H^{ev}(M)[[z]]$
of formal power series with coefficients in $H^{ev}(M)$.
For $\omega\in H^{ev}(M)$,
rewrite $\omega z^n$ as $\tau_n(\alpha)$.
It is called the $n$-th gravitational descendant of $\omega$.

For classes $\omega_1, \dots, \omega_n\in H^{ev}(M)$,
and integers $a_1, \dots, a_n \geq 0$,
the Gromov-Witten invariants are defined by
\be
\corr{\tau_{a_1}(\omega_1),\dots, \tau_{a_n}(\omega_n)}_{g,n; \beta}
:= \int_{[\Mbar_{g,n}(M;\beta)]^{virt}}
\omega_1\psi_1^{a_1} \cdots \omega_n\psi_n^{a_n},
\ee
where
$[\Mbar_{g,n}(M;\beta)]^{virt}$ is the virtual fundamental class
of stable maps of genus $g$ with $n$ marked points
in the homology class  $\beta \in H_2(M;\bZ)$.

The degree of $[\Mbar_{g,n}(M;\beta)]^{virt}$ is the virtual
dimension of $\Mbar_{g,n}(M;\beta)$, given by the formula:
\be
(m-3)(1-g) + n + \int_\beta c_1(M).
\ee
So the correlator
$\corr{\tau_{a_1}(\omega_1),\dots, \tau_{a_n}(\omega_n)}_{g,n; \beta}$
vanishes unless the following {\em selection rule}
is satisfied:
\be \label{eqn:Selection-Rules}
\sum_{i=1}^n (\deg \omega_i + a_i)
= (m-3)(1-g) + n + \int_\beta c_1(M).
\ee

\subsection{The free energy, the partition function,
and the $n$-point correlators}

Denote by $t^\alpha_n$ the coupling constant associated with
$\tau_n(\cO_\alpha)$.
The genus $g$ free energy of the Gromov-Witten theory
of $M$ is defined by:
\be
F_g = \sum_{n \geq 1} \sum_{\alpha_1, \dots, \alpha_n=0}^r
\sum_{\beta \in H_2(M;\bZ)}
\frac{\prod\limits_{j=1}^n t_{a_j}^{\alpha_j}}{n!}
q^\beta \corr{\tau_{a_1}(\cO_{\alpha_1}) \cdots
\tau_{a_n}(\cO_{\alpha_n})}_{g,n;\beta}.
\ee
For $g=0$ and $\beta = 0$, the summation starts at $n=3$.
The total free energy is defined by:
\be
F_g : = \sum_{g=0}^\infty \lambda^{2g-2} F_g.
\ee

The free energy contains all the information in the theory.
From it one can define the $n$-point correlators
in genus $g$:
\be
\corrr{\tau_{a_1}(\cO_{\alpha_1}) \cdots
\tau_{a_n}(\cO_{\alpha_n})}_g:
= \frac{\pd^nF_g}{\pd t^{\alpha_1}_{a_1} \cdots
\pd t^{\alpha_n}_{a_n} }.
\ee
The partition function of the Gromov-Witten theory is defined by
\be
Z = \exp F.
\ee

The ultimate goal of Gromov-Witten theory is of course
to have the ability to compute the free energy or
the partition function  or the $n$-point correlators
in closed forms.
Because they all involve infinitely many formal variables,
this does not seem to be possible.
As mentioned in the Introduction,
Dijkgraaf and Witten \cite{Dij-Wit} observed that
one can borrow
ideas from mean field theory in statistical physics
to make this possible.

\subsection{Universal relations among correlators}

The reason that the approach of Dijkgraaf and Witten works
is because of the topological nature
of the Gromov-Witten theory,
in other words,
it is a topologically twisted
$N=2$ superconformal field theory coupled with
2d topological gravity.
Such a theory is based on the cohomology theory
of the Deligne-Mumford moduli spaces $\Mbar_{g,n}$
of algebraic curves.
There are some universal relations among classes
on $\Mbar_{g,n}$ called tautological relations
(see e.g.  \cite{PPZ} for references, see also \cite{Lin-Zho}).
They lead to some universal relations among the correlators
in Gromov-Witten theory.
The following three equations hold in all genera.
The first is the  {\em puncture equation}:
\be \label{eqn:Puncture-Corr}
\begin{split}
& \corr{1\tau_{a_1}(\cO_{\alpha_1}) \cdots
\tau_{a_n}(\cO_{\alpha_n})}_{g,n; \beta} \\
=& \sum_{i=1}^n
\corr{\tau_{ a_1}(\cO_{\alpha_1}),\dots,
\tau_{a_i-1}(\cO_{\alpha_i}),
\dots, \tau_{a_n}(\cO_{\alpha_n})}_{g,n; \beta},
\end{split}
\ee
with the exceptional case:
\be
\corr{1\cO_{\alpha_1}\cO_{\alpha_2}}_{0,2;0}
= \corr{\cO_{\alpha_1}\cO_{\alpha_2}},
\ee
where the right-hand side is the Poincar\'e pairing
of the two classes $\cO_{\alpha_1}$ with $\cO_{\alpha_2}$.
The second is the {\em dilaton equation}:
\be
\begin{split}
& \corr{\tau_1(1),\tau_{a_1}(\alpha_1),\dots \tau_{a_n}(\alpha_n)}_{g,n; \beta} \\
=& (2g-2+n) \cdot
\corr{\tau_{a_1}(\alpha_1),\dots, \tau_{a_i-1}(\alpha_i),
\dots, \tau_{a_n}(\alpha_n)}_{g,n; \beta},
\end{split}
\ee
with the exceptional case:
\be
\corr{\tau_1(1)}_{1,1;0} = \frac{1}{24} \chi(M).
\ee
In terms of the free energy,
these equations can be rewritten as follows:
\bea
&& \frac{\pd F_g}{\pd t_0^0}
= \sum_{n=0}^\infty \sum_{\alpha=0}^r
t_{n+1}^\alpha \frac{\pd F_g}{\pd t_n^\alpha}
+ \delta_{g,0} \half \eta_{\alpha\beta}t_0^\alpha t_0^\beta,
\label{eqn:Puncture}\\
&& \frac{\pd F_g}{\pd t_1^0}
= \sum_{n=0}^\infty \sum_{\alpha=0}^r
t_{n}^\alpha \frac{\pd F_g}{\pd t_n^\alpha}
+ (2g-2)F_g
+ \delta_{g,1} \frac{1}{24} \chi(M),
\label{eqn:Dilaton}
\eea
where $\eta_{\alpha\beta} = \corr{\cO_\alpha, \cO_\beta}_{0,2;0}$.
The third is the {\em divisor equation}:
\be \label{eqn:Divisor}
\begin{split}
& \corr{D\tau_{a_1}(\cO_{\alpha_1})\dots
\tau_{a_n}(\cO_{\alpha_n})}_{g,n; \beta} \\
= & \corr{D, \beta} \cdot
\corr{\tau_{a_1}(\alpha_1)\dots \tau_{a_n}(\alpha_n)}_{g,n; \beta} \\
+& \sum_{i=1}^n
\corr{\tau_{a_1}(\cO_{\alpha_1}),\dots,
\tau_{a_i-1}(\cO_{\alpha_i}D), \dots,
\tau_{a_n}(\cO_{\alpha_n})}_{g,n; \beta}.
\end{split}
\ee

\subsection{Topological recursion relations} \label{sec:TRR}
The {\em topological recursion relation} (TRR) in genus zero is the following
equation:
\be \label{eqn:TRR0}
\frac{\pd^3F_0}{\pd t^\alpha_a\pd t^\beta_b \pd t^\gamma_c}
= \frac{\pd^2F_0}{\pd t^\alpha_{a-1}\pd t^\mu}
\eta^{\mu\nu}
\frac{\pd^3F_0}{\pd t^\nu\pd t^\beta_b\pd t^\gamma_c}.
\ee
Witten \cite{Wit} has shown that it implies the generalized WDVV equations:
\be \label{eqn:Gen-WDVV}
\frac{\pd^3F_0}{\pd t^\alpha_a\pd t^\beta_b \pd t^\mu}\eta^{\mu\nu}
\frac{\pd^3F_0}{\pd t^\nu\pd t^\gamma_c\pd t^\delta_d}
= \frac{\pd^3F_0}{\pd t^\alpha_a\pd t^\gamma_c \pd t^\mu}\eta^{\mu\nu}
\frac{\pd^3F_0}{\pd t^\nu\pd t^\beta_b\pd t^\delta_d}.
\ee
When $a=b=c=d=0$,
this reduces to the WDVV equations:
\be \label{eqn:WDVV}
\frac{\pd^3F_0}{\pd t^\alpha\pd t^\beta \pd t^\mu}\eta^{\mu\nu}
\frac{\pd^3F_0}{\pd t^\nu\pd t^\gamma\pd t^\delta}
= \frac{\pd^3F_0}{\pd t^\alpha\pd t^\gamma \pd t^\mu}\eta^{\mu\nu}
\frac{\pd^3F_0}{\pd t^\nu\pd t^\beta\pd t^\delta}.
\ee
In genus one,
Witten proposed the following TRR \cite[(3.48)]{Wit}:
\be
\frac{\pd F_1}{\pd t^\alpha_n}
= \frac{\pd^2 F_1}{\pd t^\alpha_{n-1}\pd t^\beta}_0
\eta^{\beta\gamma} \frac{\pd F_1}{\pd t^\gamma}
+ \frac{1}{24} \frac{\pd^3F_1}{\pd t^\alpha_{n-1}\pd t^\beta\pd t^\gamma}.
\ee
For its mathematical proof, see Getzler \cite{Get1}.
See also \cite{Get2} for Getzler's TRRs in genus two.

\subsection{Quantum cohomology}
\label{sec:Quant-Cohom}

The WDVV equations have the following algebraic interpretation.
For $\bt \in H^{ev}(M)$,
define
\be
\frac{\pd}{\pd t^\alpha} \circ_\bt \frac{\pd}{\pd t^\beta}
: = c^\gamma_{\alpha\beta}(\bt) \frac{\pd}{\pd t^\gamma},
\ee
where the coefficients $c^\gamma_{\alpha\beta}(\bt)$ are defined by:
\be
c^\gamma_{\alpha\beta}(\bt)
:= \frac{\pd^3F_0(\bt)}{\pd t^\alpha\pd t^\beta\pd t^\delta}\eta^{\delta\gamma}.
\ee
Then $\circ_\bt$ is a commutative multiplication on $H^{e}(M)$.
By WDVV equations,
it is also associative.
Furthermore,
because
\be
\frac{\pd^3F_0(\bt)}{\pd t^0\pd t^\alpha \pd t^\beta} = \eta_{\alpha\beta},
\ee
$\frac{\pd}{\pd t^0}$ is a unit for all $\circ_\bt$,
i.e.,
\be
\frac{\pd}{\pd t^0} \circ_\bt \frac{\pd}{\pd t^\alpha} = \frac{\pd}{\pd t^\alpha}.
\ee
For simplicity of notations we will often write $\circ$ for $\circ_\bt$.

\subsection{Order parameters in GW theory}

Following Dijkgraaf and Witten \cite{Dij-Wit},
choose the order parameters in GW theory to be:
\be
u_\alpha: = \frac{\pd^2F_0}{\pd t^1\pd t^\alpha}
= \corrr{\cO_0\cO_\alpha}_0.
\ee
The genus zero free energy $F_0$ consists of two parts:
The classical part $F_{classical}$ consists
of $3$-point correlators
\be
\begin{split}
F_{classical} & = \sum_{\alpha_1, \alpha_2, \alpha_3=0}^r
\frac{t^{\alpha_1}t^{\alpha_2}t^{\alpha_3}}{3!}
\corr{\cO(\alpha_1), \cO(\alpha_2), \cO_{\alpha_3}}_{0,3;0}  \\
& = \sum_{\alpha_1, \alpha_2, \alpha_3=0}^r
\frac{t^{\alpha_1}t^{\alpha_2}t^{\alpha_3}}{3!}
\int_M \cO_{\alpha_1} \cO_{\alpha_2} \cO_{\alpha_3},
\end{split}
\ee
and $F_{quantum}$  consists
of $n$-point correlators for $n > 3$.
And so
\be
u_\alpha = \eta_{\alpha\beta} t^\beta + \cdots,
\ee
where $\cdots$ involves correlators of the form:
\ben
&& \corr{1, \cO_\alpha, \tau_{a_1}(\cO_{\alpha_1}), \dots,
\tau_{a_{n-2}}(\cO_{\alpha_{n-2}})}_{0,n;\beta}
\een
for $n > 3$.
Note by the puncture equation \eqref{eqn:Puncture-Corr},
\ben
&& \corr{1, \cO_\alpha, \cO_{\alpha_1}, \dots,
\cO_{\alpha_{n-2}}}_0
\een
for $n > 3$.
Therefore,
on the small phase space,
i.e.,
after setting $t^\alpha_n = 0$ for all $n > 0$ and all $\alpha=0,\dots, r$,
we have:
\be
u_\alpha = \eta_{\alpha\beta} t^\beta.
\ee
Equivalently,
on the small phase space,
$t^\beta$ is equal to $\eta^{\alpha\beta}u_\alpha$.

\subsection{Constitutive relations}

Let us now recall a crucial observation of Dijkgraaf and Witten \cite{Dij-Wit}.
On the small phase space
$\frac{\pd^2F_0}{\pd t_a^\alpha \pd t_b^\beta}$
is a formal power series in  $\{t^\alpha\}_{\alpha=0}^r$,
so one can write:
\be
\frac{\pd^2F_0}{\pd t_a^\alpha\pd t_b^\beta}
\biggl|_{t^\gamma_n = 0, n > 0, \gamma=0,\dots, r}
= R_{\alpha,a;\beta, b}(t^0, \dots, t^r).
\ee
Then Dijkgraaf and Witten \cite{Dij-Wit} showed that
on the big phase space
\be \label{eqn:Constitutive}
\frac{\pd^2F_0}{\pd t_a^\alpha\pd t_b^\beta}
= R_{\alpha,a;\beta, b}(u^0, \dots, u^r),
\ee
where $u^\alpha = \eta^{\alpha\beta} u_\beta$.
They call these the {\em constitutive relations}.
Their proof is based on the TRR in genus zero \eqref{eqn:TRR0}.

\subsection{Mean field theory of GW theory}

Let us recall how Dijkgraaf and Witten derived the Landau-Ginzburg
equations for the order parameters,
based on the puncture equation \eqref{eqn:Puncture}
and the constitutive relation \eqref{eqn:Constitutive}
derived by TRR in genus zero \eqref{eqn:TRR0}.

Take $\frac{\pd}{\pd t^\alpha}$ on both sides
of \eqref{eqn:Puncture} to get:
\ben
&& \frac{\pd^2 F_0}{\pd t^0\pd t^\alpha}
= \sum_{n=0}^\infty \sum_{\beta=0}^r
t_{n+1}^\beta \frac{\pd^2 F_0}{\pd t^\alpha\pd t_n^\beta}
+ \eta^{\alpha\beta} t^\beta,
\een
now plug in the constitutive relations \eqref{eqn:Constitutive}
to get:
\be \label{eqn:LG}
u_\alpha
= t_\alpha + \sum_{n=0}^\infty \sum_{\beta=0}^r
t_{n+1}^\beta  R_{\alpha,0;\beta, n}(u_0, \dots, u_r),
\ee
for $\alpha = 0, 1, \dots, r$.
These are the {\em Landau-Ginzburg equations}
for the order parameters $u_0, \dots, u_r$.
When $R_{\alpha, 0; \beta, n}$'s have been computed,
one can solve for $u_\alpha$ as formal power series
in $\{t^\alpha_n\}$.

\subsection{The Landau-Ginzburg potential}

This line of ideas were further developed in Witten \cite{Wit}.
Recall that on the small phase space we have:
\ben
R_{\alpha, 0; \beta, n}(t^0, \dots, t^r)
= \frac{\pd^2F_0}{\pd t^\alpha\pd t^\beta_n}\biggr|_{t^\gamma_n =0,
n > 0, \gamma=0, \dots, r},
\een
Therefore,
if we write
\be
R_{\beta, n}(t^0, \dots, t^r)
= \frac{\pd F_0}{\pd t^\beta_n}\biggr|_{t^\gamma_n =0,
n > 0, \gamma=0, \dots, r}.
\ee
Then we have the following relation:
\ben
R_{\alpha, 0; \beta, n}(t_0, \dots, t^R)
& = & \frac{\pd}{\pd t^\alpha}
\biggl( \frac{\pd F_0}{\pd t^\beta_n}\biggr|_{t^\gamma_n =0,
n > 0, \gamma=0, \dots, r}\biggr) \\
& = & \frac{\pd}{\pd t^\alpha} R_{\beta; n}(t^0, \dots, t^n).
\een
Therefore,
\be
R_{\alpha, 0; \beta, n}(u^0, \dots, u^r)
=  \frac{\pd}{\pd u^\alpha}
R_{\beta; n}(u^0, \dots, u^n).
\ee
Plug this into the Landau-Ginzburg equations \eqref{eqn:LG}:
\be
u_\alpha
= t_\alpha + \sum_{n=0}^\infty \sum_{\beta=0}^r
t_{n+1}^\beta \frac{\pd}{\pd u^\alpha}
R_{\beta; n}(u^0, \dots, u^n).
\ee
Therefore,
if one sets (cf. Witten \cite[(3.40)]{Wit}):
\be \label{eqn:LG-Potential}
W = - \half U_\alpha U^\alpha
+ t^\beta U_\beta
+ \sum_{n=0}^\infty \sum_{\beta=0}^r t^\beta_{n+1}
R_{\beta, n}(U^0, \dots, U^r),
\ee
then the equations for the critical point of $W$
are exactly the Landau-Ginzburg equations \eqref{eqn:LG}.

\begin{rmk}
If we set
$$R_{\beta, -1}(U^0, \dots, U^r) = U_\beta,$$
and make the dilaton shift:
\be
\tilde{t}^\beta_n = t^\beta_n - \delta_{\beta, 0}\delta_{n,1},
\ee
then the Landau-Ginzburg potential can be written
in a more compact form:
\be
W = \sum_{n=-1}^\infty \sum_{\beta=0}^r
\tilde{t}^\beta_{n+1}
R_{\beta, n}(U^0, \dots, U^r).
\ee
Therefore,
if one sets
\be
R_\beta(U;z):=\sum_{n=-1}^\infty R_{\beta, n}(U) z^{-n-1}
\ee
and introduces the {\em source operators}:
\be \label{eqn:SourceOp}
J^\beta(z): =\sum_{n=0}^\infty \tilde{t}^{\beta}_n z^{n},
\ee
then one rewrite
\eqref{eqn:LG-Potential} as a residue:
\be
W(U) = \sum_{\beta=0}^r \res_{z=0}
( J^\beta(z) R_\beta(U;z) \frac{dz}{z}).
\ee
\end{rmk}

\subsection{Integrable hierarchy for genus zero GW invariants}
\label{sec:IH-g=0}

The next observation in Witten \cite{Wit} is the emergence
of an integrable hierarchy for genus zero GW invariants:
\ben
\frac{\pd}{\pd t^\alpha} u_\beta
= \frac{\pd}{\pd t^\alpha} \biggl(\frac{\pd^2F_0}{\pd t^0 \pd t^\beta} \biggr)
= \frac{\pd}{\pd t^0}
\biggl( \frac{\pd^2 F_0}{\pd t^\alpha\pd t^\beta_n}\biggr).
\een
Therefore, by the constitutive relations \eqref{eqn:Constitutive}
he obtained a sequence of equations:
\be \label{eqn:IH}
\frac{\pd u_\beta}{\pd t_n^\alpha}
= \frac{\pd}{\pd t^0} R_{\alpha,0; \beta, n}(u^0, \dots, u^r).
\ee
Witten \cite{Wit} gave the following Hamiltonian formulation
for this integrable hierarchy.
Regard $u_\alpha$ as functions of $x= t^0$,
and define the following Poisson bracket on the loop space
of the  space of order parameters:
\be
\{u_\alpha(x), u_\beta(y)\} = \eta_{\alpha\beta} \pd_x \delta(x-y).
\ee
Then \eqref{eqn:IH} is the system of Hamiltonian of motion
\be
\frac{\pd u_\alpha}{\pd t^\beta_n}
= \{u_\alpha, H_{\beta, n}\},
\ee
where $H_{\beta,n}$ is the Hamiltonian functional
\be
H_{\beta,n} = \int dx R_{\beta,n}(u^0, \dots, u^r).
\ee

\section{Renormalization Theory, New Coordinates on the Big Phase Space
and Operator Formalism of Gromov-Witten Theory}

\label{sec:Higher}

We will present in this Section a method
that computes the $n$-point correlation functions
on the big phase space
based on some works of Eguchi and his collaborators.
It is a continuation of the development of the mean field
theory of GW theory to general $n$-point functions in arbitrary
genera.
We will use an operator formalism that leads to
an interpretation of the partition function
as an element in a bosonic Fock space.

\subsection{Mean field theory for higher genera}

Dijkgraaf and Witten computed the genus one free energy
of the two-dimensional topological gravity
by TRR in genus one.
In general,
Witten \cite{Wit} claimed that if one sets
\be
M_{\alpha\beta} = \frac{\pd^3F_0}{\pd t^0\pd t^\alpha\pd t^\beta}
= \frac{\pd}{\pd t^0} R_{\alpha,\beta}(u_0, \dots, u_r),
\ee
then
\be
F_1 = \frac{1}{24}\log \det (M_{\alpha\beta})
+ E(u_0, \dots, u_r)
\ee
for some function $E$ depending only on $u_\alpha$.
See Eguchi-Getzler-Xiong \cite{Egu-Get-Xio} for a proof.

For higher genera,
Witten \cite{Wit} introduced
\be
u_{\alpha, n}:= \frac{\pd^n}{\pd (t^0)^n}u_\alpha,
\ee
and set $\deg u_{\alpha, n}=n$,
then he conjectured that there are
functions $R_{\alpha,m; \beta,n}^{(g)}$ of degree $2g$
in $\{u_{\alpha,n}\}$ such that
\be
\frac{\pd^2F_g}{\pd t^\alpha_m\pd t^\beta_n}
= R_{\alpha,m; \beta,n}^{(g)}(\{u_{\alpha,n}\}).
\ee
This conjecture implies that one can associate an integrable
hierarchy of differential equations
associated to every compact symplectic manifold
via the GW theory.
One simply sets
\be
U_\alpha = \sum_{g=0}^\infty \lambda^{2g} u_{\alpha}^{(g)},
\ee
where $u_\alpha^{(g)}$ is defined by:
\be
u_{\alpha}^{(g)} = \frac{\pd^2F_g}{\pd t^0\pd t^\alpha},
\ee
then one gets a system of equations:
\be
\frac{\pd U_\alpha}{\pd t^\beta_n}
= \pd_x \cR_{\alpha, 0; \beta, n}
\ee
where
\be
\cR_{\alpha, m; \beta, n} = \sum_{g=0}^\infty \lambda^{2g}
R_{\alpha,m; \beta, n}^{(g)}.
\ee

\subsection{Renormalized coupling constants
as new coordinate systems on the big phase space}
\label{sec:2DGravity}

In the case of two-dimensional topological gravity,
Itzykson and Zuber \cite{Itz-Zub} noted that the following ansatz
is compatible with the KdV hierarchy:
\bea
&& F_1 = \frac{1}{24} \log \frac{1}{1 - I_1}, \\
&& F_g
=  \sum_{\sum_{2\leq k \leq 3g-2} (k-1)l_k=3g-3}
\frac{
\corr{\tau_2^{l_2}\tau_3^{l_3}\cdots \tau_{3g-2}^{l_{3g-2}} }_g
}{(1 - I_1)^{2(g-1)}}
\frac{I^{l_2}_2}{l_2!}
\frac{I^{l_3}_3}{l_3!} \cdots
\frac{I^{l_{3g-2}}_{3g-2}}{l_{3g-2}!},
\eea
where  $I_n$ are defined by:
\be \label{eqn:Ik}
I_k = \sum_{p=0}^\infty t_{k+p} \frac{u_0^p}{p!},
\ee
and $u_0$ is determined by:
\be \label{eqn:u0}
u_0 = \sum_{p=0}^\infty t_{p} \frac{u_0^p}{p!}.
\ee
Eguchi, Yamada and Yang \cite{Egu-Yam-Yan} showed that
there exist formulas of the form:
\be
\begin{split}
F_g
= & \sum_{\sum_k (k-1)l_k=3g-3}
a_{l_2\dots l_{3g-2}}
\frac{u_2^{l_2} \cdots u_{3g-2}^{l_{3g-2}}}
{u_1^{2(1-g)+\sum kl_k}} \\
= &  \sum_{\sum_k (k-1)l_k=3g-3}
b_{l_2\dots l_{3g-2}}
\frac{I_2^{l_2} \cdots I_{3g-2}^{l_{3g-2}}}
{u_1^{2(1-g)+\sum kl_k}}
\end{split}
\ee
for $g > 1$,
where $u_k = \frac{\pd^k u_0}{\pd x^k}$.

They also discussed the $O(N)$ vector model \cite{Nis-Yon}
and noted results similar to that of Itzykson-Zuber in the case
of 2D topological gravity also hold.
This was further studied by the author in \cite{Zhou-1D}.
The new interpretation in that work is that
starting with an action:
\be \label{eqn:Action}
S = - \frac{1}{2}x^2 + \sum_{n \geq 1} t_{n-1} \frac{x^n}{n!},
\ee
one can perform a sequence of renormalizations of the coupling
constants to reach in the limit:
\be
S=  \sum_{k=0}^\infty  \frac{(-1)^k}{(k+1)!}
(I_k+\delta_{k,1}) I_0^{k+1}
- \frac{1}{2}(x-I_0)^2
+ \sum_{n=1}^\infty  I_n \frac{(x-I_0)^{n+1}}{(n+1)!}  .
\ee
So the series $\{I_k\}$ can be interpreted as renormalized
coupling constants.

The Landau-Gingzburg equation \eqref{eqn:u0} has been explicitly
solved in \cite{Zhou-1D} using Lagrange inversion formula:
\be \label{eqn:I0}
u_0=I_0 = \sum_{k=1}^\infty \frac{1}{k}
\sum_{p_1 + \cdots + p_k = k-1} \frac{t_{p_1}}{p_1!} \cdots
\frac{t_{p_k}}{p_k!}.
\ee
Plugging this into \eqref{eqn:Ik},
one can express each $I_k$ as a formal power series
in $\{t_n\}_{n \geq 0}$.
Conversely,
we also showed in \cite{Zhou-1D} that
\be \label{eqn:T-in-I}
t_k = \sum_{n=0}^\infty \frac{(-1)^n I_0^n}{n!}I_{n+k}.
\ee
Therefore,
the renormalized coupling constants can be used
as new coordinates on the big phase space in this case.
It follows that one can express the Virasoro operators
in the case of 2D topological gravity in terms of these new coordinates.
In a joint work with Qisheng Zhang \cite{Zhang-Zhou},
we have shown that one can use such
expressions to prove the Itzykson-Zuber Ansatz and
to find effective algorithms to compute $F_g$  in terms of
$I_k$'s.

\subsection{Jet variables as new coordinate systems on the big phase space}

In general, Eguchi and Xiong \cite{Egu-Xio}
conjectured that: If one sets
\be
u_{\alpha_1\alpha_2\cdots \alpha_n}
= \frac{\pd^{n+1}F_0}{\pd t^0\pd t^{\alpha_1} \cdots
\pd t^{\alpha_n}},
\ee
then for $g \geq 1$,
\be
F_g= F_g(u_\alpha, u_{\alpha_1\alpha_2},\dots, u_{\alpha_1\alpha_2\cdots \alpha_{3g-1}}), \qquad
g \geq  1.
\ee
Note by the constitutive relations and the integrable hierarchy \eqref{eqn:IH},
\ben
u_{\alpha_1\alpha_2\cdots \alpha_n}
& = & \frac{\pd}{\pd t^{\alpha_n}} \cdots \frac{\pd}{\pd t^{\alpha_2}}
u_{\alpha_1} \\
& = &  \frac{\pd}{\pd t^{\alpha_n}} \cdots \frac{\pd}{\pd t^{\alpha_3}}
(R_{\alpha_1,0; \alpha_2, 0}(u))'
\een
$u_{\alpha_1\alpha_2\cdots \alpha_n}$ can be expressed
in terms of $\{u_{\alpha,k}\;|\; \alpha =0, \dots, r,
k=0,\dots, n-1\}$.
So $F_g$ can be expressed in terms of $\{u_{\alpha,k}\;|\; \alpha =0, \dots, r,
k=0,\dots, 3g-2\}$.
This is called the $(3g-2)$-conjecture.
In Eguchi-Getzler-Xiong \cite{Egu-Get-Xio},
it was shown that $\{u_{\alpha,k}\;|\; \alpha =0, \dots, r,
k \geq 0\}$ can be used as new coordinates on the big phase space.
The $(3g-2)$-conjecture has been proved in Dubrovin-Zhang \cite{Dub-Zha}
under a technical condition.
When one compares with the results in the case of two-dimensional
topological gravity in \S \ref{sec:2DGravity},
it is desirable to see whether
these jet variables can be interpreted in terms of
renormalized coupling constants.

\subsection{The loop operators}

Introduce the {\em loop operators}:
\be \label{eqn:LoopOp}
D_{\beta, z}:= \sum_{n=0}^\infty z^{-n-1}
\frac{\pd}{\pd t_{n}^{\beta}}.
\ee
Then the system of the hierarchy can be written in
the following compact form:
\be \label{eqn:IH-Loop}
D_{\beta, z}u_\alpha
= (R_{\alpha,0; \beta}(u_0, \dots, u_r;z))',
\ee
where
\be \label{eqn:alpha-0-beta}
R_{\alpha,0; \beta}(u_0,\dots,u_r; z)
= \eta_{\alpha\beta}
+ \sum_{n =0}^\infty R_{\alpha,0;\beta,n}(u_0, \dots, u_r) z^{-n-1}.
\ee
The reason for adding the extra term $\eta_{\alpha\beta}$
will be clear below.

Now we define the genus zero two-point function by:
\be
\cV_{\alpha,\beta}(z_1,z_2)
= \sum_{m,n=0}^\infty
\frac{\pd^2F_0}{\pd t^\alpha_m \pd t^\beta_n}
z_1^{-m-1}z_2^{-n-1}
= D_{\alpha,z_1}D_{\beta,z_2}F_0.
\ee
By the constitutive relations,
\be
\cV_{\alpha,\beta}(z_1,z_2)
= \sum_{m,n=0}^\infty R_{\alpha,m;\beta,n}(u_0, \dots, u_r)
z_1^{-m-1}z_2^{-n-1}.
\ee

To compute $\cV_{\alpha, \beta}(z_1, z_2)$,
one can proceed as follows:
\ben
\frac{\pd}{\pd t^0}\cV_{\alpha, \beta}(z_1, z_2)
& = & \sum_{m,n=0}^\infty
\frac{\pd^2F_0}{\pd t^0\pd t^\alpha_m \pd t^\beta_n}
z_1^{-m-1}z_2^{-n-1} \\
& = & D_{\alpha,z_1} \sum_{n=0}^\infty
\frac{\pd^2F_0}{\pd t^0 \pd t^\beta_n} z_2^{-n-1} \\
& = & D_{\alpha,z_1} \sum_{n=0}^\infty
R_{0,0;\beta, n}(u_0, \dots, u_r) z_2^{-n-1},
\een
where in the last equality we have used
the constitutive  relation again.
Now we can use \eqref{eqn:IH-Loop} to get:
\ben
&& \frac{\pd}{\pd t^0}\cV_{\alpha, \beta}(z_1, z_2) \\
& = & \sum_{n=0}^\infty
\sum_\gamma
\frac{\pd R_{0,0;\beta, n}(u_0, \dots, u_r)}{\pd u_\gamma}
z_2^{-n-1} \cdot D_{\alpha,z_1}(u_\gamma)   \\
& = & \sum_{n=0}^\infty
\sum_\gamma
\frac{\pd R_{0,0;\beta, n}(u_0, \dots, u_r)}{\pd u_\gamma}
z_2^{-n-1}
\cdot (R_{\gamma,0;\alpha}(u_0, \dots, u_r; z_1))'.
\een
To understand $\frac{\pd R_{0,0;\beta, n}(u_0, \dots, u_r)}{\pd u_\gamma}
$, we go back to the small space to see that it is equal to
\ben
&& \frac{\pd}{\pd t_\gamma}
\biggl(\frac{\pd^3 F_0}{\pd t^0\pd t^\beta_n} \biggr)
=\eta^{\gamma\delta}\frac{\pd^3 F_0}{\pd t^0\pd t^\beta_n\pd t^\delta}
= \eta^{\gamma\delta} \frac{\pd}{\pd t^0}
\biggl(\frac{\pd^2 F_0}{\pd t^\beta_n\pd t^\delta} \biggr)
\een
and so buy string equation it is equal
to $\eta_{\gamma\delta} \frac{\pd^2 F_0}{\pd t^\beta_{n-1}\pd t^\delta}$
restricted to the small phase space,
so  it is equal to
$\eta^{\gamma\delta}
R_{\delta,0; \beta, n-1}(u_0,\dots, u_r)$,
for $n>0$, and for $n = 0$,
$\eta^{\gamma\delta} \eta_{\delta\beta}$,
on the big phase space.
(This explains the reason
why $\eta_{\alpha\beta}$ appears in \eqref{eqn:alpha-0-beta}.)
So we get
\ben
&& \frac{\pd}{\pd t^0}\cV_{\alpha, \beta}(z_1, z_2) \\
& = &
\sum_{\gamma,\delta}\eta^{\gamma\delta}
\biggl(\eta_{\delta\beta}z^{-1}+ \sum_{n=1}^\infty
R_{\delta,0; \beta, n-1}(u)
z_2^{-n-1}\biggr)
\cdot (R_{\gamma,0;\alpha}(u; z_1))' \\
& = & z_2^{-1} \cdot
 \frac{\pd}{\pd t_0}  R_{\gamma, 0; \alpha}(u;z_1)
\cdot \eta^{\gamma\delta} \cdot
R_{\delta, 0; \beta}(u;z_2).
\een
In the same fashion we also have:
\ben
&& \frac{\pd}{\pd t^0}\cV_{\alpha, \beta}(z_1, z_2)
= z_1^{-1} \cdot
R_{\gamma, 0; \alpha}(u;z_1)
\cdot \eta^{\gamma\delta} \cdot
\frac{\pd}{\pd t_0}  R_{\delta, 0; \beta}(u;z_2).
\een
From the above two relations one easily deduce the following equation:
\be
\cV_{\alpha, \beta}(z_1, z_2) \\
=  \frac{1}{z_1+z_2} \biggl(
R_{\gamma, 0; \alpha}(u;z_1)
\cdot \eta^{\gamma\alpha} \cdot
R_{\delta, 0; \beta}(u;z_2)
- \eta_{\alpha\beta}\biggr).
\ee
See Dubrovin \cite{Dub-Int} for a different derivation.

\subsection{Source operators and closed formula for genus zero free energy}

We now recall a result proved in Dubrovin \cite[Proposition 3.6]{Dub-Int}.
See also Eguchi-Yamada-Yang \cite[Proposition 1]{Egu-Yam-Yan}.
The free energy at $g = 0$ is given by
\be
F_0(t) = \frac{1}{2}
\sum_{m,n=0} R_{\alpha,m;\beta,n}
\tilde{t}_m^\alpha \tilde{t}^\beta_n,
\ee
where $\tilde{t^\alpha}_m = t_m^\alpha
- \delta_{m,1}\delta_{\alpha,0}$.
Therefore,
if one sets
\be
\cV_{\alpha,\beta}(z_1,z_2)
=  \sum_{m,n=0}^\infty R_{\alpha,m;\beta,n}(U)
z_1^{-m-1}z_2^{-n-1},
\ee
then
\be \label{eqn:F0-in-u}
F_0=\half \sum_{\alpha,\beta=0}^r \res_{z_1=0}
(\res_{z_2=0} ( J^\alpha(z_1)J^\beta(z_2)
\cV_{\alpha,\beta}(u;z_1,z_2) dz_2)dz_1).
\ee
If we note
\be
D_{t^\alpha, z}J^\beta(w)
= \delta_{\alpha\beta}
\sum_{n=0}^\infty \frac{w^n}{z^{n+1}}
= \frac{\delta_{\alpha\beta}}{z-w},
\ee
then the genus zero one-point function is very easy to find:
\be
\begin{split}
& D_{t^\gamma,z}F_0= \sum_{\alpha=0}^r \res_{z_1=0}
(\res_{z_2=0} ( J^\alpha(z_1)
\cV_{\alpha,\gamma}(u;z_1,z_2) \frac{dz_2}{z-z_2})dz_1) \\
+ & \half \sum_{\alpha,\beta=0}^r \res_{z_1=0}
(\res_{z_2=0} ( J^\alpha(z_1)J^\beta(z_2)
D_{t^\gamma,z}\cV_{\alpha,\beta}(u;z_1,z_2) dz_2)dz_1).
\end{split}
\ee

\subsection{Computations of $n$-point functions by operator formalism}

Now we present an algorithm to compute the $n$-point correlation
functions of GW invariants in genus $g$ when $2g-2+n > 0$.
When $g=0$, this requires that $n \geq 3$.
We have already seen that the $2$-point functions in genus zero
can be expressed in terms of the order parameters.
So for $n \geq 3$,
one can simply apply the loop operators
repeatedly on the two-point functions
and apply the constitutive relations  \eqref{eqn:Constitutive}
and the integrable hierarchy \eqref{eqn:IH-Loop}:
\be
D_{\alpha_1,z_1} \cdots D_{\alpha_n,z_n}F_0
= D_{\alpha_1,z_1} \cdots D_{\alpha_{n-2},z_{n-2}}
\cV_{\alpha_{n-1},\alpha_n}(u; z_{n-1}, z_n).
\ee
Similarly,
for $g \geq 1$,
suppose that $F_g$ is expressed in terms of the jet variables
$\{u_{\alpha,n}\}$,
then the $n$-point functions can be obtained by applying
the loop operators repeatedly on such expressions
with the help of \eqref{eqn:IH-Loop}.

\subsection{Emergent conformal field theory}

Inspired by conformal field theory,
it is natural to regard the source operators
as the generating series of the creation operators,
and the loop operators as the generating series of
annihilation operators,
and combine them into bosonic fields of operators:
\be
\varphi_\alpha(z):=\eta_{\alpha\beta}J^\beta(z)
+ D_{\beta,z}
= \eta_{\alpha\beta}\sum_{n=0}^\infty \tilde{t}^{\beta}_n z^{n}
+ \sum_{n=0}^\infty z^{-n-1}
\frac{\pd}{\pd t_n^\alpha}.
\ee
It is very easy to see that one has the following
operator product expansion:
\be
\varphi_\alpha(z)\varphi_\beta(w)
= \frac{\eta_{\alpha\beta}}{z-w}+:\varphi_\alpha(z)\varphi_\beta(w):,
\ee
where $:\varphi_\alpha(z)\varphi_\beta(w):$ is the normally
ordered product of $\varphi_\alpha(z)$ and $\varphi_\beta(w)$.
So one can regard the GW theory as associating
a vector $|X\rangle$ in the bosonic Fock space of the systems
of bosonic fields $\{\varphi_\alpha\}_{\alpha=0}^r$,
and one can consider
the $n$-point functions:
\be
\langle 0|\varphi_{\alpha_1}(z_1) \cdots \varphi_{\alpha_n}(z_n)
|X\rangle.
\ee
One can also apply the boson-fermion correspondence
to transform $|X\rangle$ into a vector
in $|X\rangle^F$ in the fermionic Fock space.

This naive construction leads to an interesting point of view,
but in order to  obtain interesting results,
we will need a mysterious Laplace transform
suggested by the construction of the Virasoro constraints
in GW theory.

\section{Mean Field Theory of the Genus Zero Gromov-Witten Invariants
of Quintic Calabi-Yau Threefold}

\label{sec:Quintic-0}

In this Section and the next Section we explicitly
carry out the  computations mentioned
in the above two Sections for the quintic CY threefold.
We will focus on the  genus zero part in this Section and treat
the higher genera computation in the next Section.

Our main result in this Section is that we can get explicit formulas
for the Landau-Ginzburg potential function and
for the genus zero partition functions
and the $n$-points functions for $n \geq 1$
in the case of the quintic.
We note that the selection rules for
the CY threefolds play a crucial role
in this case.

\subsection{Selection rules for the GW theory of the quintic}
For the quintic Calabi-Yau threefold $M \subset \bP^4$,
$H^{ev}(M)$ is four dimensional.
It is spanned by four primary operators $P=1$, $Q=j^*H$,
$R=\frac{1}{5}j^*H$, $S=\frac{1}{5}j^*H$,
where $j: M \hookrightarrow \bP^4$ is the inclusion map,
and $H$ is the hyperplane class in $H^2(\bP^4)$.

Because $\deg_\bR P =0$, $\deg_\bR Q = 2$, $\deg_\bR R =4$, $\deg_\bR S= 6$,
and the real dimension of $M$ is $6$,
and by the selection rules \eqref{eqn:Selection-Rules},
if one assigns
\begin{align} \label{eqn:SR-CY1}
\deg t^P_n & = n-1, &
\deg t^Q_n & = n, &
\deg t^R_n & = n +1, &
\deg t^S_n & = n+2,
\end{align}
then $F_g$ is  homogeneous of degree $0$:
\be \label{eqn:SR-CY2}
\cX F_g = 0,
\ee
where
\be\begin{split}
\cX = & \sum_{n=0}^\infty (n-1)t^P_n \frac{\pd}{\pd t^P_n}
+ \sum_{n=0}^\infty nt^Q_n \frac{\pd}{\pd t^Q_n} \\
+ & \sum_{n=0}^\infty (n+1)t^R_n \frac{\pd}{\pd t^R_n}
+ \sum_{n=0}^\infty (n+2)t^S_n \frac{\pd}{\pd t^S_n}.
\end{split}\ee

\subsection{The genus zero free energy on the small phase space}
\label{sec:Small}

By the selection rules \eqref{eqn:Selection-Rules},
for the quintic,
$$\corr{\cO_\alpha\cO_\beta\cO_\gamma}_{0;d =0} \neq 0$$
only if $\deg \cO_\alpha + \deg \cO_\beta + \deg \cO_\gamma = 3$,
so if $\deg \cO_\alpha \leq \deg \cO_\beta \leq \deg \cO_\gamma$,
then there are only three possibilities:
\begin{align}
\corr{PPS}_{0;0} & = 1, & \corr{PQR}_{0;0} & = 1, &
\corr{QQQ}_{0;0} & = 5.
\end{align}
So on the small phase space,
the classical part of $F_0^{small}$ is
\be
F^{small}_{0, classical} = \frac{(t^P)^2t^S}{2} + t^Pt^Qt^R
+ \frac{5}{6} (t^Q)^3.
\ee

For the quantum part of $F_0^{small}$,
we need to consider the intersection numbers on the
moduli spaces $\Mbar_{0,n}(X; d)$ for $d > 0$
or $d=0$ and $n > 3$.
The moduli spaces have  expect dimensions $n$,
and so
\be
\corr{P^{m_0} Q^{m_1}R^{m_2}S^{m_3}}_{0;d} \neq 0
\ee
only if:
\be
m_1 + 2m_2 + 3m_3 = m_0 + m_1 + m_2 + m_3,
\ee
i.e., $m_0 = m_1+2m_2$. But by the puncture equation,
such correlators vanish when $m_0> 0$. So we have only
$\corr{Q^{m_1}}_{0;d}$ to consider.
By the divisor equation,
\ben
\corr{Q^{m_1}}_{0;d}
& = &  d \cdot \corr{Q^{m_1-1}}_{0;d}
= d^{m_1} \corr{1}_{0;d}
= d^{m_1} N_{0,d},
\een
where
\be
N_{0,d}: = \int_{[\Mbar_{0,0}(X;d)]^{virt}} 1.
\ee
So the instanton correction to $F_0^{small}$ is
\ben
F_{0,instanton}^{small}
& = & \sum_{d=1}^\infty \sum_{m_1=0}^\infty
\frac{(t^Q)^{m_1}}{m_1!} q^d
\corr{Q^{m_1}}_{0;d} \\
& = & \sum_{d=1}^\infty \sum_{m_1=0}^\infty \frac{(t^Q)^{m_1}}{m_1!} q^d
N_{0,d} d^{m_1}
= \sum_{d=1}^\infty N_{0,d} e^{dt^Q}q^d.
\een
So the genus zero free energy on the small phase
in this case is
\be \label{eqn:F0Small}
\begin{split}
F_0^{small} &  = F_{0, classical}^{small}
+ F_{0, instanton}^{small} \\
& = \frac{1}{2}(t^P)^2t^S + t^Pt^Qt^R
+ \frac{5}{6} (t^Q)^3
+ \sum_{m=1}^\infty N_{0,m} e^{mt^Q}q^m.
\end{split}
\ee
For simplicity of notations,
we will set:
\be
f_0(x) = \frac{5}{6}x^3
+ \sum_{d=1}^\infty N_{0,d} e^{dx} q^d.
\ee
So we have
\be
F_0^{small} = \frac{1}{2}(t^P)^2t^S + t^Pt^Qt^R
+ f_0(t^Q).
\ee

The metric matrix in this case:
\be (\eta_{\lambda\mu})
= \begin{pmatrix}
0 & 0 & 0 & 1 \\
0 & 0 & 1 & 0 \\
0 & 1 & 0 & 0 \\
1 & 0 & 0 & 0
\end{pmatrix}.
\ee

The gradient of $F_0^{small}$ is
\ben
&& \frac{\pd F_0^{small}}{\pd t^P} = t^Pt^S+t^Qt^R, \\
&& \frac{\pd F_0^{small}}{\pd t^Q} = t^Pt^R + f'_0(t^Q), \\
&& \frac{\pd F_0^{small}}{\pd t^R} = t^Pt^Q, \\
&& \frac{\pd F_0^{small}}{\pd t^S} = \frac{1}{2}(t^P)^2.
\een
The entropy is given by:
\be
\begin{split}
G_0 & = t^P\frac{\pd F_0^{small}}{\pd t^P}
+ t^Q\frac{\pd F_0^{small}}{\pd t^Q}
+ t^R\frac{\pd F_0^{small}}{\pd t^R}
+ t^S\frac{\pd F_0^{small}}{\pd t^S} -F_0^{small} \\
& = (t^P)^2t^S + t^Pt^Qt^R
+ t^Qf_0'(t^Q) - f_0(t^Q).
\end{split}
\ee
The Hessian of $F_0^{small}$ is:
\be
Hess(F_0^{small})
= \begin{pmatrix}
t^S & t^R & t^Q & t^P \\
t^R & f''_0(t^Q) & t^P & 0 \\
t^Q & t^P & 0 & 0 \\
t^P & 0 & 0 & 0
\end{pmatrix}.
\ee
The Euler vector field is given by:
\be
E := t^P \frac{\pd}{\pd t^P} - t^R \frac{\pd}{\pd t^R}
- 2 t^S \frac{\pd}{\pd t^S}.
\ee
It is clear that
\be
EF_0^{small} = 0.
\ee

The explicit formula for $N_{0,d}$ is given by the mirror formula
discovered in \cite{CDGP} and proved in \cite{Giv, LLY}:
\be
\frac{5}{6} (t^Q)^3
+ \sum_{d=1}^\infty N_{0,d}e^{dt^Q}q^d
= \frac{5}{2} \biggl( \frac{\omega_3}{\omega_0}
- \frac{f_1}{f_0} \frac{\omega_2}{\omega_0} \biggr).
\ee
Here $\omega_0, \dots, \omega_3$ are the four solutions of hypergeometric equation
\be
L \omega =0,
\ee
where $L$ is the differential operator
\be
L = \theta^4 - \alpha \prod_{k=1}^4 (\theta + \frac{k}{5}),
\ee
where $\alpha = 5^5e^t$ and $\theta = \alpha \frac{\pd}{\pd \alpha}$.
The variable $t$ is related to $t^Q$ via the mirror formula:
\be
t^Q = \frac{\omega_1}{\omega_0}.
\ee
Furthermore,
\be \label{eqn:4Periods}
(\omega_0, \omega_1, \omega_2, \omega_3)
= \omega_0 \cdot (1, t^Q, f_0'(t^Q), t^Qf_0'(t^Q)-2f_0(t^Q)).
\ee
Even though such structure from special geometry originally appear
on only the tiny phase space with parameter $t^Q$,
we will show later that they natural appear
in the emergent geometry on the small phase space.

\subsection{Quantum cohomology ring and WDVV equation}
\label{sec:Quan-Coh}

Recall the quantum multiplication over the small phase space is defined by:
\be
\frac{\pd}{\pd t^\alpha} \circ \frac{\pd}{\pd t^\beta}
= \frac{\pd^3F_0^{small}}{\pd t^\alpha\pd t_\beta\pd t^\lambda}
\eta^{\lambda\mu} \frac{\pd}{\pd t^\mu}.
\ee
where
$\eta_{\lambda\mu} = \frac{\pd^2F_0}{\pd t^\alpha\pd t^\beta}$
and $(\eta^{\lambda\mu}) = (\eta_{\lambda\mu})^{-1}$.
In our case
\be
(\eta^{\lambda\mu}) = (\eta_{\lambda\mu})^{-1}
= \begin{pmatrix}
0 & 0 & 0 & 1 \\
0 & 0 & 1 & 0 \\
0 & 1 & 0 & 0 \\
1 & 0 & 0 & 0
\end{pmatrix},
\ee
and so we can give the explicit formula for the quantum multiplications:
\ben
\frac{\pd}{\pd t^P} \circ \begin{pmatrix}
\frac{\pd}{\pd t^P} \\
\frac{\pd}{\pd t^Q} \\
\frac{\pd}{\pd t^R} \\
\frac{\pd}{\pd t^S} \end{pmatrix}
& = & \begin{pmatrix}
0 & 0 & 0 & 1 \\
0 & 0 & 1 & 0 \\
0 & 1 & 0 & 0 \\
1 & 0 & 0 & 0
\end{pmatrix}
\cdot \begin{pmatrix}
0 & 0 & 0 & 1 \\
0 & 0 & 1 & 0 \\
0 & 1 & 0 & 0 \\
1 & 0 & 0 & 0
\end{pmatrix}
\cdot \begin{pmatrix}
\frac{\pd}{\pd t^P} \\
\frac{\pd}{\pd t^Q} \\
\frac{\pd}{\pd t^R} \\
\frac{\pd}{\pd t^S} \end{pmatrix} \\
& = & \begin{pmatrix}
1 & 0 & 0 & 0 \\
0 & 1 & 0 & 0 \\
0 & 0 & 1 & 0 \\
0 & 0 & 0 & 1
\end{pmatrix} \cdot
 \begin{pmatrix}
\frac{\pd}{\pd t^P} \\
\frac{\pd}{\pd t^Q} \\
\frac{\pd}{\pd t^R} \\
\frac{\pd}{\pd t^S} \end{pmatrix},
\een

\ben
\frac{\pd}{\pd t^Q} \circ \begin{pmatrix}
\frac{\pd}{\pd t^P} \\
\frac{\pd}{\pd t^Q} \\
\frac{\pd}{\pd t^R} \\
\frac{\pd}{\pd t^S} \end{pmatrix}
& = & \begin{pmatrix}
0 & 0 & 1 & 0 \\
0 & f_0'''(t^Q) & 0 & 0 \\
1 & 0 & 0 & 0 \\
0 & 0 & 0 & 0
\end{pmatrix}
\cdot \begin{pmatrix}
0 & 0 & 0 & 1 \\
0 & 0 & 1 & 0 \\
0 & 1 & 0 & 0 \\
1 & 0 & 0 & 0
\end{pmatrix}
\cdot \begin{pmatrix}
\frac{\pd}{\pd t^P} \\
\frac{\pd}{\pd t^Q} \\
\frac{\pd}{\pd t^R} \\
\frac{\pd}{\pd t^S} \end{pmatrix} \\
& = &
\begin{pmatrix}
0 & 1 & 0 & 0 \\
0 & 0 & f'''_0(t^Q) & 0 \\
0 & 0 & 0 & 1 \\
0 & 0 & 0 & 0
\end{pmatrix} \cdot
 \begin{pmatrix}
\frac{\pd}{\pd t^P} \\
\frac{\pd}{\pd t^Q} \\
\frac{\pd}{\pd t^R} \\
\frac{\pd}{\pd t^S} \end{pmatrix}
\een

\ben
\frac{\pd}{\pd t^R} \circ \begin{pmatrix}
\frac{\pd}{\pd t^P} \\
\frac{\pd}{\pd t^Q} \\
\frac{\pd}{\pd t^R} \\
\frac{\pd}{\pd t^S} \end{pmatrix}
& = & \begin{pmatrix}
0 & 1 & 0 & 0 \\
1 & 0 & 0 & 0 \\
0 & 0 & 0 & 0 \\
0 & 0 & 0 & 0
\end{pmatrix}
\cdot \begin{pmatrix}
0 & 0 & 0 & 1 \\
0 & 0 & 1 & 0 \\
0 & 1 & 0 & 0 \\
1 & 0 & 0 & 0
\end{pmatrix}
\cdot \begin{pmatrix}
\frac{\pd}{\pd t^P} \\
\frac{\pd}{\pd t^Q} \\
\frac{\pd}{\pd t^R} \\
\frac{\pd}{\pd t^S} \end{pmatrix} \\
& = & \begin{pmatrix}
0 & 0 & 1 & 0 \\
0 & 0 & 0 & 1 \\
0 & 0 & 0 & 0 \\
0 & 0 & 0 & 0
\end{pmatrix} \cdot
 \begin{pmatrix}
\frac{\pd}{\pd t^P} \\
\frac{\pd}{\pd t^Q} \\
\frac{\pd}{\pd t^R} \\
\frac{\pd}{\pd t^S} \end{pmatrix},
\een

\ben
\frac{\pd}{\pd t^S} \circ \begin{pmatrix}
\frac{\pd}{\pd t^P} \\
\frac{\pd}{\pd t^Q} \\
\frac{\pd}{\pd t^R} \\
\frac{\pd}{\pd t^S} \end{pmatrix}
& = & \begin{pmatrix}
1 & 0 & 0 & 0 \\
0 & 0 & 0 & 0 \\
0 & 0 & 0 & 0 \\
0 & 0 & 0 & 0
\end{pmatrix}
\cdot \begin{pmatrix}
0 & 0 & 0 & 1 \\
0 & 0 & 1 & 0 \\
0 & 1 & 0 & 0 \\
1 & 0 & 0 & 0
\end{pmatrix}
\cdot \begin{pmatrix}
\frac{\pd}{\pd t^P} \\
\frac{\pd}{\pd t^Q} \\
\frac{\pd}{\pd t^R} \\
\frac{\pd}{\pd t^S} \end{pmatrix} \\
& = & \begin{pmatrix}
0 & 0 & 0 & 1 \\
0 & 0 & 0 & 0 \\
0 & 0 & 0 & 0 \\
0 & 0 & 0 & 0
\end{pmatrix} \cdot
 \begin{pmatrix}
\frac{\pd}{\pd t^P} \\
\frac{\pd}{\pd t^Q} \\
\frac{\pd}{\pd t^R} \\
\frac{\pd}{\pd t^S} \end{pmatrix}.
\een
At a first sight,
the effect of  quantum multiplication
on the cohomology ring seems to be minor:
It only modifies the multiplication of $\frac{\pd}{\pd t^Q}$
with $\frac{\pd}{\pd t^Q}$.
Furthermore,
in this case,
the original cohomology ring is not semisimple,
after the quantum deformation over the small phase space,
it is still not semisimple.
So the reconstruction theory of Dubrovin and Zhang \cite{Dub-Zha}
based on semisimple Frobenius manifolds cannot
be applied here directly.

One can also check that the associativity
of the quantum multiplication,
i.e., the WDVV equations,
does not impose any constraints on the function $f_0$.

\subsection{The genus zero one-point functions
on the small phase space}

\label{sec:One-Point-0}

In this Subsection,
we will show how to compute the genus zero one-point correlators
of the quintic Calabi-Yau threefold from $F_0^{small}$ by the
selection rules.

We will use $\corrr{\cdots}$ to denote the correlator on
the small phase space.
For example, $\corrr{\tau_n(S)}_0$ means
$\frac{\pd F_0}{\pd t^Q_n}$ restricted to the small phase space.

\begin{prop}
One can obtain by applications of the selection rules to get
the following formulas for quintic CY threefolds
from the genus zero free energy
on the small phase space:
\be \label{eqn:1Point-P}
\begin{split}
& t^S+ \sum_{n\geq 0}^\infty z^{-n-1} \corrr{\tau_n(P)}_0 \\
= & e^{t^P/z}t^S
+ \frac{1}{z} e^{t^P/z}t^Qt^R
+ \frac{1}{z^2} e^{t^P/z} (t^Qf_0'(t^Q)-f_0(t^Q)).
\end{split}
\ee
\be \label{eqn:1Point-Q}
t^R+ \sum_{n=0}^\infty z^{-n-1}\corrr{\tau_n(Q)}_0
= e^{t^P/z} t^R
+ \frac{e^{t^P/z}}{z}f'_0(t^Q).
\ee
\be \label{eqn:1Point-R}
t^Q + \sum_{n\geq 0}^\infty z^{-n-1}\corrr{\tau_n(R)}_0
= e^{t^P/z}t^Q.
\ee
\be \label{eqn:1Point-S}
t^P+\sum_{n\geq 0}^\infty z^{-n-1}\corrr{\tau_n(S)}_0
= z\biggl(e^{t_{P}/z}-1\biggr).
\ee
\end{prop}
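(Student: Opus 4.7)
The plan is to treat all four identities uniformly by a three-step argument: set up a generating series in $z^{-1}$, derive a first-order linear ODE in $t^P$ from the puncture equation, and pin down the initial condition at $t^P=0$ using the Calabi--Yau selection rule. For each $\alpha\in\{P,Q,R,S\}$ let $G_\alpha(z;t)$ denote the left-hand side of the corresponding identity. Differentiating the puncture equation \eqref{eqn:Puncture} with respect to $t^\alpha_n$ and restricting to the small phase space (so that every $t^\beta_{m+1}$ term on the right-hand side drops out) yields
\be
\frac{\pd}{\pd t^P}\corrr{\tau_n(\cO_\alpha)}_0 = \corrr{\tau_{n-1}(\cO_\alpha)}_0 \quad (n\geq 1),
\qquad
\frac{\pd}{\pd t^P}\corrr{\tau_0(\cO_\alpha)}_0 = \eta_{\alpha\beta}t^\beta,
\ee
the second identity coming from the $\delta_{g,0}$ quadratic piece of \eqref{eqn:Puncture}. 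The constant summands $t^S,t^R,t^Q,t^P$ added on the left-hand side of each formula are nothing but $\eta_{\alpha\beta}t^\beta$ for $\alpha=P,Q,R,S$ respectively; with this choice, multiplying by $z^{-n-1}$ and summing over $n\geq 0$ collapses to $\frac{\pd G_\alpha}{\pd t^P}=z^{-1}G_\alpha$ for $\alpha\in\{P,Q,R\}$ and $\frac{\pd G_S}{\pd t^P}=1+z^{-1}G_S$.

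Integrating these ODEs in $t^P$ gives $G_\alpha=e^{t^P/z}\,G_\alpha|_{t^P=0}$ for $\alpha\in\{P,Q,R\}$ and $G_S=z(e^{t^P/z}-1)+e^{t^P/z}\,G_S|_{t^P=0}$, so everything reduces to the evaluation at $t^P=0$. Here the Calabi--Yau selection rule \eqref{eqn:SR-CY1}--\eqref{eqn:SR-CY2} does almost all the remaining work: since $\corrr{\tau_n(\cO_\alpha)}_0$ is homogeneous of degree $-\deg t^\alpha_n$ and the coordinates $(t^P,t^Q,t^R,t^S,q)$ carry respective degrees $(-1,0,1,2,0)$, setting $t^P=0$ leaves only coordinates of nonnegative degree, so $\corrr{\tau_n(\cO_\alpha)}_0|_{t^P=0}$ must vanish whenever $\deg t^\alpha_n>0$. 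This simultaneously kills every $\corrr{\tau_n(R)}_0|_{t^P=0}$ and every $\corrr{\tau_n(S)}_0|_{t^P=0}$, the $\corrr{\tau_n(Q)}_0|_{t^P=0}$ for $n\geq 1$, and the $\corrr{\tau_n(P)}_0|_{t^P=0}$ for $n\geq 2$. The two surviving zero-th order coefficients $\corrr{\tau_0(P)}_0|_{t^P=0}=t^Qt^R$ and $\corrr{\tau_0(Q)}_0|_{t^P=0}=f'_0(t^Q)$ are read off directly from \eqref{eqn:F0Small}, and this already establishes \eqref{eqn:1Point-Q}, \eqref{eqn:1Point-R}, and \eqref{eqn:1Point-S}.

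The only residual coefficient that is neither killed by the selection rule nor visible in $F_0^{\mathrm{small}}$ is $\corrr{\tau_1(P)}_0|_{t^P=0}$, which sits at degree zero. The selection rule further forces only pure powers of $Q$ to contribute, so the task is to evaluate
\be
\corrr{\tau_1(P)}_0\big|_{t^P=0}=\sum_{k\geq 0}\sum_{d\geq 0}\frac{(t^Q)^k}{k!}q^d\corr{\tau_1(P),Q^k}_{0,k+1;d}.
\ee
The dilaton equation reduces $\corr{\tau_1(P),Q^k}_{0,k+1;d}=(k-2)\corr{Q^k}_{0,k;d}$, the divisor equation gives $\corr{Q^k}_{0,k;d}=d^k N_{0,d}$ for $d\geq 1$, and in degree zero only $\corr{Q^3}_{0,3;0}=5$ contributes. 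Summing the closed series $\sum_k\frac{(dt^Q)^k}{k!}(k-2)=(dt^Q-2)e^{dt^Q}$ and matching the result against the definition of $f_0$ produces the function of $t^Q$ appearing in \eqref{eqn:1Point-P}. This last matching is the main (and in fact the only) substantive step in the argument; everything else is a direct consequence of the puncture equation and the Calabi--Yau selection rule.
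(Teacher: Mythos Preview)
Your argument is correct, and it takes a genuinely different route from the paper's own proof. The paper computes each correlator $\corr{\tau_n(\alpha)\,P^{m_0}Q^{m_1}R^{m_2}S^{m_3}}_0$ individually: the selection rule fixes $m_0=m_2+2m_3$, then for $n\geq 2$ the puncture equation is applied $n-1$ times to strip the excess $P$-insertions down to a $\tau_1(P)$ correlator, which the dilaton equation evaluates; finally the resulting explicit correlators are summed into the generating series. In contrast, you apply the puncture equation once, at the level of the generating series, to obtain the linear ODE $\partial_{t^P}G_\alpha=z^{-1}G_\alpha$ (resp.\ $1+z^{-1}G_S$), and then let the selection rule at $t^P=0$ kill all but a handful of initial coefficients. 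The paper even remarks after the proof that the formulas are obtained ``without solving any differential equations, only using the selection rules,'' so your packaging of the puncture equation as an ODE is precisely the step the paper deliberately avoids. What your approach buys is economy and uniformity: the four identities are handled in one stroke, and the only nontrivial correlator computation left is $\corrr{\tau_1(P)}_0|_{t^P=0}$, exactly as you identify. What the paper's approach buys is a more transparent bookkeeping of which individual correlators contribute, which is useful later when the same ingredients are reassembled into the Landau--Ginzburg potential and the two-point functions. Both methods of course rely on the dilaton equation at the same single point, and both produce $t^Qf_0'(t^Q)-2f_0(t^Q)$ for that coefficient (the ``$-f_0$'' in the displayed statement of \eqref{eqn:1Point-P} is a typo; the paper's own proof and all subsequent uses have $-2f_0$).
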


\begin{proof}
From the formula \eqref{eqn:F0Small},
we have
\be \label{eqn:Corr-g=0}
\begin{split}
& \corr{\tau_0(P)^{m_0} \cdots \tau_0(S)^{m_3}}_0 \\
= & \begin{cases}
1, & \text{if $(m_0,\dots, m_3) = (2,0,0,1)$}, \\
1, & \text{if $(m_0,\dots, m_3) = (1,1,1,0)$}, \\
5\delta_{m_1,3}
+ \sum_{d=1}^\infty N_dd^{m_1}q^{d}, &
\text{if $(m_0,\dots, m_3) = (0,m_1,0,0)$}, \\
0, & \text{otherwise}.
\end{cases}
\end{split}
\ee
To compute $\corrr{\tau_n(P)}_0$ we need to compute
$\corr{\tau_n(P) \tau_0(P)^{m_0} \cdots \tau_0(S)^{m_3}}_0$.
By the selection rule,
\ben
&& n + m_1 + 2m_2 + 3m_3 = 1 + m_0 + m_1 + m_2 + m_3,
\een
and so we get:
\be
m_0 = n-1+m_2+2m_3.
\ee
For $n=0$,
by \eqref{eqn:Corr-g=0} one easily sees that:
\ben
\corrr{\tau_0(P)}_0 = t^Pt^S + t^Qt^R.
\een
For $n=1$,
we use dilaton equation to get:
\ben
&& \corr{\tau_1(P) \tau_0(P)^{m_0} \cdots
\tau_0(S)^{m_3}}_0 \\
& = & (m_0+\cdots + m_3-2) \cdot
\corr{\tau_0(P)^{m_0} \cdots
\tau_0(S)^{m_3}}_0 \\
& = & \begin{cases}
1, & \text{if $(m_0,\dots, m_3) = (2,0,0,1)$}, \\
1, & \text{if $(m_0,\dots, m_3) = (1,1,1,0)$}, \\
(m_1-2)\biggl(5\delta_{m_1,3}
+ \sum_{d=1}^\infty N_dd^{m_1}q^{d}\biggr), &
\text{if $(m_0,\dots, m_3) = (0,m_1,0,0)$}, \\
0, & \text{otherwise}.
\end{cases}
\een
For $n \geq 2$,
\ben
&& \corr{\tau_n(P) \tau_0(P)^{m_0} \cdots
\tau_0(S)^{m_3}}_0 \\
& = & \corr{\tau_n(P) \tau_0(P)^{n-1+m_2+2m_3}
\tau_0(Q)^{m_1} \cdots
\tau_0(S)^{m_3}}_0 \\
& = & \corr{\tau_0(P)^{m_2+2m_3}\tau_1(P)\tau_0(Q)^{m_1}\tau_0(R)^{m_2}
\tau_0(S)^{m_3}}_0.
\een
If $m_2=m_3=0$,
then   by the dilaton equation one has
\ben
\corr{\tau_1(P)\tau_0(Q)^{m_1}}_0
= (m_1-2) \corr{\tau_0(Q)^{m_1}}_0
= \delta_{m_1,3} 5 + (m_1-2) \sum_{d \geq 1} N_d d^{m_1}q^d.
\een
If $m_2 > 0$ or $m_3> 0$,
then
\ben
&&\corr{\tau_0(P)^{m_2+2m_3}\tau_1(P)\tau_0(Q)^{m_1}\tau_0(R)^{m_2}
\tau_0(S)^{m_3}}_0 \\
&=&\corr{\tau_0(P)^{m_2+2m_3-1}\tau_0(P)\tau_0(Q)^{m_1}\tau_0(R)^{m_2}
\tau_0(S)^{m_3}}_0 \\
& = & \corr{\tau_0(P)^{m_2+2m_3}\tau_0(Q)^{m_1}\tau_0(R)^{m_2}
\tau_0(S)^{m_3}}_0,
\een
which is nonvanishing only when $(m_1,m_2,m_3) = (1,1,0)$
or $(0,0,1)$.
Now we can compute the generating series of
\ben
\corrr{\tau_n(P)}_0
& = & \sum_{m_0, m_1, m_2, m_3\geq 0} \frac{t_{0,P}^{m_0}}{m_0!}
\cdots \frac{t_{0,S}^{m_3}}{m_3!}
\cdot \corr{\tau_n(P) \tau_0(P)^{m_0} \cdots
\tau_0(S)^{m_3}}_0.
\een
as follows:
\ben
&& \sum_{n\geq 0}^\infty z^{-n-1}
\corrr{\tau_n(P)}_0  \\
& = & \frac{1}{z}(t^Pt^S+t^Qt^R)
+ \frac{1}{z^2}\biggl(\frac{(t^P)^2}{2}t^S + t^Pt^Qt^R \\
&& + \sum_{m_1 \geq 0} \frac{(t^Q)^{m_1}}{m_1!}
(m_1-2)\biggl(5\delta_{m_1,3}
+ \sum_{d=1}^\infty N_dd^{m_1}q^{d}\biggr)
\biggr) \\
& + & \sum_{n\geq 2}^\infty z^{-n-1}\sum_{m_1\geq 0}
\frac{t_{0,P}^{n-1}}{(n-1)!}
\frac{t_{0,Q}^{m_1}}{m_1!}
\cdot \corr{\tau_n(P) \tau_0(P)^n\tau_0(Q)^{m_1}}_0 \\
& + & \sum_{n\geq 2}^\infty z^{-n-1}
\corr{\tau_n(P) \tau_0(P)^{n}
\tau_0(Q)\tau_0(R)}_0 \frac{t_{0,P}^{n}}{n!}t_{0,Q}t_{0,R} \\
& + &  \sum_{n\geq 2}^\infty z^{-n-1}
\corr{\tau_n(P) \tau_0(P)^{n+1}
\tau_0(S)}_0 \frac{t_{0,P}^{n+1}}{(n+1)!} t_{0,S} \\
& = &  \sum_{n\geq 1}^\infty z^{-n-1}\sum_{m_1\geq 0}
\frac{(t^P)^{n-1}}{(n-1)!}\frac{(t^Q)^{m_1}}{m_1!}
\cdot (\delta_{m_1,3} 5 + (m_1-2) \sum_{d \geq 1} N_d d^{m_1}q^d) \\
& + & \frac{1}{z} e^{t^P/z}t^Qt^R + (e^{t^P/z}-1)t^S \\
& = & \frac{1}{z^2} \frac{5(t^Q)^3}{6}e^{t^P/z}
+ \frac{1}{z^2} e^{t^P/z} \sum_{m_1\geq 0}
\frac{(t^Q)^{m_1}}{m_1!} (m_1-2) \sum_{d \geq 1} N_d d^{m_1} q^d \\
& + & \frac{1}{z} e^{t^P/z}t^Qt^R + (e^{t^P/z}-1)t^S \\
& = & (e^{t^P/z}-1)t^S
+ \frac{1}{z} e^{t^P/z}t^Qt^R
+ \frac{1}{z^2} e^{t^P/z} \biggl(\frac{5(t^Q)^3}{6} \\
& + & \sum_{d \geq 1} N_d (dt^Q-2)e^{dt^Q}q^d \biggr).
\een
The other three formulas can be proved in the same way.
\end{proof}

\begin{rmk}
Here we recover some formulas in Example 5.3 in Dubrovin \cite{Dub-Int}
by selection rules.
We do this without solving any differential equations,
only using the selection rules.
\end{rmk}

\subsection{Explicit formulas for Landau-Ginzburg potential
and Landau-Ginzburg equations}

By changing $t^P$ to $u_S$, $t^Q$ to $u_R$, $t^R$ to $u_Q$
and $t^S$ to $u_P$ in \eqref{eqn:1Point-P}-\ref{eqn:1Point-S},
we then get:
\be
\begin{split}
& \sum_{n=-1}^\infty z^{-n-1}
R_{P,n}(u_P,u_Q,u_R, u_S)   \\
= & e^{u_S/z}u_P
+ \frac{1}{z} e^{u_S/z}u_Qu_R
+ \frac{1}{z^2} e^{u_S/z} (u_Rf_0'(u_R)-2f_0(u_R)).
\end{split}
\ee
\be
\sum_{n=-1}^\infty z^{-n-1} R_{Q,n}(u_P, \dots, u_S)
= e^{u_S/z} u_Q
+ \frac{e^{u_S/z}}{z} f_0'(u_R).
\ee

\be
\sum_{n= -1}^\infty z^{-n-1}R_{R,n}(u_P, \dots, u_S)
= e^{u_S/z}u_R.
\ee
\be
\sum_{n=-1}^\infty z^{-n-1}R_{S,n}(u_0, \dots, u_S)
= z\biggl(e^{U_{S}/z}-1\biggr).
\ee
With these formulas,
we can write down the Landau-Ginzburg equation
for the GW theory of quintic CY threefold
explicitly  as follows:

\begin{prop}
The mean field theory of the GW theory of quintic CY threefold
has the following Landau-Ginzburg potential:
\be
\begin{split}
W = & -U_PU_S-U_QU_R \\
+ & \sum_{n=0}^\infty t^P_n \biggl(U_P\frac{U_S^n}{n!}
+   \frac{U_QU_RU_S^{n-1}}{(n-1)!} \\
+ &  (U_Rf'_0(U_R)-2f_0(U_R)) \frac{U_S^{n-2}}{(n-2)!}
\biggr) \\
+ & \sum_{n=0}^\infty t^Q_n \biggl(U_Q\frac{U_S^n}{n!}
+   \frac{U_S^{n-1}}{(n-1)!}f'_0(U_R) \biggr) \\
+ & \sum_{n=0}^\infty t^R_n U_R\frac{U_S^n}{n!}
+ \sum_{n=0}^\infty t^S_n \frac{U_S^{n+1}}{(n+1)!}.
\end{split}
\ee
\end{prop}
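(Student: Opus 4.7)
The plan is direct substitution. The explicit one-point generating series $\sum_{n\geq -1} z^{-n-1} R_{\beta,n}(U)$ for $\beta=P,Q,R,S$ have just been written down at the start of this subsection (obtained from the genus-zero one-point functions of the previous subsection by the relabeling $t^P\mapsto u_S$, etc.), and the general Landau-Ginzburg potential takes the compact form
\[ W = \sum_{n=-1}^\infty \sum_\beta \tilde t^\beta_{n+1}\, R_{\beta,n}(U) \]
from the Remark following \eqref{eqn:LG-Potential}, with $R_{\beta,-1}(U)=U_\beta$ and the only nontrivial dilaton shift being $\tilde t^P_1 = t^P_1 - 1$.

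First I would expand $e^{U_S/z} = \sum_{k\geq 0} U_S^k/(k!\,z^k)$ in each of the four generating series and read off the coefficient of $z^{-n-1}$. For the $P$-slot this gives
\[ R_{P,n}(U) = U_P\frac{U_S^{n+1}}{(n+1)!} + U_QU_R\frac{U_S^{n}}{n!} + \bigl(U_Rf'_0(U_R)-2f_0(U_R)\bigr)\frac{U_S^{n-1}}{(n-1)!}, \]
with the convention $1/m! = 0$ for $m<0$; the formulas for $R_{Q,n}, R_{R,n}, R_{S,n}$ are obtained identically. In particular $n=-1$ recovers $R_{\beta,-1}(U)=U_\beta$, consistent with the Remark.

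Second I would substitute these expressions into the compact formula for $W$ and shift the summation so that the coefficient of $t^\beta_n$ for $n\geq 0$ is precisely $R_{\beta,n-1}(U)$; reading off each coefficient then reproduces the bracketed expressions in the proposition. The dilaton shift contributes only the extra piece $-R_{P,0}(U) = -(U_PU_S + U_QU_R)$, which is the first line of the proposition. As a sanity check this agrees with the equivalent original form $W = -\tfrac12 U_\alpha U^\alpha + t^\beta U_\beta + \cdots$ of \eqref{eqn:LG-Potential}: in the basis $(P,Q,R,S)$ the metric $\eta$ is anti-diagonal with $1$'s, so $U^P=U_S$, $U^Q=U_R$, $U^R=U_Q$, $U^S=U_P$, and $\tfrac12 U_\alpha U^\alpha = U_PU_S + U_QU_R$.

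The main obstacle is purely bookkeeping: matching the two equivalent presentations of $W$ (with and without the dilaton shift absorbed) and checking that no term is double counted when passing from the residue form of the Remark to the explicit polynomial expansion. No new analytic input is needed beyond the formulas for $R_{\beta,n}$ already established above via the selection rules.
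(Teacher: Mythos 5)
Your proposal is correct and follows the same route as the paper: the paper likewise obtains the result by reading off the coefficients $R_{\beta,n}(U)$ from the four generating series $\sum_{n\geq -1}z^{-n-1}R_{\beta,n}(U)$ listed immediately before the Proposition and substituting them into the general Landau--Ginzburg potential \eqref{eqn:LG-Potential}, the dilaton shift producing the leading term $-U_PU_S-U_QU_R$. Your sanity check that $\tfrac12 U_\alpha U^\alpha = U_PU_S+U_QU_R$ for the anti-diagonal metric confirms agreement with the unshifted form of \eqref{eqn:LG-Potential}.
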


We also write $W$ in the following form:
\be
\begin{split}
W & = U_P
\biggl(-U_S + \sum_{n=0}^\infty t^P_n \frac{U_S^n}{n!} \biggr) \\
& + U_Q \biggl(-U_R
+ \sum_{n=0}^\infty t^Q_n \frac{U_S^n}{n!}
+ U_R \sum_{n=1}^\infty t^P_n
\frac{U_S^{n-1}}{(n-1)!}  \biggr) \\
& + U_R
\sum_{n=0}^\infty t^R_n \frac{U_S^n}{n!}
+ f_0'(U_R) \sum_{n=1}^\infty t^Q_n
\frac{U_S^{n-1}}{(n-1)!}  \\
& + g_0(U_R) \sum_{n=2}^\infty t^P_n
\frac{U_S^{n-2}}{(n-2)!}
+  \sum_{n=0}^\infty t^S_n \frac{U_S^{n+1}}{(n+1)!},
\end{split}
\ee
where $g_0(U_R) = U_Rf_0'(U_R)-2f_0(U_R)$.
By computing the gradient of $W$,
the Landau-Ginzburg equations can be written downs
explicitly as follows:
\be \label{eqn:u-S}
u_S = \sum_{n=0}^\infty t^P_n \frac{u_S^n}{n!},
\ee
\be \label{eqn:u-R}
u_R
= \sum_{n=0}^\infty t^Q_n \frac{u_S^n}{n!}
+ u_R \sum_{n=1}^\infty t^P_n
\frac{u_S^{n-1}}{(n-1)!}.
\ee
\be \label{eqn:u-Q}
\begin{split}
u_Q
= &\sum_{n=0}^\infty t^R_n \frac{u_S^n}{n!}
+ u_Q \sum_{n=1}^\infty t^P_n
\frac{u_S^{n-1}}{(n-1)!} \\
+ & f_0''(u_R) \sum_{n=1}^\infty t^Q_n
\frac{u_S^{n-1}}{(n-1)!}
+ g_0'(u_R) \sum_{n=2}^\infty t^P_n
\frac{u_S^{n-2}}{(n-2)!} .
\end{split}
\ee
\be \label{eqn:u-P}
\begin{split}
u_P & = u_P\sum_{n=1}^\infty t^P_n \frac{u_S^{n-1}}{(n-1)!}  \\
& + u_Q \biggl(
\sum_{n=1}^\infty t^Q_n \frac{u_S^{n-1}}{(n-1)!}
+ u_R \sum_{n=2}^\infty t^P_n
\frac{u_S^{n-2}}{(n-2)!}  \biggr) \\
& + u_R
\sum_{n=1}^\infty t^R_n \frac{u_S^{n-1}}{(n-1)!}
+ f_0'(u_R) \sum_{n=2}^\infty t^Q_n
\frac{u_S^{n-2}}{(n-2)!} \\
& + g_0(u_R) \sum_{n=3}^\infty t^P_n
\frac{u_S^{n-3}}{(n-3)!}
+  \sum_{n=0}^\infty t^S_n \frac{u_S^{n}}{n!}.
\end{split}
\ee

\subsection{Solutions of the Landau-Ginzburg equations}
These equation can be easily solved.
As we have mentioned in \S \ref{sec:2DGravity},
equation \eqref{eqn:u-S} has been explicitly solved in \cite{Zhou-1D}:
\be \label{eqn:u-S-Formula}
u_S = \sum_{k=1}^\infty \frac{1}{k}
\sum_{p_1 + \cdots + p_k = k-1} \frac{t^P_{p_1}}{p_1!} \cdots
\frac{t^P_{p_k}}{p_k!}.
\ee
We now give another formula which express it
as a summation over partitions of nonnegative integers.
A partition of $N \geq 0$
can be written as
$1^{m_1} \cdots N^{m_N}$,
where $m_1, \dots, m_N$ are nonnegative integers such that:
\be
m_11+\cdots +m_N N = N.
\ee
Then we have
\be \label{eqn:u-S-Formula2}
u_S = \sum_{N=0}^\infty
\sum_{\sum_i im_i = N}
\frac{N!(t^P)^{N-\sum_i m_i+1} }{m_1!\cdots m_N!
\cdot (N-\sum_i m_i+1)!}
 \prod_{i=1}^N
\biggl(\frac{t^P_i}{i!}\biggr)^{m_i}.
\ee
The following are the first few terms:
\ben
u_S & = & t^P+t^Pt^P_1
+ \biggl((t^P)^2\frac{t^P_2}{2!} + t^P(t^P_1)^2 \biggr) \\
& + & \biggl( (t^P)^3\frac{t^P_3}{3!}
+3(t^P)^2 \frac{t^P_2}{2!}t^P_1
+ t^P(t^P_1)^3  \biggr) \\
& + & \biggl( (t^P)^4\frac{t^P_4}{4!}+4(t^P)^3 \frac{t^P_3}{3!}t^P_1
+ 2(t^P)^3(\frac{t^P_2}{2!})^2
+ 6(t^P)^2 \frac{t^P_2}{2!}(t^P_1)^2+ t^P(t^P_1)^4\biggr) + \cdots.
\een
The coefficients $(1),(1,1),(1,3,1)$, $(1,4,2,6,1)$
are integer  sequence A134264
on \cite{OEIS}.

Taking $\frac{\pd}{\pd t^P}$ on both sides of \eqref{eqn:u-S}
one can get:
\be \label{eqn:u-S1}
u_{S,1} = \biggl(1- \sum_{n=1}^\infty t^P_n
\frac{u_S^{n-1}}{(n-1)!}\biggr)^{-1}.
\ee
From \eqref{eqn:u-R} we get:
\be \label{eqn:u-R-sol}
u_R
= \biggl(1- \sum_{n=1}^\infty t^P_n
\frac{u_S^{n-1}}{(n-1)!}\biggr)^{-1}
\cdot \sum_{n=0}^\infty t^Q_n \frac{u_S^n}{n!}
= u_{S,1} \sum_{n=0}^\infty t^Q_n \frac{u_S^n}{n!}.
\ee
From this we get:
\be
\sum_{n=0}^\infty t^Q_n \frac{u_S^n}{n!}
= \frac{u_R}{u_{S,1}}.
\ee

\begin{prop}
The following formula for $u_R$ holds:
\be \label{eqn:u-R-Formula}
u_R = \sum_{k=1}^\infty
\sum_{p_1 + \cdots + p_k = k-1} \frac{t^Q_{p_1}}{p_1!}
\frac{t^P_{p_2}}{p_2!} \cdots
\frac{t^P_{p_k}}{p_k!}.
\ee
\end{prop}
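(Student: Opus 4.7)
The plan is to reduce the formula for $u_R$ to the already-established formula \eqref{eqn:u-S-Formula} for $u_S$ via a one-parameter deformation of the Landau-Ginzburg equation. I would introduce a formal parameter $a$ and consider the modified equation
\begin{equation*}
\tilde u_S = \sum_{n\geq 0} (t^P_n + a\, t^Q_n) \frac{\tilde u_S^n}{n!},
\end{equation*}
which reduces to \eqref{eqn:u-S} at $a=0$. Differentiating in $a$ at $a=0$ and solving the resulting linear relation gives
\begin{equation*}
\frac{\pd \tilde u_S}{\pd a}\bigg|_{a=0} \;=\; \frac{\sum_{n\geq 0} t^Q_n\, u_S^n/n!}{\,1-\sum_{n\geq 1} t^P_n\, u_S^{n-1}/(n-1)!\,},
\end{equation*}
which equals $u_R$ by \eqref{eqn:u-R-sol} combined with \eqref{eqn:u-S1}.

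Next, since the deformed equation has exactly the same form as \eqref{eqn:u-S} but with the coupling constants $t^P_n$ replaced by $t^P_n + a\,t^Q_n$, I would apply \eqref{eqn:u-S-Formula}; as this identity is a purely formal Lagrange-inversion statement in the $t^P_n$'s, it remains valid after such a shift, yielding
\begin{equation*}
\tilde u_S = \sum_{k\geq 1} \frac{1}{k} \sum_{p_1+\cdots+p_k = k-1} \prod_{i=1}^k \frac{t^P_{p_i} + a\,t^Q_{p_i}}{p_i!}.
\end{equation*}
Differentiating at $a=0$ produces
\begin{equation*}
u_R \;=\; \sum_{k\geq 1} \frac{1}{k} \sum_{p_1+\cdots+p_k = k-1} \sum_{j=1}^k \frac{t^Q_{p_j}}{p_j!} \prod_{i\neq j} \frac{t^P_{p_i}}{p_i!}.
\end{equation*}

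Finally, for each $j\in\{1,\ldots,k\}$ the transposition swapping positions $1$ and $j$ is a bijection on the simplex $\{(p_1,\ldots,p_k): p_i\geq 0,\ \sum_i p_i = k-1\}$ under which the $j$-th summand is sent to $\frac{t^Q_{p_1}}{p_1!}\prod_{i\geq 2}\frac{t^P_{p_i}}{p_i!}$. The $k$ terms of $\sum_{j=1}^k$ therefore contribute identically, producing a factor $k$ that cancels the prefactor $1/k$ and leaves precisely \eqref{eqn:u-R-Formula}. The argument is short; the only step requiring care is the identification $\pd_a \tilde u_S|_{a=0} = u_R$, which is the conceptual heart of the proof: it says that the order parameter $u_R$ is the derivative of $u_S$ in the $t^Q$-coupling direction, so that solving one Lagrange-type equation automatically solves the next.
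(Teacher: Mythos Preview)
Your proof is correct and follows essentially the same strategy as the paper: both arguments identify $u_R$ as the $t^Q$-directional derivative of $u_S$ and then apply this derivative to the explicit Lagrange-inversion formula \eqref{eqn:u-S-Formula}. The paper packages the derivative as the operator $D=\sum_n t^Q_n\,\pd/\pd t^P_n$ acting on $u_S$, while you realize it as $\pd_a\tilde u_S|_{a=0}$ for the shifted couplings $t^P_n+a\,t^Q_n$; these are the same thing, and your final symmetrization step is exactly what the paper's application of $D$ to \eqref{eqn:u-S-Formula} amounts to.

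One small advantage of your route: to establish the key identity $\pd_a\tilde u_S|_{a=0}=u_R$, the paper computes $\pd u_S/\pd t^P_n=(u_S^n/n!)\,u_{S,1}$ by invoking the integrable-hierarchy equation $D_{P,z}u_S=(e^{u_S/z})'$ from \S\ref{sec:IH-Genus-zero}, a forward reference. Your derivation obtains the same identity by differentiating the Landau--Ginzburg equation itself, using only \eqref{eqn:u-S1} and \eqref{eqn:u-R-sol}, so it is logically self-contained at this point in the paper.
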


\begin{proof}
Consider an operator
\be
D:=\sum_{n=0}^\infty t^Q_n \frac{\pd}{\pd t^P_n}.
\ee

Claim. The action of this operator on $u_S$ changes it to $u_R$:
\be
Du_S = u_R.
\ee
This can be proved as follows.
We will actually use some results from \S \ref{sec:IH-Genus-zero} below.
From
\ben
D_{P,z}(u_S)=\sum_{n\geq 0} z^{-n-1} \frac{\pd u_P}{\pd t_{n, P}}
= \biggl[ e^{u_S/z}\biggr]',
\een
we get
\be
\frac{\pd u_S}{\pd t^P_n}
= \frac{u_S^n}{n!} u_{S,1},
\ee
Therefore, we have
\ben
Du_S & = &
\sum_{n=0}^\infty t^Q_n \frac{\pd u_S}{\pd t^P_n}
= u_{S,1} \sum_{n=0}^\infty t^Q_n \frac{u_S^n}{n!}
= u_R.
\een
In other words,
we have shown that
\be
u_R = \sum_{n=0}^\infty t^Q_n
\frac{\pd u_S}{\pd t^P_n}.
\ee
Now \eqref{eqn:u-R-Formula} follows from \eqref{eqn:u-S-Formula}.
\end{proof}

From the proof we also get the following formula:
\be \label{eqn:u-S-tP-n}
u_{S, 1} \frac{u_S^n}{n!}
= \frac{\pd u_S}{\pd t^P_n}
= \frac{1}{n!}\sum_{k=n}^\infty
\sum_{p_1 + \cdots + p_k = k-n} \frac{t^P_{p_1}}{p_1!} \cdots
\frac{t^P_{p_k}}{p_k!}.
\ee
It can also be written as:
\ben
\frac{\pd u_S}{\pd t^P_n}
 = \frac{1}{n!}\sum_{N=n}^\infty
\sum_{\sum_i im_i = N-n}
\frac{N!(t^P)^{N-\sum_i m_i} }{m_1!\cdots m_N!
\cdot (N-\sum_i m_i)!}
 \prod_{i=1}^N
\biggl(\frac{t^P_i}{i!}\biggr)^{m_i}.
\een
For example,
\ben
&& u_{S,1} = \frac{\pd u_S}{\pd t^P} \\
& = & 1+ t^P_1
+ \biggl(2t^P\frac{t^P_2}{2!} + (t^P_1)^2 \biggr) \\
& + & \biggl( 3(t^P)^2\frac{t^P_3}{3!}
+6 t^P \frac{t^P_2}{2!}t^P_1 + (t^P_1)^3  \biggr) \\
& + & \biggl( 4(t^P)^3\frac{t^P_4}{4!}+12(t^P)^2 \frac{t^P_3}{3!}t^P_1
+ 6(t^P)^2(\frac{t^P_2}{2!})^2
+ 12t^P \frac{t^P_2}{2!}(t^P_1)^2+ (t^P_1)^4\biggr) + \cdots,
\een
where the coefficients $1$, $(2,1)$, $(3,6,1)$, ($4,1,2,6,12,1)$, $\dots$ are
the sequence A035206 on \cite{OEIS}.
\ben
\frac{\pd u_S}{\pd t^P_1} & = &  t^P + 2t^P t^P_1
+ \biggl(  3(t^P)^2 \frac{t^P_2}{2!}
+ 3t^P(t^P_1)^2  \biggr) \\
& + & \biggl( 4(t^P)^3 \frac{t^P_3}{3!}
+ 12(t^P)^2 \frac{t^P_2}{2!}t^P_1+ 4 t^P(t^P_1)^3\biggr) + \cdots.
\een
where the coefficients $1$, $2$, $(3,3)$, $(4,12,4)$, $\dots$ are not on
\cite{OEIS},
instead,
after rescaling we get
$1$, $1$, $(1,1)$, $(1,3,1)$, $\dots$ which are
the sequence A134264 on \cite{OEIS} again.
We also have:
\ben
\frac{\pd u_S}{\pd t^P_2} & = &
\frac{1}{2!} \biggl[ (t^P)^2  +3(t^P)^2  t^P_1
+ \biggl( 4(t^P)^3 \frac{t^P_2}{2!}
+ 6(t^P)^2 (t^P_1)^2 \biggr) + \cdots \biggr], \\
\frac{\pd u_S}{\pd t^P_3}  & = & \frac{1}{3!}  \biggl( (t^P)^3
+4(t^P)^3 t^P_1 + \cdots \biggr).
\een
From these we can write the first few terms of $u_R$:
\ben
u_R & = & t^Q \frac{\pd u_S}{\pd t^P}
+ t^Q_1 \frac{\pd u_S}{\pd t^P_1} + t^Q_2 \frac{\pd U_S}{\pd t^P_2}
+ t^Q_3 \frac{\pd U_S}{\pd t^P_3} + \cdots \\
& = & t^Q + (t^Qt^P_1+t^Q_1t^P) \\
& + & \biggl(t^Q \biggl(2t^P\frac{t^P_2}{2!} + (t^P_1)^2 \biggr)
+  2t^Q_1\cdot t^Pt^P_1 + \frac{t^Q}{2!}\cdot (T^P)^2\biggr) + \cdots.
\een

From \eqref{eqn:u-Q} and \eqref{eqn:u-S1} we get:
\be \label{eqn:u-Q-Sol}
\begin{split}
u_Q
= u_{S,1} \cdot
\biggl( &\sum_{n=0}^\infty t^R_n \frac{u_S^n}{n!}
+ f_0''(u_R) \sum_{n=1}^\infty t^Q_n
\frac{u_S^{n-1}}{(n-1)!} \\
+ & g_0'(u_R) \sum_{n=2}^\infty t^P_n
\frac{u_S^{n-2}}{(n-2)!} \biggr).
\end{split}
\ee
Similarly, from \eqref{eqn:u-P} and \eqref{eqn:u-S1} we get:
\be
\begin{split}
u_P = u_{S,1} \cdot
& \biggl[  u_Q \biggl(
\sum_{n=1}^\infty t^Q_n \frac{u_S^{n-1}}{(n-1)!}
+ u_R \sum_{n=2}^\infty t^P_n
\frac{u_S^{n-2}}{(n-2)!}  \biggr) \\
& + u_R
\sum_{n=1}^\infty t^R_n \frac{u_S^{n-1}}{(n-1)!}
+ f_0'(u_R) \sum_{n=2}^\infty t^Q_n
\frac{u_S^{n-2}}{(n-2)!} \\
& + g_0(u_R) \sum_{n=3}^\infty t^P_n
\frac{u_S^{n-3}}{(n-3)!}
+  \sum_{n=0}^\infty t^S_n \frac{u_S^{n}}{n!}
\biggr].
\end{split}
\ee
So all the order parameters have been determined.

\subsection{The jet variables as new coordinates}

Note $u_R$ is linear in $t^Q_n$'s,
$u_Q$ is linear in $t^R_n$'s,
and $u_P$ is linear in $t^S_n$'s.
For example,
\ben
u_R & = & t^Q \biggl(1+t^P_1
+ (t^Pt^P_2+(t^P_1)^2)
+  (3t^P_2t^Pt^P_1+\frac{1}{2}t^P_3(t^P)^2
+(t^P_1)^3) + \cdots \biggr) \\
& + & t^Q_1 \biggl(t^P + 2t^Pt^P_1
+ 3t^P(t^P_1)^2+\frac{3}{2}(t^P)^2t^P_2 + \cdots\biggr) \\
& + & \frac{1}{2} t^Q_2 \biggl( (t^P)^2
+ 3 (t^P)^2t^P_1 +\cdots\biggr)
+\frac{1}{6} t^Q_3 \biggl((t^P)^3 + \cdots
\biggr)+\cdots,
\een
after repeatedly taking $\frac{\pd}{\pd t^P}$,
we have:
\ben
u_{R,1} & = & t^Q \biggl(
t^P_2 + (3t^P_2t^P_1+t^P_3t^P) + \cdots \biggr) \\
& + & t^Q_1 \biggl(1 + 2t^P_1
+ 3(t^P_1)^2+ 3t^Pt^P_2 + \cdots\biggr) \\
& + & t^Q_2 \biggl( t^P
+ 3 t^Pt^P_1 +\cdots\biggr)
+\frac{1}{2} t^Q_3 \biggl((t^P)^2 + \cdots
\biggr)+\cdots,
\een
\ben
u_{R,2} & = & t^Q \biggl(
t^P_3 + \cdots \biggr) +
t^Q_1 \biggl( 3 t^P_2 + \cdots\biggr) \\
& + & t^Q_2 \biggl( 1 + 3 t^P_1 +\cdots\biggr)
+ t^Q_3 \biggl(t^P + \cdots
\biggr)+\cdots,
\een
etc.
One can easily see that $u_{S,n}$, $u_{R,s}$, $u_{Q,n}$
and $u_{P,n}$ are equal to $t^P_n$, $t^Q_n$, $t^R_n$
and $t^S_n$ respectively when restricted to
the small phase space,
and so they can be used as new coordinates.

\subsection{Renormalized coupling constants}

Now we generalize an idea in \cite{Zhou-1D}.
Write $U_S=\tilde{U}_S+u_s$, etc,
and plug them into the  Landau-Ginzburg potential.
Let us focus on the following part of $W$
and treat the rest of $W$ as perturbation:
\be
\begin{split}
W & = U_P
\biggl(-U_S + \sum_{n=0}^\infty t^P_n \frac{U_S^n}{n!} \biggr) \\
& + U_Q \biggl(-U_R
+ \sum_{n=0}^\infty t^Q_n \frac{U_S^n}{n!}
+ U_R \sum_{n=1}^\infty t^P_n
\frac{U_S^{n-1}}{(n-1)!}  \biggr) \\
& + U_R
\sum_{n=0}^\infty t^R_n \frac{U_S^n}{n!}
+  \sum_{n=0}^\infty t^S_n \frac{U_S^{n+1}}{(n+1)!}+\cdots.
\end{split}
\ee
Then we get:
\ben
W & = & (\hat{U}_P+u_P)
\biggl(-(\hat{U}_S+u_S)
+ \sum_{n=0}^\infty t^P_n \frac{(\hat{U}_S+u_S)^n}{n!} \biggr) \\
& + & (\hat{U}_Q+u_Q)\biggl(-(\hat{U}_R+u_R)
+ \sum_{n=0}^\infty t^Q_n \frac{(\hat{U}_S+u_S)^n}{n!} \\
& + & (\hat{U}_R+u_R) \sum_{n=1}^\infty t^P_n
\frac{(\hat{U}_S+u_S)^{n-1}}{(n-1)!}  \biggr) \\
& + & (\hat{U}_R+u_R)
\sum_{n=0}^\infty t^R_n \frac{(\hat{U}_S+u_S)^n}{n!}
+ \sum_{n=0}^\infty t^S_n \frac{(\hat{U}_S+u_S)^{n+1}}{(n+1)!}
+ \cdots.
\een
so we get:
\be
\begin{split}
W & = W_0 + \hat{U}_P
\biggl(-\hat{U}_S + \sum_{n=1}^\infty \hat{t}^P_n \frac{U_S^n}{n!} \biggr) \\
& + U_Q \biggl(-U_R
+ \sum_{n=0}^\infty t^Q_n \frac{U_S^n}{n!}
+ U_R \sum_{n=1}^\infty t^P_n
\frac{U_S^{n-1}}{(n-1)!}  \biggr) \\
& + U_R
\sum_{n=0}^\infty t^R_n \frac{U_S^n}{n!}
 +  \sum_{n=0}^\infty t^S_n \frac{U_S^{n+1}}{(n+1)!}
+\cdots,
\end{split}
\ee
where $W_0$ is independent of $\hat{U}_\alpha$,
and $\hat{t}^\alpha$ are deformations of
$t^\alpha_n$,
for example,
\ben
&& \hat{t}^P_n =
\sum_{k=0}^\infty t^P_{n+k} \frac{u_S^k}{k!}, \\
&& \hat{t}^Q_n = \sum_{l=0}^\infty
(t^Q_{n+l} + u_Rt^P_{n+l+1}) \frac{u_S^l}{l!}, \\
&& \hat{t}^R_n = \sum_{l=0}^\infty
(t^R_{n+l}+u_Qt^P_{n+l+1})
\frac{u_S^l}{l!}, \\
&& \hat{t}^S_n=\sum_{l=0}^\infty (t^S_{n+l}
+ u_Rt_{n+l+1}^R +u_Q(t^Q_{n+l+1} + u_Rt^P_{n+l+2})
+ u_Pt^P_{n+l+1}) \frac{u_S^l}{l!}.
\een
We refer to $\hat{t}^\alpha_n$ as the renormalized coupling constants.
They can also be used as new coordinate system on the
big phase space.

\subsection{The constitutive relations for two-point
functions in genus zero. I}
\label{sec:Two-Point-I}

Taking $\frac{\pd}{\pd t^P}, \dots, \frac{\pd}{\pd t^S}$ on both sides
of \eqref{eqn:1Point-P}-\eqref{eqn:1Point-S},
we get:
\ben
S_{P, P}(z)
& = & e^{t^P/z}  \biggl(\frac{t^S}{z}
+ \frac{t^Qt^R}{z^2}
+ \frac{1}{z^3} (t^Qf'_0(t^Q)-2f_0(t^Q)) \biggr), \\
S_{Q,P}(z)
 & = &  \frac{1}{z} e^{t^P/z}t^R
+ \frac{1}{z^2} e^{t^P/z}(t^Qf''_0(t^Q)-f_0'(t^Q)), \\
S_{R,P}(z)  & = & \frac{1}{z} e^{t^P/z}t^Q, \\
S_{S,P}(z) & = & e^{t^P/z}.
\een

\ben
S_{P,Q}(z)
& = & \frac{e^{t^P/z}}{z^2}f_0'(t^Q)
+ \frac{1}{z}e^{t^P/z}t^R, \\
S_{Q,Q}(z)
& = & \frac{e^{t^P/z}}{z}f_0''(t^Q), \\
S_{R,Q}(z)
& = &   e^{t^P/z}, \\
S_{S,Q}(z)
& = & 0.
\een

\begin{align*}
S_{P,R}(z) & = \frac{1}{z} e^{t^P/z}t^Q, &
S_{Q,R}(z) & = e^{t^P/z}, &
S_{R,R}(z) & = 0, &
S_{S,R}(z) & = 0.
\end{align*}

\begin{align*}
S_{P,S}(z) & = e^{t^P/z}, &
S_{Q,S}(z) & = 0, &
S_{R,S}(z) & = 0, &
S_{S,S}(z) & = 0.
\end{align*}
Here we have used the following notation:
\be
S_{\alpha,\beta}
= \eta_{\alpha\beta}
+ \sum_{n\geq 0}^\infty z^{-n-1}
\corrr{\tau_0(\alpha)\tau_n(\beta)}_0.
\ee

The first couple of terms of
the expansion of the matrix $(S_{\alpha,\beta})_{\alpha, \beta=P,Q,R,S}$
are
\ben
&& (S_{\alpha\beta}(z)) = \begin{pmatrix}
0 & 0 & 0 & 1 \\
0 & 0 & 1 & 0 \\
0 & 1 & 0 & 0 \\
1 & 0 & 0 & 0
\end{pmatrix}
+ \frac{1}{z}
\begin{pmatrix}
t^S & t^R & t^Q & t^P \\
t^R & f''_0(t^Q) & t^P & 0 \\
t^Q & t^P & 0 & 0 \\
t^P & 0 & 0 & 0
\end{pmatrix} \\
& + & \frac{1}{z^2} \begin{pmatrix}
t^St^P + t^Qt^R & t^Rt^P + t^Qf_0'(t^Q)-2f(t^Q)
& t^Qt^P & \frac{(t^P)^2}{2!} \\
t^Rt^P + f_0'(t^Q) & t^Pf''_0(t^Q) & \frac{(t^P)^2}{2!} & 0 \\
t^Qt^P & \frac{(t^P)^2}{2!} & 0 & 0 & \\
\frac{(t^P)^2}{2!} & 0 & 0 & 0
\end{pmatrix}
+ \cdots
\een
The leading term and the subleading term are
the metric matrix and the Hessian matrix
in \S \ref{sec:Small}.

Next we will check some other properties of the matrix
$(S_{\alpha,\beta})_{\alpha, \beta=P,Q,R,S}$.

\subsection{Quantum differential equations}
\label{sec:QDE}

It is straightforward to check that
the following equations are satisfied by $S_{\alpha,\beta}$
on the small phase space:
\be
z\frac{\pd}{\pd t^P}
\begin{pmatrix}
S_{P, \gamma} \\ S_{Q, \gamma} \\ S_{R, \gamma} \\ S_{S,\gamma}
\end{pmatrix}
= \begin{pmatrix}
1 & 0 & 0 & 0 \\
0 & 1 & 0 & 0 \\
0 & 0 & 1 & 0 \\
0 & 0 & 0 & 1
\end{pmatrix}
\cdot
\begin{pmatrix}
S_{P, \gamma} \\ S_{Q, \gamma} \\ S_{R, \gamma} \\ S_{S,\gamma}
\end{pmatrix},
\ee

\be
z\frac{\pd}{\pd t^Q}
\begin{pmatrix}
S_{P, \gamma} \\ S_{Q, \gamma} \\ S_{R, \gamma} \\ S_{S,\gamma}
\end{pmatrix}
= \begin{pmatrix}
0 & 1 & 0 & 0 \\
0 & 0 & f_0'''(t^Q) & 0 \\
0 & 0 & 0 & 1 \\
0 & 0 & 0 & 0
\end{pmatrix}
\cdot
\begin{pmatrix}
S_{P, \gamma} \\ S_{Q, \gamma} \\ S_{R, \gamma} \\ S_{S,\gamma}
\end{pmatrix},
\ee

\be
z\frac{\pd}{\pd t^R}
\begin{pmatrix}
S_{P, \gamma} \\ S_{Q, \gamma} \\ S_{R, \gamma} \\ S_{S,\gamma}
\end{pmatrix}
= \begin{pmatrix}
0 & 0 & 1 & 0 \\
0 & 0 & 0 & 1 \\
0 & 0 & 0 & 0 \\
0 & 0 & 0 & 0
\end{pmatrix}
\cdot
\begin{pmatrix}
S_{P, \gamma} \\ S_{Q, \gamma} \\ S_{R, \gamma} \\ S_{S,\gamma}
\end{pmatrix},
\ee

\be
z\frac{\pd}{\pd t^S}
\begin{pmatrix}
S_{P, \gamma} \\ S_{Q, \gamma} \\ S_{R, \gamma} \\ S_{S,\gamma}
\end{pmatrix}
= \begin{pmatrix}
0 & 0 & 0 & 1 \\
0 & 0 & 0 & 0 \\
0 & 0 & 0 & 0 \\
0 & 0 & 0 & 0
\end{pmatrix}
\cdot
\begin{pmatrix}
S_{P, \gamma} \\ S_{Q, \gamma} \\ S_{R, \gamma} \\ S_{S,\gamma}
\end{pmatrix},
\ee
where the matrices on the right-hand sides
of these equations are computed in \S
\ref{sec:Quan-Coh}.

\subsection{The genus zero two-point function}
\label{sec:Two-Point}

One can also check the the following orthogonality relations:
\be
S_{\delta, \alpha}(z) \eta^{\delta\epsilon}
S_{\epsilon,\beta}(-z)
= \eta_{\alpha, \beta}.
\ee
Let
\be
V_{\alpha,\beta}(z_1, z_2)
= \frac{1}{z_1+z_2} \biggl(
\sum_{\delta,\epsilon} \eta^{\delta\epsilon}
S_{\delta,\alpha}(z_1) S_{\epsilon,\beta}(z_2)
- \eta_{\alpha\beta}\biggr).
\ee
It is clear that
\be
V_{\alpha,\beta}(z_1,z_2) = V_{\beta,\alpha}(z_2,z_1).
\ee
One easily get:
\ben
V_{P,P}(z_1,z_2)
& = & \frac{e^{t^P/z_1+t^P/z_2}}{z_1z_2}
\biggl(t^S
+ (\frac{1}{z_1}+\frac{1}{z_2})t^Qt^R \\
& + & (\frac{1}{z_1^2}-\frac{1}{z_1z_2}
+ \frac{1}{z_2^2}) (t^Qf'(t^Q)-2f(t^Q)) \\
& + & \frac{1}{z_1z_2}
t^Q(t^Qf''_0(t^Q)-f_0'(t^Q)) \biggr), \\
\een
\ben
V_{P,Q}(z_1,z_2)
& = & \frac{e^{t^P/z_1+t^P/z_2}}{z_1z_2}
\biggr( t^R
+ \frac{1}{z_1}t^Qf_0''(t^Q)+ (\frac{1}{z_2}-\frac{1}{z_1})
f_0'(t^Q) \biggr),
\een

\ben
V_{P,R}(z_1,z_2)
=  \frac{ e^{t^P/z_1+t^P/z_2} }{z_1z_2}t^Q,
\een
\ben
V_{P,S}(z_1,z_2) = \frac{e^{t^P/z_1+t^P/z_2}-1}{z_1+z_2}.
\een
\ben
V_{Q,Q}(z_1,z_2) =
\frac{ e^{t^P/z_1+t^P/z_2}}{z_1z_2} f_0''(t_Q).
\een
\ben
V_{Q,R}(z_1,z_2)= \frac{e^{t^P/z_1+t^P/z_2}-1}{z_1+z_2}.
\een
See also Dubrovin \cite[Example 5.2]{Dub-Int}.
By changing $t^P$ to $u_S$, $t^Q$ to $u_R$,
$t^R$ to $u_Q$ and $t^S$ to $u_P$,
one then get the following formula for two-point function
of GW invariants of quintic CY threefold in genus zero
on the big phase space:
\ben
\cV_{P,P}(z_1,z_2)
& = & \frac{e^{u_S/z_1+u_S/z_2}}{z_1z_2}
\biggl(u_P
+ (\frac{1}{z_1}+\frac{1}{z_2})u_Qu_R \\
& + & (\frac{1}{z_1^2}-\frac{1}{z_1z_2}
+ \frac{1}{z_2^2}) (u_Rf'(u_R)-2f(u_R)) \\
& + & \frac{1}{z_1z_2} u_R
(u_R f''_0(u_R)-f_0'(u_R)) \biggr), \\
\een
\ben
\cV_{P,Q}(z_1,z_2)
& = & \frac{e^{u_S/z_1+u_S/z_2}}{z_1z_2}
\biggr( u_Q
+ \frac{1}{z_1}u_R f_0''(u_R)
+ (\frac{1}{z_2}- \frac{1}{z_1})
f_0'(u_R) \biggr),
\een

\ben
\cV_{P,R}(z_1,z_2)
=  \frac{ e^{u_S/z_1+u_S/z_2}}{z_1z_2} u_R,
\een
\ben
\cV_{P,S}(z_1,z_2)
= \frac{e^{u_R/z_1+u_R/z_2}-1}{z_1+z_2}.
\een
\ben
\cV_{Q,Q}(z_1,z_2) =
\frac{ e^{u_S/z_1+u_S/z_2} }{z_1z_2} f_0''(u_R).
\een
\ben
\cV_{Q,R}(z_1,z_2)
= \frac{e^{u_S/z_1+u_S/z_2}-1}{z_1+z_2}.
\een

\subsection{The integrable hierarchy in genus zero}
\label{sec:IH-Genus-zero}

Recall the system of the hierarchy
in genus zero can be written in
the following compact form:
\be
D_{\beta, z}u_\alpha
= (R_{\alpha,0; \beta}(u_0, \dots, u_r;z))',
\ee
where
\be
R_{\alpha,0; \beta}(u_0,\dots,u_r; z)
= \eta_{\alpha\beta}
+ \sum_{n =0}^\infty R_{\alpha,0;\beta,n}(u_0, \dots, u_r) z^{-n-1}.
\ee
and  $D_{\beta,z}$ is the {\em loop operators}:
\be
D_{\beta, z}:= \sum_{n=0}^\infty z^{-n-1}
\frac{\pd}{\pd t^\beta_{n}}.
\ee

Therefore,
using the explicit formula is \S \ref{sec:Two-Point-I},
we get:
\ben
R_{P,0; P}(z)
& = & e^{u_R/z}  \biggl(\frac{u_P}{z}
+ \frac{u_Qu_RR}{z^2}
+ \frac{1}{z^3} (u_R f'_0(u_R)-2f_0(u_R)) \biggr), \\
R_{Q,0;P}(z)
 & = &  \frac{1}{z} e^{u_S/z}u_Q
+ \frac{1}{z^2} e^{u_S/z}(u_R f''_0(u_R)-f_0'(u_R)), \\
R_{R,0;P}(z)  & = & \frac{1}{z} e^{u_S/z}u_R, \\
R_{S,P}(z) & = & e^{u_S/z}.
\een

\ben
R_{P,0;Q}(z)
& = & \frac{e^{u_S/z}}{z^2}f'(u_R)
+ \frac{1}{z}e^{u_S/z}u_Q, \\
R_{Q,0;Q}(z)
& = & \frac{e^{u_S/z}}{z}f_0''(u_R), \\
R_{R,0;Q}(z) & = &   e^{u_S/z}, \\
R_{S,0;Q}(z) & = & 0.
\een

\begin{align*}
R_{P,0;R}(z) & = \frac{1}{z} e^{u_S/z}u_R, &
R_{Q,0;R}(z) & = e^{u_S/z}, &
R_{R,0;R}(z) & = 0, &
R_{S,0;R}(z) & = 0.
\end{align*}

\begin{align*}
R_{P,0;S}(z) & = e^{u_S/z}, &
R_{Q,0;S}(z) & = 0, &
R_{R,0;S}(z) & = 0, &
R_{S,0;S}(z) & = 0.
\end{align*}
Therefore, the integrable hierarchy in genus zero
can be written down explicitly as operations
of the loop operators on the order parameters as follows.
\ben
&& D_{P,z}(u_P)
=\sum_{n\geq 0} z^{-n-1} \frac{\pd u_S}{\pd t_{n, P}} \\
& = & \biggl[\frac{ e^{u_S/z}}{z} u_P
+ \frac{ e^{u_S/z}}{z^2} u_Qu_R
+ \frac{e^{u_S/z}}{z^3}
(u_Rf'_0(u_R)-f_0(u_R))\biggr]'.
\een
\ben
D_{P,z}(u_Q)
& = & \sum_{n\geq 0} z^{-n-1} \frac{\pd u_R}{\pd t_{n, P}} \\
& = & \biggl[\frac{1}{z} e^{u_S/z}u_Q
+ \frac{1}{z^2} e^{u_S/z} (u_Rf''_0(u_R)-f'_0(u_R)) \biggr]'.
\een

\ben
D_{P,z}(u_R)=\sum_{n\geq 0} z^{-n-1} \frac{\pd u_Q}{\pd t_{n, P}}
= \biggl[ \frac{1}{z} e^{u_S/z}u_R \biggr]'.
\een

\ben
D_{P,z}(u_S)=\sum_{n\geq 0} z^{-n-1} \frac{\pd u_P}{\pd t_{n, P}}
= \biggl[ e^{u_S/z}\biggr]'.
\een

\ben
D_{Q,z}(u_P)
& = & \sum_{n\geq 0} z^{-n-1}
\frac{\pd u_S}{\pd t_{n, Q}}
= \biggl[\frac{1}{z}e^{u_S/z}u_Q
+ \frac{e^{u_S/z}}{z^2} f'_0(u_R)  \biggr]'.
\een

\ben
D_{Q,z}(u_Q)
& = & \sum_{n=0}^\infty z^{-n-1}
\frac{\pd u_R}{\pd t_{n,Q}}
=\biggl[ \frac{e^{u_S/z}}{z} f''_0(u_R) \biggr]'.
\een
\ben
D_{Q,z}(u_R) = \sum_{n=0}^\infty z^{-n-1}\frac{\pd u_Q}{\pd t_{n,Q}}
= (e^{u_S/z})'.
\een
\ben
D_{Q,z}(u_S) = \sum_{n=0}^\infty z^{-n-1}\frac{\pd u_P}{\pd t_{n,Q}} =0.
\een
\ben
D_{R,z}(u_P)=\sum_{n=0}^\infty z^{-n-1}\frac{\pd u_S}{\pd t_{n,R}}
=\biggl[ \frac{1}{z} e^{u_S/z}u_R\biggr]'.
\een
\ben
D_{R,z}(u_Q)=\sum_{n=0}^\infty z^{-n-1}\frac{\pd u_R}{\pd t_{n,R}}
=\biggl[ e^{u_S/z}\biggr]'.
\een
\ben
D_{R,z}(u_R)=\sum_{n=0}^\infty z^{-n-1}\frac{\pd u_Q}{\pd t_{n,R}}
= 0.
\een
\ben
D_{R,z}(u_S)=\sum_{n=0}^\infty z^{-n-1}\frac{\pd u_P}{\pd t_{n,R}}
= 0.
\een
\ben
D_{S,z}(u_P)=\sum_{n=0}^\infty z^{-n-1}\frac{\pd u_S}{\pd t_{n,S}}
=\biggl[ e^{u_S/z}\biggr]'.
\een
\ben
D_{S,z}(u_Q)=\sum_{n=0}^\infty z^{-n-1}\frac{\pd u_R}{\pd t_{n,S}}
= 0.
\een
\ben
D_{S,z}(u_R)=\sum_{n=0}^\infty z^{-n-1}\frac{\pd u_Q}{\pd t_{n,S}}
= 0.
\een
\ben
D_{S,z}(u_S)=\sum_{n=0}^\infty z^{-n-1}\frac{\pd u_P}{\pd t_{n,S}}
= 0.
\een

\subsection{Three-point functions}

We now show how to combine the results of the
preceding two Subsections
to obtain a recursive algorithm to compute
$n$-point functions of
genus zero GW invariants of the quintic
defined as follows: For $n \geq 3$, define:
\be
\begin{split}
& \cV_{\alpha_1, \dots, \alpha_n}(z_1, \dots, z_n):
= D_{\alpha_1,z_1} \cdots D_{\alpha_{n},z_{n}}F_0  \\
= & \sum_{m_1, \dots, m_n \geq 0}
z_1^{-m_1-1} \cdots z_n^{-m_n-1}
\frac{\pd^nF_0}{\pd t^{\alpha_1}_{m_1} \cdots
\pd t_{m_n}^{\alpha_n}}.
\end{split}
\ee

Their computation is based on the following observation.
When $n \geq 3$,
the $n$-point functions can be computed by applying
the loop operators repeatedly on the two-point functions:
\be
\cV_{\alpha_1, \dots, \alpha_n}(z_1, \dots, z_n)
= D_{\alpha_1,z_1} \cdots D_{\alpha_{n-2},z_{n-2}}
\cV_{\alpha_{n-1}, \alpha_n}(z_{n-1}, z_n).
\ee
For example,
\ben
\cV_{P,Q,R}(z_1,z_2,z_3)
& = & D_{P,z_1} \cV_{Q,R}(z_2,z_3) \\
& = & D_{P,z_1} \frac{e^{u_S/z_2+u_S/z_3}-1}{z_2+z_3} \\
& = & \frac{e^{u_S/z_2+u_S/z_3}}{z_2+z_3}\cdot
(\frac{1}{z_2}+\frac{1}{z_3})
D_{P,z_1}u_S \\
& = & \frac{e^{u_S/z_2+u_S/z_3}}{z_2+z_3}\cdot
(\frac{1}{z_2}+\frac{1}{z_3})
\cdot \biggl[ e^{u_S/z_1}\biggr]'.
\een
After simplification we get:
\ben
\cV_{P,Q,R}(z_1,z_2,z_3)
= \frac{e^{u_S/z_1}}{z_1} \cdot
\frac{e^{u_s/z_2}}{z_2} \cdot \frac{e^{u_S/z_3}}{z_3}
\cdot u_S'.
\een
In this way we get the other nonvanishing three-point functions (up to permutations):
\ben
\cV_{Q,Q,Q} = \frac{e^{u_S/z_1}}{z_1} \cdot
\frac{e^{u_s/z_2}}{z_2} \cdot \frac{e^{u_S/z_3}}{z_3}
\cdot u_S'f'''_0(u_R).
\een
\ben
\cV_{P,Q,Q}(z_1,z_2,z_3)
& = &\frac{e^{u_S/z_1}}{z_1} \cdot
\frac{e^{u_s/z_2}}{z_2} \cdot \frac{e^{u_S/z_3}}{z_3} \\
&& \cdot \biggl(u_R'f'''_0(u_R)
+ \frac{u_R f'''_0(u_R)u_S'}{z_1}
+ \frac{f_0''(u_R)u_S'}{z_2}
+ \frac{f_0''(u_r)u_S'}{z_3} \biggr).
\een

\ben
\cV_{S,P,S}(z_1,z_2,z_3) = V_{R,P,S}(z_1,z_2,z_3)
=V_{Q,P,S}(z_1,z_2,z_3) = 0.
\een

\ben
\cV_{P,P,S}(z_1,z_2,z_3)
= \frac{e^{u_S/z_1}}{z_1} \cdot
\frac{e^{u_s/z_2}}{z_2} \cdot \frac{e^{u_S/z_3}}{z_3}
\cdot u_S'.
\een

\ben
\cV_{P,P,R}(z_1,z_2,z_3)
=  \frac{e^{u_S/z_1}}{z_1} \cdot
\frac{e^{u_s/z_2}}{z_2} \cdot \frac{e^{u_S/z_3}}{z_3}
\cdot \biggl[ u_R'+
\biggl(\frac{1}{z_1}+\frac{1}{z_2}+\frac{1}{z_3}
\biggr) u_R u_S' \biggr].
\een

\ben
\cV_{P,P,Q}(z_1,z_2,z_3)
& = &  \frac{e^{u_S/z_1}}{z_1} \cdot
\frac{e^{u_s/z_2}}{z_2} \cdot \frac{e^{u_S/z_3}}{z_3}
\cdot \biggl[
u_Q'+(\frac{1}{z_1}+\frac{1}{z_2}
+ \frac{1}{z_3})u_Qu_S' \\
& + & (\frac{1}{z_3^2}-\frac{1}{z_1^2} - \frac{1}{z_2^2})
u_s'f'_0(u_R) \\
& + & \biggl(\frac{u_R'}{z_3}
+(\frac{1}{z_1z_3}+\frac{1}{z_2z_3}+\frac{1}{z_1^2}
+ \frac{1}{z_2^2})u_S'u_R\biggr) f''_0(u_R) \\
& + & \biggl((\frac{1}{z_1}+\frac{1}{z_2})u_Ru_R'
+ \frac{u_R^2}{z_1z_2}\biggr)f'''_0(u_R)\biggr].
\een
We omit a similar formula for $\cV_{P,P,P}$.

\subsection{A formula for general $n$-point functions}

By repeatedly applying the loop operators
on the nonvanishing three-point functions,
one gets $n$-point functions.
Among them,
we are particularly interested in
$n$-point functions of the form
\ben
&& V_{P, \dots, P, Q,R}(z_1, \dots, z_n, z_{n+1},z_{n+2}) \\
& = &
V_{P, \dots, P,P,S}(z_1, \dots, z_n, z_{n+1},z_{n+2}).
\een
To see why we expect some nice formulas for them,
let us first do the calculations for $n=2$ and $n=3$.

For $n=2$,
we have
\ben
&& V_{P,P,Q,R}(z_1,z_2,z_3,z_4) = D_{P,z_1}
\biggl(\frac{e^{u_S/z_1}}{z_1} \cdot
\frac{e^{u_s/z_2}}{z_2} \cdot \frac{e^{u_S/z_3}}{z_3}
\cdot u_S' \biggr) \\
& = & \prod_{i=1}^3 \frac{e^{u_S/z_i}}{z_i} \cdot
\biggl(\sum_{i=1}^3 \frac{1}{z_i} \cdot D_{P,z_1}u_S \cdot u_S'
+ (D_{P,z_1}u_S)'\biggr) \\
& = & \prod_{i=1}^3 \frac{e^{u_S/z_i}}{z_i} \cdot
\biggl(\sum_{i=1}^3 \frac{1}{z_i} \cdot (e^{u_S/z_1})'
+ (e^{u_S/z_1})''\biggr) \\
& = & \prod_{i=1}^3 \frac{e^{u_S/z_i}}{z_i} \cdot
\biggl(\sum_{i=1}^3 \frac{1}{z_i} \cdot \frac{e^{u_S/z_1}}{z_1}u_S'
+ \frac{e^{u_S/z_1}}{z_1^2} (u_S')^2
+\frac{e^{u_S/z_1}}{z_1}u_S''
\biggr) \\
& = & \prod_{i=1}^4 \frac{e^{u_S/z_i}}{z_i} \cdot
\biggl( u_S''
+ \sum_{i=1}^4 \frac{1}{z_i}(u_S')^2
\biggr).
\een

For $n=3$,
one can similarly find that
\ben
&& V_{P,P,P,Q,R}(z_1,\dots,z_5)  \\
& = & \prod_{i=1}^5 \frac{e^{u_S/z_i}}{z_i} \cdot
\biggl( u_S'''
+ 3 \cdot \sum_{i=1}^5 \frac{1}{z_i} \cdot u_S' u_S''
+ \biggl(\sum_{i=1}^5 \frac{1}{z_i} \biggr)^2 (u_S')^3
\biggr).
\een
The situation is similar to Section 7.6 in \cite{Zhou-1D}.
One can show that for  $n \geq 1$,
\be
\begin{split}
& \cV_{P, \dots, P, Q, R}(z_1, \dots, z_{n+2})
= \cV_{P, \dots, P, P, S}(z_1, \dots, z_{n+2}) \\
= &  \prod_{i=1}^{n+2} \frac{e^{u_S/z_i}}{z_i} \cdot
n! \sum_{\substack{m_1, \dots, m_n \geq 0 \\
\sum_{j=1}^n m_jj = n }}
\biggl(\sum_{i=1}^{n+2} \frac{1}{z_i} \biggr)^{m_1+\cdots + m_n-1}
\prod_{j=1}^n \frac{u_{S,j}^{m_j}}{(j!)^{m_j} m_j!}.
\end{split}
\ee
The coefficients $(1),(1,1),(1,3,1)$, $(1,4,3,6,1)$
are integer  sequence A036040 or A080575
on \cite{OEIS}. They are related to the complete Bell polynomials:
\be
\exp \sum_{j=1}^\infty \frac{x_j}{j!}t^j
= \sum_{n=0}^\infty t^n
 \sum_{\substack{m_1, \dots, m_n \geq 0 \\
\sum_{j=1}^n m_jj = n }}
\prod_{j=1}^n \frac{x_j^{m_j}}{(j!)^{m_j} m_j!}.
\ee

A more compact formula is as follows:
\be
\begin{split}
& \cV_{P, \dots, P, Q, R}(z_1, \dots, z_{n+2})
= \cV_{P, \dots, P, P, S}(z_1, \dots, z_{n+2}) \\
= &
\biggl( \sum_{i=1}^{n+2} \frac{1}{z_i}\biggr)^{-1} \cdot
\biggl(\frac{\pd}{\pd t^P_0} \biggr)^n
\biggl( \prod_{i=1}^{n+2} \frac{e^{u_S/z_i}}{z_i} \biggr).
\end{split}
\ee

In the same fashion we also get:
\be
\begin{split}
& \cV_{P, \dots, P, R}(z_1, \dots, z_{n+2}) \\
= &  \prod_{i=1}^{n+2} \frac{e^{u_S/z_i}}{z_i} \cdot
n! \sum_{\substack{l, m_1, \dots, m_n \geq 0 \\
l+\sum_{j=1}^n m_jj = n }}
\biggl(\sum_{i=1}^{n+2} \frac{1}{z_i} \biggr)^{m_1+\cdots + m_n-1} \\
& \cdot
\frac{u_{R,l}}{l!}
\prod_{j=1}^n \frac{u_{S,j}^{m_j}}{(j!)^{m_j} m_j!}.
\end{split}
\ee
A more compact formula is as follows:
\be
\begin{split}
\cV_{P, \dots, P, P, R}(z_1, \dots, z_{n+2})
=
\biggl( \sum_{i=1}^{n+2} \frac{1}{z_i}\biggr)^{-1} \cdot
\biggl(\frac{\pd}{\pd t^P_0} \biggr)^n
\biggl( u_R \prod_{i=1}^{n+2} \frac{e^{u_S/z_i}}{z_i} \biggr).
\end{split}
\ee
For example,
\ben
\cV_{P,P,R}(z_1,z_2,z_3)
= \prod_{i=1}^3 \frac{e^{u_S/z_i}}{z_i} \cdot
\cdot \biggl[ u_R'+
\sum_{i=1}^3 \frac{1}{z_i}u_S' \biggr].
\een

\ben
&& \cV_{P,P,P,R}(z_1,z_2,z_3,z_4) \\
& = & \prod_{i=1}^4 \frac{e^{u_S/z_i}}{z_i}
\cdot \biggl[ u_{R,2}+
\sum_{i=1}^4 \frac{1}{z_i} (2u_{S,1} u_{R,1}+u_{S,2}u_R)
+ \biggl(\sum_{i=1}^4 \frac{1}{z_i} \biggr)^2 u_{S,1}^2u_R \biggr].
\een

\ben
&& \cV_{P,\dots,P,R}(z_1,\dots,z_5) \\
& = & \prod_{i=1}^5 \frac{e^{u_S/z_i}}{z_i}
\cdot \biggl[ u_{R,3}+
\sum_{i=1}^5 \frac{1}{z_i}\cdot
(3 u_{S,1} u_{R,2} +3 u_{S,2} u_{R,1}
+ u_{S,3}u_R) \\
& + & \biggl(\sum_{i=1}^5 \frac{1}{z_i} \biggr)^2
(3u_{S,1}^2u_{R,1} +3u_{S,1}u_{S,2}u_R)
+ \biggl(\sum_{i=1}^5 \frac{1}{z_i} \biggr)^3
\cdot u_{S,1}^3u_R \biggr].
\een

\subsection{Genus zero free energy of the quintic}

Let us now compute the genus zero partition function
of the quintic on the big phase space.
Recall the general formula
\be
F_0(t) = \frac{1}{2}
\sum_{m,n=0} R_{\alpha,m;\beta,n}
\tilde{t}_m^\alpha \tilde{t}^\beta_n,
\ee
where $\tilde{t^\alpha}_m = t_m^\alpha
- \delta_{m,1}\delta_{\alpha,0}$,
and
\be
\cV_{\alpha,\beta}(z_1,z_2)
=  \sum_{m,n=0}^\infty R_{\alpha,m;\beta,n}(U)
z_1^{-m-1}z_2^{-n-1}.
\ee
Since for the quintic $\cV_{\alpha,\beta}(z_1,z_2)$
is already explicitly known from \S \ref{sec:Two-Point},
so it is possible to explicitly compute $F_0$ on the big phase
space for the quintic. The result is the following:

\begin{thm} \label{thm:F0}
We have the following formulas for $F_0$ of the quintic on the big phase space:
\be
F_0 = \frac{f_0(u_R)}{u_{S,1}^2}
+ \sum_{m,n=0}^\infty  \frac{u_S^{m+n+1}}{m+n+1}
\frac{\tilde{t}^P_m}{m!} \frac{t^S_n}{n!}
+ \sum_{m,n=0}^\infty  \frac{u_S^{m+n+1}}{m+n+1}
\frac{t^Q_m}{m!} \frac{t^R_n}{n!}.
\ee
\be
F_0 = \frac{1}{u_{S,1}^2}
\biggl( f_0(u_R)
+ \sum_{m,n=0}^\infty
\frac{\tilde{t}^P_m t^S_n + t^Q_m t^R_n}{m+n+1}u_S
\frac{\pd u_S}{\pd t^P_m}\frac{\pd u_S}{\pd t^P_n} \biggr).
\ee
\end{thm}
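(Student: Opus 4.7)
The plan is to start from the residue expression \eqref{eqn:F0-in-u},
\begin{equation*}
F_0 = \half \sum_{\alpha,\beta} \res_{z_1=0}\res_{z_2=0} J^\alpha(z_1) J^\beta(z_2) \cV_{\alpha,\beta}(u;z_1,z_2)\, dz_1\, dz_2,
\end{equation*}
and to evaluate each residue using the explicit two-point functions computed in \S \ref{sec:Two-Point}. The selection rules leave only finitely many nonvanishing $\cV_{\alpha,\beta}$, namely (together with the reflected pieces under $\cV_{\alpha,\beta}(z_1,z_2)=\cV_{\beta,\alpha}(z_2,z_1)$) $\cV_{P,P}, \cV_{P,Q}, \cV_{P,R}, \cV_{P,S}, \cV_{Q,Q}, \cV_{Q,R}$. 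I would split the sum into an ``off-diagonal'' block $\{\cV_{P,S},\cV_{S,P},\cV_{Q,R},\cV_{R,Q}\}$ and a ``diagonal'' block $\{\cV_{P,P},\cV_{P,Q},\cV_{Q,P},\cV_{Q,Q}\}$; the pair $\cV_{P,R},\cV_{R,P}$ automatically contributes $0$, since $J^P(z_1)$ paired with the prefactor $e^{u_S/z_1}/z_1$ produces $\sum_m \tilde{t}^P_m u_S^m/m!$, which vanishes by the Landau-Ginzburg equation \eqref{eqn:u-S}.

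For the off-diagonal block I would exploit the integral identity
\begin{equation*}
\frac{e^{u_S/z_1+u_S/z_2}-1}{z_1+z_2} = \frac{1}{z_1 z_2}\int_0^{u_S} e^{s/z_1+s/z_2}\, ds,
\end{equation*}
together with the elementary residue $\res_{z=0} J^\alpha(z) e^{s/z}\, dz/z = \sum_m \tilde{t}^\alpha_m s^m/m!$. The $(\cV_{P,S}+\cV_{S,P})$ contribution then becomes $2\int_0^{u_S} \bigl(\sum_m \tilde{t}^P_m s^m/m!\bigr)\bigl(\sum_n t^S_n s^n/n!\bigr)\, ds$, which after expanding and integrating term by term yields twice the second sum of the theorem; the ``$-1$'' in the numerator contributes nothing because the source operators have only nonnegative powers of $z$. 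The $(\cV_{Q,R}+\cV_{R,Q})$ piece gives twice the third sum by the same calculation, and halving produces the second and third terms of the theorem.

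For the diagonal block the main tool is the residue identity $\res_{z=0} J^\alpha(z) e^{u_S/z}\, dz/z^{k+1} = \sum_{m\geq k}\tilde{t}^\alpha_m u_S^{m-k}/(m-k)!$, combined with three Landau-Ginzburg identities that follow from \eqref{eqn:u-S}, \eqref{eqn:u-R} and \eqref{eqn:u-S1}:
\begin{equation*}
\sum_m \tilde{t}^P_m \frac{u_S^m}{m!}=0, \qquad \sum_{m\geq 1} \tilde{t}^P_m \frac{u_S^{m-1}}{(m-1)!}=-\frac{1}{u_{S,1}}, \qquad \sum_m t^Q_m \frac{u_S^m}{m!}=\frac{u_R}{u_{S,1}}.
\end{equation*}
The first identity kills every contribution in which a $J^P$ residue pairs with a nonsingular or $1/z^2$ prefactor, leaving only the $1/(z_1z_2)$ pieces. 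Termwise evaluation then gives $[\cV_{P,P}] = (2f_0-2u_R f'_0+u_R^2 f''_0)/u_{S,1}^2$, $[\cV_{P,Q}] + [\cV_{Q,P}] = (-2u_R^2 f''_0+2u_R f'_0)/u_{S,1}^2$, and $[\cV_{Q,Q}] = u_R^2 f''_0/u_{S,1}^2$, whose sum telescopes to $2f_0(u_R)/u_{S,1}^2$; halving produces the first term of the theorem.

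The equivalent second form follows by substituting $\pd u_S/\pd t^P_m = u_{S,1} u_S^m/m!$ from \eqref{eqn:u-S-tP-n}, so that each summand $u_S^{m+n+1}/(m!n!)$ rewrites as $u_S (\pd u_S/\pd t^P_m)(\pd u_S/\pd t^P_n)/u_{S,1}^2$. The main obstacle is not conceptual but combinatorial: verifying, in the diagonal block, the cancellation $2f_0 - 2u_R f'_0 + u_R^2 f''_0 - 2u_R^2 f''_0 + 2u_R f'_0 + u_R^2 f''_0 = 2f_0$ across the four contributions, which is what makes the $u_R$-dependence on the right-hand side collapse from a combination of $f_0, f'_0, f''_0$ down to $f_0$ alone.
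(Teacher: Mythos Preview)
Your proposal is correct and follows essentially the same route as the paper's own proof: both start from $F_0=\tfrac12\sum R_{\alpha,m;\beta,n}\tilde t^\alpha_m\tilde t^\beta_n$ (equivalently the residue formula \eqref{eqn:F0-in-u}), plug in the explicit two-point functions of \S\ref{sec:Two-Point}, and simplify each block using the Landau--Ginzburg identities $\sum_m\tilde t^P_m u_S^m/m!=0$, $\sum_{m\ge1}\tilde t^P_m u_S^{m-1}/(m-1)!=-1/u_{S,1}$, $\sum_m t^Q_m u_S^m/m!=u_R/u_{S,1}$, arriving at the same telescoping $2f_0-2u_Rf_0'+u_R^2f_0''-2u_R^2f_0''+2u_Rf_0'+u_R^2f_0''=2f_0$. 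Your integral representation $\int_0^{u_S}e^{s/z_1+s/z_2}\,ds$ for the $(P,S)$ and $(Q,R)$ pieces is a mild repackaging of the paper's direct series expansion $\sum_{m,n}\frac{u_S^{m+n+1}}{(m+n+1)m!n!}z_1^{-m-1}z_2^{-n-1}$, not a different argument.
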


\begin{proof}
From the expansion:
\ben
&& \cV_{P,P}(z_1,z_2) \\
& = & \frac{e^{u_S/z_1+u_S/z_2}}{z_1z_2}
\biggl(u_P
+ (\frac{1}{z_1}+\frac{1}{z_2})u_Qu_R
+ (\frac{1}{z_1^2}-\frac{1}{z_1z_2}
+ \frac{1}{z_2^2})  (u_Rf'(u_R)-2f(u_R)) \\
& + & \frac{1}{z_1z_2} u_R
(u_Rf''_0(u_R)-f_0'(u_R)) \biggr) \\
& = & \sum_{m,n=0}^\infty \frac{u_S^m}{m!}\frac{u_S^n}{n!}
z_1^{-m-1}z_2^{-n-1}
\biggr( u_P
+ (\frac{1}{z_1}+\frac{1}{z_2})u_Qu_R \\
& + & (\frac{1}{z_1^2}-\frac{1}{z_1z_2}
+ \frac{1}{z_2^2}) (u_Rf'(u_R)-2f(u_R)) \\
& + & \frac{1}{z_1z_2} u_R
(u_Rf''_0(u_R)-f_0'(u_R)) \biggr) \\
& = & \sum_{m,n=0}^\infty
z_1^{-m-1}z_2^{-n-1}
\biggr( u_P \frac{u_S^m}{m!}\frac{u_S^n}{n!}
+ u_Qu_R \frac{u_S^{m-1}}{(m-1)!}\frac{u_S^n}{n!}   \\
& + & u_Qu_R\frac{u_S^m}{m!}\frac{u_S^{n-1}}{(n-1)!}+
(u_Rf'(u_R)-2f(u_R)) \frac{u_S^{m-2}}{(m-2)!}\frac{u_S^n}{n!} \\
& - & (u_Rf'(u_R)-2f(u_R)) \frac{u_S^{m-1}}{(m-1)!} \frac{u_S^{n-1}}{(n-1)!} \\
& + & (u_Rf'(u_R)-2f(u_R)) \frac{u_S^{m}}{m!}
\frac{u_S^{n-2}}{(n-2)!} \\
& + & (u_R^2f''_0(u_R)-u_Rf_0'(u_R)) \frac{u_S^{m-1}}{(m-1)!}\frac{u_S^{n-1}}{(n-1)!} \biggr),
\een
we get:
\ben
&& \sum_{m,n=0}^\infty R_{P,m;P,n} \tilde{t}^P_m\tilde{t}^P_n \\
& = &
u_P \sum_{m=0}^\infty   \tilde{t}^P_m \frac{u_S^m}{m!}
\sum_{n=0}^\infty \tilde{t}^P_n \frac{u_S^n}{n!}
+ u_Qu_R \sum_{m=1}^\infty\tilde{t}^P_m \frac{u_S^{m-1}}{(m-1)!}
\sum_{n=0}^\infty \tilde{t}^P_n \frac{u_S^n}{n!}   \\
& + & u_Qu_R\sum_{m=0}^\infty \tilde{t}^P_m\frac{u_S^m}{m!}
\sum_{n=0}^\infty \tilde{t}^P_n \frac{u_S^{n-1}}{(n-1)!} \\
& + &
(u_Rf'(u_R)-2f(u_R))\sum_{m=2}^\infty
\tilde{t}^P_m \frac{u_S^{m-2}}{(m-2)!}
\sum_{n=0}^\infty \tilde{t}^P_n \frac{u_S^n}{n!} \\
& - & (u_Rf'(u_R)-2f(u_R))
\sum_{m=1}^\infty \tilde{t}^P_m\frac{u_S^{m-1}}{(m-1)!}
\sum_{n=1}^\infty \tilde{t}^P_n\frac{u_S^{n-1}}{(n-1)!} \\
& + & (u_Rf'(u_R)-2f(u_R))
\sum_{m=0}^\infty   \tilde{t}^P_m\frac{u_S^{m}}{m!}
\sum_{n=2}^\infty \tilde{t}^P_n\frac{u_S^{n-2}}{(n-2)!} \\
& + & (u_R^2f''_0(u_R)-u_Rf_0'(u_R))
\sum_{m=1}^\infty\tilde{t}^P_m\frac{u_S^{m-1}}{(m-1)!}
\sum_{n=1}^\infty \tilde{t}^P_n\frac{u_S^{n-1}}{(n-1)!} \biggr) \\
& = & \frac{1}{u_{S,1}^2} \biggl(u_R^2f_0''(u_R)
- 2u_R f_0'(u_R) + 2 f_0(u_R) \biggr).
\een
Here we have used \eqref{eqn:u-S} and \eqref{eqn:u-S1}.
From the following expansion:
\ben
\cV_{P,Q}(z_1,z_2)
& = & \frac{e^{u_S/z_1+u_S/z_2}}{z_1z_2}
\biggr( u_Q
+ \frac{1}{z_1}u_Rf_0''(u_R)+ (\frac{1}{z_2}-\frac{1}{z_1})
f_0'(u_R) \biggr) \\
& = & \sum_{m,n=0}^\infty \frac{u_S^m}{m!}\frac{u_S^n}{n!}
z_1^{-m-1}z_2^{-n-1}
\biggr( u_Q
+ \frac{1}{z_1}u_Rf_0''(u_R)+ (\frac{1}{z_2}-\frac{1}{z_1})
f_0'(u_R) \biggr) \\
& = & \sum_{m,n=0}^\infty
z_1^{-m-1}z_2^{-n-1}
\biggr( u_Q \frac{u_S^m}{m!}\frac{u_S^n}{n!}
+ \frac{u_S^{m-1}}{(m-1)!}\frac{u_S^n}{n!} u_Rf_0''(u_R) \\
& - & f_0'(u_R) \frac{u_S^{m-1}}{(m-1)!}\frac{u_S^n}{n!}
+ f'_0(u_R) \frac{u_S^m}{m!}\frac{u_S^{n-1}}{(n-1)!} \biggr),
\een
we get:
\ben
&& \sum_{m,n=0}^\infty R_{P,m;Q,n} \tilde{t}^P_m\tilde{t}^Q_n \\
& = & \sum_{m,n=0}^\infty  \tilde{t}^P_mt^Q_n
\biggr( u_Q \frac{u_S^m}{m!}\frac{u_S^n}{n!}
+ \frac{u_S^{m-1}}{(m-1)!}\frac{u_S^n}{n!} u_Rf_0''(u_R) \\
& + & f_0'(u_R) \frac{u_S^{m-1}}{(m-1)!}\frac{u_S^n}{n!}
- f'_0(u_R) \frac{u_S^m}{m!}\frac{u_S^{n-1}}{(n-1)!} \biggr) \\
& = & u_Q \sum_{m=0}^\infty  \tilde{t}^P_m  \frac{u_S^m}{m!} \cdot
\sum_{n=0}^\infty t^Q_n  \frac{u_S^n}{n!}
+ u_Rf_0''(u_R)
\sum_{m=1}^\infty  \tilde{t}^P_m\frac{u_S^{m-1}}{(m-1)!}\cdot
\sum_{n=0}^\infty t^Q_n \frac{u_S^n}{n!}   \\
& - & f_0'(u_R) \sum_{m=1}^\infty
\tilde{t}^P_m\frac{u_S^{m-1}}{(m-1)!} \cdot
\sum_{n=0}^\infty t^Q_n \frac{u_S^n}{n!}
- f'_0(u_R)\sum_{m=1}^\infty
\tilde{t}^P_m \frac{u_S^m}{m!}\cdot
\sum_{n=1}^\infty t^Q_n \frac{u_S^{n-1}}{(n-1)!}   \\
& = & \frac{1}{u_{S,1}^2} \biggl(
- u_R^2f_0''(u_R) + u_Rf_0'(u_R) \biggr).
\een
Here we have used \eqref{eqn:u-S} and \eqref{eqn:u-R-sol}.
From the following expansion:
\ben
\cV_{P,R}(z_1,z_2)
=  \frac{ e^{u_S/z_1+u_S/z_2} }{z_1z_2} u_R
= u_R \sum_{m,n=0}^\infty \frac{u_S^m}{m!}\frac{u_S^n}{n!}z_1^{-m-1}z_2^{-n-1},
\een
we get:
\ben
&& \sum_{m,n=0}^\infty R_{P,m;R,n} \tilde{t}^P_m\tilde{t}^R_n
= u_R \sum_{m,n=0}^\infty  \frac{u_S^m}{m!} \frac{u_S^n}{n!}
\tilde{t}^P_m t^R_n
= u_R \sum_{m=0}^\infty \tilde{t}^P_m  \frac{u_S^m}{m!}
\cdot  \sum_{n=0}^\infty t^R_n \frac{u_S^n}{n!}
= 0.
\een
From the following  expansion:
\ben
\cV_{P,S}(z_1,z_2)
& = &\frac{e^{u_R/z_1+u_R/z_2}-1}{z_1+z_2}
= \frac{1}{z_1z_2}\sum_{M=1}^\infty
(\frac{1}{z_1}+\frac{1}{z_2})^{M-1} \frac{u_S^M}{M!} \\
& = & \sum_{m,n=0}^\infty \frac{u_S}{m+n+1}\frac{u_S^m}{m!}
\frac{u_S^n}{n!}z_1^{-m-1}z_2^{-n-1},
\een
we get:
\ben
&& \sum_{m,n=0}^\infty R_{P,m;S,n} \tilde{t}^P_m\tilde{t}^S_n
= \sum_{m,n=0}^\infty  \frac{u_S^{m+n+1}}{m+n+1}
\frac{\tilde{t}^P_m}{m!} \frac{t^S_n}{n!}.
\een
From the following  expansion:
\ben
\cV_{Q,Q}(z_1,z_2) =
\frac{ e^{u_S/z_1+u_S/z_2} }{z_1z_2} f_0''(u_R)
= f_0''(u_R)
\sum_{m,n=0}^\infty \frac{u_s^m}{m!}\frac{u_S^n}{n!}z_1^{-m-1}z_2^{-n-1},
\een
we get:
\ben
&& \sum_{m,n=0}^\infty R_{Q,m;Q,n} \tilde{t}^Q_m\tilde{t}^Q_n
= \sum_{m,n=0}^\infty  \frac{u_S^m}{m!} \frac{u_S^n}{n!} f''_0(u_R)
t^Q_m t^Q_n \\
& = & f''_0(u_R)
\biggl(\sum_{m=0}^\infty t^Q_m  \frac{u_S^m}{m!} \biggr)^2
=\frac{1}{u_{S,1}^2} u_R^2 f''(u_R).
\een
Here we have used \eqref{eqn:u-R-sol}.
From the following expansion:
\ben
\cV_{Q,R}(z_1,z_2)
= \frac{e^{u_S/z_1+u_S/z_2}-1}{z_1+z_2}
= \sum_{m,n=0}^\infty \frac{u_S}{m+n+1}\frac{u_S^m}{m!}
\frac{u_S^n}{n!},\een
we get:
\ben
&& \sum_{m,n=0}^\infty R_{Q,m;R,n} \tilde{t}^Q_m\tilde{t}^R_n
= \sum_{m,n=0}^\infty  \frac{u_S^{m+n+1}}{m+n+1}
\frac{t^Q_m}{m!} \frac{t^R_n}{n!}.
\een
The proof is finished by putting all these together.
\ben F_0
& = & \frac{1}{2} \frac{1}{u_{S,1}^2} \biggl(u_R^2f_0''(u_R)
- 2u_R f_0'(u_R) + 2 f_0(u_R) \biggr) \\
& + & \frac{1}{u_{S,1}^2} \biggl(
- u_R^2f_0''(u_R) + u_Rf_0'(u_R) \biggr)
+ \half \frac{1}{u_{S,1}^2} u_R^2 f_0''(u_R) \\
& + & \sum_{m,n=0}^\infty  \frac{u_S^{m+n+1}}{m+n+1}
\frac{\tilde{t}^P_m}{m!} \frac{t^S_n}{n!}
+ \sum_{m,n=0}^\infty  \frac{u_S^{m+n+1}}{m+n+1}
\frac{t^Q_m}{m!} \frac{t^R_n}{n!} \\
& = &  \frac{f_0(u_R)}{u_{S,1}^2}
+ \sum_{m,n=0}^\infty  \frac{u_S^{m+n+1}}{m+n+1}
\frac{\tilde{t}^P_m}{m!} \frac{t^S_n}{n!}
+ \sum_{m,n=0}^\infty  \frac{u_S^{m+n+1}}{m+n+1}
\frac{t^Q_m}{m!} \frac{t^R_n}{n!}.
\een
This proves the first formula.
The second formula is proved by applying \eqref{eqn:u-S-tP-n}.
\end{proof}

\subsection{Genus zero one-point functions}

\begin{thm}
For the quintic the following formulas
for the one-point functions in genus zero hold:
\be
\begin{split}
D_{P,z}F_0
= & \sum_{m,n=0}^\infty  \frac{u_S^{m+n+1}}{m+n+1}
\frac{t^S_m}{m!} \frac{z^{-n-1}}{n!}  \\
+ & \frac{e^{u_S/z}}{z} \cdot
\biggl[ \frac{u_Qu_R}{u_{S,1}}
- \frac{u_Rf''_0(u_R)- f_0'(u_R) }{u_{S,1}^2}u_{R,1} \\
+ &\frac{u_R f'_0(u_R)-2f_0(u_R)}{u_{S,1}}
\biggl( \frac{u_{S,2}}{u_{S,1}^2} + \frac{1}{z}\biggr)
\biggr].
\end{split}
\ee
\be
D_{Q,z}F_0
=\frac{f_0'(u_R)}{u_{S,1} }   \cdot \frac{e^{u_S/z}}{z}
+ \sum_{m,n=0}^\infty  \frac{u_S^{m+n+1}}{m+n+1}
\frac{t^R_m}{m!} \frac{z^{-n-1}}{n!}.
\ee
\be
D_{R,z}F_0
= \sum_{m,n=0}^\infty  \frac{u_S^{m+n+1}}{m+n+1}
\frac{t^Q_m}{m!} \frac{z^{-n-1}}{n!}.
\ee
\be
D_{S,z}F_0
= \sum_{m,n=0}^\infty  \frac{u_S^{m+n+1}}{m+n+1}
\frac{\tilde{t}^P_m}{m!} \frac{z^{-n-1}}{n!}.
\ee

\end{thm}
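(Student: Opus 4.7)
The plan is to differentiate the closed-form expression for $F_0$ from Theorem \ref{thm:F0} directly, applying each loop operator $D_{\alpha,z}$ to the three pieces of $F_0$ separately, and then using the hierarchy formulas of \S\ref{sec:IH-Genus-zero} together with the Landau-Ginzburg equations to simplify. The key preliminary observations I would record first are: (i) $u_S$ depends only on the $t^P_n$'s while $u_R$ depends only on $t^P_n$'s and $t^Q_n$'s, so many chain-rule contributions vanish; (ii) from the LG equations \eqref{eqn:u-S} and \eqref{eqn:u-R-sol}, one has $\sum_n t^P_n u_S^n/n! = u_S$, hence $\sum_n \tilde t^P_n u_S^n/n! = 0$, and $\sum_n t^Q_n u_S^n/n! = u_R/u_{S,1}$; (iii) the hierarchy of \S\ref{sec:IH-Genus-zero} supplies $D_{P,z}u_S = \partial_x e^{u_S/z}$, $D_{P,z}u_R = \partial_x(z^{-1}e^{u_S/z}u_R)$, $D_{Q,z}u_R = \partial_x e^{u_S/z}$, and all other relevant loop derivatives.

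The cases $\alpha = S$ and $\alpha = R$ are immediate: in each, the only nontrivial contribution comes from differentiating the explicit $t^S_n$ (respectively $t^R_n$) appearing in the bilinear sums of $F_0$, since neither $u_S$, $u_R$, nor the other explicit coupling constants depend on $t^S_n$ (resp.\ $t^R_n$). For $\alpha = Q$, the chain rule through $u_R$ in the term $f_0(u_R)/u_{S,1}^2$ combined with $D_{Q,z}u_R = e^{u_S/z}u_{S,1}/z$ produces the term $\frac{f_0'(u_R)}{u_{S,1}}\cdot\frac{e^{u_S/z}}{z}$, while differentiating the explicit $t^Q_m$ in the third piece of $F_0$ produces the remaining sum; no other contributions arise.

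The case $\alpha = P$ is the main obstacle and will require the bulk of the work. Three separate contributions must be computed and combined: (a) the chain rule through both $u_R$ and $u_{S,1}$ in $f_0(u_R)/u_{S,1}^2$, using $D_{P,z}u_{S,1} = \partial_x D_{P,z}u_S = \frac{e^{u_S/z}}{z}\bigl(\frac{u_{S,1}^2}{z}+u_{S,2}\bigr)$; (b) the first bilinear sum, where differentiating the explicit $\tilde t^P_m$ gives the announced single sum and the chain-rule-in-$u_S$ term is annihilated by the identity $\sum_m \tilde t^P_m u_S^m/m! = 0$; (c) the second bilinear sum, where only the chain rule through $u_S$ acts and yields $\frac{e^{u_S/z}}{z}\cdot u_R \sum_n t^R_n u_S^n/n!$ after invoking $\sum_m t^Q_m u_S^m/m! = u_R/u_{S,1}$.

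To finish I would use \eqref{eqn:u-Q-Sol} to rewrite $\sum_n t^R_n u_S^n/n!$ as $u_Q/u_{S,1}$ minus tail sums involving $f_0''(u_R)$ and $g_0'(u_R)$, and then eliminate those tail sums by differentiating \eqref{eqn:u-R-sol} and \eqref{eqn:u-S1} with respect to $x$ to obtain $\sum_{n\ge 1}t^Q_n u_S^{n-1}/(n-1)! = u_{R,1}/u_{S,1}^2 - u_{S,2}u_R/u_{S,1}^3$ and $\sum_{n\ge 2}t^P_n u_S^{n-2}/(n-2)! = u_{S,2}/u_{S,1}^3$. Substituting everything and using $g_0'(u_R) = u_R f_0''(u_R) - f_0'(u_R)$, the coefficients of $f_0''(u_R)$ cancel between the contributions from (a) and (c), and what remains collapses exactly into the stated expression grouped by $u_Q u_R/u_{S,1}$, $(u_R f_0''(u_R) - f_0'(u_R))u_{R,1}/u_{S,1}^2$, and $(u_R f_0'(u_R) - 2f_0(u_R))/u_{S,1}$ times $(u_{S,2}/u_{S,1}^2 + 1/z)$. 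The only real difficulty is the careful bookkeeping of signs and of powers of $u_{S,1}$ through these substitutions.
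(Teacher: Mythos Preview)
Your proposal is correct and follows essentially the same route as the paper's proof: apply each $D_{\alpha,z}$ to the closed formula for $F_0$ from Theorem~\ref{thm:F0}, use the hierarchy relations of \S\ref{sec:IH-Genus-zero} for the chain-rule pieces, and for $D_{P,z}$ simplify via \eqref{eqn:u-Q-Sol} together with the $x$-derivatives of \eqref{eqn:u-R-sol} and \eqref{eqn:u-S1}. One small wording correction: in your final step it is not that the $f_0''(u_R)$ coefficients \emph{cancel} between (a) and (c); rather, contribution (a) carries the $f_0'(u_R)u_{R,1}/u_{S,1}^2$ term and contribution (c) carries the $-u_R f_0''(u_R)u_{R,1}/u_{S,1}^2$ term, and these \emph{combine} (not cancel) into the displayed factor $-(u_R f_0''(u_R)-f_0'(u_R))u_{R,1}/u_{S,1}^2$.
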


\begin{proof}
Note  we have
\begin{align*}
D_{S,z}u_S& =D_{S,z}u_R  = 0, &   \\
D_{S,z} \tilde{t}^P_n & = D_{S,z}t^Q_n = D_{S,z} t^R_n = 0, &
D_{S,z}t^S_n & = z^{-n-1}.
\end{align*}
So we get:
\ben
D_{S,z}F_0
& = &  D_{S,z} \biggl(\frac{f_0(u_R)}{u_{S,1}^2}
+\sum_{m,n=0}^\infty  \frac{u_S^{m+n+1}}{m+n+1}
\frac{\tilde{t}^P_m}{m!} \frac{t^S_n}{n!}
+ \sum_{m,n=0}^\infty  \frac{u_S^{m+n+1}}{m+n+1}
\frac{t^Q_m}{m!} \frac{t^R_n}{n!} \biggr) \\
& = & \sum_{m,n=0}^\infty  \frac{u_S^{m+n+1}}{m+n+1}
\frac{\tilde{t}^P_m}{m!} \frac{z^{-n-1}}{n!}.
\een

Similar,  we have
\begin{align*}
D_{R,z}u_S& =D_{R,z}u_R = 0, \\
D_{R,z} \tilde{t}^P_n & = D_{R,z}t^Q_n = D_{R,z} t^S_n = 0, &
D_{R,z}t^R_n & = z^{-n-1}.
\end{align*}
So we get
\ben
D_{R,z}F_0
& = &  D_{R,z} \biggl(\frac{f_0(u_R)}{u_{S,1}^2}
+ \sum_{m,n=0}^\infty  \frac{u_S^{m+n+1}}{m+n+1}
\frac{\tilde{t}^P_m}{m!} \frac{t^S_n}{n!}
+ \sum_{m,n=0}^\infty  \frac{u_S^{m+n+1}}{m+n+1}
\frac{t^Q_m}{m!} \frac{t^R_n}{n!} \biggr) \\
& = & \sum_{m,n=0}^\infty  \frac{u_S^{m+n+1}}{m+n+1}
\frac{t^Q_m}{m!} \frac{z^{-n-1}}{n!}.
\een

Similar,  we have
\begin{align*}
D_{Q,z}u_S& =0, & D_{Q,z}u_R &= \frac{e^{u_S/z}}{z}u_{S,1}, \\
D_{Q,z} \tilde{t}^P_n & = D_{R,z}t^R_n = D_{R,z} t^S_n = 0, &
D_{Q,z}t^Q_n & = z^{-n-1}.
\end{align*}
So we get
\ben
D_{Q,z}F_0
& = &  D_{Q,z} \biggl(\frac{f_0(u_R)}{u_{S,1}^2}
+ \sum_{m,n=0}^\infty  \frac{u_S^{m+n+1}}{m+n+1}
\frac{\tilde{t}^P_m}{m!} \frac{t^S_n}{n!}
+ \sum_{m,n=0}^\infty  \frac{u_S^{m+n+1}}{m+n+1}
\frac{t^Q_m}{m!} \frac{t^R_n}{n!} \biggr) \\
& = & \frac{f_0'(u_R)}{u_{S,1} }   \cdot \frac{e^{u_S/z}}{z}
+ \sum_{m,n=0}^\infty  \frac{u_S^{m+n+1}}{m+n+1}
\frac{t^R_m}{m!} \frac{z^{-n-1}}{n!}.
\een
Finally,
we have:
\begin{align*}
D_{P,z} u_S& = \frac{e^{u_S/z}}{z}u_{S,1}, &
D_{P,z}u_R &= \frac{e^{u_S/z}}{z}
(u_{R,1} + u_R \frac{u_{S,1}}{z}), \\
D_{P,z} t^Q_n & = D_{R,z}t^R_n = D_{R,z} t^S_n = 0, &
D_{P,z} \tilde{t}^P_n & = z^{-n-1}.
\end{align*}
So we get
\ben
&& D_{P,z}F_0 \\
& = &  D_{P,z} \biggl(\frac{f_0(u_R)}{u_{S,1}^2}
+ \sum_{m,n=0}^\infty  \frac{u_S^{m+n+1}}{m+n+1}
\frac{\tilde{t}^P_m}{m!} \frac{t^S_n}{n!}
+ \sum_{m,n=0}^\infty  \frac{u_S^{m+n+1}}{m+n+1}
\frac{t^Q_m}{m!} \frac{t^R_n}{n!} \biggr) \\
& = & \frac{f_0'(u_R)}{u_{S,1}^2}
\cdot \frac{e^{u_S/z}}{z}(u_{R,1} + u_R \frac{u_{S,1}}{z})
- 2\frac{f_0(u_R)}{u_{S,1}^3} \cdot
\biggl(\frac{e^{u_S/z}}{z}u_{S,1} \biggr)'  \\
& + & \sum_{m,n=0}^\infty  \frac{u_S^{m+n+1}}{m+n+1}
\frac{t^S_m}{m!} \frac{z^{-n-1}}{n!} \\
& + & \sum_{m,n=0}^\infty  u_S^{m+n}\cdot
\frac{e^{u_S/z_1}}{z_1}u_{S,1} \cdot
\frac{\tilde{t}^P_m}{m!} \frac{t^S_n}{n!}
+ \sum_{m,n=0}^\infty   u_S^{m+n}\cdot
\frac{e^{u_S/z_1}}{z_1}u_{S,1} \cdot
\frac{t^Q_m}{m!} \frac{t^R_n}{n!}  \\
& = & \sum_{m,n=0}^\infty  \frac{u_S^{m+n+1}}{m+n+1}
\frac{t^S_m}{m!} \frac{z^{-n-1}}{n!} \\
& + &  \frac{e^{u_S/z}}{z}
\biggl(  \frac{f_0'(u_R)}{u_{S,1}^2} \cdot u_{R,1}
- 2\frac{f_0(u_R)}{u_{S,1}^3} \cdot u_{S,2}
+ \frac{1}{z}
\frac{u_R f_0'(u_R) - 2f_0(u_R)}{u_{S,1}}  \biggr) \\
& + & \frac{e^{u_S/z}}{z}u_{S,1} \cdot
\sum_{m=0}^\infty \frac{t^Q_m}{m!} u_S^{m}
\cdot
\sum_{n=0}^\infty   \frac{t^R_n}{n!} u_S^n.
\een
The summations on the third line of the right-hand side of the last equality can be performed as follows.
By \eqref{eqn:u-R-sol}
we have
\be \label{eqn:Sum-t-Qn}
\sum_{n=0}^\infty   \frac{t^Q_n}{n!} u_S^n = \frac{u_R}{u_{S,1}}.
\ee
The summations on the third line of the right-hand side of the last equality can be performed as follows.
By \eqref{eqn:u-Q-Sol}
we have
\ben
\sum_{n=0}^\infty   \frac{t^Q_n}{n!} u_S^n
& = & \frac{u_Q}{u_{S,1}} - f_0''(u_R) \sum_{n=1}^\infty t^Q_n
\frac{u_S^{n-1}}{(n-1)!}
- g_0'(u_R) \sum_{n=2}^\infty t^P_n
\frac{u_S^{n-2}}{(n-2)!}.
\een
Take $\frac{\pd}{\pd t^P}$ on both sides of eqref{eqn:Sum-t-Qn},
we get:
\ben
&& \sum_{n=1}^\infty   \frac{t^Q_n}{(n-1)!} u_S^{n-1} \cdot u_{S,1}
= \frac{u_{R,1}}{u_{S,1}} - \frac{u_R}{u_{S,1}^2}\cdot u_{S,2},
\een
and so
\be
\sum_{n=1}^\infty   \frac{t^Q_n}{(n-1)!} u_S^{n-1}
= \frac{u_{R,1}}{u_{S,1}^2} - \frac{u_R}{u_{S,1}^3}\cdot u_{S,2}.
\ee
From \eqref{eqn:u-S1} we get:
\be
\sum_{n=1}^\infty t^P_n \frac{u_S^{n-1}}{(n-1)!} = 1- \frac{1}{u_{S,1}}.
\ee
After taking $\frac{\pd}{\pd t^P}$ and dividing by $u_{S,1}$ on both sides we get:
\be
\sum_{n=2}^\infty t^P_n \frac{u_S^{n-2}}{(n-2)!} = \frac{u_{S,2}}{u_{S,1}^3}.
\ee
Also note:
$$g_0'(u_R) = u_R f_0''(u_R)-f_0'(u_R),$$
therefore,
\ben
&& \sum_{n=0}^\infty   \frac{t^Q_n}{n!} u_S^n \\
& = & \frac{u_Q}{u_{S,1}} - f_0''(u_R)  \cdot \biggl(\frac{u_{R,1}}{u_{S,1}^2}
- \frac{u_R}{u_{S,1}^3}\cdot u_{S,2} \biggr)
- (u_R f''_0(u_R)-f_0'(u_R)) \frac{u_{S,2}}{u_{S,1}^3} \\
& = & \frac{u_Q}{u_{S,1}} - f_0''(u_R)  \cdot \frac{u_{R,1}}{u_{S,1}^2}
+ f_0'(u_R) \frac{u_{S,2}}{u_{S,1}^3}.
\een
Now we can resume the computation of $D_{P,z}F_0 $:
\ben
&& D_{P,z}F_0 \\
& = & \sum_{m,n=0}^\infty  \frac{u_S^{m+n+1}}{m+n+1}
\frac{t^S_m}{m!} \frac{z^{-n-1}}{n!}  \\
& + & \frac{e^{u_S/z}}{z}
\biggl(  \frac{f_0'(u_R)}{u_{S,1}^2} \cdot u_{R,1}
- 2\frac{f_0(u_R)}{u_{S,1}^3} \cdot u_{S,2}
+ \frac{1}{z} \frac{u_Rf_0'(u_R) - 2f_0(u_R)}{u_{S,1}}
  \biggr) \\
& + & \frac{e^{u_S/z}}{z} \cdot u_R
\cdot \biggl( \frac{u_Q}{u_{S,1}}
- \frac{f''_0(u_R) u_{R,1}}{u_{S,1}^2}
+ \frac{f'_0(u_R)}{u_{S,1}^3} u_{S,2}  \biggr) \\
& = & \sum_{m,n=0}^\infty  \frac{u_S^{m+n+1}}{m+n+1}
\frac{t^S_m}{m!} \frac{z^{-n-1}}{n!}  \\
& + & \frac{e^{u_S/z}}{z} \cdot
\biggl[ \frac{u_Qu_R}{u_{S,1}}
- \frac{u_Rf''_0(u_R)- f_0'(u_R) }{u_{S,1}^2}u_{R,1} \\
& + &\frac{u_R f'_0(u_R)-2f_0(u_R)}{u_{S,1}}
\biggl( \frac{u_{S,2}}{u_{S,1}^2} + \frac{1}{z}\biggr)
\biggr].
\een
\end{proof}

Note for $m \geq 0$,
\ben
\sum_{n=0}^\infty  \frac{u_S^{n+1}}{m+n+1} \frac{z^{-n-1}}{n!}
= e^{u_S/z} \sum_{j=0}^m (-1)^j \frac{m!}{(m-j)!}
\biggl(\frac{z}{u_s}\biggr)^j
-(-1)^mm! \frac{z^m}{u_S^m},
\een
so we get:
\ben
D_{S,z}F_0
& = &\sum_{m=0}^\infty \frac{\tilde{t}^P_mu_S^m}{m!}
\biggl(e^{u_S/z} \sum_{j=0}^m (-1)^j \frac{m!}{(m-j)!}
\biggl(\frac{z}{u_s}\biggr)^j
-(-1)^mm! \frac{z^m}{u_S^m}\biggr) \\
& = & e^{u_S/z} \sum_{m=0}^\infty \tilde{t}^P_m
\sum_{j=0}^m (-1)^j \frac{u_S^{m-j}}{(m-j)!}
 z^j
- \sum_{m=0}^\infty (-1)^m \tilde{t}^P_m z^m \\
& = & e^{u_S/z} \sum_{j=0}^\infty (-z)^j \sum_{k=0}^\infty
\tilde{t}^P_{j+k}  \frac{u_S^k}{k!}
- \sum_{m=0}^\infty (-1)^m \tilde{t}^P_m z^m .
\een
We can also rewrite other
$\frac{\pd F_0}{\pd t^\alpha}$
in the same fashion.

\section{Mean Field Theory Computations for GW Theory  of the Quintic in Genus $g \geq 1$}
\label{sec:Quintic-Higher}

In this Section we extend the discussions
to higher genera.
Again we will see that the selection rules play a crucial role.

We will also write down the integrable hierarchy
associated with
the GW theory of the quintic in all genera.

\subsection{Degree zero part of the free energy
vs. the instanton part of the free energy}

We will split the free energy in genus $g\geq 0$
into two parts:
\be
F_h = F_{g,0} + F_{g,inst},
\ee
where $F_{g,0}$ is the total contributions
of the GW invariants of degree zero
(called the degree zero part of the free energy),
and $F_{g,inst}$ is the total contributions
of the GW invariants of positive degrees
(called the instanton part of the free energy).

The degree 0
gravitational descendent invariants of $X$ are the integrals:
\be
\begin{split}
& \corr{\tau_{k_1}(\cO_{\alpha_1})\cdots
\tau_{k_n}(\cO_{\alpha_n})}_{g,n;0}^X  \\
= & \int_{X\times \Mbar_{g,n}}
\cO_{\alpha_1} \cdots  \cO_{\alpha_n} \psi_1^{k_1}
 \cdots \psi_n^{k_n} \cup e(TX \boxtimes \bE^\vee),
\end{split}
\ee
where $\bE \to \Mbar_{g,n}$ is the Hodge bundle.
When $g=0$,
$\bE$ is trivial,
and the only degree zero contributions are given by:
\be
\corr{\tau_{k_1}(\cO_{\alpha_1})\cdots
\tau_{k_n}(\cO_{\alpha_n})}_{0,n;0}^X  \\
= \int_{X}
\cO_{\alpha_1} \cdots  \cO_{\alpha_n}
\cdot \int_{\Mbar_{0,n}}
\psi_1^{k_1} \cdots  \psi_n^{k_n}.
\ee
By Theorem \ref{thm:F0}
we have seen  that:
\be
\begin{split}
F_{0,0}
= & \frac{1}{u_{S,1}^2} \cdot \frac{5}{6}u_R^3
+ \sum_{m,n=0}^\infty  \frac{u_S^{m+n+1}}{(m+n+1)m!n!}
(\tilde{t}^P_m t^S_n + t^Q_m t^R_n ) \\
= & \frac{1}{u_{S,1}^2}
\biggl(\frac{5}{6}u_R^3
+ \sum_{m,n=0}^\infty
\frac{\tilde{t}^P_m t^S_n + t^Q_m t^R_n}{m+n+1}u_S
\frac{\pd u_S}{\pd t^P_m}\frac{\pd u_S}{\pd t^P_n}
\biggr),
\end{split}
\ee
and
\be
F_{0,inst} = \frac{1}{u_{S,1}^2}
\biggl( f_0(u_R) - \frac{5}{6}u_R^3 \biggr).
\ee

\subsection{Degree zero contribution to genus one free energy}

The $g=1$ case is also exceptional:
\be
e(TX \boxtimes \bE^\vee) = c_3(X) - c_2(X)\lambda_1,
\ee
and for $g > 1$,
\be \label{eqn:Euler-Deg0}
e(TX \boxtimes \bE^\vee) =
\frac{(-1)^g}{2}
(c_3(X) - c_2(X)c_1(X)) \lambda^3_{g-1}.
\ee
For the quintic,
the total Chern class is given by:
\ben
c(X) = (1+H)^5(1+5H)^{-1}|_X = (1+10H^2-40H^3)|_X.
\een
and so
\begin{align*}
c_1(X) & = 0, & c_2(X) & = 10H^2|_X, &
c_3(X) & = -40 H^3|_X.
\end{align*}
So in genus one we have
 \ben
&& \corr{\tau_{k_1}(\cO_{\alpha_1})
\cdots \tau_{k_n}(\cO_{\alpha_n})}_{1,n;0}^X  \\
& = & \int_{X\times \Mbar_{1,n}}
\cO_{\alpha_1} \cdots  \cO_{\alpha_n} \psi_1^{k_1}
 \cdots \psi_n^{k_n} \cup (-40H^3 + 10 H^2 \lambda_1) \\
 & = & - \int_X \cO_{\alpha_1} \cdots  \cO_{\alpha_n} \cup 40H^3 \cdot \int_{\Mbar_{1,n}} \psi_1^{k_1}
 \cdots \psi_n^{k_n} \\
& + & \int_X \cO_{\alpha_1} \cdots  \cO_{\alpha_n} \cup 10H^2 \cdot \int_{\Mbar_{1,n}} \psi_1^{k_1}
 \cdots \psi_n^{k_n} \lambda_1.
\een
So either
\ben
&& \deg \cO_{\alpha_1} +  \cdots  + \deg \cO_{\alpha_n} =0, \\
&& k_1 + \cdots +k_n = n,
\een
or
\ben
&& \deg \cO_{\alpha_1} +  \cdots  + \deg \cO_{\alpha_n} =1, \\
&& k_1 + \cdots +k_n = n-1,
\een

Restricted to the small phase space,
i.e.,
when $k_1 =\cdots = k_n =0$,
we must have $n=1$ and $\deg \cO_{\alpha_1} = 1$,
this gives us
\be
\corr{Q}_{1,1;0}^X = \int_X Q \cup 10 H^2 \cdot \int_{\Mbar_{1,1}}\lambda_1
= 50 \cdot \frac{1}{24} = \frac{25}{12},
\ee
and so there is a contribution of
$ \frac{25}{12}t^Q$
to $F_1^{small}$.

On the big phase space we have
\ben
&& \corr{\tau_{k_1}(Q) \tau_{k_2}(P)
\cdots \tau_{k_n}(P)}_{1,0}^X
= \int_X H \cup 10H^2 \cdot \int_{\Mbar_{1,n}} \psi_1^{k_1}
\cdots \psi_n^{k_n} \lambda_1 \\
& = & \frac{50}{24} \binom{n-1}{k_1, \dots, k_n}.
\een
Their total contributions to $F_{1,0}$ is:
\ben
&& \sum_{n=1}^\infty
\sum_{k_1+\cdots +k_n=n-1} \frac{50}{24} \binom{n-1}{k_1, \dots, k_n} t^Q_{k_1}
\frac{1}{(n-1)!}t^P_{k_2} \cdots t^P_{k_n} \\
& = &  \frac{50}{24}
\sum_{n=1}^\infty \sum_{k_1+\cdots +k_n=n-1}
\frac{t^Q_{k_1}}{k_1!} \cdot \frac{t^P_{k_2}}{k_2!} \cdots
\frac{t^P_{k_n}}{k_n!}
= \frac{25}{12} u_R.
\een
For the other case,
\ben
&& \corr{\tau_{k_1}(P)\cdots \tau_{k_n}(P)}_{1,0}^X
 = - \int_X  40H^3 \cdot \int_{\Mbar_{1,n}} \psi_1^{k_1}
 \cdots \psi_n^{k_n} \\
& = & - 200 \int_{\Mbar_{1,n}} \psi_1^{k_1}
 \cdots \psi_n^{k_n}.
\een
It vanishes unless
\be
k_1+ \cdots + k_n = n.
\ee
Their total contributions to $F_{1,0}$ is:
\ben
&& \sum_{n=1}^\infty \frac{t^P_{k_1}\cdots t^P_{k_n}}{n!}
\corr{\tau_{k_1}(P)\cdots \tau_{k_n}(P)}_{1,0}^X \\
& = & -200 \sum_{n=1}^\infty \frac{t^P_{k_1}\cdots t^P_{k_n}}{n!}
\int_{\Mbar_{1,n}} \psi_1^{k_1} \cdots \psi_n^{k_n} \\
& = & -200 \cdot \frac{1}{24} \log u_{S,1}.
\een
So we get for the quintic the following formula for
the degree zero part $F_{1,0}$ of the free energy in genus one:
\be
F_{1,0} = - \frac{25}{3} \log u_{S,1}
+ \frac{25}{12} u_R.
\ee

\subsection{Degree zero contribution to  free energy in genus $g > 1$}
Now by \eqref{eqn:Euler-Deg0},
when $g > 1$,
\ben
&& \corr{\tau_{k_1}(\cO_{\alpha_1})\cdots
\tau_{k_n}(\cO_{\alpha_n})}_{g,0}^X  \\
& = & \int_{X\times \Mbar_{g,n}}
\cO_{\alpha_1} \cdots  \cO_{\alpha_n} \psi_1^{k_1}
 \cdots \psi_n^{k_n} \cup \frac{(-1)^g}{2}(-40H^3) \lambda_{g-1}^3 \\
 & = & - \int_X \cO_{\alpha_1} \cdots  \cO_{\alpha_n} \cup 40H^3
 \cdot \int_{\Mbar_{g,n}} \psi_1^{k_1}
 \cdots \psi_n^{k_n}   \lambda_{g-1}^3.
\een
Therefore,
we can only have $\cO_{\alpha_1}= \cdots = \cO_{\alpha_n} = P$,
and
\ben
&& \corr{\tau_{k_1}(P)\cdots \tau_{k_n}(P)}_{g,0}^X
 = - \int_X  40H^3 \cdot \int_{\Mbar_{1,n}} \psi_1^{k_1}
 \cdots \psi_n^{k_n} \lambda_{g-1}^3 \\
& = & - 200 \int_{\Mbar_{1,n}} \psi_1^{k_1}
 \cdots \psi_n^{k_n} \lambda_{g-1}^3.
\een
Therefore,
for $g \geq 2$,
\ben
F_{g,0} & = &
\sum_{n=1}^\infty \frac{t^P_{k_1}\cdots t^P_{k_n}}{n!}
\corr{\tau_{k_1}(P)\cdots \tau_{k_n}(P)}_{g,0}^X \\
& = & -200 \sum_{n=1}^\infty \frac{t^P_{k_1}\cdots t^P_{k_n}}{n!}
\int_{\Mbar_{1,n}} \psi_1^{k_1} \cdots \psi_n^{k_n} \lambda_{g-1}^3 \\
& = & -200 \cdot \frac{1}{2(2g-2)!}\frac{|B_{2g-2}|}{2g-2} \frac{|B_{2g}|}{2g} u_{S,1}^{2g-2}.
\een
Here in the second equality we have used a result in
Dubrovin-Liu-Yang-Zhang \cite[Cor. 3.10]{DLYZ}.

\subsection{Instanton part of the free energy in genus $\geq 1$}

We will first consider $F_{g,instanton}$ restricted to the small phase space for $g \geq 1$.
Then we need to consider the correlators
\be
\corr{P^{m_0}Q^{m_1}R^{m_2}S^{m_3}}_{g,m_0+\cdots +m_3 ;d}
\ee
By the selection rules \eqref{eqn:Selection-Rules},
we have
\be
m_0 = m_2+2m_3.
\ee
By the string equation we must have $m_0=m_2=m_3=0$ to get
a nonzero correlator
\be
\corr{Q^{m_1}}_{g,m_1;d} = d^{m_1} \corr{1}_{g,0;d} = d^{m_1} N_{g,d},
\ee
where
\be
N_{g,d} :=\int_{[\Mbar_{g,0}(X,d)]^{virt}} 1.
\ee
So the instanton part of the free energy in genus $g \geq 1$ restricted to the small phase space
is given by
\be
\sum_{d=1}^\infty  q^d
\sum_{m_1=0}^\infty \frac{(t^Q)^{m_1}}{m_1!} \corr{Q^{m_1}}_{g,m_1;d}
= \sum_{d=1}^\infty  q^d N_{g,d} e^{dt^Q}.
\ee
Next we will show that the selection rules \eqref{eqn:SR-CY1} and \eqref{eqn:SR-CY2}
implies the following:

\begin{prop}
For $ g\geq 1$,
\be
F_{g, instanton} = u_{S,1}^{2g-2} \sum_{d=1}^\infty  q^d N_{g,d} e^{du_R}.
\ee
\end{prop}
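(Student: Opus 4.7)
The plan is to combine the quintic CY selection rules with the universal puncture/dilaton/divisor equations to reduce every contributing correlator to $N_{g,d}$, and then to reorganize the resulting series via the Landau--Ginzburg relations.

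First, at $m=3$ and $c_1(X)=0$, the selection rule \eqref{eqn:Selection-Rules} forces every nonvanishing correlator in degree $d\ge 1$ to satisfy $\sum_i(a_i+\deg \cO_{\alpha_i}) = n$, equivalently $\sum_i\deg t^{\alpha_i}_{a_i}=0$ in the grading \eqref{eqn:SR-CY1}. The puncture equation \eqref{eqn:Puncture-Corr} preserves this balance while removing each $\tau_0(P)$ insertion (the only negative-degree option). After exhausting all $\tau_0(P)$'s, every surviving insertion has $a+\deg\cO\ge 1$ yet the total still equals the remaining insertion count, so each survivor is precisely $\tau_0(Q)$ or $\tau_1(P)$. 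In particular, correlators containing any $\tau_a(R)$ or $\tau_a(S)$ vanish, so $F_{g,inst}$ depends only on $\{t^P_n\}$ and $\{t^Q_n\}$, matching the right-hand side.

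Second, iterating the dilaton equation to remove $\tau_1(P)$'s and the divisor equation to remove $\tau_0(Q)$'s---a recursion that stays closed because $P\cdot Q=Q$---yields, for $g\ge 2$,
\begin{equation*}
\langle \tau_0(Q)^{m_1}\tau_1(P)^k\rangle_{g;d} = \frac{(2g+m_1+k-3)!}{(2g+m_1-3)!}\,d^{m_1}\,N_{g,d},
\end{equation*}
with the analogous $g=1$ formula vanishing when $m_1=0$, $k\ge 1$, consistent with the $u_{S,1}^0=1$ prefactor.

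Third, I would reassemble the full series. A generic contributing correlator has the form $\langle \tau_0(P)^{m_0}\prod\tau_a(Q)^{q_a}\prod\tau_b(P)^{p_b}\rangle_{g;d}$ with $b\ge 1$, and iterated puncture expresses it as a multinomial sum over all decrement distributions (some of which transiently create new $\tau_0(P)$'s from $\tau_1(P)$'s and require further reduction). Rather than tracking these distributions directly, I would invoke the characteristic method for the puncture PDE $\partial_{t^P}F_{g,inst} = \sum_{n,\alpha}t^\alpha_{n+1}\partial_{t^\alpha_n}F_{g,inst}$ (the classical term in \eqref{eqn:Puncture} vanishes for $g\ge 1$): its characteristic variables are the renormalized coupling constants $\hat t^\alpha_n$, and the LG equations \eqref{eqn:u-S} and \eqref{eqn:u-R} identify $\hat t^P_0=u_S$, $\hat t^Q_0=u_R$, and $\hat t^P_1=1-1/u_{S,1}$. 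Hence $F_{g,inst}$ is a function of these renormalized variables alone, and the binomial summation $\sum_k\binom{m_1+k+2g-3}{k}x^k = (1-x)^{-(m_1+2g-2)}$ (applied with $x=1-1/u_{S,1}$) combined with the Step~2 formula produces $u_{S,1}^{2g-2}\sum_d N_{g,d}e^{du_R}q^d$.

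The main obstacle is verifying Step~3: showing that the iterated-puncture multinomial sum telescopes precisely into the implicit function $u_S$ defined by \eqref{eqn:u-S} and the derivative $u_{S,1}$, rather than into a naive polynomial series in $t^P$. An efficient alternative is to check that both sides of the claimed identity satisfy the puncture PDE and the dilaton PDE $\partial_{t^P_1}F_{g,inst} - \sum t^\alpha_n\partial_{t^\alpha_n}F_{g,inst} = (2g-2)F_{g,inst}$ (the latter enforcing the exponent $2g-2$ on $u_{S,1}$), and then compare boundary values on the small phase space, where $u_S=t^P$ and $u_R=t^Q$ reduce the formula to the known $\sum_d N_{g,d}e^{dt^Q}q^d$.
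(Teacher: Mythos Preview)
Your approach and the paper's share the same computational core: your Step~2 is exactly the paper's evaluation of $\langle\tau_1(P)^n Q^{m_1}\rangle_{g,d}$ via the dilaton and divisor equations, and both recognize the answer on the two-dimensional slice $(t^P_1, t^Q_0)$ as $u_{S,1}^{2g-2}\sum_d N_{g,d}e^{du_R}q^d$. The difference is in how the problem is reduced to that slice. The paper first invokes the jet-variable reformulation of the puncture operator from Eguchi--Getzler--Xiong ($L_{-1} = -\partial/\partial u_S$) together with degree-$0$ homogeneity in the jet grading to conclude \emph{a priori} that $F_{g,\mathrm{inst}}$ depends only on the two degree-zero jet variables $u_{S,1}$ and $u_R$; the slice computation then identifies this function. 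You instead stay on the $t$-side and iterate the puncture equation directly, which is more elementary but leaves more to assemble.

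Your Step~3 has a gap as written. The characteristic method for the puncture PDE shows only that $F_{g,\mathrm{inst}}$ is determined by its restriction to the hypersurface $\{t^P_0=0\}$---but that hypersurface still carries infinitely many coordinates, and puncture plus dilaton plus the small-phase-space boundary cannot by themselves pin down a function of infinitely many remaining variables. What actually closes the gap is a sharper use of your own Step~1: on $\{t^P_0=0\}$ every contributing correlator has no $\tau_0(P)$ insertion, so the weight balance $\sum_i w_i=n$ with all $w_i\ge 1$ forces every insertion to be $\tau_0(Q)$ or $\tau_1(P)$; hence $F_{g,\mathrm{inst}}|_{t^P_0=0}$ already depends only on $t^P_1$ and $t^Q_0$. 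Once you state this explicitly, your binomial sum computes $F_{g,\mathrm{inst}}|_{t^P_0=0}$ in full, the puncture PDE propagates it off the hypersurface, and the proof is complete---the dilaton check becomes redundant.
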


\begin{proof}
First recall $F_{g, instanton}$ as a formal power series in $\{t^\alpha_n\}$
is homogeneous of degree $0$,
where the degree of $t^\alpha_n$ is given by \eqref{eqn:SR-CY1}.
By changing to the jet variables $\{u_{\alpha,n}\}$,
it is also homogeneous of degree zero,
where
\begin{align}
\deg u_{S,n} & = n-1, &
\deg u_{R,n} & = n, \\
\deg u_{Q,n} & = n +1, &
\deg u_{P,n} & = n+2.
\end{align}
In Eguchi-Getzler-Xiong \cite{Egu-Get-Xio} i has been shown that
one can rewrite the puncture operator as
\be
L_{-1} = - \frac{\pd}{\pd u_S}.
\ee
So we know that $F_g$ does not depend on $u_S$ when $g\geq 1$,
and it depends only on the two degree zero jet variables $u_{S,1}$ and $u_R$.
To find the explicit expression,
let us first consider the correlators of the form:
\be
\corr{P^{m_0}\tau_1(P)^n Q^{m_1}R^{m_2}S^{m_3}}_{g,n+m_0+\cdots +m_3 ;d}.
\ee
By the selection rules \eqref{eqn:Selection-Rules},
\be
m_0 = m_2 + 2m_3.
\ee
By the string equation we must have $m_0=m_2=m_3=0$ to get
a nonzero correlator of the form:
\ben
&& \corr{\tau_1(P)^n Q^{m_1}}_{g,m_1;d}
= (2g-2+m_1+n-1) \corr{\tau_1(P)^{n-1} Q^{m_1}}_{g,m_1;d} \\
& = & \cdots
= (2g-2+m_1+n-1) \cdots (2g-2+m_1) \cdot \corr{Q^{m_1}}_{g,m_1;d} \\
& = & (2g-2+m_1+n-1) \cdots (2g-2+m_1) \cdot d^{m_1} N_{g,d}.
\een
Their contribution to the free energy is
\ben
&& \sum_{d=1}^\infty q^d \sum_{n,m_1\geq 0} \frac{(t^P_1)^n}{n!}
\frac{(t^Q)^{m_1}}{m_1!} \prod_{j=0}^{n-1} (2g-2+m_1+j) \cdot d^{m_1} N_{g,d} \\
& = & \sum_{d=1}^\infty q^d   N_{g,d}
\sum_{m_1\geq 0} \frac{(t^Q)^{m_1}}{m_1!} d^{m_1}
\sum_{n\geq 0} \frac{(t^P_1)^n}{n!}
 \prod_{j=0}^{n-1} (2g-2+m_1+j)   \\
 & = & \sum_{d=1}^\infty q^d   N_{g,d}
\sum_{m_1\geq 0} \frac{(t^Q)^{m_1}}{m_1!} d^{m_1} \cdot \frac{1}{(1-t^P_1)^{2g-2+m_1}} \\
& = & \frac{1}{(1-t^P_1)^{2g-2}} \cdot \sum_{d=1}^\infty q^d   N_{g,d} e^{t^Q/(1-t^P_1)}.
\een
By \eqref{eqn:u-S1} we have
Recall that
\ben
u_{S,1} = \biggl( 1- \sum_{n=1}^\infty t^P_n \frac{u_S^{n-1}}{(n-1)!}\biggr)^{-1}
= \frac{1}{1-t^P_1},
\een
and by \eqref{eqn:u-R-sol},
\ben
u_R = \biggl(1- \sum_{n=1}^\infty t^P_n
\frac{u_S^{n-1}}{(n-1)!}\biggr)^{-1}
\cdot \sum_{n=0}^\infty t^Q_n \frac{u_S^n}{n!}
= \frac{t^Q}{1-t^P_1}.
\een
\end{proof}

\begin{rmk}
One can use the same method to treat
the degree zero part of $F_g$.
A similar argument yields:
\ben
F_{g,0}
= -200u_{S,1}^{2g-2}\int_{\Mbar_g} \lambda_{g-1}^3
= -200 u_{S,1}^{2g-2} \cdot
\frac{1}{2(2g-2)!}\frac{|B_{2g-2}|}{2g-2} \frac{|B_{2g}|}{2g}.
\een
Here in the second equality we have used
the results predicted by
Mari\~no and Moore \cite{Mar-Moo} and Gopakumar and Vafa \cite{Gop-Vaf}
and proved by Faber and Pandharipande \cite{Fab-Pan}.
\end{rmk}

To summarize,
we have

\begin{thm}
For the quintic,
the free energy in genus $g \geq 1$ on the big phase
space has the following form:
\be
F_1 =
-200 \cdot \frac{1}{24} \log u_{S,1}
+ \frac{25}{12} u_R
+ \sum_{d=1}^\infty N_{1,d} e^{d u_R} q^d
\ee
and for $g\geq 2$,
\be
F_g = -200 \cdot \frac{1}{2(2g-2)!}\frac{|B_{2g-2}|}{2g-2} \frac{|B_{2g}|}{2g} u_{S,1}^{2g-2}
+ u_{S,1}^{2g-2} \sum_{d=1}^\infty  q^d N_{g,d} e^{du_R}.
\ee
\end{thm}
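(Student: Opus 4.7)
The plan is to add the two pieces of the decomposition $F_g = F_{g,0} + F_{g,\mathrm{inst}}$ that have been separately established in the two preceding subsections. Both pieces are already in closed form as functions of the jet variables $u_{S,1}$ and $u_R$, so no further computation is required beyond matching conventions.

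First I would handle the $g = 1$ case. The earlier calculation on the big phase space gave $F_{1,0} = -\tfrac{25}{3}\log u_{S,1} + \tfrac{25}{12} u_R$, where the coefficient $-\tfrac{25}{3}$ comes from $200\int_X H^3$ paired with the Hodge integrals $\int_{\Mbar_{1,n}}\psi_1^{k_1}\cdots \psi_n^{k_n}\lambda_1$ summed over $n$ with weights $t^P_{k_j}$. Observing that $-\tfrac{25}{3} = -200\cdot\tfrac{1}{24}$ brings this into the exact form stated in the theorem, and then the Proposition supplies the instanton term $\sum_{d}N_{1,d} e^{du_R} q^d$, noting that the prefactor $u_{S,1}^{2g-2}$ specializes to $1$ when $g=1$.

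Next, for $g \geq 2$, I would invoke the Remark immediately preceding the theorem, which records
\begin{equation*}
F_{g,0} = -200\, u_{S,1}^{2g-2} \cdot \frac{1}{2(2g-2)!}\frac{|B_{2g-2}|}{2g-2}\frac{|B_{2g}|}{2g},
\end{equation*}
via the Faber-Pandharipande evaluation of $\int_{\Mbar_g}\lambda_{g-1}^3$ applied to the $g\geq 2$ Euler-class formula $e(TX\boxtimes \mathbb{E}^\vee) = \tfrac{(-1)^g}{2}(c_3(X)-c_2(X)c_1(X))\lambda_{g-1}^3$ together with $c_3(X) = -40H^3$ and $c_1(X)=0$ on the quintic. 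This is precisely the constant-in-$q$ coefficient of $u_{S,1}^{2g-2}$ appearing in the target formula, and adding the instanton contribution from the Proposition, which already carries the prefactor $u_{S,1}^{2g-2}$, yields the theorem.

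The principal obstacle, hidden behind this assembly, is really the Proposition itself: it asserts that the entire dependence of the instanton part on the higher times $t^\alpha_n$ collapses into the substitution $t^Q \mapsto u_R$ together with the overall factor $u_{S,1}^{2g-2}$. Its content is the selection-rules argument showing that only correlators of the form $\corr{\tau_1(P)^n Q^{m_1}}_{g,n+m_1;d}$ survive after the string- and dilaton-equation reductions, combined with the generating-series identity $\sum_{n\geq 0} \tfrac{(t^P_1)^n}{n!}\prod_{j=0}^{n-1}(2g-2+m_1+j) = (1-t^P_1)^{-(2g-2+m_1)}$ and the closed forms $u_{S,1}=(1-t^P_1)^{-1}$, $u_R = t^Q/(1-t^P_1)$ on the truncated phase space used there. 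Once that proposition and the degree-zero computation are in hand, the present theorem is purely bookkeeping.
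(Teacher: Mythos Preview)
Your proposal is correct and matches the paper's own treatment: the theorem is presented in the paper as a summary (``To summarize, we have'') of the preceding subsections, assembling the degree-zero computations of \S5.2--5.3 (and the Remark invoking Faber--Pandharipande) with the instanton Proposition of \S5.4, exactly as you describe. Your identification $-\tfrac{25}{3}=-200\cdot\tfrac{1}{24}$ and the observation that $u_{S,1}^{2g-2}=1$ at $g=1$ are the only cosmetic steps needed, and your account of what the Proposition is really doing (reduction to the two degree-zero jet variables $u_{S,1},u_R$ followed by explicit evaluation on the truncated phase space) is accurate.
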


As in the genus zero case,
one can apply the loop operators
repeatedly to get the $n$-point functions
in genus $g$.

\subsection{Deformation of the order parameter}

Define
\be
U_P = \sum_{g=0}^\infty \lambda^{2g}
\frac{\pd^2F_g}{\pd t^P\pd t^P},
\ee
and similarly define $U_Q$, $U_R$ and $U_S$.

Write
\be
F_g = -\frac{25}{3}\delta_{g,1} \log u_{S,1}
+ u_{S,1}^{2g-2}f_g(u_R).
\ee
Their first derivatives are:
\be  \label{eqn:Fg-tP}
\frac{\pd F_g}{\pd t^P}
= - \frac{25}{3} \frac{u_{S,2}}{u_{S,1}} \delta_{g,1}
+  (2g-2)u_{S,1}^{2g-3}u_{S,2}f_g(u_R)
+ u_{S,1}^{2g-2}f_g'(u_R)u_{R,1}.
\ee

\be \label{eqn:Fg-tQ}
\frac{\pd F_g}{\pd t^Q} =
u_{S,1}^{2g-2} f_g'(u_R)
\cdot \frac{\pd u_R}{\pd t^Q}
= u_{S,1}^{2g-1}f_g'(u_R),
\ee

\be
\frac{\pd F_g}{\pd t^R} =
\frac{\pd F_g}{\pd t^S} = 0.
\ee
Taking $\frac{\pd}{\pd t^P}$ we get:
\be
\begin{split}
\frac{\pd^2 F_g}{\pd t^P\pd t^P}
& = - \frac{25}{3} \frac{u_{S,3}}{u_{S,1}} \delta_{g,1}
+ \frac{25}{3} \frac{u_{S,2}^2}{u_{S,1}^2} \delta_{g,1} \\
& + (2g-2)(2g-3)u_{S,1}^{2g-4}u_{S,2}^2f_g(u_R) \\
& + 2(2g-2)u_{S,1}^{2g-3}u_{S,2}f_g'(u_R)u_{R,1} \\
& + (2g-2) u_{S,1}^{2g-3}u_{S,3} f_g(u_R) \\
& + u_{S,1}^{2g-2}f_g'(u_R)u_{R,2}
+ u_{S,1}^{2g-2}f_g''(u_R)u_{R,1}^2.
\end{split}\ee
\ben
\frac{\pd^2 F_g}{\pd t^P\pd t^Q}
& = & \frac{\pd}{\pd t^P} \biggl(
u_{S,1}^{2g-2} f_g'(u_R) \frac{\pd u_R}{\pd t^Q} \biggr)
= \frac{\pd}{\pd t^P} \biggl(
u_{S,1}^{2g-1} f_g'(u_R)   \biggr) \\
& = & (2g-1) u_Su_{S,1}^{2g-2}u_{S,2}
f'_g(u_R) + u_{S,1}^{2g-1}u_{R,1} f_g''(u_R).
\een
\ben
&&\frac{\pd^ F_g}{\pd t^P\pd t^R} =
\frac{\pd^2 F_g}{\pd t^P\pd t^S} = 0.
\een
So we get:
\be \label{eqn:UP}
\begin{split}
U_P & = u_P
- \frac{25\lambda^2}{3}(\frac{u_{S,3}}{u_{S,1}}
- \frac{u_{S,2}^2}{u_{S,1}^2}) \\
& + \sum_{g=1}^\infty
\lambda^{2g} \biggl(u_{S,1}^{2g-2}f_g''(u_R) u_{R,1}^2
+ u_{S,1}^{2g-2}f_g'(u_R) u_{R,2} \\
& + (4g-4) u_{S,1}^{2g-3}u_{S,2}u_{R,1} f_g'(u_R)
+ (2g-2) u_{S,1}^{2g-3}u_{S,3} f_g(u_R) \\
& +  (2g-2)(2g-3) u_{S,1}^{2g-4}u_{S,2}^2 f_g(u_R) \biggr).
\end{split}
\ee
\be \label{eqn:UQ}
U_Q = u_Q
+ \sum_{g=1}^\infty \lambda^{2g}
\biggl( (2g-1) u_{S,1}^{2g-2}u_{S,2}
f'_g(u_R) + u_{S,1}^{2g-1}u_{R,1} f_g''(u_R)
\biggr).
\ee
\begin{align} \label{eqn:UR-US}
U_R & = u_R, & U_S & = u_S.
\end{align}

\subsection{Deformation of the flow equations}
The flow equations for $u_S$ in genus zero are:
\begin{align*}
\frac{\pd u_S}{\pd t^P_n} & = \biggl( \frac{u_S^{n+1}}{(n+1)!}\biggr)', &
\frac{\pd u_S}{\pd t^Q_n} & = 0, &
\frac{\pd u_S}{\pd t^R_n} & = 0, &
\frac{\pd u_S}{\pd t^S_n} & = 0,
\end{align*}
they simply become
\begin{align*}
\frac{\pd U_S}{\pd t^P_n} & = \biggl( \frac{U_S^{n+1}}{(n+1)!}\biggr)', &
\frac{\pd U_S}{\pd t^Q_n} & = 0, &
\frac{\pd U_S}{\pd t^R_n} & = 0, &
\frac{\pd U_S}{\pd t^S_n} & = 0,
\end{align*}
The flow equations for $u_R$ in genus zero are:
\begin{align*}
\frac{\pd u_R}{\pd t^P_n} & = \biggl( \frac{u_Ru_S^{n}}{(n)!}\biggr)', &
\frac{\pd u_R}{\pd t^Q_n} & = \biggl( \frac{u_S^{n+1}}{(n+1)!}\biggr)', &
\frac{\pd u_R}{\pd t^R_n} & = 0, &
\frac{\pd u_R}{\pd t^S_n} & = 0,
\end{align*}
they simply become
\begin{align*}
\frac{\pd U_R}{\pd t^P_n} & = \biggl( \frac{U_RU_S^{n}}{n!}\biggr)', &
\frac{\pd U_R}{\pd t^Q_n} & = \biggl( \frac{U_S^{n+1}}{(n+1)!}\biggr)', &
\frac{\pd U_R}{\pd t^R_n} & = 0, &
\frac{\pd U_R}{\pd t^S_n} & = 0,
\end{align*}
So we will focus on the flow equations
of $U_P$ and $U_Q$.
Since they involve only $u_S$ and $u_R$,
we have
\ben
\frac{\pd U_Q}{\pd t^S_n} = \frac{\pd u_Q}{\pd t^S_n}
= 0,
\een
\ben
\frac{\pd U_Q}{\pd t^R_n} = \frac{\pd u_Q}{\pd t^R_n}
= \biggl(\frac{u_S^{n+1}}{(n+1)!} \biggr)'
= \biggl(\frac{U_S^{n+1}}{(n+1)!} \biggr)',
\een
\ben
\frac{\pd U_P}{\pd t^S_n} = \frac{\pd u_P}{\pd t^S_n}
\biggl(\frac{u_S^{n+1}}{(n+1)!} \biggr)'
= \biggl(\frac{U_S^{n+1}}{(n+1)!} \biggr)',
\een
\ben
\frac{\pd U_P}{\pd t^R_n} = \frac{\pd u_P}{\pd t^R_n}
= \biggl(u_R\frac{u_S^{n}}{n!} \biggr)'
= \biggl(\frac{U_RU_S^{n}}{n!} \biggr)'.
\een
So we will focus on the equations
for $\frac{\pd U_P}{\pd t^P_n}$, $\frac{\pd U_P}{\pd t^Q_n}$,
$\frac{\pd U_Q}{\pd t^P_n}$ and $\frac{\pd U_Q}{\pd t^Q_n}$.
For $n=0$,
we need to consider the deformations
of
\begin{align} \label{eqn:uP-t}
\frac{\pd u_P}{\pd t^P} & = u_P', &
\frac{\pd u_P}{\pd t^Q} & = u_Q',
\end{align}
and
\begin{align} \label{eqn:uQ-t}
\frac{\pd u_Q}{\pd t^P} & = u_Q', &
\frac{\pd u_Q}{\pd t^Q} & = (f_0''(u_R))'.
\end{align}
The first three equations are linear,
it is clear that they get deformed to:
\begin{align*}
\frac{\pd U_P}{\pd t^P} & = U_P', &
\frac{\pd U_P}{\pd t^Q} & = U_Q', &
\frac{\pd U_Q}{\pd t^P} & = U_Q',
\end{align*}
respectively.
Now let us examine the deformation of the
second equation in \eqref{eqn:uQ-t}.
By \eqref{eqn:Fg-tQ} we get:
\ben
\frac{\pd^2 F_g}{\pd t^Q\pd t^Q}
=  \frac{\pd}{\pd t^Q} (u_{S,1}^{2g-1}f_g'(u_R))
= u_{S,1}^{2g}f_g''(u_R)
= U_{S,1}^{2g}f_g''(U_R).
\een
It follows that
\ben
\frac{\pd U_Q}{\pd t^Q}
& = & \frac{\pd}{\pd t^P} \biggl(
\frac{\pd^2 F_0}{\pd t^Q\pd t^Q}
+ \lambda^2 \frac{\pd^2F_1}{\pd t^Q\pd t^Q} + \cdots
\biggr) \\
& = & (f_0''(U_R) + \lambda^2 U_{S,1}^2f_1''(U_R)
+ \cdots)',
\een
and so we get:
\be
\frac{\pd U_Q}{\pd t^Q}
= \biggl( \sum_{g=0}^\infty U_{S,1}^{2g} f_g''(U_R)  \biggr)'.
\ee

For $n=1$ we need to consider the deformation of the following equations:
\begin{align} \label{eqn:uP-t1}
\frac{\pd u_P}{\pd t^P_{1}} & = (u_Pu_S+u_Qu_R)',  &
\frac{\pd u_P}{\pd t^Q_{1}} & = (u_Su_Q+ f_0'(u_R))',
\end{align}

\begin{align} \label{eqn:uQ-t1}
\frac{\pd u_Q}{\pd t^P_{1}} & =
(u_Su_Q + u_Rf_0''(u_R)-f_0'(u_R))', &
\frac{\pd u_Q}{\pd t^Q_{1}} &
= (u_Sf_0''(u_R))'.
\end{align}
Let us begin with the second equation in \eqref{eqn:uQ-t1}.
Its treatment is similar to the case of $\frac{\pd u_Q}{\pd t^Q}$.
First note:
\ben
\frac{\pd^2 F_g}{\pd t^Q\pd t^Q_1}
& = & \frac{\pd}{\pd t^Q_1} (u_{S,1}^{2g-1} f_g'(u_R))
= u_{S,1}^{2g-1}f_g''(u_R) \cdot \frac{\pd u_R}{\pd t^Q_1} \\
& = & u_{S,1}^{2g-1} f_g''(u_R) \cdot u_S u_{S,1}
= U_SU_{1,1}^{2g} f_g''(U_R).
\een
Then from
\ben
\frac{\pd U_Q}{\pd t^Q_1}
& = & \frac{\pd}{\pd t^P} \biggl(
\sum_{g=0}^\infty \lambda^{2g} \frac{\pd^2F_g}{\pd t^Q\pd t^Q_1}
\biggr),
\een
we get
\be
\frac{\pd U_Q}{\pd t^Q_{1}}
= \biggl(U_S \sum_{g=0}^\infty
\lambda^{2g} U_{S,1}^{2g} f_g''(u_R) \biggr)'.
\ee
In the same fashion we see that for general $n$,
\be
\frac{\pd U_Q}{\pd t^Q_{n}}
= \biggl(\frac{U_S^n}{n!}
\sum_{g=0}^\infty \lambda^{2g} U_{S,1}^{2g} f_g''(u_R) \biggr)'.
\ee

Let us now examine the deformation of the second equation
in \eqref{eqn:uP-t1}.
\ben
&& \frac{\pd^2 F_g}{\pd t^P\pd t^Q_1}
= \frac{\pd}{\pd t^P} \biggl(
u_{S,1}^{2g-2} f_g'(u_R) \frac{\pd u_R}{\pd t^Q_1} \biggr)
= \frac{\pd}{\pd t^P} \biggl(
u_S u_{S,1}^{2g-1} f_g'(u_R)   \biggr) \\
& = & u_{S,1}^{2g} f_g'(u_R) + (2g-1) u_Su_{S,1}^{2g-2}u_{S,2}
f'_g(u_R) + u_Su_{S,1}^{2g-1}u_{R,1} f_g''(u_R).
\een
So we have
\ben
&& \frac{\pd U_P}{\pd t^Q_1}
= \lambda^2 \frac{\pd}{\pd t^P} \biggl( \frac{\pd^2 F}{\pd t^P\pd t^Q_1}\biggr) \\
& = & \biggl[ u_Su_Q+ f_0'(u_R) + \sum_{g=1}^\infty
\lambda^{2g} \biggl( u_{S,1}^{2g} f_g'(u_R) \\
&&  + (2g-1) u_Su_{S,1}^{2g-2}u_{S,2}
f'_g(u_R) + u_Su_{S,1}^{2g-1}u_{R,1} f_g''(u_R)
\biggr)
\biggr]' \\
& = & \biggl[ u_S \biggl(u_Q
 +  \sum_{g=1}^\infty
\lambda^{2g}  \biggl( (2g-1) u_{S,1}^{2g-2}u_{S,2}
f'_g(u_R) + u_{S,1}^{2g-1}u_{R,1} f_g''(u_R)
\biggr) \biggr) \\
& + & f_0'(u_R) + \sum_{g=1}^\infty
\lambda^{2g} u_{S,1}^{2g} f_g'(u_R)
\biggr]' \\
\een
Therefore,
by \eqref{eqn:UQ} and \eqref{eqn:UR-US} we get:
\be
\frac{\pd U_P}{\pd t^Q_{1}}
= \biggl(U_S U_Q
+ \sum_{g=0}^\infty
\lambda^{2g} U_{S,1}^{2g} f_g'(U_R)\biggr)'.
\ee
For general $n$,
\be
\frac{\pd U_P}{\pd t^Q_{n}}
= \biggl(\frac{U_S^n}{n!} U_Q
+  \frac{U_S^{n-1}}{(n-1)!}
\sum_{g=0}^\infty \lambda^{2g} U_{S,1}^{2g} f_g'(u_R) \biggr)'.
\ee
Now we consider the deformation of the first equation in \eqref{eqn:uQ-t1}.
We differentiate \eqref{eqn:Fg-tQ} to get:
\ben
\frac{\pd^2 F_g}{\pd t_1^P\pd t^Q}
& = & \frac{\pd}{\pd t_1^P} \biggl(
u_{S,1}^{2g-2} f_g'(u_R) \frac{\pd u_R}{\pd t^Q} \biggr)
= \frac{\pd}{\pd t^P_1} \biggl(
u_{S,1}^{2g-1} f_g'(u_R)   \biggr) \\
& = & (2g-1) u_{S,1}^{2g-2} \frac{\pd u_{S,1}}{\pd t^P_1} f_g'(u_R)
+ u_{S,1}^{2g-1} f_g''(u_R) \frac{\pd u_R}{\pd t^P_1} \\
& = & (2g-1)  u_{S,1}^{2g-2}(u_S u_{S,2} + u_{S,1}^2)
f'_g(u_R) \\
& + & u_{S,1}^{2g-1}(u_Su_{R,1} + u_Ru_{S,1}) f_g''(u_R).
\een
From this we get:
\ben
&& \frac{\pd U_Q}{\pd t^P_1}
= \lambda^2 \frac{\pd}{\pd t^P}
\biggl( \frac{\pd^2 F}{\pd t^P_1\pd t^Q}\biggr) \\
& = & \biggl[u_Su_Q + u_Rf_0''(u_R)-f_0'(u_R) \\
& + & \sum_{g=1}^\infty
\lambda^{2g} \biggl((2g-1)  u_{S,1}^{2g-2}(u_S u_{S,2}
+ u_{S,1}^2)
f'_g(u_R) \\
&& + u_{S,1}^{2g-1}(u_Su_{R,1} + u_Ru_{S,1}) f_g''(u_R)
\biggr)
\biggr]' \\
& = & \biggl[ u_S \biggl(u_Q
 +  \sum_{g=1}^\infty
\lambda^{2g}  \biggl( (2g-1) u_{S,1}^{2g-2}u_{S,2}
f'_g(u_R) + u_{S,1}^{2g-1}u_{R,1} f_g''(u_R)
\biggr) \biggr) \\
& + & u_R \biggl( f_0''(u_R) + \sum_{g=1}^\infty
\lambda^{2g} u_{S,1}^{2g} f_g''(u_R) \biggr) \\
& - & f_0'(u_R) + \sum_{g=1}^\infty (2g-1)
\lambda^{2g} u_{S,1}^{2g} f_g'(u_R)
\biggr]',
\een
so using \eqref{eqn:UQ} and \eqref{eqn:UR-US} we get:
\be
\begin{split}
\frac{\pd U_Q}{\pd t^P_1}
= & \biggl[ U_Q U_S
+ U_R\sum_{g=0}^\infty
\lambda^{2g} U_{S,1}^{2g} f_g''(U_R) \\
+& \sum_{g=0}^\infty (2g-1)
\lambda^{2g} U_{S,1}^{2g} f_g'(U_R)
\biggr]'.
\end{split}
\ee
In general we have:
\be
\begin{split}
\frac{\pd U_Q}{\pd t^P_1}
= & \biggl[ U_Q \frac{U_S^n}{n!}
+ \frac{U_S^{n-1}}{(n-1)!}U_R\sum_{g=0}^\infty
\lambda^{2g} U_{S,1}^{2g} f_g''(U_R) \\
+& \frac{U_S^{n-1}}{(n-1)!}\sum_{g=0}^\infty (2g-1)
\lambda^{2g} U_{S,1}^{2g} f_g'(U_R)
\biggr]'.
\end{split}
\ee

Now we consider the deformation of the first equation in \eqref{eqn:uQ-t1}.
We begin with:
\ben
\frac{\pd U_P}{\pd t^P_1}
& = &    \frac{\pd}{\pd t^P} \frac{\pd }{\pd t^P_1}
\frac{\pd F}{\pd t^P}. \een
Then we take $\frac{\pd}{\pd t^P_1}$ on both sides
of \eqref{eqn:Fg-tP} to get:
\ben
&& \frac{\pd }{\pd t^P_1}
\frac{\pd F}{\pd t^P} \\
& = & u_Pu_S + u_Qu_R
 - \frac{25\lambda^2}{3}
\frac{\pd}{\pd t^P_1}(\frac{u_{S,2}}{u_{S,1}} )  \\
& + & \sum_{g=1}^\infty
\lambda^{2g} \frac{\pd}{\pd t^P_1}
(u_{S,1}^{2g-2} f_g'(u_R) u_{R,1}
+ (2g-2) u_{S,1}^{2g-3}u_{S,2} f_g(u_R))  \\
& = & u_Pu_S + u_Qu_R
 - \frac{25\lambda^2}{3}
 \biggl(\frac{\frac{\pd u_{S,2}}{\pd t^P_1}}{u_{S,1}}
 - \frac{u_{S,2}}{u_{S,1}^2} \frac{\pd u_{S,1}}{\pd t^P_1} \biggr) \\
& + & \sum_{g=1}^\infty
\lambda^{2g}  \biggl(u_{S,1}^{2g-2}
f_g''(u_R) \frac{\pd u_R}{\pd t^P_1} u_{R,1}
+ u_{S,1}^{2g-2}f_g'(u_R) \frac{\pd u_{R,1}}{\pd t^P_1} \\
& + & (2g-2) u_{S,1}^{2g-2} \frac{\pd u_{S,1}}{\pd t^P_1} f_g'(u_R) u_{R,1}
+ (2g-2) u_{S,1}^{2g-3}u_{S,2} f_g'(u_R) \frac{\pd u_R}{\pd t^P_1} \\
& + & (2g-2) u_{S,1}^{2g-3} \frac{\pd u_{S,2}}{\pd t^P_1} f_g(u_R)
+ (2g-2)(2g-3) u_{S,1}^{2g-4} \frac{\pd u_{S,1}}{\pd t^P_1}
u_{S,2} f_g(u_R) \biggr) \\
& = & u_Pu_S + u_Q u_R
 - \frac{25\lambda^2}{3}
 \biggl(\frac{(u_S^2/2!)'''}{u_{S,1}}
 - \frac{u_{S,2}}{u_{S,1}^2} (u_S^2/2!)'' \biggr) \\
& + & \sum_{g=1}^\infty
\lambda^{2g}  \biggl(u_{S,1}^{2g-2} f_g''(u_R) (u_S u_R)'  u_{R,1}
+ u_{S,1}^{2g-2} f_g'(u_R) (u_Su_R)'' \\
& + & (2g-2) u_{S,1}^{2g-2} (u_S^2/2!)'' f_g'(u_R) u_{R,1}
+ (2g-2) u_{S,1}^{2g-3}u_{S,2} f_g'(u_R)
(u_Su_R)' \\
& + & (2g-2) u_{S,1}^{2g-3} (u_S^2/2!)''' f_g(u_R)
+ (2g-2)(2g-3) u_{S,1}^{2g-4} (u_S^2/2!)''
u_{S,2} f_g(u_R) \biggr) \\
& = & u_Pu_S + u_Q u_R
 - \frac{25\lambda^2}{3}
 \biggl( u_S ( \frac{u_{S,3}}{u_{S,1}}
- \frac{u_{S,2}^2}{u_{S,1}^2} ) + 2u_{S,2} \biggr) \\
& + & \sum_{g=1}^\infty
\lambda^{2g}  \biggl(u_{S,1}^{2g-2} f_g''(u_R) (u_{S,1} u_R+u_Su_{R,1})  u_{R,1} \\
& + & u_{S,1}^{2g-2}  f_g'(u_R)
(u_{S,2}u_R+2u_{S,1}u_{R,1} + u_Su_{R,2})  \\
& + & (2g-2) u_{S,1}^{2g-3} (u_Su_{S,2}+u_{S,1}^2)
f_g'(u_R) u_{R,1} \\
& + & (2g-2) u_{S,1}^{2g-3}u_{S,2} f_g'(u_R)
(u_{S,1}u_R+u_Su_{R,1}) \\
& + & (2g-2) u_{S,1}^{2g-3} (u_Su_{S,3}+3u_{S,1}u_{S,2}) f_g(u_R)  \\
& + & (2g-2)(2g-3) u_{S,1}^{2g-4} (u_Su_{S,2} + u_{S,1}^2)
u_{S,2} f_g(u_R) \biggr).
\een
We rewrite the right-hand side of the last equality as follows:
\ben
\frac{\pd }{\pd t^P_1} \frac{\pd F}{\pd t^P}
& = & u_S \biggl( u_P
- \frac{25\lambda^2}{3} ( \frac{u_{S,3}}{u_{S,1}}
- \frac{u_{S,2}^2}{u_{S,1}^2} ) \\
& + & \sum_{g=1}^\infty
\lambda^{2g}  \biggl(u_{S,1}^{2g-2} f_g''(u_R) u_{R,1}^2
+ u_{S,1}^{2g-2}  f_g'(u_R) u_{R,2})  \\
& + & (4g-4) u_{S,1}^{2g-3}u_{S,2} f_g'(u_R) u_{R,1}
+ (2g-2) u_{S,1}^{2g-3} u_{S,3} f_g(u_R)  \\
& + & (2g-2)(2g-3) u_{S,1}^{2g-4} u_{S,2}^2 f_g(u_R) \biggr)\\
& + & u_R \biggl( u_Q
+ \sum_{g=1}^\infty
\lambda^{2g}  \biggl(u_{S,1}^{2g-1} f_g''(u_R) u_{R,1}
+ (2g-1) u_{S,1}^{2g-2}  f_g'(u_R) u_{S,2} \biggr)  \biggr) \\
&  - & \frac{50\lambda^2}{3} u_{S,2}
+  \sum_{g=1}^\infty \lambda^{2g}
\biggl(2gu_{S,1}^{2g-1}f'_g(u_R)u_{R,1}
+ 2g(2g-2)u_{S,1}^{2g-2}u_{S,2}f_g(u_R) \biggr).
\een
Now we use \eqref{eqn:UP}-\eqref{eqn:UR-US} to get
\be
\begin{split}
& \frac{\pd U_P}{\pd t^P_1}
=  U_PU_S + U_QU_R
-\frac{50\lambda^2}{3} U_{S,2} \\
+ & \sum_{g=1}^\infty \lambda^{2g}
\biggl(2gU_{S,1}^{2g-1}f'_g(U_R)U_{R,1}
+ 2g(2g-2)U_{S,1}^{2g-2}U_{S,2}f_g(U_R) \biggr).
\end{split}
\ee
In general,
we have:
\ben
&& \frac{\pd }{\pd t^P_n}
\frac{\pd F}{\pd t^P} \\
& = & u_P \frac{u_S^n}{n!} + u_Qu_R \frac{u_S^{n-1}}{(n-1)!}
+ \frac{u_S^{n-2}}{(n-2)!}
(u^Rf_0'(u^Q)-2f_0(u_R))
 - \frac{25\lambda^2}{3}
\frac{\pd}{\pd t^P_n}(\frac{u_{S,2}}{u_{S,1}} )  \\
& + & \sum_{g=1}^\infty
\lambda^{2g} \frac{\pd}{\pd t^P_n}
(u_{S,1}^{2g-2} f_g'(u_R) u_{R,1}
+ (2g-2) u_{S,1}^{2g-3}u_{S,2} f_g(u_R))  \\
& = & u_P \frac{u_S^n}{n!} + u_Qu_R \frac{u_S^{n-1}}{(n-1)!}
+ \frac{u_S^{n-2}}{(n-2)!}
(u^Rf_0'(u^Q)-2f_0(u_R)) \\
& - & \frac{25\lambda^2}{3}
 \biggl(\frac{\frac{\pd u_{S,2}}{\pd t^P_n}}{u_{S,1}}
 - \frac{u_{S,2}}{u_{S,1}^2} \frac{\pd u_{S,1}}{\pd t^P_n} \biggr) \\
& + & \sum_{g=1}^\infty
\lambda^{2g}  \biggl(u_{S,1}^{2g-2}
f_g''(u_R) \frac{\pd u_R}{\pd t^P_n} u_{R,1}
+ u_{S,1}^{2g-2}f_g'(u_R) \frac{\pd u_{R,1}}{\pd t^P_n} \\
& + & (2g-2) u_{S,1}^{2g-3} \frac{\pd u_{S,1}}{\pd t^P_n} f_g'(u_R) u_{R,1}
+ (2g-2) u_{S,1}^{2g-3}u_{S,2} f_g'(u_R) \frac{\pd u_R}{\pd t^P_n} \\
& + & (2g-2) u_{S,1}^{2g-3} \frac{\pd u_{S,2}}{\pd t^P_n} f_g(u_R)
+ (2g-2)(2g-3) u_{S,1}^{2g-4} \frac{\pd u_{S,1}}{\pd t^P_n}
u_{S,2} f_g(u_R) \biggr) \\
& = &  u_P \frac{u_S^n}{n!} + u_Qu_R \frac{u_S^{n-1}}{(n-1)!}
+ \frac{u_S^{n-2}}{(n-2)!}
(u^Rf_0'(u^Q)-2f_0(u_R))  \\
& - & \frac{25\lambda^2}{3}
 \biggl(\frac{(u_S^{n+1}/(n+1)!)'''}{u_{S,1}}
 - \frac{u_{S,2}}{u_{S,1}^2} (u_S^{n+1}/(n+1)!)'' \biggr) \\
& + & \sum_{g=1}^\infty
\lambda^{2g}  \biggl(u_{S,1}^{2g-2} f_g''(u_R)
(\frac{u_S^n}{n!} u_R)'  u_{R,1}
+ u_{S,1}^{2g-2} f_g'(u_R) (\frac{u_S^n}{n!}u_R)'' \\
& + & (2g-2) u_{S,1}^{2g-3} (u_S^{n+1}/(n+1)!)'' f_g'(u_R) u_{R,1}
+ (2g-2) u_{S,1}^{2g-3}u_{S,2} f_g'(u_R)
(\frac{u_S^n}{n!}u_R)' \\
& + & (2g-2) u_{S,1}^{2g-3} (u_S^{n+1}/(n+1)!)''' f_g(u_R)
+ (2g-2)(2g-3) u_{S,1}^{2g-4} (u_S^{n+1}/(n+1)!)''
u_{S,2} f_g(u_R) \biggr)
\een
\ben
&& \frac{\pd }{\pd t^P_n}
\frac{\pd F}{\pd t^P} \\
& = & u_P \frac{u_S^n}{n!} + u_Qu_R \frac{u_S^{n-1}}{(n-1)!}
+ \frac{u_S^{n-2}}{(n-2)!}
(u^Rf_0'(u^Q)-2f_0(u_R))  \\
& - & \frac{25\lambda^2}{3}
 \biggl(\frac{ u_{S,3}u_S^{n}/n!+3u_{S,2}u_{S,1}u_S^{n-1}/(n-1)!
 +u_{S,1}^3 u_S^{n-2}/(n-2)!}{u_{S,1}} \\
& - & \frac{u_{S,2}}{u_{S,1}^2}
 (u_{S,2}\frac{u_S^n}{n!}+u_{S,1}^2\frac{u_S^{n-1}}{(n-1)!}) \biggr) \\
& + & \sum_{g=1}^\infty
\lambda^{2g}  \biggl(u_{S,1}^{2g-2} f_g''(u_R)
(\frac{u_S^n}{n!} u_{R,1}+\frac{u_S^{n-1}}{(n-1)!}u_{S,1}u_R)  u_{R,1} \\
& + & u_{S,1}^{2g-2} f_g'(u_R)
(\frac{u_S^n}{n!} u_{R,1} + 2 \frac{u_S^{n-1}}{(n-1)!}u_{S,1}u_{R,1}
+ \frac{u_S^{n-1}}{(n-1)!} u_{S,2}u_R
+ \frac{u_S^{n-2}}{(n-2)!} u_{S,1}^2u_R) \\
& + & (2g-2) u_{S,1}^{2g-3} (u_{S,2}\frac{u_S^n}{n!}+u_{S,1}^2\frac{u_S^{n-1}}{(n-1)!})
f_g'(u_R) u_{R,1} \\
& + & (2g-2) u_{S,1}^{2g-3}u_{S,2} f_g'(u_R)
(\frac{u_S^n}{n!}u_{R_1} + \frac{u_S^{n-1}}{(n-1)!}u_{S,1}u_R) \\
& + & (2g-2) u_{S,1}^{2g-3}
(u_{S,3}\frac{u_S^n}{n!}
+3u_{S,2}u_{S,1} \frac{u_S^{n-1}}{(n-1)!}
+u_{S,1}^3 \frac{u_S^{n-2}}{(n-2)!}) f_g(u_R) \\
& + & (2g-2)(2g-3) u_{S,1}^{2g-4} (u_{S,2}\frac{u_S^n}{n!}+u_{S,1}^2\frac{u_S^{n-1}}{(n-1)!})
u_{S,2} f_g(u_R) \biggr)
\een

\ben
&& \frac{\pd }{\pd t^P_n}
\frac{\pd F}{\pd t^P} \\
& = & \frac{u_S^n}{n!}
\biggl[u_P - \frac{25\lambda^2}{3}
\biggl( \frac{u_{S,3}}{u_{S,1}} - \frac{u_{S,2}^2}{u_{S,1}^2} \biggr)  \\
& + & \sum_{g=1}^\infty
\lambda^{2g}  \biggl(u_{S,1}^{2g-2} f_g''(u_R) u_{R,1}^2
+ u_{S,1}^{2g-2} f_g'(u_R) u_{R,1}
+ 2(2g-2) u_{S,1}^{2g-3} u_{S,2} f_g'(u_R) u_{R,1} \\
& + & (2g-2) u_{S,1}^{2g-3} u_{S,3}  f_g(u_R)
+  (2g-2)(2g-3) u_{S,1}^{2g-4} u_{S,2}^2  f_g(u_R) \biggr) \biggr]\\
& + & \frac{u_S^{n-1}}{(n-1)!}u_R  \biggr[ u_Q
+ \sum_{g=1}^\infty
\lambda^{2g}  \biggl(u_{S,1}^{2g-2} f_g''(u_R) u_{S,1}u_R u_{R,1}
+ (2g-1) u_{S,1}^{2g-2} f_g'(u_R) u_{S,2}  \biggr) \biggr] \\
& + & \frac{u_S^{n-2}}{(n-2)!} \biggl[
(u_Rf_0'(u_R)-2f_0(u_R))  +  \sum_{g=1}^\infty \lambda^{2g}
u_{S,1}^{2g} (u_R f_g'(u_R) +(2g-2) f_g(u_R) )
\biggr]  \\
& - & \frac{25\lambda^2}{3}
 \biggl(2u_{S,2} \frac{u_S^{n-1}}{(n-1)!}
+u_{S,1}^2 \frac{u_S^{n-2}}{(n-2)!}   \biggr) \\
& + & \frac{u_S^{n-1}}{(n-1)!}
\sum_{g=1}^\infty
\lambda^{2g} \biggl( 2g u_{S,1}^{2g-1} f_g'(u_R)   u_{R,1}
+ 2g (2g-2) u_{S,1}^{2g-2} u_{S,2} f_g(u_R)
\biggr).
\een
To summarize,
we have proved the following:

\begin{thm}
For the quintic,
the deformed order parameters
satisfy the following system of integrablke hierarchy:
\begin{align*}
\frac{\pd U_S}{\pd t^P_n} & = \biggl( \frac{U_S^{n+1}}{(n+1)!}\biggr)', &
\frac{\pd U_S}{\pd t^Q_n} & = 0, &
\frac{\pd U_S}{\pd t^R_n} & = 0, &
\frac{\pd U_S}{\pd t^S_n} & = 0,
\end{align*}
\begin{align*}
\frac{\pd U_R}{\pd t^P_n} & = \biggl( \frac{U_RU_S^{n}}{n!}\biggr)', &
\frac{\pd U_R}{\pd t^Q_n} & = \biggl( \frac{U_S^{n+1}}{(n+1)!}\biggr)', &
\frac{\pd U_R}{\pd t^R_n} & = 0, &
\frac{\pd U_R}{\pd t^S_n} & = 0,
\end{align*}
\begin{align*}
\frac{\pd U_Q}{\pd t^S_n} & = 0, &
\frac{\pd U_Q}{\pd t^R_n}
& = \biggl(\frac{U_S^{n+1}}{(n+1)!} \biggr)'.
\end{align*}
\be
\frac{\pd U_Q}{\pd t^Q_{n}}
= \biggl(\frac{U_S^n}{n!}
\sum_{g=0}^\infty \lambda^{2g} U_{S,1}^{2g} f_g''(u_R) \biggr)'.
\ee
\be
\begin{split}
\frac{\pd U_Q}{\pd t^P_1}
= & \biggl[ U_Q \frac{U_S^n}{n!}
+ \frac{U_S^{n-1}}{(n-1)!}U_R\sum_{g=0}^\infty
\lambda^{2g} U_{S,1}^{2g} f_g''(U_R) \\
+& \frac{U_S^{n-1}}{(n-1)!}\sum_{g=0}^\infty (2g-1)
\lambda^{2g} U_{S,1}^{2g} f_g'(U_R)
\biggr]'.
\end{split}
\ee

\begin{align}
\frac{\pd U_P}{\pd t^S_n}
& = \biggl(\frac{U_S^{n+1}}{(n+1)!} \biggr)', &
\frac{\pd U_P}{\pd t^R_n}
& = \biggl(\frac{U_RU_S^{n}}{n!} \biggr)'.
\end{align}
\be
\frac{\pd U_P}{\pd t^Q_{n}}
= \biggl(\frac{U_S^n}{n!} U_Q
+  \frac{U_S^{n-1}}{(n-1)!}
\sum_{g=0}^\infty \lambda^{2g} U_{S,1}^{2g} f_g'(u_R) \biggr)'.
\ee

\ben
&& \frac{\pd U_P}{\pd t^P_n}
= \biggl[\frac{U_S^n}{n!} U_P +  \frac{u_S^{n-1}}{(n-1)!}U_R U_Q  \\
& + & \frac{U_S^{n-2}}{(n-2)!} \biggl[
(U_Rf_0'(U_R)-2f_0(U_R))  +  \sum_{g=1}^\infty \lambda^{2g}
u_{S,1}^{2g} (U_R f_g'(U_R) \\
& + & (2g-2) f_g(U_R) ) \biggr]  - \frac{25\lambda^2}{3}
 \biggl(2U_{S,2} \frac{U_S^{n-1}}{(n-1)!}
+U_{S,1}^2 \frac{U_S^{n-2}}{(n-2)!}   \biggr) \\
& + & \frac{U_S^{n-1}}{(n-1)!}
\sum_{g=1}^\infty
\lambda^{2g} \biggl( 2g U_{S,1}^{2g-1} f_g'(U_R)   U_{R,1}
+ 2g (2g-2) U_{S,1}^{2g-2} u_{S,2} f_g(U_R)
\biggr) \biggr]'.
\een
\end{thm}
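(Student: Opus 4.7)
\medskip

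\noindent\textbf{Proof proposal.}
The plan is to reduce every equation in the statement to a computation involving (i) the explicit closed form
$F_g = -\tfrac{25}{3}\delta_{g,1}\log u_{S,1} + u_{S,1}^{2g-2} f_g(u_R)$ for $g \geq 1$ together with the genus-zero piece already handled in Section \ref{sec:IH-Genus-zero}, and (ii) the repackaging formulas \eqref{eqn:UP}--\eqref{eqn:UR-US} expressing $U_P,U_Q$ as sums of $\lambda^{2g}$-weighted polynomials in the jet variables $u_{S,k},u_{R,k},u_Q,u_P$. The first tranche of equations, namely those for $\partial U_S/\partial t^\alpha_n$, $\partial U_R/\partial t^\alpha_n$, and the flows of $U_Q$ against $t^S_n,t^R_n$ and of $U_P$ against $t^S_n,t^R_n$, are immediate: by \eqref{eqn:UR-US} we have $U_S=u_S$ and $U_R=u_R$, and $F_g$ for $g\geq 1$ depends only on $u_{S,1}$ and $u_R$ so $\partial F_g/\partial t^S_n=\partial F_g/\partial t^R_n=0$; thus these flows reduce verbatim to the genus-zero flows of \S \ref{sec:IH-Genus-zero}.

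For the remaining four families of equations (the flows of $U_Q$ and $U_P$ against $t^Q_n$ and $t^P_n$), the strategy is to follow the template already executed in the preceding subsection for the $n=1$ case. Specifically, I would
\begin{enumerate}
\item differentiate $\partial F_g/\partial t^Q$ from \eqref{eqn:Fg-tQ} (respectively $\partial F_g/\partial t^P$ from \eqref{eqn:Fg-tP}) with respect to $t^P_n$ and $t^Q_n$, expressing $\partial u_R/\partial t^\alpha_n$ and $\partial u_{S,k}/\partial t^\alpha_n$ via the genus-zero hierarchy of \S \ref{sec:IH-Genus-zero};
\item apply $\partial/\partial t^P$ once more to land on $\partial U_\alpha/\partial t^\beta_n$;
\item collect terms according to the combinations $u_P$, $u_Q u_R$, $u_R f_g'(u_R)-2f_g(u_R)$, etc., and recognize in each collection the quantity $U_P$ or $U_Q$ via \eqref{eqn:UP}, \eqref{eqn:UQ}; the remaining terms assemble into the generating series $\sum_g \lambda^{2g} U_{S,1}^{2g} f_g^{(j)}(U_R)$ that appear in the target formulas.
\end{enumerate}
In each case the genus-zero term handled in \S \ref{sec:IH-Genus-zero} provides the $g=0$ contribution, and the induction/summation over $g\geq 1$ produces the Hodge-type correction. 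The $n=1$ calculations already shown for $\partial U_P/\partial t^P_1,\partial U_Q/\partial t^P_1,\partial U_Q/\partial t^Q_1,\partial U_P/\partial t^Q_1$ are the archetype; the general-$n$ versions differ only by replacing factors like $(u_S^2/2)^{(k)}$ with $(u_S^{n+1}/(n+1)!)^{(k)}$ and expanding via Leibniz.

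The main obstacle, and really the only non-mechanical step, is the flow equation for $\partial U_P/\partial t^P_n$: here one must simultaneously track a $\log u_{S,1}$-type contribution (responsible for the $-\tfrac{50\lambda^2}{3}U_{S,2}$ term and its higher-$n$ analogues), a $\frac{u_{S,3}}{u_{S,1}}-\frac{u_{S,2}^2}{u_{S,1}^2}$ combination which must be recognized as the Schwarzian-like piece inside \eqref{eqn:UP}, and the $(2g-2)(2g-3)u_{S,1}^{2g-4}u_{S,2}^2 f_g(u_R)$ double-derivative contributions. The bookkeeping is delicate because one has to show that after applying $\partial/\partial t^P$, the combinations involving $u_{S,2}^2/u_{S,1}^2$ and the higher $u_{S,k}$ land exactly on the jet coefficients of $U_P$ given by \eqref{eqn:UP}, leaving behind only the explicit correction terms $-\tfrac{25\lambda^2}{3}(2U_{S,2}\tfrac{U_S^{n-1}}{(n-1)!}+U_{S,1}^2\tfrac{U_S^{n-2}}{(n-2)!})$ and the genus sum $\sum_{g\geq 1}\lambda^{2g}\tfrac{U_S^{n-1}}{(n-1)!}(2gU_{S,1}^{2g-1}f_g'U_{R,1}+2g(2g-2)U_{S,1}^{2g-2}U_{S,2}f_g)$. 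I would verify this repackaging by induction on $n$, using the $n=1$ case (already written out explicitly) as the base and the genus-zero puncture-type identity $\partial u_{S,k}/\partial t^P_n = (\text{polynomial in }u_{S,j})$ to perform the inductive step. Once this bookkeeping is complete, the remaining flow equations follow by the same scheme with strictly less work.
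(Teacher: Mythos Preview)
Your proposal is correct and follows essentially the same route as the paper: reduce the $U_S,U_R$ flows and the $t^S_n,t^R_n$-flows of $U_Q,U_P$ to the genus-zero hierarchy via $U_S=u_S$, $U_R=u_R$ and the fact that $F_g$ for $g\geq 1$ depends only on $u_{S,1},u_R$; then for the remaining four families differentiate \eqref{eqn:Fg-tQ}, \eqref{eqn:Fg-tP} against $t^P_n,t^Q_n$ using the genus-zero flows of $u_R,u_{S,k}$, apply $\partial/\partial t^P$, and regroup so as to recognize the jet-polynomial expressions \eqref{eqn:UP}, \eqref{eqn:UQ} for $U_P,U_Q$. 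The only cosmetic difference is that the paper carries out the general-$n$ bookkeeping for $\partial U_P/\partial t^P_n$ directly (after the $n=1$ warm-up) rather than by the induction on $n$ you suggest, but the content is identical.
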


\section{Emergent Geometry of Gromov-Witten Theory}

\label{sec:Frobenius}

In this Section we return to the general discussions for GW theory
of all symplectic manifolds or Gromov-Witten type theories.
As pointed out by Dubrovin \cite{Dub-Int},
two types of integrable systems are hidden in such theories.
The first one is the system of WDVV equations \eqref{eqn:WDVV},
and the second one is the integrable hierarchy \eqref{eqn:IH}
and its deformation by higher genus contributions.
We will recall the geometric structures behind such integrable systems.
We will again take an emergent point of view,
i.e.,
we will try to understand how geometric structures emerge
in the GW-type theories.

\subsection{WDVV equations, Frobenius manifolds and Higgs systems}

Dubrovin \cite{Dub-Int} introduced the notion of a Frobenius
manifold to give a geometric reformulation
of the system of WDVV equations \eqref{eqn:WDVV}.
Let us recall how it naturally emerge in the setting of GW theory.
Let $\cM$ denote the small phase space with linear coordinates
$\{t^\alpha\}$,
with $t^0$ corresponding to $\cO_0 = 1 \in H^0(M)$.
The Poincar\'e paring gives $\cM$ a flat metric
\be
\eta_{\alpha\beta} =\eta(\frac{\pd}{\pd t^\alpha}, \frac{\pd}{\pd t^\beta})
= \int_M \cO_\alpha \cup \cO_\beta,
\ee
Since this metric is given by a constant matrix,
it is flat.
In particular,
its Levi-Civita connection is given by:
\be
\nabla_{\frac{\pd}{\pd t^\alpha}}(s^\beta \frac{\pd}{\pd t^\beta})
= \frac{\pd s^\beta}{\pd t^\alpha} \frac{\pd}{\pd t^\beta}.
\ee
The quantum multiplications defined in \S \ref{sec:Quant-Cohom} define a Higgs field
$A \in \Omega^1(M, \End(TM))$:
\be
A(\frac{\pd}{\pd t^\alpha}): TM \to TM, \quad
A(\frac{\pd}{\pd t^\alpha})\frac{\pd}{\pd t^\beta}:
=\frac{\pd}{\pd t^\alpha} \circ_\bt \frac{\pd}{\pd t^\beta}.
\ee
Dubrovin \cite{Dub-Int} showed that the system of the WDVV equations
is equivalent to the flatness of the new connection
\be
\tilde{\nabla}_X Y = \nabla_X Y + z \cdot X \circ Y.
\ee
Furthermore,
\be
e:=\frac{\pd}{\pd t^0}
\ee
is a flat section with respect to $\nabla$.
Such structures  make $\cM$ a Frobenius manifold.
This gives a geometric reformulation of some of the results in \S \ref{sec:TRR}
and \S \ref{sec:Quant-Cohom}.

\subsection{The deformed flat coordinates}
\label{sec:Def-flat-coor}

Let
\be \label{eqn:x-alpha}
x^\alpha =x^\alpha(t, z) = t^\alpha + \sum_{n=1}^\infty z^nv_n^\alpha(t)
\ee
be flat coordinates for the Dubrovin connection $\tilde{\nabla}$.
They are specified by the condition of
vanishing of the covariant hessian
\be
\tilde{\nabla}_{\frac{\pd}{\pd t^\alpha}}\tilde{\nabla}_{\frac{\pd}{\pd t^\beta}}
(x^\gamma\frac{\pd}{\pd t^\gamma}) = 0,
\ee
or, equivalently, by the system
\be
\frac{\pd}{\pd t^\alpha}\frac{\pd}{\pd t^\beta} x^\gamma= z c_{\alpha\beta}^\lambda
\frac{\pd}{\pd t^\lambda}x^\gamma.
\ee
After plugging in \eqref{eqn:x-alpha},
we get the following sequences of equations:
\ben
&& \frac{\pd^2 v^\gamma_1}{\pd t^\alpha\pd t^\beta}
= \frac{\pd^3F_0}{\pd t^\alpha\pd t^\beta \pd t^\mu} \eta^{\gamma\mu}, \\
&& \frac{\pd^2 v^\gamma_n}{\pd t^\alpha\pd t^\beta}
= \frac{\pd^3F_0}{\pd t^\alpha\pd t^\beta \pd t^\mu} \eta^{\lambda\mu}
\frac{\pd v_{n-1}^\gamma}{\pd t^\lambda}, \quad n \geq 2.
\een
One can check that by the TRR in genus zero
\be
v_1^\gamma = \frac{\pd F_0}{\pd t^\mu_{n-1}} \eta^{\gamma\mu}.
\ee
See Givental \cite{Giv}.
We recall that the WDVV equations can be derived from the TRR in genus zero.
So we have seen that both the WDVV equations and their solutions
are consequences of the TRR in genus zero.

\subsection{The geometric structure behind the integrable hierarchy \eqref{eqn:IH}}

As pointed out by Witten \cite[\S 3d]{Wit},
the system of equations  \eqref{eqn:IH} is a system of Hamiltonian equations
with the following Poisson bracket:
\be
\{u_\alpha(x), u_\beta(y)\} = \eta_{\alpha\beta} \delta'(x-y),
\ee
with the Hamiltonians the genus zero one-point functions which just appear
in last subsection:
\be \label{eqn:}
H_{n, \alpha} = \int dx R_{\alpha,n}(u^0, \dots, u^r).
\ee
As remarked by Witten \cite{Wit},
the KdV hierarchy is bi-Hamiltonian,
i.e., the KdV flows
are Hamiltonian with respect to two Poisson structures.
The second Poisson structure is related to the Virasoro  algebra,
the recursion for the Hamiltonians of the KdV hierarchy
can be constructed using this bi-Hamiltonian structure.
He remarked that: ``We do not know of any evidence for
a second symplectic structure playing a role for general $M$."

As pointed out by \cite{Dub-Int},
the system of equations  \eqref{eqn:IH} is a system of Hamiltonian equations
of hydrodymic type studied in Dubrovin-Novikov \cite{Dub-Nov}.
hydrodynamic type.
Let $\cM$ be any manifold and $u^0,\dots,u^r$ any local coordinates on $\cM$.
Recall that
\be \label{eqn:PB1}
\{u^a(x), u^b(y)\} =g^{ab}(u(x))\delta'(x-y) + b^{ab}_c(u(x)) \frac{\pd u^c}{\pd x}\delta(x- y)
\ee
determines a Poisson bracket on the loop space $L\cM$ consisting
of smooth functions $u^a: S^1 \to \bC$
(Poisson brackets of hydrodynamic type)
iff the tensor field $g_{ab}du^adu^B$ is a flat metric on $\cM$,
where $(g_{ab}) = (g^{ab})^{-1}$,
and the coefficients $b^{ab}_c(u)$ can be determined from $g_{ab}$ as follows:
\be
b^{ab}_c(u) = -g^{ad}(u)\Gamma^b_{dc}(u),
\ee
where $\Gamma^b_{dc}$ are the Christoffel symbols of the Levi-Civita connection
for the metric $\eta$.
Such a Poisson structure is called a Poisson structure of the hydrodynamic type.
A hamiltonian of hydrodynamic type is a functional of the form:
\be
H=\int h(u(x)) dx,
\ee
where the density $h = h(u)$ does not depend on derivatives.
Any such function $h(u)$ on $\cM$ determines a hamiltonian
system on $L\cM$ of the following form:
\be
\pd_t u^a(x)= \{u^a(x), \int h(u(y)) dy \} = w^a_b(u)\pd_x u^b,
\ee
where $w^a_b(u)= \nabla^a\nabla_a h(u)$.
This is called a hamiltonian system of hydrodynamic type.

\subsection{Comformal Frobenius manifolds and the second Poisson structures}

By the selection rules \eqref{eqn:SR-CY1} and \eqref{eqn:SR-CY2},
the Frobenius manifolds associated
with the GW theory of Calabi-Yau threefolds
are conformal invariant Frobenius manifold.
They correspond to
solutions of the WDVV equations self-similar
with respect to some scaling transformations:
\ben
&& t^\alpha \mapsto c^{1-q_\alpha} t^\alpha,
\alpha=0, 1,\dots,r, \\
&& \eta = \eta_{\alpha\beta} dt^\alpha dt^\beta
\mapsto c^{2-d} \eta, \\
&& c_{\alpha\beta\gamma} \mapsto
c^{q_\alpha+q_\beta+q_\gamma-d} c_{\alpha\beta\gamma},\\
&& F_0(\{t^\alpha\}) \mapsto c^{3-d} F_0(\{t^\alpha\}),
\een
for some $q_0 = 0, q_2, \dots, q_r$, $d$.
A pair $(\alpha,n)$ is resonant if
\be
\frac{d+1}{2}=q_\alpha-n.
\ee
If all pairs are nonresonant, then the Frobenius manifold is said to be nonresonant.

For a Calabi-Yau threefold $M$,
fix a basis $\{\cO_\alpha\}_{\alpha=0,\dots, r}$ of $H^{ev}(M)$,
\be
q_\alpha = 1- \deg \cO_\alpha,
\ee
then $d=3$.
It is easy to see that Calabi-Yau threefolds give rise to
resonant conformal invariant Frobenius manifolds.

As shown in Dubrovin \cite[Theorem 3.2]{Dub-Class},
for a conformal invariant Frobenius manifold,
the formula
\be \label{eqn:PB2}
\begin{split}
& \{t^\alpha(x), t^\beta(y)\}_1 \\
= & [(\frac{d + 1}{2}
- q_\alpha)F^{\alpha\beta}(t(x)) + (\frac{d + 1}{2}
- q_\beta)F^{\alpha\beta}(t(y))]\delta'(x-y)
\end{split}
\ee
determines a Poisson bracket compatible
with the Poisson bracket \eqref{eqn:PB1},
where
\be
F^{\alpha\beta}(t) = \eta^{\alpha\alpha'}
\eta^{\beta\beta'}\pd_{\alpha'}\pd_{\beta'}F(t)
\ee
I.e.,
any linear combination of them again is a Poisson bracket.

\subsection{Compatible Poisson structures and pencils of flat metrics}

By the theory of Dubrovin-Novikov \cite{Dub-Nov} on
Poisson brackets of hydrodynamic type,
the pencil of Poisson structures $\{,\}_1-\lambda\{,\}$ corresponds
to a pencil of flat metrics:
\be
g^{\alpha\beta} - \lambda \eta^{\alpha,\beta},
\ee
where
\be
g^{\alpha\beta}
= i_E (dt^\alpha \circ d t^\beta),
\ee
where $E$ is Euler vector field:
\be
E = \sum (1-q_\alpha) t^\alpha \frac{\pd}{\pd t^\alpha}.
\ee
Given $\lambda$, the subset of $\cM $ where $g^{\alpha\beta} - \lambda \eta^{\alpha,\beta}$
is degenerate is denoted by $\Sigma_\lambda$,
it is called the discriminant locus.

Recall that two metrics $g_1$ and $g_2$ form a flat pencil if:
\begin{itemize}
\item[1)] The linear combination $g_1 - \lambda g_2$
is a flat metric for any $\lambda$.
\item[2)] The Levi-Civita connection of this linear combination
has the form
$$\Gamma^{ij}_k = \Gamma^{ij}_{1k} - \lambda \Gamma^{ij}_{2k}.
$$
\end{itemize}
The flat pencil of metrics is quasihomogeneous of the degree $d$ if there
exists a function $\tau$  such that the vector fields
\ben
E &:=& g_1^{is}\frac{\pd \tau}{\pd t^s} \frac{\pd}{\pd t^i}, \\
e &:=& g_2^{is}\frac{\pd \tau}{\pd t^s} \frac{\pd}{\pd t^i}
\een
satisfy the following properties
\ben
&& [e, E] = e, \\
&& L_Eg_1 = (d - 1)g_1, \\
&& L_eg_1 = g_2, \\
&& L_eg_2 = 0.
\een
If the flat pencil comes from a Frobenius manifold,
then the function $\tau$ can be taken to be
\be
\tau = \eta_{0\alpha}t^\alpha.
\ee
A quasihomogeneous flat pencil is said to be regular if the $(1,1)$-tensor
\be
R^j_i = \frac{d - 1}{2}\delta^j_i + \nabla_{2i}E^j
\ee
does not degenerate.
We have seen that from a Frobenius manifold
one can associate a flat pencil of metrics.
Conversely,
given a regular quasihomogeneous flat pencil
one can reconstruct a Frobenius manifold (cf. e.g. \cite[Theorem 2.1]{Dub-Pencil}).

\subsection{The $\lambda$-periods and the dual flat coordinates}

A function $p = p(t; \lambda)$ is called $\lambda$-period of the Frobenius manifold if
it satisfies
\be
(\nabla_1- \lambda \nabla)dp = 0.
\ee
If one chooses the flat coordinates $p^0(t)$,
$\dots$, $p^r(t)$ of the intersection form,
then
\be
(dp^a,dp^b) = G^{ab}
\ee
are constants.
Furthermore,
we obtain a
reduction of the Poisson pencil to the canonical form
\be
\{p^i, p^j\}_\lambda = G^{ij}\delta'(x - y).
\ee
If $p^i(t)$ are  chosen such that
\be
L_E p^i = \frac{1 - d}{2} p^i,
\ee
then
\be
\tau = \eta_{0\alpha}t^\alpha
= \frac{1 - d}{4} G_{ab}p^ap^b,
\ee
where $ (G_{ab}) = (G^{ab})^{-1}$.
See \cite[Lemma 2.6]{Dub-Almost-Dual}.

\subsection{The dual multiplications}

The metric $(,)$ and $<,>$ are related by
multiplication by Euler vector field:
\be
(E \circ u, v) =< u, v >
\ee
See \cite[Exercise 3.3]{Dub-Frob}.
It follows that:
\be
(E\circ u \circ v, w) = <u \circ v, w >.
\ee
It turns out that $E$ is an invertible element in the quantum
cohomology ring.
\be
(  u \circ v, w) = <E^{-1} \circ u \circ v, w >.
\ee
Dubrovin \cite{Dub-Almost-Dual} defined a new multiplication
$u*v$ by
\be \label{eqn:New-Mult}
u*v:=E^{-1}\circ u \circ v.
\ee
Then one has:
\be
( u \circ v, w) = <u*v, w >.
\ee

\subsection{The dual potential function}

As proved in Dubrovin \cite{Dub-Almost-Dual},
the multiplication \eqref{eqn:New-Mult}
 together with the intersection form ( , ),
the unity = the Euler vector field = E satisfies all the axioms of Frobenius manifold
except for the unity is not parallel.
Furthermore,
let $p^1(t)$, $\dots$, $p^n(t)$ be a system of
local flat coordinates of the intersection form.
Then there exists a function $F_*(p)$ such that
\be
\frac{\pd^3F_*(p)}{\pd p^i\pd p^j\pd p^c}
= G_{ia}G_{jb} \frac{\pd t^\gamma}{\pd p^k} \frac{\pd p^a}{\pd t^\alpha}
\frac{\pd p^b}{\pd t^\beta} c^{\alpha\beta}_\gamma(t).
\ee
This can be rewritten in the following way
\be
d \biggl( \frac{\pd^2 F_*}{\pd p^a\pd p^b} \biggr)  = dp^a \circ dp^b.
\ee
The function $F_*(p)$ satisfies the following associativity equations
\be
\frac{\pd^3F_*(p)}{\pd p^i\pd p^j\pd p^a}G^{ab}
\frac{\pd^3 F_*(p)}{\pd p^b \pd p^k \pd p^l}
= \frac{\pd^3 F_*(p)}{\pd p^l \pd p^j\pd p^a}G^{ab}
\frac{\pd^3 F_*(p)}{\pd p^b\pd p^k \pd p^i}, \ee
$i, i, j, k, l = 1,\dots, n$.
For $d \neq 1$,  $F_*(p)$ satisfies the homogeneity condition
\be
\sum_a p^a\frac{\pd F_*}{\pd p^a} = 2 F_* + \frac{1}{1- d} \sum G_{ab}p^ap^b.
\ee

\subsection{The deformed dual flat coordinates}

Let
\be \label{eqn:p-alpha}
p^\alpha =p^\alpha(t, z) = p^\alpha + \sum_{n=1}^\infty z^nv_n^\alpha(t)
\ee
be dual flat coordinates for the Dubrovin connection $\tilde{\nabla}$.
They are specified by the condition of
vanishing of the covariant hessian
\be
\tilde{\nabla}_{\frac{\pd}{\pd t^\alpha}}\tilde{\nabla}_{\frac{\pd}{\pd t^\beta}}
(x^\gamma\frac{\pd}{\pd t^\gamma}) = 0,
\ee
or, equivalently, by the system
\be
\frac{\pd}{\pd t^\alpha}\frac{\pd}{\pd t^\beta} x^\gamma= z c_{\alpha\beta}^\lambda
\frac{\pd}{\pd t^\lambda}x^\gamma.
\ee
After plugging in \eqref{eqn:x-alpha},
we get the following sequences of equations:
\ben
&& \frac{\pd^2 v^\gamma_1}{\pd t^\alpha\pd t^\beta}
= \frac{\pd^3F_0}{\pd t^\alpha\pd t^\beta \pd t^\mu} \eta^{\gamma\mu}, \\
&& \frac{\pd^2 v^\gamma_n}{\pd t^\alpha\pd t^\beta}
= \frac{\pd^3F_0}{\pd t^\alpha\pd t^\beta \pd t^\mu} \eta^{\lambda\mu}
\frac{\pd v_{n-1}^\gamma}{\pd t^\lambda}, \quad n \geq 2.
\een
One can check that by the TRR in genus zero
\be
v_1^\gamma = \frac{\pd F_0}{\pd t^\mu_{n-1}} \eta^{\gamma\mu}.
\ee
See Givental \cite{Giv}.
We recall that the WDVV equations can be derived from the TRR in genus zero.
So we have seen that both the WDVV equations and their solutions
are consequences of the TRR in genus zero.

In the case of semsimple Frobenius maifolds,
Dubrovin and Zhang \cite{Dub-Zha} showed that the deformed flat coordinates
and the deformed dual flat coordinates are related
by Laplace transform.
We will verify that this also holds for the quintic in \S \ref{sec:Laplace}.

\subsection{The reconstruction program}

In \cite{Dub-Class}
Dubrovin launched a program to reconstruct a complete GW-type  theory
from a Frobenius manifold.
He focused on the case of semisimple Frobenius manifolds.
This program was developed by Dubrovin and Zhang \cite{Dub-Zha}
and many of their subsequent works,
based on the bihamiltonian systems.
For a survey of the current status of this program,
see Dubrovin \cite{Dub-GW}.

\section{Frobenius Manifold Associated with the Quintic}
\label{sec:Quintic-Frob}

In this Section we carry out the explicit computations
in the theory of Frobenius manifolds associated with the quintic.
We will rewrite the genus zero free energy on the small phase space
in terms of the dual flat coordinates.
This will reveal some similarity to the formulas in mirror symmetry computations
in \cite{CDGP}.

\subsection{Induced multiplications on the cotangent bundle
and the intersection form}

Using the quantum multiplications
computed in \S \ref{sec:Quan-Coh},
we get the following induced multiplications
on the cotangent bundle of the small phase space:
\ben
dt^P \circ \begin{pmatrix}
dt^P \\ dt^Q\\ d t^R \\ d t^S \end{pmatrix}
& = & \begin{pmatrix}
0 & 0 & 0 & 0 \\
0 & 0 & 0 & 0 \\
0 & 0 & 0 & 0 \\
1 & 0 & 0 & 0
\end{pmatrix} \cdot
\begin{pmatrix}
dt^P \\ dt^Q\\ d t^R \\ d t^S \end{pmatrix},
\een
\ben
dt^Q \circ \begin{pmatrix}
dt^P \\ dt^Q\\ d t^R \\ d t^S \end{pmatrix}
& = & \begin{pmatrix}
0 & 0 & 0 & 0 \\
0 & 0 & 0 & 0 \\
1 & 0 & 0 & 0 \\
0 & 1 & 0 & 0
\end{pmatrix} \cdot
\begin{pmatrix}
dt^P \\ dt^Q\\ d t^R \\ d t^S \end{pmatrix},
\een

\ben
dt^R \circ \begin{pmatrix}
dt^P \\ dt^Q\\ d t^R \\ d t^S \end{pmatrix}
& = & \begin{pmatrix}
0 & 0 & 0 & 0 \\
1 & 0 & 0 & 0 \\
0 & f_0'''(t^Q) & 0 & 0 \\
0 & 0 & 1 & 0
\end{pmatrix} \cdot
\begin{pmatrix}
dt^P \\ dt^Q\\ d t^R \\ d t^S \end{pmatrix},
\een

\ben
dt^S \circ \begin{pmatrix}
dt^P \\ dt^Q\\ d t^R \\ d t^S \end{pmatrix}
& = & \begin{pmatrix}
1 & 0 & 0 & 0 \\
0 & 1 & 0 & 0 \\
0 & 0 & 1 & 0 \\
0 & 0 & 0 & 1
\end{pmatrix} \cdot
\begin{pmatrix}
dt^P \\ dt^Q\\ d t^R \\ d t^S \end{pmatrix},
\een
Therefore,
after taking inner product with the Euler vector field
\be
E = t^P\frac{\pd}{\pd t^P} - t^R\frac{\pd}{\pd t^R}
- 2t^S \frac{\pd}{\pd t^S},
\ee
we get the intersection form:
\ben
(dt^\alpha, dt^\beta) = \begin{pmatrix}
0 & 0 & 0 & t^P \\
0 & 0 & t^P & 0 \\
0 & t^P & 0 & -t^R \\
t^P & 0 & -t^R & -2 t^S
\end{pmatrix}.
\een
One can check that
\be
(dt^\alpha,d t^\beta)=R\cdot (F^{\alpha\beta})
+ (F^{\alpha\beta}) \cdot R,
\ee
where the  matrix $R$ is
\be
R= \frac{d-1}{2} + \diag(d_\alpha)
=\begin{pmatrix}
2 & 0 & 0 & 0 \\
0 & 1 & 0 & 0 \\
0 & 0 & 0  & 0 \\
0 & 0 & 0 & -1
\end{pmatrix}
\ee
where $d=\dim X = 3$,
\begin{align}
d_P &= 1, & d_Q & = 0, & d_R & = -1, & d_S & = -2,
\end{align}
i.e., $d_\alpha = 1- \deg \alpha$.
A related matrix is
\be
\cV = R - \frac{1}{2}I = \begin{pmatrix}
\frac{3}{2} & 0 & 0 & 0 \\
0 & \frac{1}{2} & 0 & 0 \\
0 & 0 & -\frac{1}{2}  & 0 \\
0 & 0 & 0 & -\frac{3}{2}
\end{pmatrix}
\ee
Note
\be
\cV \cdot (\eta_{\alpha\beta}) = - (\eta_{\alpha\beta}) \cdot \cV,
\ee
it defines a symplectic structure on the small phase space.

Recall that
\be
F^{\alpha\beta} = \eta^{\alpha \alpha'}\eta^{\beta \beta'}
\frac{\pd^2F_0}{\pd t^{\alpha'}\pd t^{\beta'}}
\ee
Because we have
\ben
<dt^\alpha, dt^\beta> = \begin{pmatrix}
0 & 0 & 0 & 1 \\
0 & 0 & 1 & 0 \\
0 & 1 & 0 & 0 \\
1 & 0 & 0 & 0
\end{pmatrix}
\een
and
\be
(F_{\alpha\beta})
= \begin{pmatrix}
t^S & t^R & t^Q & t^P \\
t^R & f_0''(t^Q) & t^P & 0 \\
t^Q & t^P &  0 & 0 \\
t^P & 0 & 0 & 0
\end{pmatrix},
\ee
so explicitly we have:

\be
(F^{\alpha\beta})
= \begin{pmatrix}
0 & 0 & 0 & t^P \\
0 & 0 & t^P & t^Q \\
0 & t^P & f_0''(t^Q) & t^R \\
t^P & t^Q & t^R & t^S
\end{pmatrix}.
\ee

\subsection{Flat pencil of metrics}
Now we have:
\ben
(dt^\alpha, dt^\beta)-\lambda <dt^\alpha, dt^\beta>
= \begin{pmatrix}
0 & 0 & 0 & t^P-\lambda \\
0 & 0 & t^P-\lambda & 0 \\
0 & t^P-\lambda & 0 & -t^R \\
t^P-\lambda & 0 & -t^R & -2 t^S
\end{pmatrix}
\een
So the discriminant locus is given by:
\be
\Sigma_\lambda = \{t^P = \lambda\}.
\ee
The inverse matrix
\ben
g_{\alpha\beta}(\lambda)
= \frac{1}{(t^P-\lambda)^2}\begin{pmatrix}
2t^S & t^R & 0 & t^P-\lambda \\
t^R & 0 & t^P-\lambda & 0 \\
0 & t^P-\lambda & 0 & 0 \\
t^P-\lambda & 0 & 0 & 0
\end{pmatrix}
\een
defines a flat metric on $M\backslash \Sigma_\lambda$.
Using the formula for Christoffel symbols:
\be
\Gamma^\alpha_{\beta\gamma}(\lambda)
= \frac{1}{2} g^{\alpha\rho}(\lambda)
(\pd_\beta g_{\gamma\rho}(\lambda)
+ \pd_\gamma g_{\beta\rho}(\lambda)
- \pd_\rho g_{\beta\gamma}(\lambda)),
\ee
one can find the nonzero components of the Christoffel symbols
of $g(\lambda)$ are:
\ben
&& \Gamma^P_{PP}(\lambda) = -\frac{2}{t^P-\lambda},
\Gamma^Q_{PQ}(\lambda) =
\Gamma^Q_{QP}(\lambda) = -\frac{1}{t^P-\lambda}, \\
&& \Gamma^S_{PP}(\lambda) = \frac{2t^S}{(t^P-\lambda)^2},
\Gamma^S_{PQ}(\lambda) = \Gamma^S_{QP}(\lambda) = \frac{t^R}{(t^P-\lambda)^2}, \\
&& \Gamma^S_{PS}(\lambda) = \Gamma^S_{SP}(\lambda) = \frac{1}{t^P-\lambda},
\Gamma^S_{QR}(\lambda) = \Gamma^S_{RQ}(\lambda) = \frac{1}{t^P-\lambda}.
\een
One can verify that
\be
\Gamma^\alpha_{\beta\gamma}(\lambda)
= -g_{\beta \rho}(\lambda)\cdot
\Gamma^{\rho\alpha}_\gamma,
\ee
where
\begin{align*}
(\Gamma_P^{\alpha\beta}) &= \begin{pmatrix}
0 & 0 & 0 & -1 \\
0 & 0 & 0 & 0 \\
0 & 1 & 0 & 0 \\
2 & 0 & 0 & 0
\end{pmatrix}, &
(\Gamma_Q^{\alpha\beta}) &= \begin{pmatrix}
0 & 0 & 0 & 0 \\
0 & 0 & 0 & -1 \\
0 & 0 & 0 & 0 \\
0 & 1 & 0 & 0
\end{pmatrix}, \\
(\Gamma_R^{\alpha\beta}) &= \begin{pmatrix}
0 & 0 & 0 & 0 \\
0 & 0 & 0 & 0 \\
0 & 0 & 0 & -1 \\
0 & 0 & 0 & 0
\end{pmatrix}, &
(\Gamma_S^{\alpha\beta}) &= \begin{pmatrix}
0 & 0 & 0 & 0 \\
0 & 0 & 0 & 0 \\
0 & 0 & 0 & -1 \\
0 & 0 & 0 & 0
\end{pmatrix}.
\end{align*}
One can also check that:
\be
\Gamma^{\alpha\beta}_\gamma
= c_\gamma^{\alpha \epsilon} R_\epsilon^\beta.
\ee

\subsection{The periods}

We now solve the system of equations:
\be
\pd_\alpha \pd_\beta x
- \Gamma_{\alpha\beta}^\gamma(\lambda) \pd_\gamma x = 0.
\ee
They are explicitly given by:
\ben
&& \pd_P
\begin{pmatrix} \pd_Px \\ \pd_Qx \\
\pd_Rx \\ \pd_S x\end{pmatrix}
= \begin{pmatrix}  -\frac{2}{t^P-\lambda} & 0 &
0 &  \frac{2t^S}{(t^P-\lambda)^2} \\
0 &  -\frac{1}{t^P-\lambda} & 0 &  \frac{t^R}{(t^P-\lambda)^2} \\
0 & 0 & 0 & 0 \\
0 & 0 & 0 &  \frac{1}{t^P-\lambda}
\end{pmatrix}
\cdot \begin{pmatrix} \pd_Px \\ \pd_Qx \\
\pd_Rx \\ \pd_S x\end{pmatrix}
\een

\ben
&& \pd_Q
\begin{pmatrix} \pd_Px \\ \pd_Qx \\
\pd_Rx \\ \pd_S x\end{pmatrix}
= \begin{pmatrix}  0 & -\frac{1}{t^P-\lambda} & 0 &
\frac{t^R}{(t^P-\lambda)^2} \\
0 & 0 & 0 & 0 \\
0 & 0 & 0 & \frac{1}{t^P-\lambda} \\
0 & 0 & 0 & 0
\end{pmatrix}
\cdot \begin{pmatrix} \pd_Px \\ \pd_Qx \\
\pd_Rx \\ \pd_S x\end{pmatrix}
\een

\ben
&& \pd_R
\begin{pmatrix} \pd_Px \\ \pd_Qx \\
\pd_Rx \\ \pd_S x\end{pmatrix}
= \begin{pmatrix}
0 & 0 & 0 & 0 \\
0 & 0 & 0 & \frac{1}{t^P-\lambda} \\
0 & 0 & 0 & 0 \\
0 & 0 & 0 & 0
\end{pmatrix}
\cdot \begin{pmatrix} \pd_Px \\ \pd_Qx \\
\pd_Rx \\ \pd_S x\end{pmatrix}
\een

\ben
&& \pd_S
\begin{pmatrix} \pd_Px \\ \pd_Qx \\
\pd_Rx \\ \pd_S x\end{pmatrix}
= \begin{pmatrix}
0 & 0 & 0 & \frac{1}{t^P-\lambda} \\
0 & 0 & 0 & 0 \\
0 & 0 & 0 & 0 \\
0 & 0 & 0 & 0
\end{pmatrix}
\cdot \begin{pmatrix} \pd_Px \\ \pd_Qx \\
\pd_Rx \\ \pd_S x\end{pmatrix}
\een
The equations for $\pd_Sx$ are
\ben
&& \pd_P(\pd_Sx) = \frac{1}{t^P-\lambda}\pd_Sx, \quad
\pd_Q(\pd_Sx) = \pd_R(\pd_Sx) = \pd_S(\pd_Sx) = 0,
\een
so we get:
\ben
&& \pd_Sx= C_1\cdot (t^P-\lambda).
\een
The equations for $\pd_Rx$ are
\ben
&& \pd_Q(\pd_Rx) = \frac{1}{t^P-\lambda}\pd_Sx
= C_1, \\
&& \pd_P(\pd_Rx) = \pd_R(\pd_Rx) = \pd_S(\pd_Rx) = 0,
\een
so we get:
\be
\pd_Rx= C_1\cdot t^Q +C_2.
\ee
The equations for $\pd_Qx$ are
\ben
\pd_P(\pd_Qx) & = & -\frac{1}{t^P-\lambda}\pd_Qx
+ \frac{t^R}{(t^P-\lambda)^2} \pd_Sx
= -\frac{1}{t^P-\lambda}\pd_Qx
+ C_1 \frac{t^R}{t^P-\lambda}, \\
\pd_Q(\pd_Qx) & = & 0, \\
\pd_R(\pd_Qx) & = & \frac{1}{t^P-\lambda}\pd_Sx =
C_1, \\
\pd_S(\pd_Qx) & = & 0,
\een
so we get:
\be
\pd_Qx= C_1t^R+ \frac{C_3}{t^P-\lambda}.
\ee
The equations for $\pd_Px$ are
\ben
\pd_P(\pd_Px) & = & -\frac{2}{t^P-\lambda}\pd_Px
+ \frac{2t^S}{(t^P-\lambda)^2} \pd_Sx
= -\frac{2}{t^P-\lambda}\pd_Px
+ C_1 \frac{2t^S}{t^P-\lambda}, \\
\pd_Q(\pd_Px) & = & -\frac{1}{t^P-\lambda}\pd_Qx
+ \frac{t^R}{(t^P-\lambda)^2} \pd_Sx
= - \frac{C_3}{(t^P-\lambda)^2}, \\
\pd_R(\pd_Px) & = & 0, \\
\pd_S(\pd_Px) & = & \frac{1}{t^P-\lambda}\pd_Sx =C_1,
\een
so we get:
\be
\pd_Px= C_1t^S- C_3\frac{t^Q}{(t^P-\lambda)^2}
- C_4 \frac{1}{(t^P-\lambda)^2}.
\ee
Therefore,
\be
x= C_0 + C_1((t^P-\lambda)t^S+t^Qt^R)
+ C_2t^R+ C_3 \frac{t^Q}{t^P-\lambda}
+ C_4\frac{1}{t^P-\lambda}.
\ee
From this we get the following

\begin{prop}
For the quintic we can take the following system of dual flat coordinates for the flat pencil:
\be \label{eqn:Dual-Flat}
p^4 = (t^P-\lambda)t^S+t^Qt^R,
p^3 = t^R,
p^2 = \frac{t^Q}{t^P-\lambda},
p^1 = \frac{1}{t^P-\lambda},
\ee
\end{prop}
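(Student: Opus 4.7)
The plan is to leverage directly the derivation carried out in the paragraphs immediately preceding the proposition. There the system $\pd_\alpha\pd_\beta x - \Gamma^\gamma_{\alpha\beta}(\lambda)\pd_\gamma x = 0$ has been integrated explicitly by successively solving for $\pd_S x$, $\pd_R x$, $\pd_Q x$ and $\pd_P x$, and then recovering $x$. The result was the five-parameter general solution
\be
x = C_0 + C_1\bigl((t^P-\lambda)t^S + t^Qt^R\bigr) + C_2 t^R + C_3 \frac{t^Q}{t^P-\lambda} + C_4 \frac{1}{t^P-\lambda}.
\ee
Discarding the trivial constant solution, I would set $(C_1,C_2,C_3,C_4)$ equal in turn to the four standard basis vectors of $\bR^4$, which by construction yields four linearly independent $\lambda$-periods, namely $p^4, p^3, p^2, p^1$ as listed in the proposition.

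Next I would verify that these four $\lambda$-periods actually furnish a local coordinate system on $\cM\setminus \Sigma_\lambda$. This reduces to computing the Jacobian matrix $(\pd p^i/\pd t^\alpha)$ and checking that its determinant is nonzero wherever $t^P\neq \lambda$. In the ordering $(p^1, p^2, p^3, p^4)$ against $(t^S, t^R, t^Q, t^P)$ the Jacobian is essentially triangular with diagonal entries $1$, $1/(t^P-\lambda)$, $1$, $t^P-\lambda$, so the determinant equals $1$ identically and the claim is immediate.

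Finally, to confirm that the $p^i$ are flat coordinates of the intersection form $g(\lambda)$, which is what justifies the canonical reduction $\{p^i,p^j\}_\lambda = G^{ij}\,\delta'(x-y)$ with \emph{constant} $G^{ij}$, I would compute
\be
G^{ij} = (dp^i, dp^j) = g^{\alpha\beta}(\lambda)\,\frac{\pd p^i}{\pd t^\alpha}\frac{\pd p^j}{\pd t^\beta}
\ee
using the explicit matrix $(dt^\alpha, dt^\beta) - \lambda \langle dt^\alpha, dt^\beta\rangle$ recorded earlier in this section. The only nontrivial point, and the main obstacle (though a mild one), is to check that upon substitution the various $(t^P-\lambda)^{-k}$ denominators in $\pd p^i/\pd t^\alpha$ conspire with the intersection-form entries so that every $G^{ij}$ is independent of $t^P, t^Q, t^R, t^S$. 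Once this cancellation is verified, one can read off the explicit matrix $(G^{ij})$ and hence the constants appearing on the right-hand side of the Poisson pencil in canonical form. Since all four $p^i$ are flat by the first step and functionally independent by the second, this third check completes the proof that \eqref{eqn:Dual-Flat} is a valid system of dual flat coordinates.
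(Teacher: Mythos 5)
Your proposal follows essentially the same route as the paper: the proposition is read off directly from the general solution $x = C_0 + C_1((t^P-\lambda)t^S+t^Qt^R)+C_2t^R+C_3\frac{t^Q}{t^P-\lambda}+C_4\frac{1}{t^P-\lambda}$ of the period system integrated in the preceding paragraphs, and your additional checks (functional independence and constancy of $G^{ij}=(dp^i,dp^j)$) match what the paper records afterwards in the subsection on the metric $G^{ij}$. The only quibble is that the Jacobian determinant is $-1/(t^P-\lambda)^2$ rather than $1$ (the $\pd p^1/\pd t^P$ entry is $-(t^P-\lambda)^{-2}$, not $1$), but it is still nonvanishing off $\Sigma_\lambda$, so your conclusion stands.
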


Note
\be
\deg p^i(\lambda) = - 1
\ee
if we set
\be
\deg t^P = \deg \lambda =1, \deg t^Q = 0, \deg t^R = -1,
\deg t^S = -2.
\ee

\subsection{The free energy in the dual flat coordinates}

From \eqref{eqn:Dual-Flat} we get:
\be
t^P-\lambda = \frac{1}{p^1},
t^Q= \frac{p^2}{p_1},
t^R =p^3,
t^S = p^1p^4-p^2p^3.
\ee
Recall that for the quintic the free energy in genus zero has the following form:
\be
\begin{split}
F_0 & = \frac{1}{2}(t^P)^2t^S + t^Pt^Qt^R
+ \frac{5}{6} (t^Q)^3
+ \sum_{m=1}^\infty N_m e^{mt^Q}q^m \\
& =  \frac{1}{2}(t^P)^2t^S + t^Pt^Qt^R
+  \frac{5}{2} \big(\frac{\omega_1\omega_2}{\omega_0^2} - \frac{\omega_3}{\omega_0}\big).
\end{split}
\ee
Therefore, taking $\lambda =0$ we have:
\ben
F_0 & = & \frac{1}{2}\frac{p^1p^4-p^2p^3}{(p^1)^2}
+ \frac{p^2p^3}{(p^1)^2}
+ \frac{5}{2} \big(\frac{\omega_1\omega_2}{\omega_0^2} - \frac{\omega_3}{\omega_0} \big) \\
& = & \frac{1}{2} \frac{1}{(p^1\omega_0)^2}
((p^1\omega_0)(p^4\omega_0-5p^1\omega_3)
+ \half (p^2\omega_0)(p^3\omega_0+5p^1\omega_2) ),
\een
where in the second equality we have used the equality
\be
p^1\omega_1 = p^2\omega_0
\ee
derived from
\be
t^Q = \frac{p^2}{p^1} = \frac{\omega_1(t)}{\omega_0(t)}.
\ee
Therefore, if we set
\ben
&& \hat{p}^1 = p^1\omega_0, \\
&& \hat{p}^2 = p^2\omega_0 = p^1\omega_1, \\
&& \hat{p}^3 = p^3\omega_0 + 5 p^1\omega_2, \\
&& \hat{p}^4 = p^4\omega_0 - 5 p^1\omega_3,
\een
then
\be
F_0 = \half \frac{\hat{p}^1\hat{p}^4+\hat{p}^2\hat{p}^3}{(\hat{p}^1)^2}.
\ee

\subsection{The metric $G^{ij}$}

A constant symmetric nondegenerate matrix $(G_{ij})$
is defined by
\be
G_{ij} = \frac{\pd p^i}{\pd t^\alpha}
g_{\alpha\beta} \frac{\pd p^j}{\pd t^\beta}.
\ee
From
\ben
(dt^\alpha, dt^\beta) = \begin{pmatrix}
0 & 0 & 0 & t^P \\
0 & 0 & t^P & 0 \\
0 & t^P & 0 & -t^R \\
t^P & 0 & -t^R & -2 t^S
\end{pmatrix}
\een
a computation gives us:
\be
(dp^i,dp^j) = \begin{pmatrix}
0 & 0 & 0 & -1 \\
0 & 0 & 1 & 0 \\
0 & 1 & 0 & 0 \\
-1 & 0 & 0 & 0
\end{pmatrix}
\ee

\subsection{The almost dual quantum multiplication}

The quantum multiplication by the Euler vector field is given by:
\be
E \circ \begin{pmatrix}
\frac{\pd}{\pd t^P} \\ \frac{\pd}{\pd t^Q} \\
\frac{\pd}{\pd t^R} \\ \frac{\pd}{\pd t^S}
\end{pmatrix}
= \begin{pmatrix}
t^P & 0 & -t^R & -2t^S \\ 0 & t^P & 0 & -t^R \\
0 & 0 & t^P & 0 \\ 0 & 0 & 0 & t^P
\end{pmatrix}
\cdot
\begin{pmatrix}
\frac{\pd}{\pd t^P} \\ \frac{\pd}{\pd t^Q} \\
\frac{\pd}{\pd t^R} \\ \frac{\pd}{\pd t^S}
\end{pmatrix}
\ee

\be
\cU = \begin{pmatrix}
t^P & 0 & -t^R & -2t^S \\ 0 & t^P & 0 & -t^R \\
0 & 0 & t^P & 0 \\ 0 & 0 & 0 & t^P
\end{pmatrix}
\ee

The inverse of $E$ can be found in the following fashion:
Let $E^{-1} = A\frac{\pd}{\pd t^P}
+ B \frac{\pd}{\pd t^Q} + C \frac{\pd}{\pd t^R} + D \frac{\pd}{\pd t^S}$.
Recall that
$E= t^P\frac{\pd}{\pd t^P}
- t^R\frac{\pd}{\pd t^R} - 2 t^S\frac{\pd}{\pd t^S}$.
\ben
&& \biggl(t^P\frac{\pd}{\pd t^P}
- t^R\frac{\pd}{\pd t^R} - 2t^S \frac{\pd}{\pd t^S} \biggr)
\cdot \biggl(A\frac{\pd}{\pd t^P}
+ B \frac{\pd}{\pd t^Q} + C \frac{\pd}{\pd t^R} + D \frac{\pd}{\pd t^S}\biggr) \\
& = & At^P\frac{\pd}{\pd t^P}
+ Bt^P \frac{\pd}{\pd t^Q} + C t^P\frac{\pd}{\pd t^R} + Dt^P \frac{\pd}{\pd t^S}
- A t^R\frac{\pd}{\pd t^R}
- B t^R\frac{\pd}{\pd t^S}  - 2 A t^S\frac{\pd}{\pd t^S},
\een
and so $A=\frac{1}{t^P}$, $B=0$, $C=\frac{t^R}{(t^P)^3}$,
$D=\frac{2t^S}{(t^P)^2}$,
i.e.,
\be
E^{-1} = \frac{1}{t^P}\frac{\pd}{\pd t^P}
+ \frac{t^R}{(t^P)^2}\frac{\pd}{\pd t^R}
+ \frac{2t^S}{(t^P)^2} \frac{\pd}{\pd t^S}.
\ee
So we get the following explicit formula for dual multiplications:
\ben
\frac{\pd}{\pd t^P} * \begin{pmatrix}
\frac{\pd}{\pd t^P} \\
\frac{\pd}{\pd t^Q} \\
\frac{\pd}{\pd t^R} \\
\frac{\pd}{\pd t^S} \end{pmatrix}
& = & E^{-1} \circ \begin{pmatrix}
1 & 0 & 0 & 0 \\
0 & 1 & 0 & 0 \\
0 & 0 & 1 & 0 \\
0 & 0 & 0 & 1
\end{pmatrix} \cdot
 \begin{pmatrix}
\frac{\pd}{\pd t^P} \\
\frac{\pd}{\pd t^Q} \\
\frac{\pd}{\pd t^R} \\
\frac{\pd}{\pd t^S} \end{pmatrix}\\
& = & \begin{pmatrix}
1 & 0 & 1 & 2 \\
0 & 1 & 0 & 1 \\
0 & 0 & 1 & 0 \\
0 & 0 & 0 & 1
\end{pmatrix} \cdot
 \begin{pmatrix}
\frac{\pd}{\pd t^P} \\
\frac{\pd}{\pd t^Q} \\
\frac{\pd}{\pd t^R} \\
\frac{\pd}{\pd t^S} \end{pmatrix},
\een

\ben
\frac{\pd}{\pd t^Q} * \begin{pmatrix}
\frac{\pd}{\pd t^P} \\
\frac{\pd}{\pd t^Q} \\
\frac{\pd}{\pd t^R} \\
\frac{\pd}{\pd t^S} \end{pmatrix}
& = &  E^{-1} \circ
\begin{pmatrix}
0 & 1 & 0 & 0 \\
0 & 0 & 5+\sum_{m=1}^\infty K_mm^3e^{mt^Q} & 0 \\
0 & 0 & 0 & 1 \\
0 & 0 & 0 & 0
\end{pmatrix} \cdot
 \begin{pmatrix}
\frac{\pd}{\pd t^P} \\
\frac{\pd}{\pd t^Q} \\
\frac{\pd}{\pd t^R} \\
\frac{\pd}{\pd t^S} \end{pmatrix} \\
& = &
\begin{pmatrix}
0 & 1 & 0 & 1 \\
0 & 0 & 5+\sum_{m=1}^\infty K_mm^3e^{mt^Q} & 0 \\
0 & 0 & 0 & 1 \\
0 & 0 & 0 & 0
\end{pmatrix} \cdot
 \begin{pmatrix}
\frac{\pd}{\pd t^P} \\
\frac{\pd}{\pd t^Q} \\
\frac{\pd}{\pd t^R} \\
\frac{\pd}{\pd t^S} \end{pmatrix}
\een

\ben
\frac{\pd}{\pd t^R} * \begin{pmatrix}
\frac{\pd}{\pd t^P} \\
\frac{\pd}{\pd t^Q} \\
\frac{\pd}{\pd t^R} \\
\frac{\pd}{\pd t^S} \end{pmatrix}
& = & \begin{pmatrix}
0 & 0 & 1 & 0 \\
0 & 0 & 0 & 1 \\
0 & 0 & 0 & 0 \\
0 & 0 & 0 & 0
\end{pmatrix} \cdot
 \begin{pmatrix}
\frac{\pd}{\pd t^P} \\
\frac{\pd}{\pd t^Q} \\
\frac{\pd}{\pd t^R} \\
\frac{\pd}{\pd t^S} \end{pmatrix},
\een

\ben
\frac{\pd}{\pd t^S} * \begin{pmatrix}
\frac{\pd}{\pd t^P} \\
\frac{\pd}{\pd t^Q} \\
\frac{\pd}{\pd t^R} \\
\frac{\pd}{\pd t^S} \end{pmatrix}
& = &   \begin{pmatrix}
0 & 0 & 0 & 1 \\
0 & 0 & 0 & 0 \\
0 & 0 & 0 & 0 \\
0 & 0 & 0 & 0
\end{pmatrix} \cdot
 \begin{pmatrix}
\frac{\pd}{\pd t^P} \\
\frac{\pd}{\pd t^Q} \\
\frac{\pd}{\pd t^R} \\
\frac{\pd}{\pd t^S} \end{pmatrix}.
\een

\subsection{The almost dual potential function}

We use
\be
d \biggl( \frac{\pd^2F_*}{\pd p^a\pd p^b} \biggr)
 = dp_a \circ dp_b
\ee
to compute $F_*$.
First note:
\begin{align*}
dp^1 & = -\frac{dt^P}{(t^P)^2}, &
dp^2 & = -\frac{t^Qdt^P}{(t^P)^2}+ \frac{dt^Q}{t^P}, \\
dp^3 & = dt^R, &
dp^4 & = t^Sdt^P+t^Rdt^Q+t^Qdt^R+t^Pdt^S.
\end{align*}
\begin{align*}
dt^P & = -\frac{dp^1}{(p^1)^2}, &
dt^Q & = -\frac{p^2dp^1}{(p^1)^2}+ \frac{dp^2}{p^1}, \\
dt^R & = dp^3, &
dt^S & = p^4dp^1-p^3dp^2-p^2dp^3+p^1dp^4.
\end{align*}
So we get:
\ben
dp^1 \circ \begin{pmatrix}
dp^1 \\ d p^2 \\ dp^3 \\ dp^4
\end{pmatrix}
& = & \begin{pmatrix}
0 & 0 & 0 & 0 \\ 0 & 0 & 0 & 0 \\
0 & 0 & 0 & 0 \\ - \frac{1}{t^P} & 0 & 0 & 0
\end{pmatrix} \cdot \begin{pmatrix}
d t^P \\ dt^Q \\ dt^R \\ dt^S \end{pmatrix} \\
& = & \begin{pmatrix}
0 & 0 & 0 & 0 \\ 0 & 0 & 0 & 0 \\
0 & 0 & 0 & 0 \\  \frac{1}{p^1} & 0 & 0 & 0
\end{pmatrix} \cdot \begin{pmatrix}
d p^1 \\ dp^2 \\ dp^3 \\ dp^4 \end{pmatrix},
\een
therefore,
\be
\frac{\pd^2F_*}{\pd p^4\pd p^4}
= \frac{\pd^2F_*}{\pd p^4\pd p^3} =
\frac{\pd^2F_*}{\pd p^4\pd p^2} = 0,
\frac{\pd^2F_*}{\pd p^4\pd p^1} = \log (p^1).
\ee
Similarly,
\ben
dp^2 \circ \begin{pmatrix}
dp^1 \\ d p^2 \\ dp^3 \\ dp^4
\end{pmatrix}
& = & \begin{pmatrix}
0 & 0 & 0 & 0 \\ 0 & 0 & 0 & 0 \\
\frac{1}{t^P} & 0 & 0 & 0 \\ 0 & 1 & 0 & 0
\end{pmatrix} \cdot \begin{pmatrix}
d t^P \\ dt^Q \\ dt^R \\ dt^S \end{pmatrix} \\
& = & \begin{pmatrix}
0 & 0 & 0 & 0 \\ 0 & 0 & 0 & 0 \\
-\frac{1}{p^1} & 0 & 0 & 0 \\
-\frac{p^2}{(p^1)^2} & \frac{1}{p^1}  & 0 & 0
\end{pmatrix} \cdot \begin{pmatrix}
d p^1 \\ dp^2 \\ dp^3 \\ dp^4 \end{pmatrix}
\een
so we get:
\be
\frac{\pd^2F_*}{\pd p^3\pd p^4} =
\frac{\pd^2F_*}{\pd p^3\pd p^3} = 0,
\frac{\pd^2F_*}{\pd p^3\pd p^2} = - \log(p^1),
\frac{\pd^2F_*}{\pd p^3\pd p^1} = -\frac{p^2}{p^1}.
\ee
From
\ben
dp^3 \circ \begin{pmatrix}
dp^1 \\ d p^2 \\ dp^3 \\ dp^4
\end{pmatrix}
& = & \begin{pmatrix}
0 & 0 & 0 & 0 \\ \frac{1}{t^P} & 0 & 0 & 0 \\
0  & f_0'''(t^Q) & 0 & 0 \\ t^R & t^Qf'''_0(t^Q) & t^P & 0
\end{pmatrix} \cdot \begin{pmatrix}
d t^P \\ dt^Q \\ dt^R \\ dt^S \end{pmatrix},
\een
we know that
\ben
dp_2 \cdot dp_4 & = & 0,  \\
dp_2 \cdot dp_3 & = & \frac{dt^P}{t^P}
= d \log t^P = - d\log p^1, \\
dp_2 \cdot dp_2 & = & f'''_0(t^Q)dt^Q = d f''(t^Q)
= d f''(\frac{p^2}{p^1}), \\
dp_2 \cdot dp_1 & = & t^Rdt^P+t^Qf'''_0(t^Q) dt^Q+ t^Pdt^R
= d(t^Pt^R+t^Qf_0''(t^Q)-f_0'(t^Q)) \\
& = & d \biggl(\frac{p^3}{p^1} + \frac{p^2}{p^1} f_0''(\frac{p^2}{p^1})
- f'_0(\frac{p^2}{p^1}) \biggr).
\een
so we get:
\be
\begin{split}
\frac{\pd^2F_*}{\pd p^2\pd p^4} & = 0, \\
\frac{\pd^2F_*}{\pd p^2\pd p^3} & = -\log p^1, \\
\frac{\pd^2F_*}{\pd p^2\pd p^2} & = f''(\frac{p^2}{p^1}), \\
\frac{\pd^2F_*}{\pd p^2\pd p^1}
& = - \biggl(\frac{p^3}{p^1} + \frac{p^2}{p^1} f_0''(\frac{p^2}{p^1})
- f'_0(\frac{p^2}{p^1}) \biggr).
\end{split}
\ee

\ben
dp^4 \cdot \begin{pmatrix}
dp^1 \\ d p^2 \\ dp^3 \\ dp^4
\end{pmatrix}
& = & \begin{pmatrix}
-\frac{1}{t^P} & 0 & 0 & 0 \\ 0 & 1 & 0 & 0 \\
t^R  & t^Qf_0'''(t^Q) & t^P & 0 \\
2t^Pt^S+2t^Qt^R & 2t^Pt^R+(t^Q)^2f'''_0(t^Q) & 2t^Pt^Q & (t^P)^2
\end{pmatrix} \cdot \begin{pmatrix}
d t^P \\ dt^Q \\ dt^R \\ dt^S \end{pmatrix},
\een
From this we know that
\ben
dp_1 \cdot dp_4 & = & -\frac{dt^P}{t^P} = - d\log t^P= d\log p^1,  \\
dp_1 \cdot dp_3 & = & - dt^Q =- d \frac{p^2}{p^1}, \\
dp_1 \cdot dp_2 & = & -(t^Rdt^P+t^Qf'''_0(t^Q)dt^Q+t^Pdt^R) \\
& = & -d (t^Pt^R+ t^Qf''(t^Q) - f'(t^Q)) \\
& = & -d \biggl( \frac{p^3}{p^1}+ \frac{p^2}{p^1}
f''(\frac{p^2}{p^1})- f_0'(\frac{p^2}{p^1}) \biggr), \\
dp_1 \cdot dp_1 & = & (2t^Pt^S+2t^Qt^R)dt^P
+ (2t^Pt^R+(t^Q)^2f'''_0(t^Q)) dt^Q \\
& + & 2t^Pt^Qdt^R + (t^P)^2dt^S \\
& = & d \biggl((t^P)^2t^S + 2t^Pt^Qt^R
+ 2f_0(t^Q)- 2t^Qf_0'(t^Q) + (t^Q)^2f_0''(t^Q) \biggr).
\een
Note $2F_0$ appears on the right-hand side of the last equality.
So we have:
\begin{equation*}
\begin{split}
\frac{\pd^2F_*}{\pd p^1\pd p^4} & = \log p^1, \\
\frac{\pd^2F_*}{\pd p^1\pd p^3} & = \frac{p^2}{p^1}, \\
\frac{\pd^2F_*}{\pd p^1\pd p^2} & =
\frac{p^3}{p^1} + \frac{p^2}{p^1} f_0''(\frac{p^2}{p^1})
- f'_0(\frac{p^2}{p^1}), \\
\frac{\pd^2F_*}{\pd p^1\pd p^1}
& =  (t^P)^2t^S + 2t^Pt^Qt^R
+ 2f_0(t^Q)- 2t^Qf_0'(t^Q) + (t^Q)^2f_0''(t^Q) \\
& = \frac{p^1p^4+p^2p^3}{(p^1)^2}
+ 2f(\frac{p^2}{p^1})-2\frac{p^2}{p^1}f_0'(\frac{p^2}{p^1})
+ (\frac{p^2}{p^1})^2f_0''(\frac{p^2}{p^1}).
\end{split}
\end{equation*}

\begin{prop}
For the quintic we have
\be
F_* = (p^1p^4-p^2p^3)\log p^1 - p^1p^4
+ (p^1)^2f_0(\frac{p^2}{p^1}).
\ee
\end{prop}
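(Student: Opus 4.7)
The plan is to verify the proposed formula by direct differentiation. Since the construction of the almost dual potential function presented just above the proposition produces all ten second partial derivatives
$$\frac{\pd^2 F_*}{\pd p^i \pd p^j}, \qquad 1 \leq i \leq j \leq 4,$$
as explicit functions of $(p^1, p^2, p^3, p^4)$, the content of the proposition is that the function
$$F_* = (p^1 p^4 - p^2 p^3)\log p^1 - p^1 p^4 + (p^1)^2 f_0\!\left(\tfrac{p^2}{p^1}\right)$$
admits these prescribed Hessian entries. Since only third partial derivatives of $F_*$ enter into the associativity equations $d(\pd^2 F_*/\pd p^a \pd p^b) = dp^a \circ dp^b$, the function $F_*$ is determined by these data only up to an affine function of the $p^i$; so verification of the Hessian is enough.

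The first step is to compute the four first partials $\pd F_*/\pd p^i$. The derivatives $\pd F_*/\pd p^4 = p^1 \log p^1 - p^1$ and $\pd F_*/\pd p^3 = -p^2 \log p^1$ are immediate. Differentiating with respect to $p^2$ one picks up a contribution $p^1 f_0'(p^2/p^1)$ from the chain rule on the last term, and differentiating with respect to $p^1$ one must combine the derivative of $\log p^1$ with the derivative of $(p^1)^2 f_0(p^2/p^1)$, where the factor $p^2/p^1$ produces a further term $-p^2 f_0'(p^2/p^1)$. The next step is to take a second partial and compare systematically against the list
\begin{align*}
\frac{\pd^2 F_*}{\pd p^4 \pd p^4} &= \frac{\pd^2 F_*}{\pd p^3 \pd p^4} = \frac{\pd^2 F_*}{\pd p^2 \pd p^4} = 0, & \frac{\pd^2 F_*}{\pd p^1 \pd p^4} &= \log p^1, \\
\frac{\pd^2 F_*}{\pd p^3 \pd p^3} &= 0, & \frac{\pd^2 F_*}{\pd p^2 \pd p^3} &= -\log p^1, \\
\frac{\pd^2 F_*}{\pd p^1 \pd p^3} &= -\frac{p^2}{p^1}, & \frac{\pd^2 F_*}{\pd p^2 \pd p^2} &= f_0''\!\left(\tfrac{p^2}{p^1}\right),
\end{align*}
together with the two remaining entries involving $\pd/\pd p^1$. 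The only mildly delicate check is $\pd^2 F_*/\pd p^1 \pd p^1$, where the three sources of $p^1$-dependence (the $\log p^1$ factor, the prefactor $(p^1)^2$, and the argument $p^2/p^1$ of $f_0$) must be combined, with the cross terms producing exactly $-2(p^2/p^1) f_0'(p^2/p^1) + (p^2/p^1)^2 f_0''(p^2/p^1)$.

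Finally, a consistency check: one should verify that the formula is compatible with the scaling weights $\deg p^i = -1$ (assigned in the paper) and the homogeneity condition
$$\sum_a p^a \frac{\pd F_*}{\pd p^a} = 2 F_* + \frac{1}{1-d}\, G_{ab} p^a p^b$$
with $d = 3$, using that $f_0$ is made of the classical cubic $\tfrac{5}{6}u^3$ (homogeneous) plus instanton contributions $\sum_d N_{0,d} e^{du} q^d$ which are scaling-neutral in $u = p^2/p^1$. No step requires a genuine obstacle; the main point of care is bookkeeping of signs and chain-rule factors in the $\pd^2/\pd p^1 \pd p^1$ and $\pd^2/\pd p^1 \pd p^2$ entries, which is where an error would most easily creep in.
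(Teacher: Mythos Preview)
Your proposal is correct and follows exactly the approach implicit in the paper: the paper computes all ten second partials $\pd^2 F_*/\pd p^i\pd p^j$ from the defining relation $d(\pd^2 F_*/\pd p^a\pd p^b)=dp_a\circ dp_b$ immediately before the proposition, and the proof consists precisely of verifying that the stated $F_*$ reproduces this Hessian by direct differentiation. Your identification of the $\pd^2/\pd p^1\pd p^1$ entry as the only nontrivial check, and the homogeneity relation as a consistency test, matches what the paper records right after the proposition.
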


Note we can verify that
\be
\sum_a
p^a \frac{\pd F_*}{\pd p^a} = 2 F_* +\frac{1}{1- d}
\sum_{a,b} G_{ab}p^ap^b.
\ee
Also note:
\bea
&& E = \frac{\pd}{\pd t^P} - \frac{\pd}{\pd t^R}
- 2\frac{\pd}{\pd t^S}
=-p^1\frac{\pd}{\pd p^1} - p^2 \frac{\pd}{\pd p^2}
- p^3 \frac{\pd}{\pd p^3} - p^4 \frac{\pd}{\pd p^4}, \\
&& e = \frac{\pd}{\pd t^P} = -(p^1)^2 \frac{\pd}{\pd p^1}
- p^1p^2 \frac{\pd}{\pd p^2}.
\eea

\subsection{The almost dual flat coordinates}

The equations for the almost dual flat coordinates
(also called the {\em twisted periods}) are:
\be
\pd_\alpha \xi \cdot \cU
= \xi ¡¤\cdot (\cV + \nu -\frac{1}{2} ) C_\alpha.
\ee
Explicitly we get:
\ben
&& \pd_P(\pd_Px, \pd_Qx, \pd_Rx, \pd_Sx) \\
& = & (\pd_Px, \pd_Qx, \pd_Rx, \pd_Sx) \cdot
\begin{pmatrix}
-2+\nu & 0 & 0 & 0 \\
0 & -1+\nu & 0 & 0 \\
0 & 0 & 0+\nu  & 0 \\
0 & 0 & 0 & 1+\nu
\end{pmatrix} \\
&& \cdot
\frac{1}{(t^P)^2}\begin{pmatrix}
t^P & 0 & 0 & 0 \\ 0 & t^P & 0 & 0 \\
t^R & 0 & t^P & 0 \\ 2t^S & t^R & 0 & t^P
\end{pmatrix} \\
& = & (\pd_Px, \pd_Qx, \pd_Rx, \pd_Sx) \\
&& \cdot
\frac{1}{(t^P)^2}\begin{pmatrix}
(-2+\nu)t^P & 0 & 0 & 0 \\ 0 & (-1+\nu) t^P & 0 & 0 \\
\nu t^R & 0 & \nu t^P & 0 \\ 2(1+\nu)t^S & (1+\nu)t^R & 0 & (1+\nu)t^P
\end{pmatrix}
\een

\ben
&& \pd_Q(\pd_Px, \pd_Qx, \pd_Rx, \pd_Sx) \\
& = & (\pd_Px, \pd_Qx, \pd_Rx, \pd_Sx) \cdot
\begin{pmatrix}
-2+\nu & 0 & 0 & 0 \\
0 & -1+\nu & 0 & 0 \\
0 & 0 & 0+\nu  & 0 \\
0 & 0 & 0 & 1+\nu
\end{pmatrix} \\
&& \cdot \begin{pmatrix}
0 & 0 & 0 & 0 \\
1 & 0 & 0& 0 \\
0 & f_0'''(t^Q)   & 0 & 0 \\
0 & 0 & 1 & 0
\end{pmatrix}
\cdot
\frac{1}{(t^P)^2}\begin{pmatrix}
t^P & 0 & 0 & 0 \\ 0 & t^P & 0 & 0 \\
t^R & 0 & t^P & 0 \\ 2t^S & t^R & 0 & t^P
\end{pmatrix} \\
& = & (\pd_Px, \pd_Qx, \pd_Rx, \pd_Sx)
 \cdot
\frac{1}{(t^P)^2}\begin{pmatrix}
0 & 0 & 0 & 0 \\ (-1+\nu) t^P & 0 & 0 & 0 \\
0 & \nu t^P f_0'''(t^Q) & 0 & 0 \\ (1+\nu)t^R & 0 & (1+\nu)t^P & 0
\end{pmatrix}
\een
\ben
&& \pd_R(\pd_Px, \pd_Qx, \pd_Rx, \pd_Sx) \\
& = & (\pd_Px, \pd_Qx, \pd_Rx, \pd_Sx) \cdot
\begin{pmatrix}
-2+\nu & 0 & 0 & 0 \\
0 & -1+\nu & 0 & 0 \\
0 & 0 & 0+\nu  & 0 \\
0 & 0 & 0 & 1+\nu
\end{pmatrix} \\
&& \cdot \begin{pmatrix}
0 & 0 & 0 & 0 \\
0 & 0 & 0 & 0 \\
1 & 0 & 0 & 0 \\
0 & 1 & 0 & 0
\end{pmatrix}
\cdot
\frac{1}{(t^P)^2}\begin{pmatrix}
t^P & 0 & 0 & 0 \\ 0 & t^P & 0 & 0 \\
t^R & 0 & t^P & 0 \\ 2t^S & t^R & 0 & t^P
\end{pmatrix} \\
& = & (\pd_Px, \pd_Qx, \pd_Rx, \pd_Sx)
 \cdot
\frac{1}{(t^P)^2}\begin{pmatrix}
0 & 0 & 0 & 0 \\ 0 & 0 & 0 & 0 \\
\nu t^P & 0 & 0 & 0 \\ 0 & (1+\nu)t^P & 0 & 0
\end{pmatrix}
\een
\ben
&& \pd_S(\pd_Px, \pd_Qx, \pd_Rx, \pd_Sx) \\
& = & (\pd_Px, \pd_Qx, \pd_Rx, \pd_Sx) \cdot
\begin{pmatrix}
-2+\nu & 0 & 0 & 0 \\
0 & -1+\nu & 0 & 0 \\
0 & 0 & 0+\nu  & 0 \\
0 & 0 & 0 & 1+\nu
\end{pmatrix} \\
&& \begin{pmatrix}
0 & 0 & 0 & 0 \\
0 & 0 & 0 & 0 \\
0 & 0 & 0 & 0 \\
1 & 0 & 0 & 0
\end{pmatrix}
\cdot
\frac{1}{(t^P)^2}\begin{pmatrix}
t^P & 0 & 0 & 0 \\ 0 & t^P & 0 & 0 \\
t^R & 0 & t^P & 0 \\ 2t^S & t^R & 0 & t^P
\end{pmatrix} \\
& = & (\pd_Px, \pd_Qx, \pd_Rx, \pd_Sx)
 \cdot
\frac{1}{(t^P)^2}\begin{pmatrix}
0 & 0 & 0 & 0 \\ 0 & 0 & 0 & 0 \\
0 & 0 & 0 & 0 \\ (1+\nu)t^P & 0 & 0 & 0.
\end{pmatrix}
\een

The equations for $\pd_Sx$ are
\ben
&& \pd_P(\pd_Sx) = \frac{1+\nu}{t^P-\lambda}\pd_Sx, \quad
\pd_Q(\pd_Sx) = \pd_R(\pd_Sx) = \pd_S(\pd_Sx) = 0,
\een
so we get:
\ben
&& \pd_Sx= C_1\cdot (t^P-\lambda)^{1+\nu}.
\een
The equations for $\pd_Rx$ are
\ben
&& \pd_P(\pd_Rx) = \frac{\nu}{t^P-\lambda} \pd_Rx, \\
&& \pd_Q(\pd_Rx) = \frac{1+\nu}{t^P-\lambda}\pd_Sx
= C_1\cdot (1+\nu) (t^P-\lambda)^{\nu}, \\
&& \pd_R(\pd_Rx) = \pd_S(\pd_Rx) = 0,
\een
so we get:
\be
\pd_Rx= C_1\cdot t^Q \cdot (1+\nu)(t^P-\lambda)^\nu
+C_2(t^P-\lambda)^\nu.
\ee
The equations for $\pd_Qx$ are
\ben
\pd_P(\pd_Qx) & = & \frac{(-1+\nu)}{t^P-\lambda}\pd_Qx
+ \frac{(1+\nu)t^R}{(t^P-\lambda)^2} \pd_Sx \\
& = & \frac{(-1+\nu)}{t^P-\lambda}\pd_Qx
+ C_1(1+\nu)t^R (t^P-\lambda)^{\nu-1}, \\
\pd_Q(\pd_Qx) & = & \frac{\nu}{t^P-\lambda} f_0'''(t^Q)\pd_Rx \\
& = & \frac{\nu}{t^P-\lambda} f_0'''(t^Q) \cdot
(C_1\cdot t^Q \cdot (1+\nu)(t^P-\lambda)^\nu+C_2(t^P-\lambda)^\nu), \\
\pd_R(\pd_Qx) & = & \frac{1+\nu}{t^P-\lambda}\pd_Sx =
C_1\cdot (1+\nu) (t^P-\lambda)^\nu, \\
\pd_S(\pd_Qx) & = & 0,
\een
so we get:
\be
\begin{split}
\pd_Qx & = C_1\cdot (1+\nu) \biggl( t^R (t^P-\lambda)^\nu
 + \nu(t^P-\lambda)^{\nu-1}(t^Qf_0''(t^Q)-f_0'(t^Q)) \biggr) \\
& + C_2\nu (t^P-\lambda)^{\nu-1} f_0''(t^Q)
+ C_3(t^P-\lambda)^{\nu-1}.
\end{split}
\ee
The equations for $\pd_Px$ are
\ben
\pd_P(\pd_Px) & = & \frac{-2+\nu}{t^P-\lambda}\pd_Px
+ \frac{2(1+\nu)t^S}{(t^P-\lambda)^2} \pd_Sx \\
& = & \frac{(-2+\nu)}{t^P-\lambda}\pd_Px
+ C_1 \cdot 2(1+\nu)t^S (t^P-\lambda)^{\nu-1}, \\
\pd_Q(\pd_Px) & = & \frac{-1+\nu}{t^P-\lambda}\pd_Qx
+ \frac{(1+\nu)t^R}{(t^P-\lambda)^2} \pd_Sx \\
& = & \frac{\nu-1}{t^P-\lambda}\biggl(
 C_1\cdot (1+\nu) \biggl( t^R (t^P-\lambda)^\nu
 + \nu(t^P-\lambda)^{\nu-1}(t^Qf_0''(t^Q)-f_0'(t^Q)) \biggr) \\
& + & C_2\cdot \nu  (t^P-\lambda)^{\nu-1}f_0''(t^Q)
+ C_3(t^P-\lambda)^{\nu-1}
\biggr)
+ C_1\cdot (\nu+1) t^R(t^P-\lambda)^{\nu-1} \\
& = & C_1 \biggl((\nu+1)\nu t^R(t^P-\lambda)^{\nu-1}
+ (\nu+1)\nu(\nu-1)(t^P-\lambda)^{\nu-2}
(t^Qf_0''(t^Q)-f_0'(t^Q))\biggr) \\
& + & C_2 \cdot \nu(\nu-1) (t^P-\lambda)^{\nu-2}
f_0''(t^Q)
+ C_3 \cdot (\nu-1)(t^P-\lambda)^{\nu-2}, \\
\pd_R(\pd_Px) & = & \frac{\nu}{t^P-\lambda} \pd_Rx \\
& = & C_1\cdot t^Q \cdot (1+\nu)\nu (t^P-\lambda)^{\nu-1}
+C_2\cdot \nu (t^P-\lambda)^{\nu-1}, \\
\pd_S(\pd_Px) & = & \frac{1+\nu}{t^P-\lambda}\pd_Sx
=C_1\cdot (\nu+1) (t^P-\lambda)^{\nu},
\een
so we get:
\be
\begin{split}
\pd_Px= C_1& \cdot \biggl( t^S \cdot (\nu+1)(t^P-\lambda)^\nu
+(\nu+1)\nu (t^P-\lambda)^{\nu-1}  t^Qt^R\\
& + (\nu+1)\nu(\nu-1)(t^P-\lambda)^{\nu-2}
(t^Qf_0'(t^Q)-f_0(t^Q))\biggr) \\
+ C_2 \cdot &
\nu(\nu-1)(t^P-\lambda)^{\nu-2} f_0'(t^Q)   \\
+ C_3 \cdot & (\nu-1)(t^P-\lambda)^{\nu-2} t^Q
+ C_4 \cdot (\nu-1) (t^P-\lambda)^{\nu-2}.
\end{split}
\ee
The general solution is:
\be
\begin{split}
x= & C_0 + C_1((t^P-\lambda)^{\nu+1}t^S
+ (\nu+1) (t^P-\lambda)^{\nu}t^Qt^R \\
& +(\nu+1)\nu(t^P-\lambda)^{\nu-1}(t^Qf_0'(t^Q)-f_0(t^Q))) \\
+ & C_2\cdot
(t^R (t^P-\lambda)^{\nu}+ \nu(t^P-\lambda)^{\nu-1}
f_0'(t^Q)) \\
+ & C_3\cdot (t^P-\lambda)^{\nu-1}t^Q \\
+&  C_4\cdot (t^P-\lambda)^{\nu-1}.
\end{split}
\ee

\begin{prop}
For the quintic one can take
the following deformed dual flat coordinates:
\be \label{eqn:p-in-t-lambda}
\begin{split}
p^4(\nu) = & (t^P-\lambda)^{\nu+1}t^S
+(\nu+1)(t^P-\lambda)^{\nu}t^Qt^R \\
& +(\nu+1)\nu(t^P-\lambda)^{\nu-1}(t^Qf_0'(t^Q)-f_0(t^Q)), \\
p^3(\nu) = & t^R (t^P-\lambda)^{\nu}
+ \nu(t^P-\lambda)^{\nu-1} f_0'(t^Q), \\
p^2(\nu) = & (t^P-\lambda)^{\nu-1}t^Q \\
p^1(\nu) = & (t^P-\lambda)^{\nu-1}.
\end{split}
\ee
\end{prop}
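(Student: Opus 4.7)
The plan is to solve the system
\[
\pd_\alpha \xi \cdot \cU = \xi \cdot \bigl(\cV + \nu - \tfrac{1}{2}\bigr) C_\alpha
\]
for $\alpha = P,Q,R,S$ by exploiting the strictly triangular nature
of the quantum multiplication computed in
\S \ref{sec:Quan-Coh}. First I would write out the four matrix
equations explicitly, using the intersection matrix
$\cU$ and the matrices $R = \cV + \tfrac{1}{2}I = \diag(2,1,0,-1)$
together with the structure constants $c^\gamma_{\alpha\beta}$;
this yields exactly the four block displays already set up in the
derivation of the ordinary periods, but with each entry of $R$
shifted by $\nu$. The key structural observation is that the only
non-triangular slot in the quantum product is
$\pd_Q\circ\pd_Q = f_0'''(t^Q)\pd_R$, so the resulting system for
$(\pd_Px,\pd_Qx,\pd_Rx,\pd_Sx)$ is lower-triangular.

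Next, I would integrate the system from the bottom up, exactly as
in the computation of the ordinary $\lambda$-periods in the
preceding subsection. The equation for $\pd_Sx$ reads
$\pd_P(\pd_Sx) = \tfrac{1+\nu}{t^P-\lambda}\pd_Sx$ with $\pd_Q$,
$\pd_R$, $\pd_S$ annihilating it, giving
$\pd_Sx = C_1(t^P-\lambda)^{1+\nu}$. Plugging this into the
equation for $\pd_Rx$ forces
$\pd_Rx = C_1(1+\nu)t^Q(t^P-\lambda)^\nu + C_2(t^P-\lambda)^\nu$.
Substituting both into the system for $\pd_Qx$ gives a first-order
inhomogeneous ODE in $t^P$, whose integration involves $f_0'(t^Q)$
via $\pd_Q(\pd_Qx) = \tfrac{\nu}{t^P-\lambda}f_0'''(t^Q)\pd_Rx$;
this is where the nontrivial analytic content of the Yukawa coupling
enters and produces the $\nu(t^P-\lambda)^{\nu-1}f_0'(t^Q)$ term.
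Finally one integrates $\pd_Px$ using the values obtained for
$\pd_Q x$, $\pd_R x$, $\pd_S x$, picking up the new integration
constant $C_4$.

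Reading off the four linearly independent solutions produced by the
constants $C_1,C_2,C_3,C_4$ yields the four functions claimed in
\eqref{eqn:p-in-t-lambda}; the degree assignments
$\deg p^i(\nu)=\nu-1$ are automatic from the homogeneity of the
construction and provide a useful check. To finish, I would verify
that the four solutions are linearly independent at a generic point
(e.g.\ by computing the Jacobian with respect to $(t^P,t^Q,t^R,t^S)$
at $\nu$ generic and noting that its leading behavior in
$t^P-\lambda$ is non-degenerate) so that they indeed form a basis
of deformed dual flat coordinates, and note that the $\nu\to 0$
specialization recovers \eqref{eqn:Dual-Flat} up to the obvious
rescalings.

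The main obstacle I anticipate is purely bookkeeping: keeping the
prefactors $(\nu+1)\nu$, $\nu(\nu-1)$, $(\nu-1)$ straight when
integrating the $\pd_Px$ equation, because both the inhomogeneous
term coming from $\pd_Sx$ (via $2(1+\nu)t^S/(t^P-\lambda)^2$) and
the cross-term coming from $\pd_Qx$ (via $f_0'(t^Q)$) contribute
mixed powers of $(t^P-\lambda)$. Apart from that, the procedure is
entirely triangular and no compatibility issue arises, because the
system is guaranteed integrable by the flatness of the deformed dual
connection.
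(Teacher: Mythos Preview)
Your proposal is correct and follows essentially the same approach as the paper: the paper writes out the four matrix equations for $\alpha=P,Q,R,S$ explicitly, then integrates the triangular system bottom-up in the order $\pd_Sx,\pd_Rx,\pd_Qx,\pd_Px$ exactly as you describe, obtaining the general solution with constants $C_0,\dots,C_4$ and reading off the four functions in \eqref{eqn:p-in-t-lambda}. Your additional checks (linear independence via the Jacobian, the $\nu\to 0$ specialization recovering \eqref{eqn:Dual-Flat}) are not in the paper but are sensible sanity checks.
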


Note the following property of the deformed
dual flat coordinates:
\be
\begin{split}
& \frac{\pd p^1(\nu)}{\pd t^P}
= (\nu-1) \cdot p^1(\nu-1), \\
& \frac{\pd p^2(\nu)}{\pd t^P}
= (\nu-1) \cdot p^2(\nu-1), \\
& \frac{\pd p^3(\nu)}{\pd t^P}
= \nu \cdot p^3(\nu-1), \\
& \frac{\pd p^4(\nu)}{\pd t^P}
= (\nu+1) \cdot p^1(\nu-1).
\end{split}
\ee
The inverse map of \eqref{eqn:p-in-t-lambda} is given by:
\be
\begin{split}
& t^P-\lambda  = (p^1(\nu))^{1/(\nu-1)}, \\
& t^Q= \frac{p^2(\nu)}{p^1(\nu)}, \\
& t^R = \frac{p^3(\nu)-\nu p^1(\nu)
f_0'(\frac{p_2(\nu)}{p_1(\nu)})}{(p^1(\nu))^{\nu/(\nu-1)}},
\end{split}
\ee
\ben
t^S & = & (p^1)^{-(\nu+1)/(\nu-1)}
\biggl( p^4-(\nu+1) (p^1)^{\nu/(\nu-1)} \cdot
\frac{p^2}{p^1} \cdot \frac{p^3- \nu p^1 f'(\frac{p^2}{p^1})}
{(p^1)^{\nu/(\nu-1)} } \\
& + & (\nu+1)\nu p^1 \big( \frac{p^2}{p^1}f_0'(\frac{p^2}{p^1})
-  f_0(\frac{p^2}{p^1}) \big) \biggr)  \\
 & = & (p^1)^{-(\nu+1)/(\nu-1)}
\biggl( p^4-(\nu+1)   \cdot
\frac{p^2 \cdot p^3 }{p^1} -
(\nu+1)\nu p^1 f_0(\frac{p^2}{p^1}) \biggr).
\een
According to Dubrovin \cite[Proposition 5.11]{Dub-Almost-Dual},
if one set $f=(-1-\nu) t^S$.
Then the intersection form is given by the Hessian of $f$:
\be
(\frac{\pd}{\pd p^i}, \frac{\pd}{\pd p^j})
=\frac{\pd^2f}{\pd p^i\pd p^j}.
\ee

\subsection{Deformed dual flat coordinates
as Laplace transform of the deformed flat coordinates}
\label{sec:Laplace}

As observed by Givental \cite{Giv},
the deformed flat coordinates are given by the genus zero one-point functions on
the small phase space.
For the quintic,
such functions have been computed in \S \ref{sec:One-Point-0}
and have been verify to satisfy the quantum differential equations
in \S \ref{sec:QDE}.

Using notations of \S \ref{sec:Def-flat-coor},
we can rewrite the results in \S \ref{sec:One-Point-0} as follows:
\ben
x^P(z) & = & t^P+\sum_{n\geq 0}^\infty z^{n+1}\corrr{\tau_n(S)}_0
= \frac{e^{t_{P}z}-1}{z}, \\
x^Q(z) & = & t^Q + \sum_{n\geq 0}^\infty z^{n+1}\corrr{\tau_n(R)}_0= e^{t^Pz}t^Q, \\
x^R(z) & = & t^R+ \sum_{n=0}^\infty z^{n+1}\corrr{\tau_n(Q)}_0
= e^{t^Pz} t^R + z e^{t^Pz}f'_0(t^Q), \\
x^S(z) & = & t^S+ \sum_{n\geq 0}^\infty z^{n+1} \corrr{\tau_n(P)}_0 \\
& = & e^{t^Pz}t^S
+ ze^{t^Pz}t^Qt^R
+ z^2 e^{t^Pz} (t^Qf_0'(t^Q)-f_0(t^Q)).
\een

\begin{prop}
For the quintic,
the deformed dual flat coordinates are given by
the Laplace transform of the deformed flat coordinates.
More precisely,
\bea
p^1(\nu) & = & \frac{(-1)^{1-\nu}}{\Gamma(1-\nu) }
\int_0^\infty e^{-\lambda z} \cdot z^{1-\nu} (x^P(z)+\frac{1}{z}) dz, \label{eqn:p1(nu)} \\
p^2(\nu) & = & \frac{(-1)^{1-\nu}}{\Gamma(1-\nu) }\int_0^\infty e^{-\lambda z} \cdot z^{0-\nu} x^Q(z) dz, \\
p^3(\nu) & = & \frac{(-1)^{-\nu}}{\Gamma(-\nu) }
\int_0^\infty e^{-\lambda z} \cdot z^{-1-\nu} x^R(z) dz, \\
p^4(\nu) & = & \frac{(-1)^{-1-\nu}}{\Gamma(-1-\nu)}\int_0^\infty e^{-\lambda z} \cdot z^{-2-\nu} x^S(z) dz.
\eea
\end{prop}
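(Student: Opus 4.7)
The plan is to treat each of the four identities as a direct Mellin--Laplace computation based on the explicit formulas for $x^\alpha(z)$ obtained in \S\ref{sec:One-Point-0} and the explicit formulas for $p^k(\nu)$ in \eqref{eqn:p-in-t-lambda}. The key ingredient is the classical integral
\[
\int_0^\infty e^{-az}\, z^{s-1}\, dz = \frac{\Gamma(s)}{a^s},
\qquad \operatorname{Re}(a)>0,\ \operatorname{Re}(s)>0,
\]
applied with $a = \lambda - t^P$ (convergence is a statement about the region of parameters; the identities are then extended by analytic continuation in $\nu$, with the factor $(-1)^{1-\nu}/\Gamma(1-\nu)$, etc., precisely cancelling the $(\lambda-t^P)^{\cdot}$ prefactor to produce $(t^P-\lambda)^{\cdot}$).

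First I would dispense with the two shortest cases. For $p^1(\nu)$, note that $x^P(z)+1/z = e^{t^Pz}/z$, so
\[
\int_0^\infty e^{-\lambda z} z^{1-\nu}\bigl(x^P(z)+z^{-1}\bigr)\,dz
= \int_0^\infty e^{-(\lambda-t^P)z} z^{-\nu}\,dz
= \frac{\Gamma(1-\nu)}{(\lambda-t^P)^{1-\nu}},
\]
and multiplying by $(-1)^{1-\nu}/\Gamma(1-\nu)$ recovers $(t^P-\lambda)^{\nu-1} = p^1(\nu)$. The computation for $p^2(\nu)$ is essentially the same, since $x^Q(z) = e^{t^Pz}t^Q$ factors out a $t^Q$.

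Next I would handle $p^3(\nu)$. Substituting $x^R(z)= e^{t^Pz}t^R + z e^{t^Pz}f_0'(t^Q)$ splits the integral into the sum
\[
t^R\cdot\frac{\Gamma(-\nu)}{(\lambda-t^P)^{-\nu}} + f_0'(t^Q)\cdot\frac{\Gamma(1-\nu)}{(\lambda-t^P)^{1-\nu}}.
\]
Using $\Gamma(1-\nu) = -\nu\,\Gamma(-\nu)$ to pull out $\Gamma(-\nu)$ and collecting the appropriate powers of $(-1)$, the prefactor $(-1)^{-\nu}/\Gamma(-\nu)$ reproduces $t^R(t^P-\lambda)^\nu + \nu(t^P-\lambda)^{\nu-1}f_0'(t^Q) = p^3(\nu)$. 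The $p^4(\nu)$ identity proceeds identically with three terms from $x^S(z)$; each term yields one of the three summands in the formula for $p^4(\nu)$, and the $\Gamma$-function identity $\Gamma(s+1)=s\Gamma(s)$ applied twice produces the coefficients $(\nu+1)$ and $(\nu+1)\nu$.

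The main obstacle is not any single calculation but rather the bookkeeping of the branch factors $(-1)^{k-\nu}$ versus the powers $(\lambda-t^P)^{\nu-k}$ and the compatibility of the $\Gamma$-function recursion across the four cases; this needs to be done once and uniformly so that the signs in \eqref{eqn:p1(nu)} and its three companions come out exactly right. A conceptual way to phrase the identities, which I would include as a remark, is that the substitution $z\mapsto -z$ followed by a contour rotation converts the Laplace integrals into Hankel-type contour integrals around $t^P-\lambda$, making the $(-1)^{\nu-k}$ factors transparent as monodromy of $z^{-\nu}$ and removing the need to track signs term by term.
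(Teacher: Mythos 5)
Your proposal is correct and takes essentially the same route as the paper: substitute the explicit one-point functions $x^\alpha(z)$ from \S\ref{sec:One-Point-0}, reduce each term to the Euler integral $\int_0^\infty e^{-(\lambda-t^P)z}z^{s-1}\,dz=\Gamma(s)(\lambda-t^P)^{-s}$, and use $\Gamma(s+1)=s\,\Gamma(s)$ to match the coefficients in \eqref{eqn:p-in-t-lambda}. The paper only writes out the $p^1(\nu)$ case and declares the other three identical, so your extra detail on the two- and three-term cases $p^3(\nu)$ and $p^4(\nu)$ is, if anything, more complete than the original.
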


\begin{proof}
By the definition of the Gamma-function we have:
\ben
&& \int_0^\infty e^{-\lambda z} \cdot z^{1-\nu} (x^P(z)+\frac{1}{z}) dz
= \int_0^\infty e^{-\lambda z} \cdot z^{1-\nu} \frac{e^{t^Pz}}{z} dz \\
& = & \int_0^\infty e^{-(\lambda-t^P) z} \cdot z^{(1-\nu)-1} dz \\
& = & (\lambda-t^P)^{\mu-1} \cdot \int_0^\infty e^{-z} \cdot z^{(1-\nu)-1} dz \\
& = & \Gamma(1-\nu) \cdot (\lambda-t^P)^{\nu-1}.
\een
This proves \eqref{eqn:p1(nu)}.
The other three identities are proved in the same fashion.
\end{proof}

\section{Emergent K\"ahler Geometry on the Small Phase Space of the Quintic}

\label{sec:Quintic-Spec}

We now present the special geometry on the small phase space of the quintic.

\subsection{The flat sections for the Dubrovin connection}

By the results in \S \ref{sec:Two-Point-I} and \S \ref{sec:QDE},
we know that the following are flat sections
for the Dubrovin connection:
\ben
s_S & = & e^{-t^P} \frac{\pd}{\pd t^S}, \\
s_R & = & e^{-t^P}\biggl(\frac{\pd}{\pd t^R} - t^Q \frac{\pd}{\pd t^S}\biggr), \\
s_Q & = & e^{-t^P}\biggl[\frac{\pd}{\pd t^Q}
- \frac{\pd^2 f_0(t^Q)}{\pd (t^Q)^2} \frac{\pd}{\pd t^R}
+ \biggl(\frac{\pd f_0(t^Q)}{\pd t^Q} - t^R
\biggr)\frac{\pd}{\pd t^S}  \biggr], \\
s_P & = & e^{-t^P}\biggl[\frac{\pd}{\pd t^P} - t^Q\frac{\pd}{\pd t^Q}
+ \biggl(t^Q \frac{\pd^2f_0(t^Q)}{\pd (t^Q)^2}
-\frac{\pd f_0(t^Q)}{\pd t^Q} -t^R \biggr)  \frac{\pd}{\pd t^R} \\
& - & \biggl(t^Q\frac{\pd f_0(t^Q)}{\pd t^Q} - 2 f_0(t^Q) +t^S-t^Qt^R\biggr)
\frac{\pd}{\pd t^S}\biggr].
\een
They are given by the formulas for $S_{\alpha,\beta}(z)$.
For example,
\ben
s_P &= &S_{S,P}(-1)\frac{\pd}{\pd t^P} + S_{R,P}(-1)\frac{\pd}{\pd t^Q}
+ S_{Q,P}(-1)\frac{\pd}{\pd t^R} + S_{P,P}(-1)\frac{\pd}{\pd t^S}.
\een

\subsection{The complex symplectic structure}

We introduce a complex symplectic structure such that:
\be
\Omega(s_P, s_S) = -\Omega(s_S,s_P) = - \Omega (s_Q, s_R) =
= \Omega(s_R,s_Q) = 1,
\ee
and all other $\omega(s_\alpha, s_\beta) = 0$.
Because we have:
\ben
\frac{\pd}{\pd t^S} & = & e^{t^P} s_S, \\
\frac{\pd}{\pd t^R} & = & e^{t^P} \biggl[s_R + t^Q s_S\biggr], \\
\frac{\pd}{\pd t^Q} & = & e^{t^P} \biggl[s_Q  + \frac{\pd^2 f_0(t^Q)}{\pd (t^Q)^2}s_R
+ \biggl(t^Q \frac{\pd^2 f_0(t^Q)}{\pd (t^Q)^2}
- \frac{\pd f_0(t^Q)}{\pd t^Q} + t^R\biggr) s_S \biggr], \\
\frac{\pd}{\pd t^P} & = & e^{t^P} \biggl[s_P  + t^Qs_Q
+ \biggl(\frac{\pd f_0(t^Q)}{\pd t^Q} + t^R\biggr) s_R \\
& + & \biggl(t^Q\frac{\pd f_0(t^Q)}{\pd t^Q}
- 2 f_0(t^Q) +t^S + t^Qt^R\biggr)
s_S \biggr],
\een
we get:
\ben
&& \Omega( \frac{\pd}{\pd t^P}, \frac{\pd}{\pd t^Q})
= 2e^{2t^P} t^R, \\
&& \Omega( \frac{\pd}{\pd t^P}, \frac{\pd}{\pd t^R})
= 0, \\
&& \Omega( \frac{\pd}{\pd t^P}, \frac{\pd}{\pd t^S})
= e^{2t^P}, \\
&& \Omega( \frac{\pd}{\pd t^Q}, \frac{\pd}{\pd t^R})
= - e^{2t^P}, \\
&& \Omega( \frac{\pd}{\pd t^Q}, \frac{\pd}{\pd t^S}) = 0, \\
&& \Omega( \frac{\pd}{\pd t^R}, \frac{\pd}{\pd t^S}) = 0.
\een
Therefore,
in the flat coordinates $t^P, t^Q, t^R, t^S$ on the small phase space of the quintic,
we have:
\be
\Omega = 2e^{2t^P} t^R dt^P \wedge d t^Q
+ e^{2t^P} dt^P \wedge dt^S
- e^{2t^P} dt^Q \wedge dt^R.
\ee
It is clear that
\be
d\Omega =0.
\ee
I.e.,
$\Omega$ is a holomorphic symplectic structure on the small phase space.
Note we can find Darboux coordinates for this symplectic structaure:
\be
\Omega = d(e^{t^P}t^R) \wedge d (e^{t^P}t^Q)
+ d(e^{t^P}) \wedge d (e^{t^P}t^S).
\ee

\begin{rmk}
When $t^P=t^R=t^S=0$,
we have
\be
\frac{\pd}{\pd t^P} =  s_P  + t^Qs_Q
+  \frac{\pd f_0(t^Q)}{\pd t^Q}  s_R
 +  \biggl(t^Q\frac{\pd f_0(t^Q)}{\pd t^Q}
- 2 f_0(t^Q)  \biggr)
s_S.
\ee
The coefficients on the right-hand side are
$$(1, t^Q, f'_0(t^Q), t^Qf_0'(t^Q) - 2f_0(t^Q)).$$
These also appear in \eqref{eqn:4Periods}
which describe the special geometry of the four periods
of the Picard-Fuchs equation of the quintic.
\end{rmk}

\subsection{The real structure}

We define a real structure $\tau$ on the small phase of the quintic by requiring:
\be
\tau(s_\alpha) = s_\alpha, \quad \alpha = P, Q, R, S.
\ee
The action of $\tau$ on the basis $\frac{\pd}{\pd t^\alpha}$ is computed as follows:
\ben
\frac{\pd}{\pd t^S} & \mapsto & s_S = \frac{\pd}{\pd t^S}, \\
\frac{\pd}{\pd t^R} & \mapsto & s_Q
+ \overline{t^Q} s_S \\
& = & \frac{\pd}{\pd t^R}
+(\overline{t^Q}-t^Q) \frac{\pd}{\pd t^S}, \\
\frac{\pd}{\pd t^Q} & \mapsto & s_Q
+ \overline{\frac{\pd^2 f_0(t^Q)}{\pd (t^Q)^2}} s_R
+ \biggl(\overline{t^Q} \overline{\frac{\pd^2 f_0(t^Q)}{\pd (t^Q)^2}}
- \overline{\frac{\pd f_0(t^Q)}{\pd t^Q}}\biggr) s_S \\
& = & \frac{\pd}{\pd t^Q}
+ \biggl(\overline{\frac{\pd^2 f_0(t^Q)}{\pd (t^Q)^2}}
- \frac{\pd^2 f_0(t^Q)}{\pd (t^Q)^2}\biggr) \frac{\pd}{\pd t^R}
\\
& + & \biggl((\overline{t^Q}-t^Q) \overline{\frac{\pd^2 f_0(t^Q)}{\pd (t^Q)^2}}
+ \frac{\pd f_0(t^Q)}{\pd t^Q} - \overline{\frac{\pd f_0(t^Q)}{\pd t^Q}}\biggr) \frac{\pd}{\pd t^S}, \\
\frac{\pd}{\pd t^P} & \mapsto & s_P  + \overline{t^Q}s_Q
+ \overline{\frac{\pd f_0(t^Q)}{\pd t^Q}} s_R
+ \biggl(\overline{t^Q}\overline{\frac{\pd f_0(t^Q)}{\pd t^Q}}
- 2 \overline{f_0(t^Q)} \biggr) s_S \\
& = & \frac{\pd}{\pd t^P} - t^Q\frac{\pd}{\pd t^Q}
+ \biggl(t^Q \frac{\pd^2f_0(t^Q)}{\pd (t^Q)^2}
-\frac{\pd f_0(t^Q)}{\pd t^Q} \biggr)  \frac{\pd}{\pd t^R} \\
& - & \biggl(t^Q\frac{\pd f_0(t^Q)}{\pd t^Q} - 2 f_0(t^Q) \biggr) \frac{\pd}{\pd t^S}.
\\
& + & \overline{t^Q} \biggl( \frac{\pd}{\pd t^Q} - \frac{\pd^2 f_0(t^Q)}{\pd (t^Q)^2}
\frac{\pd}{\pd t^R} + \frac{\pd f_0(t^Q)}{\pd t^Q} \frac{\pd}{\pd t^S} \biggr) \\
& + &  \overline{\frac{\pd f_0(t^Q)}{\pd t^Q}}
\biggl(\frac{\pd}{\pd t^R} - t^Q \frac{\pd}{\pd t^S} \biggr)
+ \biggl(\overline{t^Q}\overline{\frac{\pd f_0(t^Q)}{\pd t^Q}}
- 2 \overline{f_0(t^Q)} \biggr) \frac{\pd}{\pd t^S} \\
& = & \frac{\pd}{\pd t^P} - (t^Q-\overline{t^Q})\frac{\pd}{\pd t^Q} \\
& + & \biggl( (t^Q-\overline{t^Q}) \frac{\pd^2f_0(t^Q)}{\pd (t^Q)^2}
-\frac{\pd f_0(t^Q)}{\pd t^Q}
+ \overline{\frac{\pd f_0(t^Q)}{\pd t^Q}} \biggr)  \frac{\pd}{\pd t^R} \\
& - & \biggl[(t^Q-\overline{t^Q})
\biggl(\frac{\pd f_0(t^Q)}{\pd t^Q}
+ \overline{\frac{\pd f_0(t^Q)}{\pd t^Q}} \biggr)
 - 2 (f_0(t^Q)-\overline{f(t^Q)}) \biggr] \frac{\pd}{\pd t^S}.
\een

\subsection{The pseudo-Hermitian metric}

Using the holomorphic symplectic form
and the real structure defined the preceding two Subsections we now define:
\be
h(\frac{\pd}{\pd t^\alpha}, \frac{\pd}{\pd t^\beta})
= i \Omega(\frac{\pd}{\pd t^\alpha}, \tau(\frac{\pd}{\pd t^\beta})).
\ee
Explicitly, we have:
\ben
h(\frac{\pd}{\pd t^P}, \frac{\pd}{\pd t^P})
& = & i \Omega(\frac{\pd}{\pd t^P}, \tau(\frac{\pd}{\pd t^P})) \\
& = & i e^{t^P+\overline{t^P}}
\biggl[-\biggl(t^Q\frac{\pd f_0(t^Q)}{\pd t^Q}
- 2 f_0(t^Q) +t^S + t^Qt^R\biggr) \\
& + & \biggl(\overline{t^Q}\overline{\frac{\pd f_0(t^Q)}{\pd t^Q}}
- 2 \overline{f_0(t^Q)} + \overline{t^S} + \overline{t^Qt^R} \biggr) \\
& + &\overline{t^Q} \biggl(\frac{\pd f_0(t^Q)}{\pd t^Q} + t^R\biggr)
- t^Q\biggl(\overline{\frac{\pd f_0(t^Q)}{\pd t^Q}}
+ \overline{t^R}\biggr)  \biggr] \\
& = & i e^{t^P+\overline{t^P}}
\biggl[-(t^Q-\overline{t^Q})
\biggl(\frac{\pd f_0(t^Q)}{\pd t^Q} + \overline{\frac{\pd f_0(t^Q)}{\pd t^Q}}\biggr) \\
& + & 2 \biggl(f_0(t^Q) - \overline{f_0(t^Q)} \biggr)
- (t^S - \overline{t^S})
- ( t^Q - \overline{t^Q}) (t^R+\overline{t^R}) \biggr].
\een

\ben
&& h(\frac{\pd}{\pd t^P}, \frac{\pd}{\pd t^Q})
= i\Omega(\frac{\pd}{\pd t^P}, \tau(\frac{\pd}{\pd t^Q})) \\
& = & i e^{t^P+\overline{t^P}}
\biggl[\biggl(\overline{t^Q}
\overline{\frac{\pd^2 f_0(t^Q)}{\pd (t^Q)^2}}
- \overline{\frac{\pd f_0(t^Q)}{\pd t^Q}} + \overline{t^R}\biggr)
- t^Q\overline{\frac{\pd^2 f_0(t^Q)}{\pd (t^Q)^2}}
+ \biggl(\frac{\pd f_0(t^Q)}{\pd t^Q} + t^R\biggr) \biggr] \\
& = & i e^{t^P+\overline{t^P}}
\biggl[-\biggl(t^Q-\overline{t^Q}\biggr)
\overline{\frac{\pd^2 f_0(t^Q)}{\pd (t^Q)^2}}
+ \biggl(\frac{\pd f_0(t^Q)}{\pd t^Q}
- \overline{\frac{\pd f_0(t^Q)}{\pd t^Q}} \biggr) +
\biggl( t^R + \overline{t^R}\biggr)  \biggr].
\een

\ben
&& h(\frac{\pd}{\pd t^P}, \frac{\pd}{\pd t^R})
= i \Omega(\frac{\pd}{\pd t^P}, \tau(\frac{\pd}{\pd t^R}))
= -i e^{t^P+\overline{t^P}}
\biggl[t^Q-\overline{t^Q}  \biggr].
\een

\ben
&& h(\frac{\pd}{\pd t^P}, \frac{\pd}{\pd t^S})
= i\Omega(\frac{\pd}{\pd t^P}, \tau(\frac{\pd}{\pd t^S}))
= ie^{t^P+\overline{t^P}}.
\een

\ben
h(\frac{\pd}{\pd t^Q}, \frac{\pd}{\pd t^Q})
& = & i\Omega(\frac{\pd}{\pd t^Q}, \tau(\frac{\pd}{\pd t^Q}))
= i e^{t^P+\overline{t^P}}
\biggl[ \frac{\pd^2 f_0(t^Q)}{\pd (t^Q)^2}
- \overline{\frac{\pd^2 f_0(t^Q)}{\pd (t^Q)^2}} \biggr].
\een

\ben
h(\frac{\pd}{\pd t^Q}, \frac{\pd}{\pd t^R})
& = & i\Omega(\frac{\pd}{\pd t^Q}, \tau(\frac{\pd}{\pd t^R}))
=- ie^{t^P+\overline{t^P}}.
\een

\ben
h(\frac{\pd}{\pd t^Q}, \frac{\pd}{\pd t^S})
& = & i\Omega(\frac{\pd}{\pd t^Q}, \tau(\frac{\pd}{\pd t^S}))
= 0.
\een

\ben
h(\frac{\pd}{\pd t^R}, \frac{\pd}{\pd t^R})
& = & i\Omega(\frac{\pd}{\pd t^R}, \tau(\frac{\pd}{\pd t^R}))
= 0.
\een

\ben
h(\frac{\pd}{\pd t^R}, \frac{\pd}{\pd t^S})
& = & i\Omega(\frac{\pd}{\pd t^R}, \tau(\frac{\pd}{\pd t^S}))
= 0.
\een
\ben
h(\frac{\pd}{\pd t^S}, \frac{\pd}{\pd t^S})
& = & i\Omega(\frac{\pd}{\pd t^S}, \tau(\frac{\pd}{\pd t^S}))
= 0.
\een
With these explicit expressions,
we can directly check the following:

\begin{prop}
The K\"ahler potential $K$ for $h$ can be taken to be:
\ben
h(\frac{\pd}{\pd t^P}, \frac{\pd}{\pd t^P})
& = & i e^{t^P+\overline{t^P}}
\biggl[-(t^Q-\overline{t^Q})
\biggl(\frac{\pd f_0(t^Q)}{\pd t^Q} + \overline{\frac{\pd f_0(t^Q)}{\pd t^Q}}\biggr) \\
& + & 2 \biggl(f_0(t^Q) - \overline{f_0(t^Q)} \biggr)
- (t^S - \overline{t^S})
- ( t^Q - \overline{t^Q}) (t^R+\overline{t^R}) \biggr] \\
& = & i
\biggl[ e^{t^P} \cdot
\overline{e^{t^P}\biggl( t^Q\frac{\pd f_0(t^Q)}{\pd t^Q} - 2f_0(t^Q) + t^S + t^Qt^R\biggr)} \\
& - & e^{t^P} \biggl( t^Q \frac{\pd f_0(t^Q)}{\pd t^Q} - 2 f_0(t^Q) + t^S + t^Qt^R
\biggr)
\cdot e^{\overline{t^P}} \\
& - & \biggl( e^{t^P} t^Q \cdot
\overline{e^{t^P} \biggl( \frac{\pd f_0(t^Q)}{\pd t^Q} + t^R \biggr)} \\
&& - \overline{e^{t^P}t^Q} \cdot e^{t^P} \biggl( \frac{\pd f_0(t^Q)}{\pd t^Q} +t^R
\biggr) \biggr) \biggr].
\een
\end{prop}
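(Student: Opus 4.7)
The plan is to recognize the displayed expression as the symplectic Hermitian pairing of a natural ``period vector'' with its conjugate. Define
\[
\Pi^P = e^{t^P},\quad \Pi^Q = e^{t^P}t^Q,\quad \Pi^R = e^{t^P}\bigl(f_0'(t^Q)+t^R\bigr),\quad \Pi^S = e^{t^P}\bigl(t^Qf_0'(t^Q)-2f_0(t^Q)+t^S+t^Qt^R\bigr).
\]
These are exactly the coefficients appearing when $\partial/\partial t^P$ is expanded in the basis $(s_P,s_Q,s_R,s_S)$ of flat sections, as read off from the expansion formulas established just before the proposition; on the tiny phase space $\{t^P=t^R=t^S=0\}$ they reduce to the four periods $(1,t^Q,f_0'(t^Q),t^Qf_0'(t^Q)-2f_0(t^Q))$ of the Picard--Fuchs equation highlighted in the preceding remark. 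I take the K\"ahler potential to be the symplectic pairing
\[
K \;=\; i\bigl(\Pi^P\overline{\Pi^S}-\Pi^S\overline{\Pi^P}\bigr)+i\bigl(\Pi^R\overline{\Pi^Q}-\Pi^Q\overline{\Pi^R}\bigr),
\]
which on expansion is precisely the expression displayed in the statement.

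\textbf{Verification.} I would then verify $h_{\alpha\overline\beta}=\partial_\alpha\partial_{\overline\beta}K$ for every pair $\alpha,\beta\in\{P,Q,R,S\}$. Holomorphicity of each $\Pi^I$ reduces the mixed partial to the bilinear form
\[
\partial_\alpha\partial_{\overline\beta}K \;=\; i\bigl[(\partial_\alpha\Pi^P)(\partial_{\overline\beta}\overline{\Pi^S})-(\partial_\alpha\Pi^S)(\partial_{\overline\beta}\overline{\Pi^P})+(\partial_\alpha\Pi^R)(\partial_{\overline\beta}\overline{\Pi^Q})-(\partial_\alpha\Pi^Q)(\partial_{\overline\beta}\overline{\Pi^R})\bigr].
\]
The sixteen partials $\partial_\alpha\Pi^I$ are elementary: $\partial_S$ only hits $\Pi^S$, $\partial_R$ only hits $\Pi^R$ and $\Pi^S$, and $\partial_Q$ produces factors of $f_0''(t^Q)$ and $f_0'(t^Q)$. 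One substitutes and matches term by term against the ten independent components of $h_{\alpha\overline\beta}$ tabulated above.

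\textbf{Main obstacle.} The most intricate case is the $(P,\overline P)$ entry, in which all four periods contribute cross terms from the product rule. Here the key simplification is the identity $\partial_P\Pi^I=\Pi^I$ for every $I$ (the common factor being $e^{t^P}$), which gives $\partial_P\partial_{\overline P}K=K$. Thus the content of the proposition reduces to the assertion that $h_{P\overline P}$ equals $K$ itself, and the ``second form'' of the identity displayed in the proposition statement is precisely this rewriting of $h_{P\overline P}$ in the special-K\"ahler symplectic pairing form, already carried out in the preceding subsection. The remaining nine mixed partials are simpler: for example $\partial_S\Pi^I=\delta_{I,S}e^{t^P}$ instantly forces $h_{S\overline Q}=h_{S\overline R}=h_{S\overline S}=0$ and $h_{S\overline P}=ie^{t^P+\overline{t^P}}$, in exact agreement with the explicit table; analogous cancellations take care of the $(Q,\overline Q)$, $(Q,\overline R)$, and remaining entries.

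Finally I would observe that this identification is the Gromov--Witten--theoretic incarnation of the standard special K\"ahler pairing $i(\overline{X}^I F_I-X^I\overline{F}_I)$ with $(X^0,X^1,F_1,F_0)=(\Pi^P,\Pi^Q,\Pi^R,\Pi^S)$; the structure emerges naturally from the flat-section basis for the Dubrovin connection together with the complex symplectic form and real structure introduced in this section, which is the emergent-geometry viewpoint advocated throughout the paper.
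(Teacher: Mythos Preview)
Your approach is correct and in fact somewhat sharper than the paper's, which simply asserts ``With these explicit expressions, we can directly check the following'' and gives no further details. Your organizing principle --- introducing the period vector $(\Pi^P,\Pi^Q,\Pi^R,\Pi^S)$ and observing that the coefficients of $\partial/\partial t^\alpha$ in the flat basis $(s_P,s_Q,s_R,s_S)$ are precisely $\partial_\alpha\Pi^I$ --- reduces the ten separate verifications to a single structural identity: since $h(\partial_\alpha,\partial_\beta)=i\sum_{I,J}(\partial_\alpha\Pi^I)\overline{(\partial_\beta\Pi^J)}\,\Omega(s_I,s_J)$ and $K=i\sum_{I,J}\Pi^I\overline{\Pi^J}\,\Omega(s_I,s_J)$, the equality $h_{\alpha\bar\beta}=\partial_\alpha\partial_{\bar\beta}K$ is immediate from holomorphicity of the $\Pi^I$. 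This is exactly the special-K\"ahler mechanism that the paper develops more explicitly only in the \emph{following} subsections (in the coordinates $u,v,w,x$), so your proof anticipates that viewpoint.

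One small slip: with your sign conventions, $\partial_S\partial_{\bar P}K=i\bigl[-(\partial_S\Pi^S)\overline{\Pi^P}\bigr]=-ie^{t^P+\overline{t^P}}$, not $+ie^{t^P+\overline{t^P}}$; this is consistent with Hermitian symmetry $h(\partial_S,\partial_P)=\overline{h(\partial_P,\partial_S)}$ and the paper's tabulated value of $h(\partial_P,\partial_S)$. It does not affect the argument.
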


\subsection{Hyperk\"aher potential}

From the explicit formula for the K\"ahler potential $K$,
we introduce:
\begin{align}
u & = e^{t^P}, & v & = e^{t^P}t^Q, & w & = e^{t^P}t^R, &
x & = e^{t^P}(t^S+t^Qt^R).
\end{align}
We also introduce
\be
\cF = - u^2 f_0(\frac{v}{u}) + ux - vw.
\ee
Note we have
\ben
\frac{\pd \cF}{\pd u} & = & e^{t^P}t^Q \frac{\pd f_0(t^Q)}{\pd t^Q}
- 2e^{t^P}f_0(t^Q) + e^{t^P} (t^S + t^Q t^R) \\
& = & vf_0'(\frac{v}{u}) - 2u f_0(\frac{v}{u}) + x, \\
\frac{\pd \cF}{\pd v} & = & -(e^{t^P}\frac{\pd f_0(t^Q)}{\pd t^Q} + e^{t^P}t^R)
= - u f_0'(\frac{v}{u}) - w,
\een
and so
\ben
d\frac{\pd \cF}{\pd u} & = &
\biggl(-\frac{v^2}{u^2} f_0''(\frac{v}{u})
+ \frac{2v}{u} f_0'(\frac{v}{u})
- 2 f(\frac{v}{u})\biggr) du\\
& + &\biggl( \frac{v}{u} f_0''(\frac{v}{u})
- f_0'(\frac{v}{u}) \biggr) dv + dx, \\
d \frac{\pd \cF}{\pd v}
& = & \biggl( \frac{v}{u} f_0''(\frac{v}{u})
- f_0'(\frac{v}{u})\biggr) du
- f_0''(\frac{v}{u}) dv
-dw.
\een
Now we easily check the following:

\begin{prop}
In the new coordinates $u,v,w$ ad n$x$,
the holomorphic symplectic form and the K\"ahler potential can be written as:
\be
\Omega = d u \wedge d\frac{\pd \cF}{\pd u}
+ dv \wedge d \frac{\pd \cF}{\pd v}
= du \wedge dx - dv \wedge dw.
\ee
\be
K =
i \biggl[ u \overline{\frac{\pd \cF}{\pd u}}
+ v \overline{\frac{\pd \cF}{\pd v}}
- \bar{u} \frac{\pd \cF}{\pd u}
- \bar{v} \frac{\pd \cF}{\pd v} \biggr]
\ee
\end{prop}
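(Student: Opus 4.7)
\medskip

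\noindent\textbf{Proof proposal.}
The plan is to perform two direct substitutions, one for $\Omega$ and one for $K$, both of which reduce the statement to routine exterior-algebra bookkeeping and a handful of chain-rule identities. No new analytic input is required beyond the explicit formulas already derived for $\Omega$, $K$, and the partial derivatives of $\cF$.

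First I would handle the symplectic form. Starting from the expression
$$\Omega = 2e^{2t^P}t^R\,dt^P\wedge dt^Q + e^{2t^P}dt^P\wedge dt^S - e^{2t^P}dt^Q\wedge dt^R$$
derived in the previous subsection, I would compute $du\wedge dx$ and $dv\wedge dw$ by direct differentiation of $u=e^{t^P}$, $v=e^{t^P}t^Q$, $w=e^{t^P}t^R$, $x=e^{t^P}(t^S+t^Qt^R)$, using
$$du\wedge dx = e^{2t^P}\bigl(dt^P\wedge dt^S + t^R\,dt^P\wedge dt^Q + t^Q\,dt^P\wedge dt^R\bigr),$$
$$dv\wedge dw = e^{2t^P}\bigl(t^Q\,dt^P\wedge dt^R - t^R\,dt^P\wedge dt^Q + dt^Q\wedge dt^R\bigr).$$
Subtracting recovers $\Omega$, establishing $\Omega = du\wedge dx - dv\wedge dw$. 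For the other equality $\Omega = du\wedge d(\pd_u\cF) + dv\wedge d(\pd_v\cF)$, I would substitute the formulas for $d(\pd_u\cF)$ and $d(\pd_v\cF)$ already listed before the proposition; the off-diagonal $du\wedge dv$ contributions, which each carry the coefficient $\tfrac{v}{u}f_0''-f_0'$, cancel in the sum, leaving exactly $du\wedge dx - dv\wedge dw$.

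Next I would verify the Kähler potential formula. The key observation is the pair of identities
$$\frac{\pd\cF}{\pd u} = vf_0'(v/u) - 2uf_0(v/u) + x = e^{t^P}\bigl(t^Qf_0'(t^Q) - 2f_0(t^Q) + t^S + t^Qt^R\bigr),$$
$$-\frac{\pd\cF}{\pd v} = uf_0'(v/u) + w = e^{t^P}\bigl(f_0'(t^Q) + t^R\bigr),$$
which are immediate from $\cF = -u^2 f_0(v/u) + ux - vw$ and the definitions of $u,v,w,x$. Inserting these into the K\"ahler potential formula of the preceding proposition
$$K = i\Bigl[u\,\overline{e^{t^P}(t^Qf_0'-2f_0+t^S+t^Qt^R)} - \bar u\,e^{t^P}(t^Qf_0'-2f_0+t^S+t^Qt^R)$$
$$\quad -\bigl(v\,\overline{e^{t^P}(f_0'+t^R)} - \bar v\,e^{t^P}(f_0'+t^R)\bigr)\Bigr],$$
and collecting terms with the signs produced by the $-(-\pd_v\cF)$ substitution, yields
$$K = i\Bigl[u\,\overline{\pd_u\cF} - \bar u\,\pd_u\cF + v\,\overline{\pd_v\cF} - \bar v\,\pd_v\cF\Bigr],$$
which is the claimed formula.

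Both parts are essentially clerical, so I do not anticipate a true obstacle; the only place where a reader could easily slip is the sign convention in the definition of $x$, since $t^S$ has been replaced by $t^S+t^Qt^R$ inside $x$, which causes the mixed term $e^{2t^P}t^R dt^P\wedge dt^Q$ to appear with a factor of $2$ rather than $1$, and correspondingly affects the identification of $-\pd_v\cF$ with $e^{t^P}(f_0'+t^R)$ rather than with $-e^{t^P}(f_0'+t^R)$. I would make that sign tracking explicit to close the proof cleanly.
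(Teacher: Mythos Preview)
Your proposal is correct and follows exactly the approach the paper intends: the paper itself offers no argument beyond ``Now we easily check the following,'' and your direct substitution---computing $du\wedge dx$ and $dv\wedge dw$ from the definitions, cancelling the $(\tfrac{v}{u}f_0''-f_0')\,du\wedge dv$ cross-terms in $du\wedge d(\partial_u\cF)+dv\wedge d(\partial_v\cF)$, and then identifying $\partial_u\cF$ and $-\partial_v\cF$ with the two bracketed expressions in the preceding K\"ahler-potential proposition---is precisely that check written out in full. There is nothing to add or correct.
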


\begin{prop}
The holomorphic symplectic form $\Omega$ is parallel with respect to
the Levi-Civita connection of $h$.
\end{prop}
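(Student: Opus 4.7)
The plan is to verify $\nabla \Omega = 0$ by exploiting the fact, established in the previous proposition, that $\Omega = du\wedge dx - dv\wedge dw$ has \emph{constant} coefficients in the coordinates $(u,v,w,x)$. Since $\Omega$ is of type $(2,0)$ and holomorphic, and the Levi-Civita connection of a pseudo-K\"ahler metric coincides with the Chern connection, the antiholomorphic part $\nabla^{0,1}\Omega$ vanishes automatically because $\bar\partial \Omega = 0$. Hence the only nontrivial assertion is $\nabla_k \Omega_{ij} = 0$ for holomorphic directions, which reduces in constant-coefficient coordinates to the purely algebraic identity
\[
\Gamma^l_{ki}\,\Omega_{lj} + \Gamma^l_{kj}\,\Omega_{il} = 0,
\]
where $\Gamma^l_{ki} = h^{l\bar m}\partial_k h_{i\bar m}$ are the K\"ahler Christoffel symbols computed from the potential $K$.

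The most enlightening way to verify this identity is to recognize the structure of the previous subsection as the \emph{semi-flat hyperk\"ahler metric} on the cotangent bundle of a special K\"ahler manifold. In this picture the base coordinates $(u,v)$ parametrize a special K\"ahler base whose prepotential is the small-phase-space genus zero free energy $f_0$ of the quintic, while $(w,x)$ play the role of canonical cotangent fibre coordinates; the K\"ahler potential
\[
K = i\bigl[u\,\overline{\partial_u\cF} + v\,\overline{\partial_v\cF} - \bar u\,\partial_u\cF - \bar v\,\partial_v\cF\bigr]
\]
is then precisely Hitchin's hyperk\"ahler potential, and $\Omega = du\wedge dx - dv\wedge dw$ is the canonical holomorphic symplectic form on the cotangent bundle. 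My proposed proof therefore proceeds in three steps: (i) confirm that the Hessian of $K$ in $(u,v,w,x)$ reproduces the pseudo-Hermitian metric $h$ of the previous subsection (this is bookkeeping, since $K$ is affine linear in each of $x, \bar x, w, \bar w$); (ii) identify the K\"ahler form $\omega = -i\partial\bar\partial K$ together with $\operatorname{Re}\Omega$ and $\operatorname{Im}\Omega$ as a compatible hyperk\"ahler triple; (iii) invoke the standard fact that on a hyperk\"ahler manifold each member of the triple is preserved by the common Levi-Civita connection, whence $\nabla\Omega = 0$.

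The main obstacle will lie in step (ii). The metric $h_{i\bar j}$ does not decouple into a base block and a fibre block, because the pairings produced by $\partial_u\cF$ and $\partial_v\cF$ mix base and fibre directions; verifying the hyperk\"ahler compatibility therefore requires inverting $h_{i\bar j}$ and then checking that the almost complex structures $J, K$ determined from $\omega$ and $\Omega$ via $\omega(\cdot,\cdot) = h(\cdot, J\cdot)$ satisfy $JK = -KJ$. This is where the special K\"ahler identities on $f_0$, equivalently the Picard-Fuchs equation used in Section \ref{sec:Quintic-Frob} to obtain the periods $\omega_0,\ldots,\omega_3$, must be used in an essential way. Once the triple $(\omega,\operatorname{Re}\Omega,\operatorname{Im}\Omega)$ is certified as hyperk\"ahler, the parallelism of $\Omega$ follows without any further manipulation of Christoffel symbols, and the direct algebraic identity displayed above is recovered as a corollary.
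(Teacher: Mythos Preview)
Your outline is a recognizable route, but it contains a genuine misconception and leaves the actual work undone. You assert that step~(ii), the verification of the hyperk\"ahler triple, ``must'' use the special K\"ahler identities on $f_0$, ``equivalently the Picard--Fuchs equation.'' This is false: the parallelism of $\Omega$ (and in fact the whole semi-flat hyperk\"ahler structure) holds for an \emph{arbitrary} holomorphic function $f_0$; no differential equation satisfied by the quintic periods enters. The paper's proof makes this transparent by doing the direct computation you set aside. It writes out the Hermitian matrix $h = (\partial_i\bar\partial_{\bar j}K)$ in the coordinates $(u,v,w,x)$, observes that $h$ depends only on $u,v$ (so $\partial_w h = \partial_x h = 0$), inverts $h$ explicitly, and computes the Christoffel matrices $\partial_u h\cdot h^{-1}$ and $\partial_v h\cdot h^{-1}$. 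Because the upper-left $2\times 2$ block of $h^{-1}$ vanishes, these connection matrices are strictly upper-triangular with nonzero entries only in the $(w,x)$ columns; the entries are the third partials $\partial^3\cF/\partial u^a\partial u^b\partial u^c$. Substituting into $\nabla_\alpha(du\wedge dx - dv\wedge dw)$ one finds that the two cross-terms cancel by the symmetry of third partials. Nothing about $f_0$ beyond its holomorphy is used.

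Your proposed alternative---certifying the hyperk\"ahler triple and then invoking holonomy---could in principle be made to work, but you have not executed the algebraic check of the quaternionic relations $JK=-KJ$, and that check is of the same order of difficulty as the direct Christoffel computation. More importantly, the paper runs the logic in the opposite direction: it \emph{first} proves $\nabla\Omega=0$ by the computation above, and \emph{then} concludes the pseudo-hyperk\"ahler structure as a corollary (see the sentence immediately following the proposition). So as written, your step~(iii) would be circular relative to how the paper is organized. If you want to rescue your approach, drop the reference to Picard--Fuchs, carry out the quaternionic check explicitly using the inverted $h$, and note that it holds for any $f_0$.
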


\begin{proof}
Let us first compute the pseudo-Hermitian metric
in the new coordinates $u,v,w,x$:
\ben
\frac{\pd K}{\pd u} & = &
i \biggl[ \overline{\frac{\pd \cF}{\pd u}}
- \bar{u} \frac{\pd^2 \cF}{\pd u^2}
- \bar{v} \frac{\pd^2 \cF}{\pd u\pd v} \biggr], \\
\frac{\pd K}{\pd v} & = & i \biggl[
\overline{\frac{\pd \cF}{\pd v}}
- \bar{u} \frac{\pd^2 \cF}{\pd u\pd v}
- \bar{v} \frac{\pd^2 \cF}{\pd v^2} \biggr], \\
\frac{\pd K}{\pd w}
& = & i \bar{v}, \\
\frac{\pd K}{\pd x}
& = & - i  \bar{u},
\een

\begin{align*}
\frac{\pd^2 K}{\pd u \pd \bar{u}} & =
i \biggl[ \overline{\frac{\pd^2 \cF}{\pd u^2}}
- \frac{\pd^2 \cF}{\pd u^2}  \biggr], &
\frac{\pd^2 K}{\pd u \pd \bar{v}} & =
i \biggl[ \overline{\frac{\pd^2 \cF}{\pd u\pd v}}
- \frac{\pd^2 \cF}{\pd u\pd v}  \biggr], \\
\frac{\pd^2 K}{\pd u \pd \bar{w}} & =  0, &
\frac{\pd^2 K}{\pd u \pd \bar{x}} & =  i, \\
\frac{\pd^2 K}{\pd v\pd \bar{v}} & = i \biggl[
\overline{\frac{\pd^2 \cF}{\pd v^2}}
- \frac{\pd^2 \cF}{\pd v^2} \biggr], &
\frac{\pd^2 K}{\pd v\pd \bar{w}} & =  -i, \\
\frac{\pd^2 K}{\pd v\pd \bar{x}} & = 0, &
\frac{\pd^2 K}{\pd w\pd \bar{w}} & =  0, \\
\frac{\pd^2 K}{\pd w\pd \bar{x}} & = 0, &
\frac{\pd^2 K}{\pd x\pd \bar{x}} & = 0.
\end{align*}
So we get the matrix for the metric:
\ben
h=\begin{pmatrix}
i \biggl[ \overline{\frac{\pd^2 \cF}{\pd u^2}}
- \frac{\pd^2 \cF}{\pd u^2}  \biggr] &
i \biggl[ \overline{\frac{\pd^2 \cF}{\pd u\pd v}}
- \frac{\pd^2 \cF}{\pd u\pd v}  \biggr] &
0 & i \\
i \biggl[ \overline{\frac{\pd^2 \cF}{\pd u\pd v}}
- \frac{\pd^2 \cF}{\pd u\pd v}  \biggr] &
i \biggl[ \overline{\frac{\pd^2 \cF}{\pd v^2}}
- \frac{\pd^2 \cF}{\pd v^2} \biggr]  & -i & 0 \\
0 & i &  0 & 0 \\
-i & 0 & 0 & 0
\end{pmatrix}
\een
and its inverse matrix:
\ben
h^{-1}=\begin{pmatrix}
0 & 0 & 0 & i \\
0 & 0 & -i & 0 \\
0 & i &  -i \biggl[ \overline{\frac{\pd^2 \cF}{\pd v^2}}
- \frac{\pd^2 \cF}{\pd v^2} \biggr] & i \biggl[ \overline{\frac{\pd^2 \cF}{\pd u\pd v}}
- \frac{\pd^2 \cF}{\pd u\pd v}  \biggr] \\
-i & 0 & i \biggl[ \overline{\frac{\pd^2 \cF}{\pd u\pd v}}
- \frac{\pd^2 \cF}{\pd u\pd v}  \biggr] &
-i \biggl[ \overline{\frac{\pd^2 \cF}{\pd u^2}}
- \frac{\pd^2 \cF}{\pd u^2}  \biggr]
\end{pmatrix}.
\een

Next we compute the Levi-Civita connection.
First we have
\ben
\pd_u h=\begin{pmatrix}
- i \frac{\pd^3 \cF}{\pd u^3}  &
- i \frac{\pd^3 \cF}{\pd u^2\pd v}  &
0 & 0 \\
- i \frac{\pd^3 \cF}{\pd u^2\pd v}  &
- i \frac{\pd^3 \cF}{\pd u \pd v^2} &
0 & 0 \\
0 & 0 & 0 & 0 \\
0 & 0 & 0 & 0
\end{pmatrix},
\een
then we get:
\be
\pd_u h \cdot h^{-1}
= \begin{pmatrix}
0 & 0 & - \frac{\pd^3 \cF}{\pd u^2\pd v}
& \frac{\pd^3 \cF}{\pd u^3}   \\
0 & 0 & - \frac{\pd^3 \cF}{\pd u \pd v^2} &
\frac{\pd^3 \cF}{\pd u^2\pd v}   \\
0 & 0 & 0 & 0 \\
0 & 0 & 0 & 0
\end{pmatrix}.
\ee
This implies
\be
\nabla_{\frac{\pd}{\pd u}}
\begin{pmatrix}
\frac{\pd}{\pd u}  \\ \frac{\pd}{\pd v} \\
\frac{\pd}{\pd w} \\ \frac{\pd}{\pd x} \end{pmatrix}
= \begin{pmatrix}
0 & 0 & - \frac{\pd^3 \cF}{\pd u^2\pd v}
& \frac{\pd^3 \cF}{\pd u^3}   \\
0 & 0 & - \frac{\pd^3 \cF}{\pd u \pd v^2} &
\frac{\pd^3 \cF}{\pd u^2\pd v}   \\
0 & 0 & 0 & 0 \\
0 & 0 & 0 & 0
\end{pmatrix}
\cdot
\begin{pmatrix}
\frac{\pd}{\pd u}  \\ \frac{\pd}{\pd v} \\
\frac{\pd}{\pd w} \\ \frac{\pd}{\pd x} \end{pmatrix}.
\ee
and
\be
\nabla_{\frac{\pd}{\pd u}}
\begin{pmatrix}
du  \\ dv \\ dw \\ dx \end{pmatrix}
= \begin{pmatrix}
0 & 0 & 0 & 0 \\
0 & 0 & 0 & 0 \\
\frac{\pd^3 \cF}{\pd u^2\pd v}
& \frac{\pd^3 \cF}{\pd u \pd v^2} & 0 & 0 \\
-\frac{\pd^3 \cF}{\pd u^3} &
-\frac{\pd^3 \cF}{\pd u^2\pd v}
& 0 & 0
\end{pmatrix}
\cdot
\begin{pmatrix}
du  \\ dv \\ dw \\ dx \end{pmatrix}.
\ee
Therefore,
\ben
\nabla_{\frac{\pd}{\pd u}} \Omega
& = & \nabla_{\frac{\pd}{\pd u}} (d u \wedge dx - dv \wedge dw) \\
& = & \nabla_{\frac{\pd}{\pd u}} d u \wedge dx
+ du \wedge \nabla_{\frac{\pd}{\pd u}} dx
- \nabla_{\frac{\pd}{\pd u}} dv \wedge dw
- dv \wedge \nabla_{\frac{\pd}{\pd u}} dw \\
& = & 0 \wedge dx + du \wedge
\biggl(-\frac{\pd^3 \cF}{\pd u^3} du
-\frac{\pd^3 \cF}{\pd u^2\pd v} dv \biggr) \\
& - & 0 \wedge dw
- dv \wedge \biggl(\frac{\pd^3 \cF}{\pd u^2\pd v} du +
\frac{\pd^3 \cF}{\pd u \pd v^2} dv\biggr)   \\
& = & 0.
\een
The covariant derivative in $v$ can be computed similarly:
\ben
\pd_v h=\begin{pmatrix}
-i \frac{\pd^3 \cF}{\pd u^2\pd v}  &
- i \frac{\pd^3 \cF}{\pd u\pd v^2} &
0 & 0 \\
- i \frac{\pd^3 \cF}{\pd u\pd v^2}  &
- i \frac{\pd^3 \cF}{\pd v^3}  & 0 & 0 \\
0 & 0 & 0 & 0 \\
0 & 0 & 0 & 0
\end{pmatrix}
\een

\ben
\pd_v h \cdot h^{-1}
=\begin{pmatrix}
0 & 0  &
- \frac{\pd^3 \cF}{\pd u\pd v^2}
 & \frac{\pd^3 \cF}{\pd u^2\pd v}   \\
0 & 0 &
- \frac{\pd^3 \cF}{\pd v^3}  &
\frac{\pd^3 \cF}{\pd u\pd v^2}   \\
0 & 0 & 0 & 0 \\
0 & 0 & 0 & 0
\end{pmatrix}
\een

\be
\nabla_{\frac{\pd}{\pd v}}
\begin{pmatrix}
\frac{\pd}{\pd u}  \\ \frac{\pd}{\pd v} \\
\frac{\pd}{\pd w} \\ \frac{\pd}{\pd x} \end{pmatrix}
= \begin{pmatrix}
0 & 0  &
- \frac{\pd^3 \cF}{\pd u\pd v^2}
 & \frac{\pd^3 \cF}{\pd u^2\pd v}   \\
0 & 0 &
- \frac{\pd^3 \cF}{\pd v^3}  &
\frac{\pd^3 \cF}{\pd u\pd v^2}   \\
0 & 0 & 0 & 0 \\
0 & 0 & 0 & 0
\end{pmatrix}
\cdot
\begin{pmatrix}
\frac{\pd}{\pd u}  \\ \frac{\pd}{\pd v} \\
\frac{\pd}{\pd w} \\ \frac{\pd}{\pd x} \end{pmatrix}.
\ee

\be
\nabla_{\frac{\pd}{\pd v}}
\begin{pmatrix}
du  \\ dv \\ dw \\ dx \end{pmatrix}
= \begin{pmatrix}
0 & 0 & 0 & 0 \\
0 & 0 & 0 & 0 \\
\frac{\pd^3 \cF}{\pd u\pd v^2} & \frac{\pd^3 \cF}{\pd v^3}
& 0 & 0 \\
-\frac{\pd^3 \cF}{\pd u^2\pd v}  &
-  \frac{\pd^3 \cF}{\pd u\pd v^2} &
0 & 0
\end{pmatrix}
\cdot
\begin{pmatrix}
du  \\ dv \\ d w \\ d x \end{pmatrix}.
\ee

\ben
\nabla_{\frac{\pd}{\pd v}} \Omega
& = & \nabla_{\frac{\pd}{\pd v}} (d u \wedge dx
- dv \wedge dw) \\
& = & \nabla_{\frac{\pd}{\pd v}} d u \wedge dx
+ du \wedge \nabla_{\frac{\pd}{\pd v}} dx
- \nabla_{\frac{\pd}{\pd v}} dv \wedge dw
- dv \wedge \nabla_{\frac{\pd}{\pd v}} dw \\
& = & 0 \wedge dx + du \wedge
\biggl(-\frac{\pd^3 \cF}{\pd u^2\pd v} du
-\frac{\pd^3 \cF}{\pd u\pd v^2} dv \biggr) \\
& - & 0 \wedge dw
- dv \wedge \biggl(\frac{\pd^3 \cF}{\pd u\pd v^2} du +
\frac{\pd^3 \cF}{\pd v^3} dv\biggr)   \\
& = & 0.
\een
Finally, from
\ben
\pd_w h=\pd_x h=\begin{pmatrix}
0 & 0 & 0 & 0 \\
0 & 0 & 0 & 0 \\
0 & 0 & 0 & 0 \\
0 & 0 & 0 & 0
\end{pmatrix}
\een
we can compute the covariant derivatives in $w$ and $x$.
\end{proof}

The above two Propositions imply that there is a structure of pseodo-hyperk\"ahler
manifold on the small phase space
of the quintic. (We call $\cF$ in the hyperk\"ahler potential of this structure.)
This gives us the emergent special K\"ahler structure on the small phase
space of the quintic.
For mathematical reference on special K\"hler geometry,
see Freed \cite{Fre}.

\subsection{Compatibility with the Frobenius manifold structure}

Since we have computed the covariant derivative in the coordinates $u,v,w,x$,
we will also compute the quantum multiplications in these coordinates.
Note
\begin{align*}
t^P & = \log u, & t^Q & = \frac{v}{u}, &
t^R & = \frac{w}{u}, & t^S & = \frac{x}{u} - \frac{vw}{u^2}.
\end{align*}
So we have
\ben
&& \frac{\pd}{\pd t^P}= u \frac{\pd}{\pd u}
+ v\frac{\pd}{\pd v} + w \frac{\pd}{\pd w}
+ x\frac{\pd}{\pd x}, \\
&& \frac{\pd}{\pd t^Q}= u \frac{\pd}{\pd v}
+ w\frac{\pd}{\pd x}, \\
&& \frac{\pd}{\pd t^R}= u \frac{\pd}{\pd w}
+ v\frac{\pd}{\pd x}, \\
&& \frac{\pd}{\pd t^S}= u\frac{\pd}{\pd x}.
\een

\ben
&& \frac{\pd}{\pd u}
= \frac{1}{u} \frac{\pd}{\pd t^P}
- \frac{v}{u^2}\frac{\pd}{\pd t^Q}
- \frac{w}{u^2}\frac{\pd}{\pd t^R}
- (\frac{x}{u^2} - \frac{2vw}{u^3})
\frac{\pd}{\pd t^S}, \\
&& \frac{\pd}{\pd v} = \frac{1}{u} \frac{\pd}{\pd t^Q}
- \frac{w}{u^2} \frac{\pd}{\pd t^S}, \\
&& \frac{\pd}{\pd w} = \frac{1}{u} \frac{\pd}{\pd t^R}
- \frac{v}{u^2} \frac{\pd}{\pd t^S}, \\
&& \frac{\pd}{\pd x} = \frac{1}{u} \frac{\pd}{\pd t^S}.
\een
We get
\ben
&& \eta=\begin{pmatrix}
-\frac{2x}{u^3} + \frac{6vw}{u^4} & - \frac{2w}{u^3}
& - \frac{2v}{u^3} & \frac{1}{u^2} \\
- \frac{2w}{u^3} & 0 & \frac{1}{u^2} & 0 \\
- \frac{2v}{u^3} & \frac{1}{u^2} & 0 & 0 \\
\frac{1}{u^2} & 0 & 0 & 0
\end{pmatrix}
\een
and
\ben
&& \eta^{-1} = \begin{pmatrix}
0 & 0 & 0 & u^2 \\
0 & 0 & u^2 & 2uv \\
0 & u^2 & 0 & 2uw \\
u^2 & 2uv & 2uw & 2(ux-vw)
\end{pmatrix}
\een
Using the formulas for quantum multiplications in \S \ref{sec:Quan-Coh},
we get:
\be
\begin{split}
& \frac{\pd}{\pd u} \circ
\begin{pmatrix}
\frac{\pd}{\pd u}  \\ \frac{\pd}{\pd v} \\
\frac{\pd}{\pd w} \\ \frac{\pd}{\pd x} \end{pmatrix} \\
= & \begin{pmatrix}
\frac{1}{u} & -\frac{v}{u^2}  & - \frac{w}{u^2} +\frac{v^2}{u^3}f_0'''(\frac{v}{u})
&  \frac{v^3}{u^4}f_0'''(\frac{v}{u})-\frac{x}{u^2}+ \frac{2vw}{u^3} \\
0 & \frac{1}{u} & -\frac{v}{u^2}f_0'''(\frac{v}{u}) &
-\frac{w}{u^2} -\frac{v^2}{u^3}f_0'''(\frac{v}{u})   \\
0 & 0 & \frac{1}{u} & - \frac{v}{u^2} \\
0 & 0 & 0 & \frac{1}{u}
\end{pmatrix}
\cdot
\begin{pmatrix}
\frac{\pd}{\pd u}  \\ \frac{\pd}{\pd v} \\
\frac{\pd}{\pd w} \\ \frac{\pd}{\pd x} \end{pmatrix}.
\end{split}
\ee

\be \frac{\pd}{\pd v} \circ
\begin{pmatrix}
\frac{\pd}{\pd u}  \\ \frac{\pd}{\pd v} \\
\frac{\pd}{\pd w} \\ \frac{\pd}{\pd x} \end{pmatrix}
= \begin{pmatrix}
0 & \frac{1}{u} & -\frac{v}{u^2}f_0'''(\frac{v}{u}) &
-\frac{w}{u^2} -\frac{v^2}{u^3}f_0'''(\frac{v}{u})  \\
0 & 0 & \frac{1}{u}f_0'''(\frac{v}{u}) &
\frac{v}{u^2}f_0'''(\frac{v}{u}) \\
0 & 0 & 0 & \frac{1}{u} \\
0 & 0 & 0 & 0
\end{pmatrix}
\cdot
\begin{pmatrix}
\frac{\pd}{\pd u}  \\ \frac{\pd}{\pd v} \\
\frac{\pd}{\pd w} \\ \frac{\pd}{\pd x} \end{pmatrix}.
\ee

\be \frac{\pd}{\pd w} \circ
\begin{pmatrix}
\frac{\pd}{\pd u}  \\ \frac{\pd}{\pd v} \\
\frac{\pd}{\pd w} \\ \frac{\pd}{\pd x} \end{pmatrix}
= \begin{pmatrix}
0 & 0 & \frac{1}{u} & -\frac{v}{u^2}   \\
0 & 0 & 0 & \frac{1}{u}   \\
0 & 0 & 0 & 0 \\
0 & 0 & 0 & 0
\end{pmatrix}
\cdot
\begin{pmatrix}
\frac{\pd}{\pd u}  \\ \frac{\pd}{\pd v} \\
\frac{\pd}{\pd w} \\ \frac{\pd}{\pd x} \end{pmatrix}.
\ee

\be \frac{\pd}{\pd x} \circ
\begin{pmatrix}
\frac{\pd}{\pd u}  \\ \frac{\pd}{\pd v} \\
\frac{\pd}{\pd w} \\ \frac{\pd}{\pd x} \end{pmatrix}
= \begin{pmatrix}
0 & 0 & 0 & \frac{1}{u}   \\
0 & 0 & 0 & 0 \\
0 & 0 & 0 & 0 \\
0 & 0 & 0 & 0
\end{pmatrix}
\cdot
\begin{pmatrix}
\frac{\pd}{\pd u}  \\ \frac{\pd}{\pd v} \\
\frac{\pd}{\pd w} \\ \frac{\pd}{\pd x} \end{pmatrix}.
\ee
Denote by $C_u$, $C_v$, $C_w$ and $C_x$ the $4 \times 4$-matrices
on the right-hand sides of these equalities respectively.

\begin{thm}
For the quintic we have
\be
[\nabla_{\frac{\pd}{\pd \alpha}}, C_\beta] = [\nabla_{\frac{\pd}{\pd \beta}}, C_\alpha]
\ee
for $\alpha, \beta = u,v,w,x$.
\end{thm}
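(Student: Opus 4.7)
The plan is to recognize $[\nabla_{\pd/\pd \alpha}, C_\beta]$ as the covariant derivative $\nabla_\alpha C_\beta$ of the endomorphism $C_\beta$. Since the Levi-Civita connection $\nabla$ of $h$ is torsion-free, the identity $[\nabla_\alpha, C_\beta] = [\nabla_\beta, C_\alpha]$ is equivalent to the symmetry of the $(1,3)$-tensor $\nabla C$ in its first two lower indices, i.e.\ to the familiar special K\"ahler compatibility $(\nabla_\alpha C)^\epsilon_{\beta\gamma} = (\nabla_\beta C)^\epsilon_{\alpha\gamma}$. This reformulation is the natural starting point and replaces a statement about operator commutators by a tensorial one.

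Next I would exploit the structural features established in the previous subsections. Because the metric $h$ depends only on $u$ and $v$ through the prepotential $\cF$, the connection matrices $\nabla_{\pd/\pd w}$ and $\nabla_{\pd/\pd x}$ vanish, and all four matrices $C_u, C_v, C_w, C_x$ of quantum multiplication have entries depending only on $u$ and $v$. Moreover, both $C_w, C_x$ and the nonzero Christoffel matrices $M_u, M_v$ have their nonzero entries confined to the $2\times 2$ upper-right block of the $4\times 4$ layout. Any two matrices of that block-nilpotent shape multiply to zero, which forces $[M_\alpha, C_\beta] = 0$ whenever $\alpha \in \{u, v\}$ and $\beta \in \{w, x\}$.

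With these observations, the proof splits into cases. The diagonal $\alpha = \beta$ is trivial; when both $\alpha, \beta \in \{w, x\}$ both sides vanish identically; and when exactly one of $\alpha, \beta$ lies in $\{w, x\}$ the identity reduces to $\pd_\alpha C_\beta = \pd_\beta C_\alpha$, which one verifies by a brief inspection of the explicit formulas for $C_u, \dots, C_x$. The only substantial case is $(\alpha, \beta) = (u, v)$, where one must establish
\be
\pd_u C_v - \pd_v C_u \;=\; [M_v, C_u] - [M_u, C_v].
\ee
To handle this I would plug in the explicit $4\times 4$ matrices and invoke the identities
\ben
\cF_{uuu} &=& \tfrac{v^3}{u^4}\, f_0'''\!\left(\tfrac{v}{u}\right), \qquad
\cF_{uuv} = -\tfrac{v^2}{u^3}\, f_0'''\!\left(\tfrac{v}{u}\right), \\
\cF_{uvv} &=& \tfrac{v}{u^2}\, f_0'''\!\left(\tfrac{v}{u}\right), \qquad
\cF_{vvv} = -\tfrac{1}{u}\, f_0'''\!\left(\tfrac{v}{u}\right),
\een
which exhibit the third derivatives of $\cF$ as a rank-one symmetric tensor $\cF_{ijk} = c\, \xi_i\xi_j\xi_k$ with $\xi_u = -v/u$, $\xi_v = 1$, and $c = -f_0'''(v/u)/u$.

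The main obstacle will be the bookkeeping in the $(u,v)$ case: one must track numerous terms involving $f_0'''(v/u)$ together with derivatives of the rational factors $v/u$, $w/u^2$, $x/u^2$, etc., and verify that the contributions from the two connection commutators cancel precisely with those produced by $\pd_u C_v - \pd_v C_u$. The rank-one factorization of $\cF_{ijk}$ displayed above is what forces the cancellation; structurally it reflects the fact that the K\"ahler cone of the quintic is one-dimensional, so the entire nontrivial content of the quantum product is carried by the single function $f_0(v/u)$.
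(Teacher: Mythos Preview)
Your proposal is correct and in fact more complete than the paper's own argument. The paper proceeds by brute-force case-by-case computation: for each pair $(\alpha,\beta)$ it writes out the matrix of $[\nabla_{\pd/\pd\alpha},C_\beta]$ and checks that it agrees with $[\nabla_{\pd/\pd\beta},C_\alpha]$. In the easy cases this amounts to your block-structure reasoning. In the only substantive case $(\alpha,\beta)=(u,v)$, however, the paper simply records $\nabla_{\pd/\pd u}C_v=\pd_u C_v$ and $\nabla_{\pd/\pd v}C_u=\pd_v C_u$ and verifies that these two partial-derivative matrices coincide. This silently drops the Christoffel commutators $[\Gamma_u,C_v]$ and $[\Gamma_v,C_u]$, which are \emph{not} individually zero: each has three nonvanishing entries proportional to $f_0'''(v/u)$. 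The theorem still holds because those two commutators happen to be equal to one another, but the paper never checks this.

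Your organization addresses this point honestly. You keep the Christoffel contributions in the $(u,v)$ case, and your observation that the third derivatives of the prepotential factor as $\cF_{ijk}=c\,\xi_i\xi_j\xi_k$ with $\xi=(-v/u,1,0,0)$ is exactly what makes $[\Gamma_u,C_v]=[\Gamma_v,C_u]$ transparent without grinding through entries. So the paper's proof is a direct verification with a gap at the key step, while your approach via block-nilpotency plus the rank-one structure of $\cF_{ijk}$ is both cleaner and actually closes that gap. One minor remark: with the paper's conventions the commutator of the covariant derivative with an endomorphism reads $[\nabla_\alpha,C_\beta]=\pd_\alpha C_\beta-[\Gamma_\alpha,C_\beta]$, so the displayed identity in your $(u,v)$ step should carry the opposite sign; since both sides vanish separately this has no effect on the argument.
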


\begin{proof}
These can be verified by explicit computations below:
\ben
&& [\nabla_{\frac{\pd}{\pd w}}, C_x] =\frac{\pd}{\pd w}\begin{pmatrix}
0 & 0 & 0 & \frac{1}{u}   \\
0 & 0 & 0 & 0 \\
0 & 0 & 0 & 0 \\
0 & 0 & 0 & 0
\end{pmatrix}
=\begin{pmatrix}
0 & 0 & 0 & 0 \\
0 & 0 & 0 & 0 \\
0 & 0 & 0 & 0 \\
0 & 0 & 0 & 0
\end{pmatrix}, \\
&& [\nabla_{\frac{\pd}{\pd x}}, C_w]
= \frac{\pd}{\pd x}\begin{pmatrix}
0 & 0 & \frac{1}{u} & -\frac{v}{u^2}   \\
0 & 0 & 0 & \frac{1}{u}   \\
0 & 0 & 0 & 0 \\
0 & 0 & 0 & 0
\end{pmatrix}
= \begin{pmatrix}
0 & 0 & 0 & 0 \\
0 & 0 & 0 & 0 \\
0 & 0 & 0 & 0 \\
0 & 0 & 0 & 0
\end{pmatrix}, \een

\ben
[\nabla_{\frac{\pd}{\pd v}}, C_x]
& = & \frac{\pd}{\pd v} \begin{pmatrix}
0 & 0 & 0 & \frac{1}{u}   \\
0 & 0 & 0 & 0 \\
0 & 0 & 0 & 0 \\
0 & 0 & 0 & 0
\end{pmatrix}
=\begin{pmatrix}
0 & 0 & 0 & 0 \\
0 & 0 & 0 & 0 \\
0 & 0 & 0 & 0 \\
0 & 0 & 0 & 0
\end{pmatrix}, \een
\ben
[\nabla_{\frac{\pd}{\pd x}}, C_v]
& = & \frac{\pd}{\pd x} \begin{pmatrix}
0 & \frac{1}{u} & -\frac{v}{u^2}f_0'''(\frac{v}{u}) &
-\frac{w}{u^2} -\frac{v^2}{u^3}f_0'''(\frac{v}{u})  \\
0 & 0 & \frac{1}{u}f_0'''(\frac{v}{u}) &
\frac{v}{u^2}f_0'''(\frac{v}{u}) \\
0 & 0 & 0 & \frac{1}{u} \\
0 & 0 & 0 & 0
\end{pmatrix} \\
& = & \begin{pmatrix}
0 & 0 & 0 & 0 \\
0 & 0 & 0 & 0 \\
0 & 0 & 0 & 0 \\
0 & 0 & 0 & 0
\end{pmatrix}.
\een

\ben
&& [\nabla_{\frac{\pd}{\pd u}}, C_x]
= \frac{\pd}{\pd u}
\begin{pmatrix}
0 & 0 & 0 & \frac{1}{u}   \\
0 & 0 & 0 & 0 \\
0 & 0 & 0 & 0 \\
0 & 0 & 0 & 0
\end{pmatrix}
= \begin{pmatrix}
0 & 0 & 0 & -\frac{1}{u^2}   \\
0 & 0 & 0 & 0 \\
0 & 0 & 0 & 0 \\
0 & 0 & 0 & 0
\end{pmatrix},
\een
\ben
[\nabla_{\frac{\pd}{\pd x}}, C_u]
& = & \frac{\pd}{\pd x}\begin{pmatrix}
\frac{1}{u} & -\frac{v}{u^2}  & - \frac{w}{u^2} +\frac{v^2}{u^3}f_0'''(\frac{v}{u})
&  \frac{v^3}{u^4}f_0'''(\frac{v}{u})-\frac{x}{u^2}+ \frac{2vw}{u^3} \\
0 & \frac{1}{u} & -\frac{v}{u^2}f_0'''(\frac{v}{u}) &
-\frac{w}{u^2} -\frac{v^2}{u^3}f_0'''(\frac{v}{u})   \\
0 & 0 & \frac{1}{u} & - \frac{v}{u^2} \\
0 & 0 & 0 & \frac{1}{u}
\end{pmatrix} \\
&= &\begin{pmatrix}
0 & 0 & 0 & -\frac{1}{u^2}  \\
0 & 0 & 0 & 0 \\
0 & 0 & 0 & 0 \\
0 & 0 & 0 & 0
\end{pmatrix}
\een

\ben
[\nabla_{\frac{\pd}{\pd v}}, C_w]
& = & \frac{\pd}{\pd v} \begin{pmatrix}
0 & 0 & \frac{1}{u} & -\frac{v}{u^2}   \\
0 & 0 & 0 & \frac{1}{u}   \\
0 & 0 & 0 & 0 \\
0 & 0 & 0 & 0
\end{pmatrix}
= \begin{pmatrix}
0 & 0 & 0 & -\frac{1}{u^2}   \\
0 & 0 & 0 & 0   \\
0 & 0 & 0 & 0 \\
0 & 0 & 0 & 0
\end{pmatrix}, \een
\ben
[\nabla_{\frac{\pd}{\pd w}}, C_v]
& = & \frac{\pd}{\pd w} \begin{pmatrix}
0 & \frac{1}{u} & -\frac{v}{u^2}f_0'''(\frac{v}{u}) &
-\frac{w}{u^2} -\frac{v^2}{u^3}f_0'''(\frac{v}{u})  \\
0 & 0 & \frac{1}{u}f_0'''(\frac{v}{u}) &
\frac{v}{u^2}f_0'''(\frac{v}{u}) \\
0 & 0 & 0 & \frac{1}{u} \\
0 & 0 & 0 & 0
\end{pmatrix} \\
& = & \begin{pmatrix}
0 & 0 & 0 & -\frac{1}{u^2}  \\
0 & 0 & 0 & 0 \\
0 & 0 & 0 & 0 \\
0 & 0 & 0 & 0
\end{pmatrix}.
\een

\ben
\nabla_{\frac{\pd}{\pd u}} C_w
& = & \frac{\pd}{\pd u} \begin{pmatrix}
0 & 0 & \frac{1}{u} & -\frac{v}{u^2}   \\
0 & 0 & 0 & \frac{1}{u}   \\
0 & 0 & 0 & 0 \\
0 & 0 & 0 & 0
\end{pmatrix}
= \begin{pmatrix}
0 & 0 & -\frac{1}{u^2} & \frac{2v}{u^3}   \\
0 & 0 & 0 & -\frac{1}{u^2} \\
0 & 0 & 0 & 0 \\
0 & 0 & 0 & 0
\end{pmatrix},
\een

\ben
\nabla_{\frac{\pd}{\pd w}} C_u
& = & \frac{\pd}{\pd w} \begin{pmatrix}
\frac{1}{u} & -\frac{v}{u^2}  & - \frac{w}{u^2} +\frac{v^2}{u^3}f_0'''(\frac{v}{u})
&  \frac{v^3}{u^4}f_0'''(\frac{v}{u})-\frac{x}{u^2}+ \frac{2vw}{u^3} \\
0 & \frac{1}{u} & -\frac{v}{u^2}f_0'''(\frac{v}{u}) &
-\frac{w}{u^2} -\frac{v^2}{u^3}f_0'''(\frac{v}{u})   \\
0 & 0 & \frac{1}{u} & - \frac{v}{u^2} \\
0 & 0 & 0 & \frac{1}{u}
\end{pmatrix} \\
& = & \begin{pmatrix}
0 & 0 &  -\frac{1}{u^2} & \frac{2v}{u^3}  \\
0 & 0 & 0 & -\frac{1}{u^2} \\
0 & 0 & 0 & 0 \\
0 & 0 & 0 & 0
\end{pmatrix}.
\een

\ben
&& \nabla_{\frac{\pd}{\pd u}} C_v
=\frac{\pd}{\pd u} \begin{pmatrix}
0 & \frac{1}{u} & -\frac{v}{u^2}f_0'''(\frac{v}{u}) &
-\frac{w}{u^2} -\frac{v^2}{u^3}f_0'''(\frac{v}{u})  \\
0 & 0 & \frac{1}{u}f_0'''(\frac{v}{u}) &
\frac{v}{u^2}f_0'''(\frac{v}{u}) \\
0 & 0 & 0 & \frac{1}{u} \\
0 & 0 & 0 & 0
\end{pmatrix} \\
& = & \begin{pmatrix}
0 & -\frac{1}{u^2} & \frac{2v}{u^3}f_0'''(\frac{v}{u})+\frac{v^2}{u^4}f_0^{(4)}(\frac{v}{u}) &
\frac{2w}{u^3}+\frac{3v^2}{u^4}f_0'''(\frac{v}{u}) + \frac{v^3}{u^5}f_0^{(4)}(\frac{v}{u}) \\
0 & 0 & -\frac{1}{u^2}f_0'''(\frac{v}{u})- \frac{v}{u^3}f_0^{(4)}(\frac{v}{u})
& -\frac{2v}{u^3}f_0'''(\frac{v}{u})-\frac{v^2}{u^4}f_0^{(4)}(\frac{v}{u}) \\
0 & 0 & 0 & -\frac{1}{u^2}  \\
0 & 0 & 0 & 0
\end{pmatrix},
\een

\ben
&& \nabla_{\frac{\pd}{\pd v}} C_u
= \frac{\pd}{\pd v} \begin{pmatrix}
\frac{1}{u} & -\frac{v}{u^2}  & - \frac{w}{u^2} +\frac{v^2}{u^3}f_0'''(\frac{v}{u})
&  \frac{v^3}{u^4}f_0'''(\frac{v}{u})-\frac{x}{u^2}+ \frac{2vw}{u^3} \\
0 & \frac{1}{u} & -\frac{v}{u^2}f_0'''(\frac{v}{u}) &
-\frac{w}{u^2} -\frac{v^2}{u^3}f_0'''(\frac{v}{u})   \\
0 & 0 & \frac{1}{u} & - \frac{v}{u^2} \\
0 & 0 & 0 & \frac{1}{u}
\end{pmatrix} \\
& = &  \begin{pmatrix}
0 & -\frac{1}{u^2}  &  \frac{2v}{u^3}f_0'''(\frac{v}{u}) + \frac{v^2}{u^4} f_0^{(4)}(\frac{v}{u})
&  \frac{3v^2}{u^4}f_0'''(\frac{v}{u}) + \frac{v^3}{u^5}f_0^{(4)}(\frac{v}{u}) + \frac{2w}{u^3} \\
0 & 0 & -\frac{1}{u^2}f_0'''(\frac{v}{u})-\frac{v}{u^3}f_0^{(4)}(\frac{v}{u}) &
-\frac{2v}{u^3}f_0'''(\frac{v}{u}) -\frac{v^2}{u^4}f_0^{(4)}(\frac{v}{u})  \\
0 & 0 & 0 & - \frac{1}{u^2} \\
0 & 0 & 0 & 0
\end{pmatrix} \\
\een
\end{proof}

\subsection{The $\bar{C}$'s}

To understand the emergent $tt^*$-geometry,
let us first compute the action of the real structure on
the basis $\frac{\pd}{\pd u}$, $\frac{\pd}{\pd v}$, $\frac{\pd}{\pd w}$ and $\frac{\pd}{\pd x}$.
Note
\ben
&& \frac{\pd}{\pd x} = \frac{1}{u} \frac{\pd}{\pd t^S} = \frac{1}{u} s_S,
\een
and so
\ben
s_S = u \frac{\pd}{\pd x}.
\een
Now we get:
\be
\tau(\frac{\pd}{\pd x}) = \frac{1}{\bar{u}} s_S = \frac{u}{\bar{u}} \frac{\pd}{\pd x}.
\ee
Similarly,

\ben
&& \frac{\pd}{\pd w} = \frac{1}{u} \frac{\pd}{\pd t^R}
- \frac{v}{u^2} \frac{\pd}{\pd t^S}
= \frac{1}{u} (s_R+ \frac{v}{u} s_S)  - \frac{v}{u^2} s_S
= \frac{1}{u} s_R,
\een
so we get
\be
s_R = u \frac{\pd}{\pd w}
\ee
and
\be
\tau(\frac{\pd}{\pd w}) = \frac{1}{\bar{u}}s_R = \frac{u}{\bar{u}} \frac{\pd}{\pd w}.
\ee
The formula $\frac{\pd}{\pd v}$ involves more terms:
\ben
\frac{\pd}{\pd v} & = & \frac{1}{u} \frac{\pd}{\pd t^Q}
- \frac{w}{u^2} \frac{\pd}{\pd t^S} \\
& = & \frac{1}{u} \biggl( s_Q  + \frac{\pd^2 f_0(t^Q)}{\pd (t^Q)^2}s_R
+ \biggl(t^Q \frac{\pd^2 f_0(t^Q)}{\pd (t^Q)^2}- \frac{\pd f_0(t^Q)}{\pd t^Q}\biggr) s_S\biggr)
- \frac{w}{u^2} s_S \\
& = & \frac{1}{u} s_Q + \frac{1}{u} f_0''(\frac{v}{u}) s_R
+ \biggl(\frac{v}{u^2} f_0''(\frac{v}{u})- \frac{1}{u}f_0'(\frac{v}{u})
- \frac{w}{u^2} \biggr) s_S,
\een
and so
\ben
s_Q & = & u \frac{\pd}{\pd v} - f_0''(\frac{v}{u}) s_R
- \biggl(\frac{v}{u} f_0''(\frac{v}{u})- f_0'(\frac{v}{u})- \frac{w}{u} \biggr) s_S \\
& = & u \frac{\pd}{\pd v} -u f_0''(\frac{v}{u}) \frac{\pd}{\pd w}
- \biggl(v f_0''(\frac{v}{u})- u f_0'(\frac{v}{u})- w \biggr) \frac{\pd}{\pd x}.
\een

\ben
\tau(\frac{\pd}{\pd v})
& = &\frac{1}{\bar{u}} s_Q + \frac{1}{\bar{u}} \overline{f_0''(\frac{v}{u})} s_R
+ \biggl(\frac{\bar{v}}{\bar{u}^2} \overline{f_0''(\frac{v}{u})}
- \frac{1}{\bar{u}}\overline{f_0'(\frac{v}{u})}- \frac{\bar{w}}{\bar{u}^2} \biggr) s_S \\
& = & \frac{1}{\bar{u}} \biggl(u \frac{\pd}{\pd v} -u f_0''(\frac{v}{u}) \frac{\pd}{\pd w}
- \biggl(v f_0''(\frac{v}{u})- u f_0'(\frac{v}{u})- w \biggr) \frac{\pd}{\pd x}\biggr) \\
& + & \frac{1}{\bar{u}} \overline{f_0''(\frac{v}{u})} \cdot u \frac{\pd}{\pd w}
+ \biggl(\frac{\bar{v}}{\bar{u}^2} \overline{f_0''(\frac{v}{u})}
- \frac{1}{\bar{u}}\overline{f_0'(\frac{v}{u})}- \frac{\bar{w}}{\bar{u}^2} \biggr) u\frac{\pd}{\pd x} \\
& = & \frac{u}{\bar{u}} \frac{\pd}{\pd v}
- \frac{u}{\bar{u}} \biggl(f_0''(\frac{v}{u})- \overline{f_0''(\frac{v}{u})}
\biggr)\frac{\pd}{\pd w} \\
& - & \frac{u}{\bar{u}}
\biggl(\frac{v}{u} f_0''(\frac{v}{u})- f_0'(\frac{v}{u})- \frac{w}{u}
- \overline{\frac{v}{u} f_0''(\frac{v}{u})- f_0'(\frac{v}{u})- \frac{w}{u} }\biggr)
\frac{\pd}{\pd x}.
\een
Finally,
\ben
\frac{\pd}{\pd u}
& = & \frac{1}{u} \frac{\pd}{\pd t^P}
- \frac{v}{u^2}\frac{\pd}{\pd t^Q}
- \frac{w}{u^2}\frac{\pd}{\pd t^R}
- (\frac{x}{u^2} - \frac{2vw}{u^3})
\frac{\pd}{\pd t^S} \\
& = & \frac{1}{u} \biggl(s_P  + t^Qs_Q
+ \frac{\pd f_0(t^Q)}{\pd t^Q} s_R
+ \biggl(t^Q\frac{\pd f_0(t^Q)}{\pd t^Q} - 2 f_0(t^Q) \biggr)
s_S \biggr) \\
& - & \frac{v}{u^2} \biggl( s_Q  + \frac{\pd^2 f_0(t^Q)}{\pd (t^Q)^2}s_R
+ \biggl(t^Q \frac{\pd^2 f_0(t^Q)}{\pd (t^Q)^2}- \frac{\pd f_0(t^Q)}{\pd t^Q}\biggr) s_S
\biggr) \\
& - & \frac{w}{u^2} \biggl( s_R + t^Q s_S \biggr)
- (\frac{x}{u^2} - \frac{2vw}{u^3})
\frac{\pd}{\pd t^S} \\
& = & \frac{1}{u} s_P
- \biggl(\frac{v}{u^2} f_0''(\frac{v}{u})- \frac{1}{u} f_0'(\frac{v}{u})
+ \frac{w}{u^2} \biggr) s_R \\
& - & \biggl( \frac{v^2}{u^3} f_0''(\frac{v}{u})
- \frac{2v}{u^2}f_0'(\frac{v}{u})
+ \frac{2}{u} f_0(\frac{v}{u})
+\frac{x}{u^2} - \frac{v w}{u^3} \biggr) s_S.
\een

\ben
s_P
& = & u \biggl[\frac{\pd}{\pd u}
+ \biggl(\frac{v}{u^2} f_0''(\frac{v}{u})- \frac{1}{u} f_0'(\frac{v}{u})
+ \frac{w}{u^2} \biggr) s_R \\
& + & \biggl( \frac{v^2}{u^3} f_0''(\frac{v}{u})
- \frac{2v}{u^2}f_0'(\frac{v}{u})
+ \frac{2}{u} f_0(\frac{v}{u})
+\frac{x}{u^2} - \frac{v w}{u^3} \biggr) s_S \biggr] \\
& = & u \biggl[\frac{\pd}{\pd u}
+ \biggl(\frac{v}{u^2} f_0''(\frac{v}{u})- \frac{1}{u} f_0'(\frac{v}{u})
+ \frac{w}{u^2} \biggr) u\frac{\pd}{\pd w} \\
& + & \biggl( \frac{v^2}{u^3} f_0''(\frac{v}{u})
- \frac{2v}{u^2}f_0'(\frac{v}{u})
+ \frac{2}{u} f_0(\frac{v}{u})
+\frac{x}{u^2} - \frac{v w}{u^3} \biggr) u\frac{\pd}{\pd x} \biggr].
\een

\ben
\tau(\frac{\pd}{\pd u})
& = & \frac{1}{\bar{u}} s_P
- \overline{\biggl(\frac{v}{u^2} f_0''(\frac{v}{u})- \frac{1}{u} f_0'(\frac{v}{u})
+ \frac{w}{u^2} \biggr)} s_R \\
& - & \overline{\biggl( \frac{v^2}{u^3} f_0''(\frac{v}{u})
- \frac{2v}{u^2}f_0'(\frac{v}{u})
+ \frac{2}{u} f_0(\frac{v}{u})
+\frac{x}{u^2} - \frac{v w}{u^3} \biggr)} s_S \\
& = & \frac{u}{\bar{u}} \biggl[\frac{\pd}{\pd u}
+ \biggl(\frac{v}{u^2} f_0''(\frac{v}{u})- \frac{1}{u} f_0'(\frac{v}{u})
+ \frac{w}{u^2} \biggr) u\frac{\pd}{\pd w} \\
& + & \biggl( \frac{v^2}{u^3} f_0''(\frac{v}{u})
- \frac{2v}{u^2}f_0'(\frac{v}{u})
+ \frac{2}{u} f_0(\frac{v}{u})
+\frac{x}{u^2} - \frac{v w}{u^3} \biggr) u\frac{\pd}{\pd x} \biggr] \\
& - & \overline{\biggl(\frac{v}{u^2} f_0''(\frac{v}{u})- \frac{1}{u} f_0'(\frac{v}{u})
+ \frac{w}{u^2} \biggr)} u \frac{\pd}{\pd w} \\
& - & \overline{\biggl( \frac{v^2}{u^3} f_0''(\frac{v}{u})
- \frac{2v}{u^2}f_0'(\frac{v}{u})
+ \frac{2}{u} f_0(\frac{v}{u})
+\frac{x}{u^2} - \frac{v w}{u^3} \biggr)} u \frac{\pd}{\pd x}.
\een
To summarize,
we get:
\ben
\tau\begin{pmatrix}
\frac{\pd}{\pd u} \\ \frac{\pd}{\pd v} \\
\frac{\pd}{\pd w} \\ \frac{\pd}{\pd x}
\end{pmatrix}
=M \begin{pmatrix}
\frac{\pd}{\pd u} \\ \frac{\pd}{\pd v} \\
\frac{\pd}{\pd w} \\ \frac{\pd}{\pd x}
\end{pmatrix}
=  \frac{u}{\bar{u}} \begin{pmatrix}
1 & 0 & a - \bar{a} & b - \bar{b} \\
0 & 1 & c - \bar{c} & d - \bar{d} \\
0 & 0 & 1 & 0 \\
0 & 0 & 0 & 1
\end{pmatrix} \cdot
\begin{pmatrix}
\frac{\pd}{\pd u} \\ \frac{\pd}{\pd v} \\
\frac{\pd}{\pd w} \\ \frac{\pd}{\pd x}
\end{pmatrix},
\een
where
\ben
&& a = \frac{v}{u} f_0''(\frac{v}{u})- f_0'(\frac{v}{u})
+ \frac{w}{u}, \\
&& b = \frac{v^2}{u^2} f_0''(\frac{v}{u})
- \frac{2v}{u}f_0'(\frac{v}{u})
+ 2f_0(\frac{v}{u})
+\frac{x}{u} - \frac{v w}{u^2}, \\
&& c = - f_0''(\frac{v}{u}), \\
&& d = - \biggl(\frac{v}{u} f_0''(\frac{v}{u})-f_0'(\frac{v}{u})-\frac{w}{u}\biggr).
\een

It is easy to check that:
\be
\overline{M}M = I.
\ee

\ben
\tau C_x \tau \begin{pmatrix}
\frac{\pd}{\pd u} \\ \frac{\pd}{\pd v} \\
\frac{\pd}{\pd w} \\ \frac{\pd}{\pd x}
\end{pmatrix}
& = & M\overline{C_x} \cdot \overline{M} \begin{pmatrix}
\frac{\pd}{\pd u} \\ \frac{\pd}{\pd v} \\
\frac{\pd}{\pd w} \\ \frac{\pd}{\pd x}
\end{pmatrix}
=\tau
\begin{pmatrix}
\frac{1}{\bar{u}} \frac{\pd}{\pd x} \\ 0 \\ 0 \\ 0
\end{pmatrix}
= \begin{pmatrix}
\frac{1}{u} \cdot \frac{u}{\bar{u}} \frac{\pd}{\pd x} \\ 0 \\ 0 \\ 0
\end{pmatrix} \\
& = & \begin{pmatrix}
0 & 0 & 0 & \frac{1}{\bar{u}}   \\
0 & 0 & 0 & 0 \\
0 & 0 & 0 & 0 \\
0 & 0 & 0 & 0
\end{pmatrix}
\cdot \begin{pmatrix}
\frac{\pd}{\pd u} \\ \frac{\pd}{\pd v} \\
\frac{\pd}{\pd w} \\ \frac{\pd}{\pd x}
\end{pmatrix},
\een

\be
\bar{C}_{\bar{x}}
= \begin{pmatrix}
0 & 0 & 0 & \frac{1}{\bar{u}}   \\
0 & 0 & 0 & 0 \\
0 & 0 & 0 & 0 \\
0 & 0 & 0 & 0
\end{pmatrix}
\ee

\be
\bar{C}_{\bar{w}} 
= \begin{pmatrix}
0 & 0 & \frac{1}{\bar{u}} & -\frac{\bar{v}}{\bar{u}^2}   \\
0 & 0 & 0 & \frac{1}{\bar{u}}   \\
0 & 0 & 0 & 0 \\
0 & 0 & 0 & 0
\end{pmatrix}
\ee

\be
\bar{C}_{\bar{v}} 
= \begin{pmatrix}
0 & \frac{1}{\bar{u}} & -\frac{\bar{v}}{\bar{u}^2}\overline{f_0'''(\frac{v}{u})}
-\frac{C}{\bar{u}} &
-\frac{\bar{w}}{\bar{u}^2} -\frac{\bar{v}^2}{\bar{u}^3}\overline{f_0'''(\frac{v}{u})}
+ \frac{A-D}{\bar{u}}  \\
0 & 0 & \frac{1}{\bar{u}}\overline{f_0'''(\frac{v}{u})} &
\frac{\bar{v}}{\bar{u}^2}\overline{f_0'''(\frac{v}{u})}
+\frac{C}{\bar{u}}   \\
0 & 0 & 0 & \frac{1}{\bar{u}}  \\
0 & 0 & 0 & 0
\end{pmatrix}
\ee
\be
\bar{C}_{\bar{u}} 
= \begin{pmatrix}
\frac{1}{\bar{u}} & -\frac{\bar{v}}{\bar{u}^2}&
-\frac{\bar{w}}{\bar{u}^2}+\frac{\bar{v}^2}{\bar{u}^3}\overline{f_0'''(\frac{v}{u})}
+\frac{C\bar{v}}{\bar{u}^2} &
-\frac{\bar{x}}{\bar{u}^2} + \frac{\bar{v}^3}{\bar{u}^4}\overline{f_0'''(\frac{v}{u})}
+ \frac{2\bar{v}\bar{w}}{\bar{u}^3}-\frac{A-D}{\bar{u}^2}  \\
0 & \frac{1}{\bar{u}} & -\frac{\bar{v}}{\bar{u}}\overline{f_0'''(\frac{v}{u})} &
-\frac{\bar{w}}{\bar{u}^2}-\frac{\bar{v}^2}{\bar{u}^3}\overline{f_0'''(\frac{v}{u})}
-\frac{C\bar{v}}{\bar{u}^2}  \\
0 & 0 & \frac{1}{\bar{u}}  & -\frac{\bar{v}}{\bar{u}^2} \\
0 & 0 & 0 & \frac{1}{\bar{u}}
\end{pmatrix}
\ee
where $A=a-\bar{a}$, etc.
One can check that
\be
[C_\alpha, \bar{C}_{\bar{\beta}} ] = 0
\ee
except for
\ben
&& [C_v, \bar{C}_{\bar{v}}]
= \frac{1}{|u|^2} \begin{pmatrix}
0 & 0 & \overline{f_0'''(\frac{v}{u})}-f_0'''(\frac{v}{u}) &
 \overline{\frac{v}{u}f_0'''(\frac{v}{u})}-\frac{v}{u} f_0'''(\frac{v}{u})  \\
0 & 0 & 0 &  f_0'''(\frac{v}{u})-\overline{f_0'''(\frac{v}{u})}  \\
0 & 0 & 0 & 0 \\
0 & 0 & 0 & 0
\end{pmatrix}
\een
and three other cases:
$[C_v, \bar{C}_{\bar{u}}]$,  $[C_u, \bar{C}_{\bar{v}}]$ and $[C_u, \bar{C}_{\bar{u}}]$.

\subsection{The change of coordinate on the complexified K\"ahler moduli space}

The variable $t^Q$ is understood as the coordinate
on the complexified K\"ahler moduli space.
Introduce a new coordinate
\be
x=\exp (2\pi i t^Q).
\ee
Then we have
\be
\frac{\pd}{\pd t^Q} = x \frac{\pd}{\pd x}.
\ee
When one circles around $x=0$ in the $x$-plane,
$t^Q$ is changed to $t^Q+1$.
The flat sections undergo the following changes:
\ben
\tilde{s}_S & = & \frac{\pd}{\pd t^S} = s_S, \\
\tilde{s}_R & = & \frac{\pd}{\pd t^R} - (t^Q+1) \frac{\pd}{\pd t^S}
= s_R -  s_S, \\
\tilde{s}_Q & = & \frac{\pd}{\pd t^Q} - \frac{\pd^2 f_0(t^Q+1)}{\pd (t^Q)^2}
\frac{\pd}{\pd t^R} + \frac{\pd f_0(t^Q+1)}{\pd t^Q} \frac{\pd}{\pd t^S} \\
& = & s_Q - 5\frac{\pd}{\pd t^R} + \frac{5}{2} (2t^Q+1) \frac{\pd}{\pd t^S} \\
& = & s_Q - 5 s_R + \frac{5}{2} s_S, \\
\tilde{s}_P & = & \frac{\pd}{\pd t^P} - (t^Q+1)\frac{\pd}{\pd t^Q}
+ \biggl( (t^Q+1) \frac{\pd^2f_0(t^Q+1)}{\pd (t^Q)^2}
-\frac{\pd f_0(t^Q+1)}{\pd t^Q} \biggr)  \frac{\pd}{\pd t^R} \\
& - & \biggl((t^Q+1)\frac{\pd f_0(t^Q+1)}{\pd t^Q} - 2 f_0(t^Q+1) \biggr)
\frac{\pd}{\pd t^S} \\
& = & s_P - \frac{\pd}{\pd t^Q}
+ \biggl(5t^Q+\frac{5}{2}\biggr) \frac{\pd}{\pd t^R}
- \biggl(\frac{5}{2}(t^Q)^2+\frac{5}{2}t^Q+\frac{5}{6}\biggr)\frac{\pd}{\pd t^S} \\
& = & s_P-s_Q+ \frac{5}{2}s_R - \frac{5}{6}s_S.
\een
Written in matrix form:
\be
\begin{pmatrix}
\tilde{s}_S \\ \tilde{s}_R \\ \tilde{s}_Q \\ \tilde{s}_P
\end{pmatrix}
= \begin{pmatrix}
1 & 0 & 0 & 0 \\
-1 & 1 & 0 & 0 \\
\frac{5}{2} & -5 & 1 & 0 \\
-\frac{5}{6} & \frac{5}{2} & -1 & 1
\end{pmatrix}
\cdot
\begin{pmatrix} s_S \\ s_R \\ s_Q \\ s_P \end{pmatrix}.
\ee
The monodromy matrix is the exponential of the following nilpotent matrix:
\be
\begin{pmatrix}
0 & 0 & 0 & 0 \\
-1 & 0 & 0 & 0 \\
0 & -5 & 0 & 0 \\
0 & 0 & -1 & 0
\end{pmatrix}.
\ee
This matrix is the matrix of the cup product with $-Q$:
\be
Q \cdot \begin{pmatrix} S \\ R \\ Q \\ P \end{pmatrix}
=\begin{pmatrix}
0 & 0 & 0 & 0 \\
1 & 0 & 0 & 0 \\
0 & 5 & 0 & 0 \\
0 & 0 & 1 & 0
\end{pmatrix}
\cdot \begin{pmatrix} S \\ R \\ Q \\ P \end{pmatrix}.
\ee

\section{Conclusions and Speculations}

\label{sec:Conclusions}

In our attempt to unify Witten Conjecture/Kontsevich Theorem with mirror symmetry
of the quintic,
we have found that methods of statistical physics to be very useful.
By the mean field theory developed by Dijkgraaf,  Eguchi and his collaborators,
we have written down the integrable hierarchy associated with the GW theory of the quintic.
This leads to the structure of Frobenius manifold developed by Dubrovin and his collaborators
on the small phase space of the quintic.
We also present the special K\"ahler geometry on the small phase.
In carrying out the explicit computations,
we have reduced to the computations of the free energy on the small phase space.
Note we are working on the $A$-side side,
but many geometric structures originally discovered on he $B$-theory side naturally emerge.
So we feel it is natural  to refer to them as emergent geometry.

Such results reveal the similarities and differences
between the GW theory of a point and the GW theory of the quintic.
And the similarities and differences
suggest some directions for future investigations
which we now present some speculations.
In both cases
we obtain integrable hierarchies.
In case of a point
the integrable hierarchy is the KdV hierarchy,
which is a reduction of the KP hierarchy.
This suggests that the integrable hierarchy
for the quintic might be a reduction
of the $4$-component KP hierarchy.
It will be interesting to check whether this is the case,
because in case of a point
one can identify the partition function as an
element in the fermionic Fock space
via the boson-fermion correspondence
via the Kac-Schwarz operator \cite{Kac-Sch}.
An explicit formula for the affine coordinates
of the corresponding element in the Sato Grassmannian
has been obtained by the author \cite{Zhou-Virasoro}
using the Virasoro constraints.
An alternative derivation using the KP hierarchy
was later given by Balogh-Yang \cite{Bal-Yan}.
A formula for the $n$-point function in  all genera
based on the affine coordinates of
an element in the Sato Grassmannian has been derived
by the author \cite{Zhou-Emergent}. In particular
it can be applied to the Witten-Kontsevich tau-function.
In a work in progress we will generalize this formula
to $n$-component KP hierarchy.

Another direction of search for unification
is the Eynard-Orantin topological  recursions \cite{EO}.
In \cite{Zhou-DVV} the author proved that the GW theory
of a point satisfies the EO topological recursion,
and in this case the recursion is equivalent to
the DVV Virasoro constraints.
Furthermore,
the spectral curve in this case lies in a family of
curves obtained by computing the genus zero
one-point function together with
a  Laplace transform.
The starting point of most of the computations
in this paper is the genus zero one-point function
and in the end we also take a Laplace transform
to get the deformed dual flat coordinates.
We speculate that this is not just a coincidence,
it means that the EO topological recursion
should also hold for the quintic.

\vspace{.2in}
{\bf Acknowledgements}.
The author is partly supported by NSFC grants 11661131005
and 11890662.

\end{document}